\documentclass[review]{elsarticle}
\usepackage[letterpaper, margin=1.in, top=1.1in]{geometry}
\usepackage{lineno,hyperref}
\modulolinenumbers[5]










\bibliographystyle{elsarticle-num}

\usepackage{bm}
\usepackage{graphicx}
\usepackage{subfig}
\usepackage[mathscr]{euscript}
\usepackage[overload]{empheq}
\usepackage{pifont}
\newcommand{\cmark}{\ding{51}}%
\newcommand{\xmark}{\ding{55}}%

\makeatletter
\def\Cline#1#2{\@Cline#1#2\@nil}
\def\@Cline#1-#2#3\@nil{%
  \omit
  \@multicnt#1%
  \advance\@multispan\m@ne
  \ifnum\@multicnt=\@ne\@firstofone{&\omit}\fi
  \@multicnt#2%
  \advance\@multicnt-#1%
  \advance\@multispan\@ne
  \leaders\hrule\@height#3\hfill
  \cr}

\usepackage{ etoolbox, xparse}   
\DeclarePairedDelimiterX{\abs}[1]\lvert\rvert{\ifblank{#1}{\,\cdot\,}{#1}}
\let\oldabs\abs
\def\abs{\futurelet\testchar\MaybeOptArgAbs}
\def\MaybeOptArgAbs{\ifx[\testchar\let\next\OptArgAbs
\else \let\next\NoOptArgAbs\fi \next}
\def\OptArgAbs[#1]#2{\oldabs[#1]{#2}}
\def\NoOptArgAbs#1{\ifblank{#1}{\oldabs{}}{\oldabs[\big]{#1}}}

\DeclarePairedDelimiterX{\set}[1]\{\}{\setargs{#1}}
\NewDocumentCommand{\setargs}{>{\SplitArgument{1}{;}}m}
{\setargsaux#1}
\NewDocumentCommand{\setargsaux}{mm}
{\IfNoValueTF{#2}{#1}{\nonscript\,#1\nonscript\;\delimsize\vert\nonscript\:\allowbreak #2\nonscript\,}}
\let\oldset\set
\def\set{\futurelet\testchar\MaybeOptArgSet}
\def\MaybeOptArgSet{\ifx[\testchar \let\next\OptArgSet
\else \let\next\NoOptArgSet \fi \next}
\def\OptArgSet[#1]#2{\oldset[#1]{#2}}
\def\NoOptArgSet#1{\OptArgSet[\big]{#1}}

\newtheorem{theorem}{Theorem}

\newtheorem{proposition}[theorem]{Proposition}

\newenvironment{proof}[1][Proof]{\begin{trivlist}
\item[\hskip \labelsep {\bfseries #1}]}{\end{trivlist}}

\begin{document}

\begin{frontmatter}
\title{On the verification of CFD solvers of all orders of accuracy on curved wall-bounded domains and for realistic RANS flows}

\author[]{Farshad Navah\corref{mycorrespondingauthor}}
\ead{farshad.navah@mail.mcgill.ca}

\author[]{Siva Nadarajah}
\address{Mechanical Engineering Department, \\McGill University,
Montr\'{e}al, Qu\'{e}bec, Canada, H3A 0C3}

\cortext[mycorrespondingauthor]{Corresponding author}




\begin{abstract}
{This paper aims at extending the code verification methodology to high-order accurate scheme implementations on curved wall-bounded domains as well as for realistic turbulent flows such as the flat plate boundary layer modelled by the Reynolds-averaged Navier-Stokes (RANS) equations. Two new manufactured solutions (MSs) are introduced with demonstrated ability  to verify the treatment of slip and no-slip boundary conditions in high-order frameworks on curved domains. These MSs serve as well to discuss the impact of the method of computation of boundary normals on the order of accuracy (OOA) of the solution at the wall. Furthermore, two turbulent boundary layer MSs from literature, devised to mimic the genuine features of RANS-modelled flows in the vicinity of wall, are compared in terms of their suitability to achieving high-order accuracy. A number of useful concepts in verification are explored through these cases such as the limit values of the Spalart-Allmaras (SA) turbulence model source terms at the wall, the verification of the modified vorticity term of the modified SA model, the grid sensitivity of wall-bounded turbulent flows, the inadequacy of substituting solution verification to code verification and the effect of non-dimensionalization of the solution on the minimization of iterative errors via residual convergence. In all cases, demonstrations are carried for orders of accuracy up to the sixth.}
\end{abstract}

\begin{keyword}
{Code verification\sep High-order accuracy\sep Aerodynamics\sep Curved wall\sep Turbulent boundary layer\sep Flux reconstruction.}
\end{keyword}

\end{frontmatter}



\section{Introduction}
This paper is the continuation of a previous work \cite{Navah2017a} on code verification for high-order accurate computational fluid dynamics (CFD) solvers which elaborated on this important concept in terms of motivations, theoretical foundations, methodology and applications to free flows, i.e., flows  not bounded by walls. The verification for this type of problem, due to its simplicity, constitutes a first major step in the cumulative process of gathering evidence that a given high-order solver delivers the expected performance in terms of accuracy per computational effort, thus justifying the  endeavour invested into its development. Nevertheless, the verification for free flows via trigonometric MSs does not provide information on the ability of the code in tackling high-order representation of curved domains in presence of wall boundary conditions in inviscid and laminar flows. Furthermore, it does not permit to investigate the solver's performance in delivering high orders of accuracy in the solution of realistic RANS flows such as the flat-plate boundary layer. Providing a framework to address these issues constitutes hence the principal motivation of the current work.

The demonstrations are conducted via a numerical setup composed of compressible RANS-SA \cite{Allmaras-et-al_2012} equations in conservative form, discretized by the correction procedure via flux reconstruction (CPR) scheme \cite{Huynh2007,Huynh_2009a,Wang-et-al_2011a}, also known as flux reconstruction (FR), which unifies many compact high-order methods under the same formulation, along with the discontinuous Galerkin (DG) correction functions.

The article is structured as follows: the first section introduces the main motivation and contributions of the paper. The theoretical background is discussed in the second section, followed in the third by the presentation of the governing equations. The fourth and the fifth sections are respectively dedicated to the compact high-order numerical method and to verification cases and methodology and the last section concludes the article.

\subsection{Contributions}
The first goal here is to enable the verification of high-order accurate simulation codes for the solution of wall-bounded flows on curved domains. To this end:
\begin{itemize}
\item A manufactured solution (MS) for inviscid problems, spanning simultaneously flow regimes from subsonic to supersonic, is presented and the effect of wall normals computation via isoparametric mapping, on the orders of accuracy of the reflecting boundary condition of slip wall is discussed; 
\item A similar MS is proposed for the verification of laminar flows in incompressible and compressible subsonic regimes at once on curved domains along with  no-slip (adiabatic) wall condition and the verification sensitivity to viscous terms is discussed.
\end{itemize}

In a second phase, the focus is moved to realistic turbulent manufactured cases on Cartesian domains. Two existing wall-bounded MSs from literature are considered and their application is broadened by assessing their adequacy for the verification of high-order implementations. These MSs act as well as  a framework to delve into the following ideas pertinent to verification:
\begin{itemize}
\item The limit values of the SA source terms with indeterminate forms at the wall;
\item The verification of the modified vorticity term of the modified SA equation;
\item The sensitivity of realistic turbulent wall-bounded solutions to grid stretching, in terms of its effects on discretization error, residual convergence and orders of accuracy (OOAs) of solution variables and drag coefficient;
\item The impact of non-dimensionalization in minimizing iterative errors via residual convergence;
\item The inadequacy of substituting solution verification to code verification.
\end{itemize}

In all cases, the suitability of the MSs in realizing OOAs up to the sixth is demonstrated.
All manufactured cases presented in this work are made available (see \cite{navah017_github}) by providing an IPython \cite{PER-GRA:2007} notebook and a C routine, thus enabling the reproduction of the verification methodology in any code.

\section{Theoretical background}
We refer the reader to \cite{Navah2017a} for the description of the theoretical background in verification and validation (V\&V) and more precisely the presentation of the terminology and explanation of implicated concepts, the comparison of the method of analytical solutions versus the method of manufactured solutions and the motivations behind the adoption of the latter, as well as an extensive review of the literature on V\&V with special attention to applications in high-order frameworks.

\section{Governing equations}\label{sec:goveq}

We consider the steady-state, compressible RANS governing equations, under the assumption of  a Newtonian and calorically perfect gas, closed by the conservative form of the original and revised SA models of turbulence \cite{Allmaras-et-al_2012}. The choice of the SA model is motivated by its simplicity as well as proven effectiveness for the representation of aerodynamic flows. The role of the revised SA model is to complement the original model by introducing conditional variations to some of its terms in order to ensure the stability of the solution process whenever  model's working variable takes negative values. This often occurs in coarse spatial discretizations of the boundary layer of high-order solutions.

The governing partial differential equations (PDEs) are expressed by the following general formulation for advective-diffusive problems:
\begin{equation}
\partial_t(Q_k) +\partial_j(F^{inv}_{kj}) - \partial_j(F^{vis}_{kj}) = S_k,
\label{eq:NS}
\end{equation}
where $Q_k$ represents a state variable which is the solution of the $k^{th}$ partial differential equation with $k \in [1\,..\,{N_\mathrm{eq}}]$ where  $N_\mathrm{eq}=N_\mathrm{d}+3$ is the number of equations based on  $N_\mathrm{d}$  space dimensions; $F^{inv}_{kj}$ and $F^{vis}_{kj}$ respectively express the $k^{th}$ inviscid (advective) and viscous (diffusive) fluxes for $j\in [1\,..\,{N_\mathrm{d}}]$,  $S_k$  denotes the source term for equation $k$ and the repeated indices are summed over following Einstein's convention. The substitution of the following expressions and values in Eq. \eqref{eq:NS} yields the expanded form of the conservation laws:

\begin{itemize}
\item[-] Conservation of mass ($k=1$)
\begin{equation}
Q_k=\rho,\qquad\qquad F^{inv}_{kj}=\rho u_j,\qquad F^{vis}_{kj}=0,\qquad S_k=0;\label{eq:cont}
\end{equation}
\item[-] Conservation of momentum ($k \in [2\,..\,{N_\mathrm{d}+1}]$ and $i=k-1$)
\begin{equation}
Q_k=\rho u_i,\qquad F^{inv}_{kj}=\rho u_ju_i +p \delta_{ij},\qquad F^{vis}_{kj}=\tau_{ij},\qquad S_k=0;\label{eq:mom}
\end{equation}
\item[-] Conservation of energy ($k=N_\mathrm{d}+2$)
\begin{equation}
Q_k=\rho E, \qquad  F^{inv}_{kj}=\rho u_j H, \qquad  F^{vis}_{kj}=u_i \tau_{ij} +\omega_j, \qquad  S_k=0;\label{eq:ener}
\end{equation}
\item[-] Transport of the turbulent working variable ($k=N_\mathrm{d}+3$)
\begin{equation}
\begin{split}
Q_k=\rho \tilde{\nu}, \qquad F^{inv}_{kj}=\rho u_j \tilde{\nu},\qquad F^{vis}_{kj}=\frac{1}{\sigma}(\mu+\rho \tilde{\nu}f_n)\,\partial_j \tilde{\nu}, \qquad
S_k=  \rho \,\mathcal{P} - \rho \,\mathcal{D} + \rho \,\mathcal{T} \\+\frac{c_{b2}}{\sigma} \rho \,\partial_j \tilde{\nu} \, \partial_j \tilde{\nu} 
 - \frac{1}{\sigma}(\nu+\tilde \nu f_n)\, \partial_j (\rho \,\partial_j \tilde \nu) \label{eq:sa}.\
\end{split}
\end{equation}
%
\end{itemize}
The quantities appearing in equations \eqref{eq:cont} through \eqref{eq:sa} are defined as follows: $\rho$ represents the density, ${\bm{u}} = {\bf{e}}_i u_i$ denotes the velocity vector with ${\bf{e}}_i$ being the $i^{th}$ orthonormal basis vector of the Euclidean spatial system, $E$ is the total energy per mass defined as $E=e+\frac{1}{2} (u_iu_i)$ where $e$ is the internal energy defined as $e=\frac{R}{\gamma-1}T$, for a calorically perfect gas, with $R$ being the gas constant and $T$ the temperature. The total enthalpy is expressed by $H=E + \frac{p}{\rho}$ with $p$ denoting the pressure which is related to the energy via the ideal gas law such that
\begin{equation}
{\footnotesize p=(\gamma-1)\rho\left(E-\frac{1}{2}(u_iu_i)\right)},
\label{eq:state}
\end{equation}
where $\gamma$ is the specific heat ratio ($\gamma=1.4$ for air).

In Eq. \eqref{eq:mom}, $\tau_{ij}$ expresses the components of the viscous stress tensor, $\underline{\underline{\tau}}$, which for compressible Newtonian fluids read
\begin{equation*}
\tau_{ij}=2 \, \mu_{\mathrm{eff}} \, S_{ij},
\quad \mathrm{with} \;\;
S_{ij} = \frac{1}{2}(\partial_i u_j + \partial_j u_i)-\frac{1}{3}\partial_k u_k \delta_{ij},
\end{equation*}
where the effective viscosity is denoted by $\mu_{\mathrm{eff}}$ and defined as the sum of the dynamic viscosity, $\mu$, and the eddy viscosity, $\mu_t$, say, $\mu_{\mathrm{eff}}= \mu+\mu_t$; and $\delta_{ij}$ represents the Kronecker delta. Note that we assume the dynamic viscosity to be spatially constant throughout this article.

In Eq. \eqref{eq:ener}, $\omega_j =\lambda_{\mathrm{eff}}\, \partial_j T$ expresses the $j^{th}$ component of the heat flux vector where $\lambda_{\mathrm{eff}}$ is the effective thermal conductivity defined as $\lambda_{\mathrm{eff}} = \lambda + \lambda_t$, with $\lambda=\frac{\gamma R}{(\gamma-1)}\frac{\mu}{\mathrm{Pr}}$, the molecular conductivity,  and $\lambda_t=\frac{\gamma R}{(\gamma-1)} \frac{\mu_t}{\mathrm{Pr}_t}$, the eddy conductivity. The laminar and turbulent Prandtl numbers are respectively set to $\mathrm{Pr}=0.7$ and $\mathrm{Pr}_t=0.9$ unless specified.


In the SA model, $\mu_t$ is the  eddy (turbulent) viscosity, expressed by

\begin{subequations}
\begin{empheq}
[left={\mu_t= \rho \nu_t =\empheqlbrace}]{alignat=2}
&\rho \tilde{\nu} f_{v1}\qquad & \tilde \nu \geq 0, \label{eq:mu_t+}\\
&0                      \qquad & \tilde \nu < 0,    \label{eq:mu_t-}
\end{empheq}
\end{subequations}
where
\begin{equation*}
f_{v1}=\frac{\chi^3}{\chi^3+c_{v1}^3},
\quad
\chi=\tilde{\nu}/\nu,
\quad
c_{v1}=7.1,
\end{equation*}
and $\tilde{\nu}$ is the working variable of the SA model which is equivalent to a turbulent kinematic viscosity.

The term $\mathcal{P}$ in Eq. \eqref{eq:sa} stands for turbulence production defined by
\begin{subequations}
\begin{empheq}
[left={\mathcal{P}=\empheqlbrace}]{alignat=2}
& c_{b1}(1-f_{t2})\tilde s \tilde \nu\qquad &  \tilde \nu \geq 0, \label{eq:prod+}\\
& c_{b1}(1-c_{t3})       s \tilde \nu\qquad &  \tilde \nu < 0,    \label{eq:prod-}
\end{empheq}
\end{subequations}
where $c_{b1} = 0.1355$, the laminar suppression term is denoted by $f_{t2}=c_{t3}\,\mathrm{exp}(-c_{t4}\,\chi^2)$ with $c_{t3}=1.2$ and $c_{t4}=0.5$, $s=\lvert\varepsilon_{ijk}\partial_j u_k \rvert$ is the vorticity magnitude with $\varepsilon_{ijk}$ expressing the Levi-Civita permutation symbol, and $\tilde s$ is the modified vorticity which reads
\begin{subequations}
\begin{empheq}
[left={\tilde s=\empheqlbrace}]{alignat=2}
& s+\bar{s}      \qquad &  \bar{s} \geq -c_{v2} s, \label{eq:S+}\\
& s+\frac{s(c_{v2}^2s+c_{v3}\bar{s})}{(c_{v3}-2c_{v2})s-\bar{s}}\qquad &   \bar{s} < -c_{v2}s, \label{eq:S-}
\end{empheq}
\end{subequations}
where
\begin{gather*}
\begin{aligned}
&\bar{s}=\frac{\tilde{\nu} f_{\nu 2}}{\kappa^2 d_w^2},\quad f_{v2} = 1 - \frac{\chi}{1+\chi f_{v1}},
&c_{v2}=0.7, \quad c_{v3}=0.9, \quad \kappa = 0.41,
       \end{aligned}
\end{gather*}
and $d_w$ is the distance to the closest wall. In Eq. \eqref{eq:sa}, the destruction term, $\mathcal{D}$, is defined as
\begin{subequations}
\begin{empheq}
[left={\mathcal{D=}\empheqlbrace}]{alignat=2}
& \left(c_{w1} f_w -\frac{c_{b1}}{\kappa^2}f_{t2}\right)\frac{ {\tilde\nu}^2 }{d_w^2}\qquad &  \tilde \nu \geq 0, \label{eq:D+}\\
& - c_{w1} \frac{ {\tilde\nu}^2 }{d_w^2}  \qquad &   \tilde \nu < 0, \label{eq:D-}
\end{empheq}
\end{subequations}
where
\begin{equation*}
\begin{aligned}
c_{w1} = \frac{c_{b1}}{\kappa^2}+\frac{1+c_{b2}}{\sigma},\phantom{\frac{\frac{1}{1}}{\frac{1^2}{1^\frac{1}{2}}}}
c_{b2} = 0.622, \quad \sigma =2/3, \;\,  \mathrm{and} \;\, f_w = g\left(\frac{1+c_{w3}^6}{g^6+c_{w3}^6}\right)^{1/6}.
\end{aligned}
\label{eq:dist}
\end{equation*}
The reader is referred to \cite{Allmaras-et-al_2012} for the full definition of the trip term, $\mathcal{T}$, in Eq. \eqref{eq:sa} which is employed to mimic the effect of a forced transition. A value of $\mathcal{T}=0$ is considered throughout this work. The remaining closure functions and constants of the SA model are
\begin{subequations}
\begin{empheq}
[left={f_n=\empheqlbrace}]{alignat=2}
& 1                                  \qquad &  \tilde \nu \geq 0, \label{eq:fn+}\\
& \frac{c_{n1}+\chi^3}{c_{n1}-\chi^3}\qquad &   \tilde \nu < 0, \label{eq:fn-}
\end{empheq}
\end{subequations}
\begin{equation*}
\begin{aligned}
g = r + c_{w2}(r^6-r),
\quad
r = \mathrm{min} \left(\frac{\tilde \nu}{\tilde s \kappa^2 d_w^2}, r_\mathrm{lim} \right),
\quad
r_\mathrm{lim}=10,
\quad
c_{n1}=16,
\quad
c_{w2} = 0.3,
\quad \mathrm{and}\quad
c_{w3} = 2.
\end{aligned}
\end{equation*}

The original SA model is represented by equations \ref{eq:sa}, \ref{eq:mu_t+}, \ref{eq:prod+}, \ref{eq:S+}, \ref{eq:D+} and \ref{eq:fn+} whereas the modified SA model corresponds to equations \ref{eq:sa}, \ref{eq:mu_t-}, \ref{eq:prod-}, \ref{eq:S-}, \ref{eq:D-} and \ref{eq:fn-}.

\section{Compact high-order numerical method}
The numerical framework is made of the CPR scheme via DG correction functions with Gauss-Legendre-Lobatto (GLL) solution nodes. The reader is refered to \cite{Navah2017a} for a detailed description of the scheme, the advective and diffusive numerical fluxes and the Riemann and viscous boundary conditions (BCs). We complete the information necessary to reproduce this work by presenting the treatment of the wall boundaries.


\subsection{Boundary conditions}
The notations employed in this section are adopted from \cite{Navah2017a}. The superscripts $\cdot^-$ and $\cdot^+$  respectively refer to the internal and external traces of an interface quantity with respect to an element and $\bm{n}={\bf{e}}_q n_q$ with $q\in [1\,..\,{N_\mathrm{d}}]$ designates the normal vector. Furthermore, ghost nodes are defined at domain boundaries, serving the prescription of boundary conditions by letting external interface quantity at the boundary take the desired boundary value: $\cdot^+ = \cdot^\mathrm{BC}$.
\subsubsection{Wall BC}
We consider the adiabatic wall condition, the treatment of which is split into  inviscid and viscous steps as,
\begin{itemize}
\item Inviscid (slip condition): the boundary states of all dependent variables, except those of momentum equations, are set to their inner counterparts, i.e., $Q_k^\mathrm{BC}=Q_k^-$ for $k\notin [2\,.. \,N_\mathrm{d}+1]$, whereas the boundary velocity vector is defined by cancelling the impacting component, $(u_i^- n_i^-) \, {\bf{e}}_q n_q^-$, of the inner velocity, ${\bf{e}}_qu_q^-$, while preserving its tangential component, ${\bf{e}}_qu_q^--(u_i^- n_i^-) \, {\bf{e}}_q n_q^-$. We thus obtain 
\begin{equation}
Q_k^\mathrm{BC}=(\rho u_q)^\mathrm{BC}=(\rho u_q)^--2((\rho u_i)^- n_i^-) \, n_q^-,
\label{eq:slip_BC}
\end{equation}
for $k \in [2\, .. \, N_\mathrm{d}+1]$ and $q=k-1$. 

The boundary state, $Q_k^\mathrm{BC}$, is provided to the inviscid numerical flux of Eq. (26) of \cite{Navah2017a} which is modified by prescribing a null dissipation on the wall, i.e., $D=0$.

\item Viscous (no-slip condition): we first set $Q_k^\mathrm{BC}=Q_k^-$ (hence $\hat{Q}_k = Q_k^-$) for $k\in \{1\,,\,N_\mathrm{d}+2\}$ and $Q_k^\mathrm{BC}=-Q^{-}$ (hence $\hat{Q}_k=0$) for $k\notin \{1\,,\,N_\mathrm{d}+2\}$ in the computation of full and partial derivative corrections in Eqs. (30) and (32) of \cite{Navah2017a}. As for the viscous numerical flux of Eq. (31) in \cite{Navah2017a}, we consider $\mathcal{H}_k^{vis}=F^{vis}_{kj}\left(Q_k^-,{\bf{e}}_q(\overline{\partial_q Q_k})^-\right) n_j$ for all $k$ at the adiabatic wall boundary except for $k=N_\mathrm{d}+2$ (conservation of energy) that instead receives $\mathcal{H}_{N_\mathrm{d}+2}^{vis}=0$.
\end{itemize}

\subsection{Solution process}
The objective is to minimize iterative and round-off errors such that the discretization error is isolated as the major source of numerical error and can hence be the focus of analyses carried out in this article. To this end, we employ an exact linearization of the system of RANS  equations along with the original and modified SA models. This linearization, verified to be accurate up to double precision, serves in the Newton's iterative method, initialized by the manufactured solution and pursued until the discrete residuals are permanently reduced to a minimal value.

\subsection{Treatment of curved domains}
The elements in the physical domain are defined by a mapping to a reference element in the computational domain via Lagrange interpolation functions of the same degree (isoparametric) as the ones used to represent the discrete elemental solution. To account for this mapping, we adopt the expression of the CPR scheme in the physical domain described by Eq. (37) of \cite{Wang-et-al_2011a}.

\section{Verification cases and methodology}

Table \ref{tb:MS_prop_2} presents the solver features verified by each of the four manufactured cases presented in this section.

\begin{table}
\centering
\begin{tabular}{ l|l|c|c|c|c|c  }
\textbf{Property} &  \textbf{Feature}       & MS-1      & MS-2    & MS-3    & MS-4   &  \textbf{Cum.}   \\ \hline\hline
Re                & Inviscid        &   \cmark  &  \cmark & \cmark  & \cmark &  \cmark \\ \Cline{2-7}{0.5pt}
                  & Viscous         &   \xmark  &  \cmark & \cmark  & \cmark &  \cmark \\ \Cline{2-7}{0.5pt}
                  & Turbulent       &   \xmark  &  \xmark & \cmark  & \cmark &  \cmark \\ \Cline{1-7}{0.5pt}
Ma                & Supersonic      &   \cmark  &  \xmark & \cmark  & \xmark &  \cmark \\ \Cline{2-7}{0.5pt}
                  & Transonic       &   \cmark  &  \xmark & \cmark  & \xmark &  \cmark \\ \Cline{2-7}{0.5pt}  
                  & Subsonic        &   \cmark  &  \cmark & \cmark  & \cmark &  \cmark \\ \Cline{1-7}{0.5pt}
Boundary          & Riemann         &   \cmark  &  \cmark & \cmark  & \cmark &  \cmark \\ \Cline{2-7}{0.5pt}
Conditions        & Viscous         &   \xmark  &  \cmark & \cmark  & \cmark &  \cmark \\ \Cline{2-7}{0.5pt}
                  & Slip Wall       &   \cmark  &  \cmark & \cmark  & \cmark &  \cmark \\ \Cline{2-7}{0.5pt}
                  & No-slip Wall    &   \xmark  &  \cmark & \cmark  & \cmark &  \cmark \\ \Cline{1-7}{0.5pt}
Mapping           & Curved Domain   &   \cmark  &  \cmark & \xmark  & \xmark &  \cmark
\end{tabular}
\caption{List of solver capabilities verified by manufactured solutions for wall-bounded flows and curved domains}
\label{tb:MS_prop_2}
\end{table}

\subsection{Inviscid flows on curved wall-bounded domain - MS-1}
We introduce MS-1 that examines the implementation of CFD solvers for inviscid transonic regime, Riemann BC, slip wall boundary condition and curved domains. 
The domain curvature is defined by a deformation of the initial domain of $(\mathcal{X}, \mathcal{Y}) \in \Omega_0 = [0.5,1.0]\times [0.0,0.5]$, where Cartesian grids are created, to the target domain of $(x,y) \in \Omega$ by
\begin{equation}
\begin{split}
x &= \frac{4}{3}\left(\mathcal{X}^2 -\frac{1}{4} \right) + 1,\\
y &= \mathcal{Y}+ a \,\mathrm{sin}\left( 2 \pi x\right),
\end{split}
\label{eq:MS-1_map}
\end{equation}
where $a=0.05$. The physical domain and a typical grid of MS-1 are presented in Fig. \ref{fig:MS-1_domain}. The transformation in Eq. \eqref{eq:MS-1_map} fulfills three purposes: 
\begin{itemize}
\item The elemental transformation Jacobian  is not trivially constant. It in fact can not be exactly represented by finite polynomial expansions thanks to the presence of a sinusoidal term.
\item The untransformed coordinate $\mathcal{Y}$ can be expressed as an explicit function of $x$ and $y$ coordinates, viz., $\mathcal{Y}=y-a\,\mathrm{sin}(2\pi x)$, defining isolines of constant distance from the bottom of the domain.
\item For $\mathcal{Y}=0$, the wall coordinates can be parametrized by $y$ as an explicit function of $x$, thus allowing to derive an exact expression for the unit outward normal vector at the wall that reads  
\[{\bm{n}}_w = \frac{1}{\sqrt{(\frac{\partial y}{\partial x})^2 + (-1)^2}}\left(\frac{\partial y}{\partial x} \,{\bf{e}}_1 -{\bf{e}}_2 \right).\]
\end{itemize}

\begin{figure}[!hbt]
\centering
\includegraphics[trim = 4mm 2mm 40mm 8mm, clip,width=0.43\linewidth]{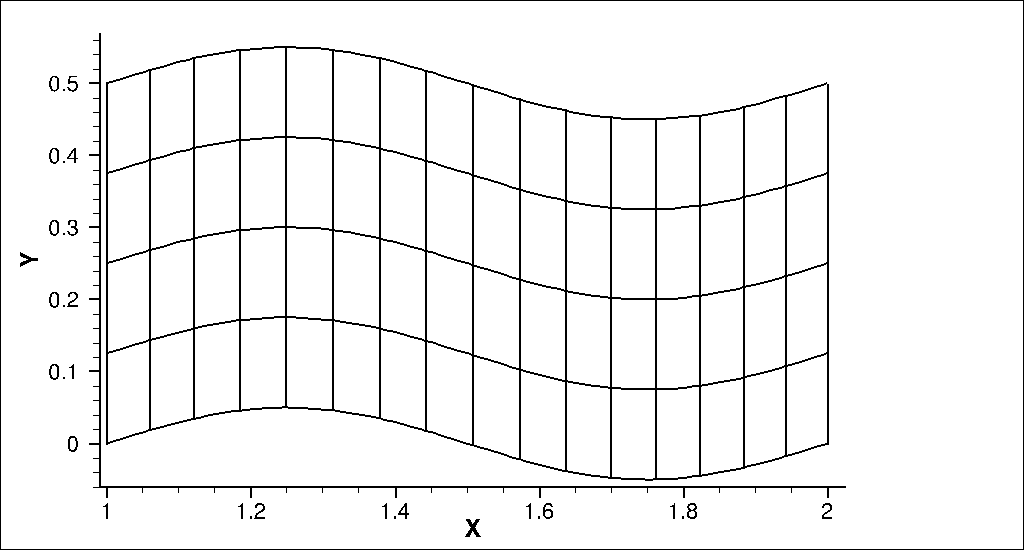}
\caption{Domain and typical grid of MS-1}
\label{fig:MS-1_domain}
\end{figure}

The Riemann BC is applied on all sides of the domain except on the bottom where the slip wall is enforced. A grid set is generated by sequentially removing every other grid line from the finest grid of $1024 \times 256$ quadrangular elements for which the element sizes are distributed following a  geometric series with fixed expansion ratios of $r_\mathcal{X} = 0.993$ and $r_\mathcal{Y} = 1.0$ in the corresponding undeformed coordinates directions. The transformation \eqref{eq:MS-1_map} is then applied to resulting grid and solution node coordinates to retrieve the tessellation of the deformed domain.

The manufactured solution fields are
\begin{equation}
\begin{aligned}
\rho^{\mathrm{MS}} &= \rho_0 + \mathcal{Y}^2,   &\\
u^{\mathrm{MS}}  &= u_{w} + \mathcal{Y},   &  \\
v^{\mathrm{MS}}  &= \frac{\partial y}{\partial x} \, u^{\mathrm{MS}},   &  \\
p^{\mathrm{MS}}  &= p_0  + \mathcal{Y}^2, & 
\label{eq:trigo_MS-1}
\end{aligned}
\end{equation}
where $\rho_0=1.0$, $p_0=1.0$ and $u_{w}=1.0$ is the horizontal velocity component at the wall.

The following remarks can be made with regards to the manufactured solution \eqref{eq:trigo_MS-1}:
\begin{itemize}
\item The manufactured density and pressure are defined with a null normal gradient at the wall as would be expected in a physically realistic case.
\item The velocity field is defined parallel to the wall ensuring the fulfilment of the non-penetration condition along the slip wall. This can be easily verified since by construction, ${\bm{u}}^{\mathrm{MS}} \cdot {\bm{n}}_w = u^{\mathrm{MS}} \,\frac{\partial y}{\partial x} \,+\, \frac{\partial y}{\partial x} \,u^{\mathrm{MS}} (-1) = 0$.
\end{itemize}

In a first attempt, the grid convergence study for MS-1 resulted in OOAs  for all variables similar to the ones in  Fig. \ref{fig:MS-1_isoparnormal}, featuring the recovery of the theoretical order, $\mathcal{O}\left(h^{\text{P}+1}\right)$, by the $L_1$ norm only, whereas the $L_2$ and $L_\infty$ norms respectively yield $\mathcal{O}\left(h^{\text{P}+1/2}\right)$ and  $\mathcal{O}\left(h^{\text{P}}\right)$. Through investigation, the root of this suboptimal outcome was traced back to the manner in which the slip wall boundary condition is prescribed. Indeed in Eq. \eqref{eq:slip_BC}, the outer state at the wall depends on the inner wall normal defined by the spatial transformation metrics between the computational and physical domains that depend in turn on the first order derivatives of the shape functions \cite{Wang-et-al_2011a}. The OOAs of the inner normal is therefore bounded by those of the derivatives of the shape functions, that for an isoparametric mapping, i.e., when the shape function and solution interpolation function are chosen to be the same, converge according to $\mathcal{O}\left(h^{\text{P}}\right)$. Employing the inexact normal results in the discretization error to reduce more slowly at the wall with mesh refinement compared to the rest of the domain. The maximum error in the domain is hence set by the local error at the wall as shown in Fig. \ref{fig:MS-1_error_ro_nrml}(a). This explains as well the disagreement between the OOAs based on different $L$ norms. The $L_1$ norm reports the optimal orders, since in its computation, the irregularity at wall bordering nodes is outweighed by the large number of regular contributions from the nodes of the rest of the domain as all deviations from the average are equally weighted by a unitary coefficient. Nevertheless, the higher sensitivity of the $L_2$ norm to deviations from the average and the ability of the $L_\infty$ norm to detect local inconsistencies show their advantages by revealing this particular issue.

The suboptimal error reduction rates can be rectified by considering modified slip conditions as in \cite{Krivodonova_2006}, by adopting a superparametric representation of the geometry as in \cite{Zwanenburg2017} or by using the analytical normal to the wall instead of the normal computed by an isoparametric mapping in Eq. \eqref{eq:slip_BC}.  We adopt the latter approach here since we have defined the spatial domain of MS-1 such that the exact outward normal to the wall is explicitly available. This technique remedies the accuracy of the boundary state of momentum equation, $Q_k^{BC}=(\rho u_q)^{BC}$
(for $k \in [2\, .. \, N_\mathrm{d}+1]$ and $q=k-1$), that now only depends on terms of at least the same degree as the inner solution.  We thus recover lower maximum error levels, distributed in the domain in Fig.\ref{fig:MS-1_error_ro_nrml}(b), instead of being clustered at the wall. Consequently, the theoretical OOAs are recovered by all $L$ norms (see Figs. \ref{fig:Err_allE_allP_MS-1} and \ref{fig:Orders_MS-1}). Let's however note the slower appearance of the asymptotic range of $\rho$ for orders higher than two in Fig. \ref{fig:Orders_MS-1}.

\begin{figure}[!hbt]
\centering
\subfloat[$\rho$]{
\includegraphics[trim = 11mm 3mm 18mm 12mm, clip,width=0.33\linewidth]
{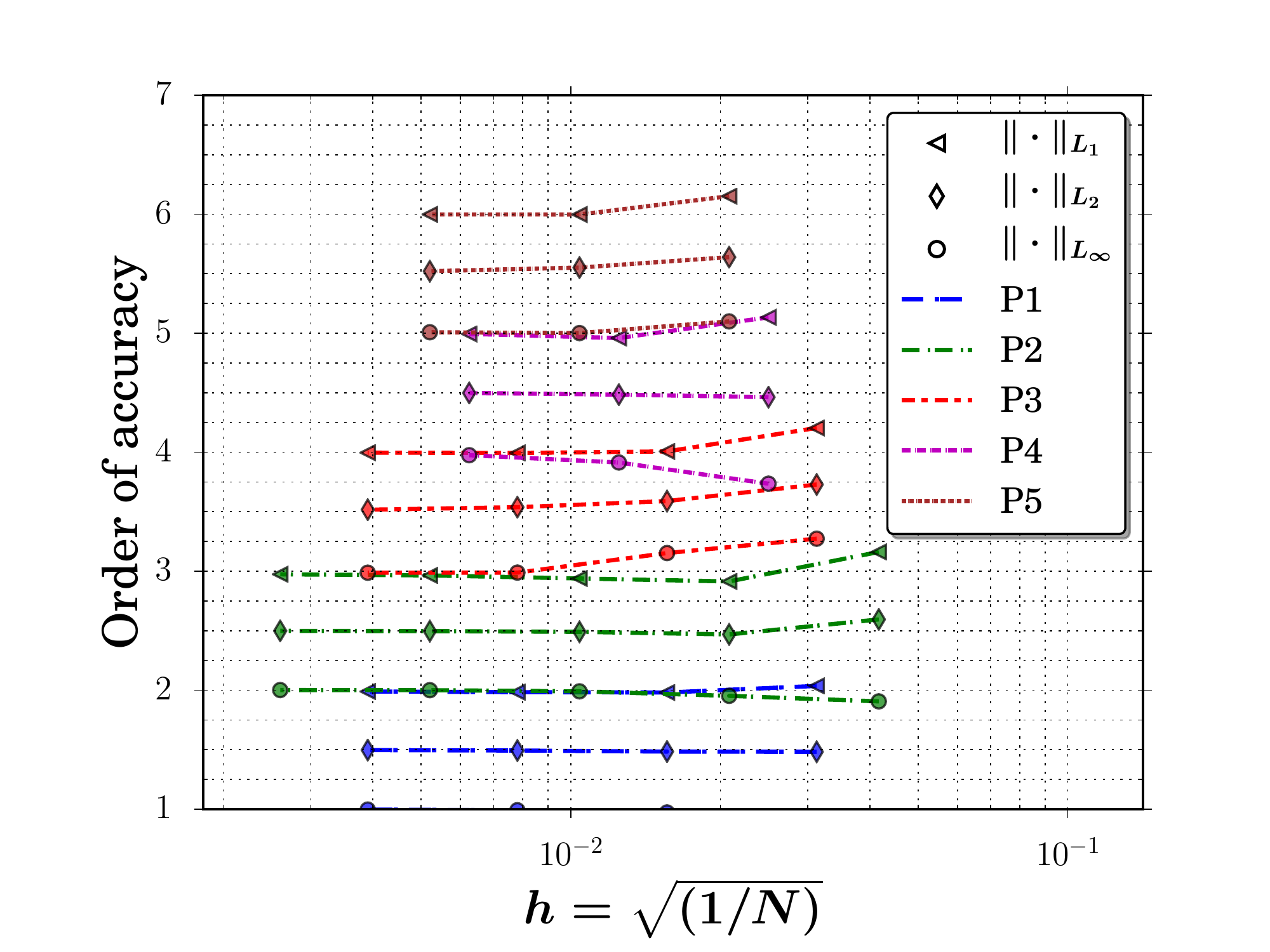}}
~~~
\subfloat[$\rho u$]{
\includegraphics[trim = 11mm 3mm 18mm 12mm, clip,width=0.33\linewidth]
{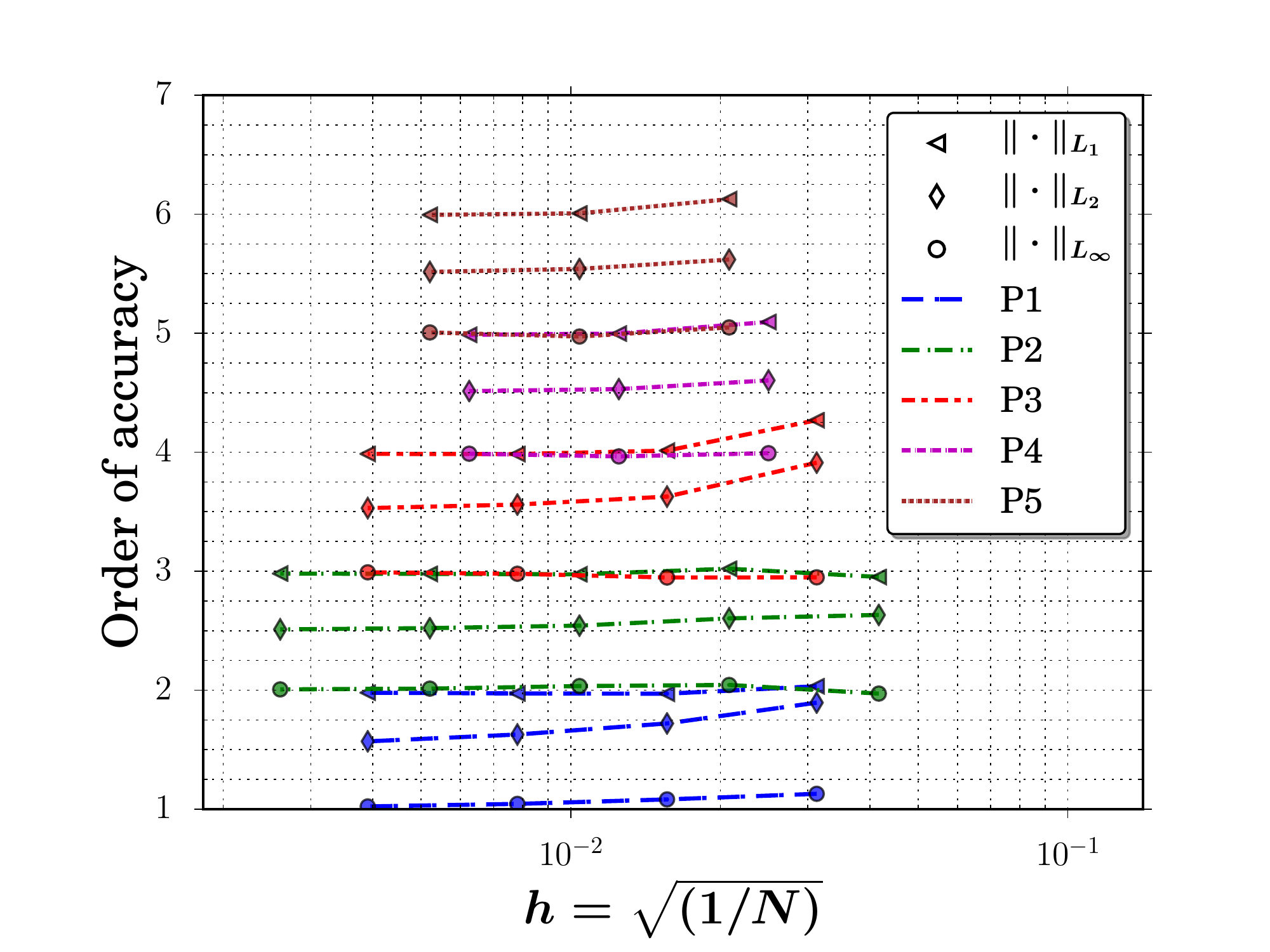}}
\caption{Evolution of the OOAs in $L_1$, $L_2$ and $L_\infty$ norms versus mesh refinement for MS-1 with the inexact normal in the slip wall condition and polynomial degrees $\mathrm{P}1$--$\mathrm{P}5$}
\label{fig:MS-1_isoparnormal}
\end{figure}

\begin{figure}[!hbt]
\centering
\subfloat[Inexact ${\bm{n}}_w$]{
\includegraphics[trim = 0mm 0mm 0mm 0mm, clip,width=0.4\linewidth]
{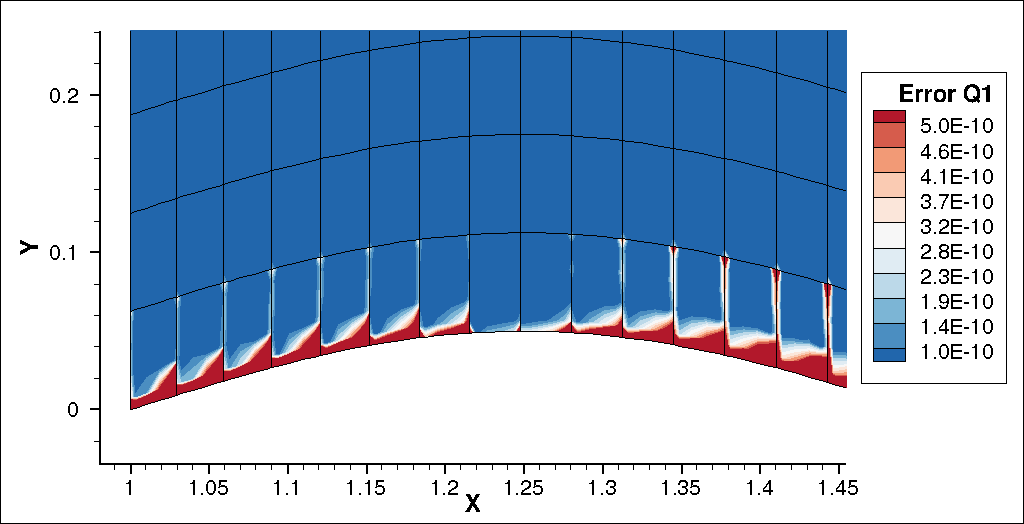}}
~~~
\subfloat[Exact ${\bm{n}}_w$]{
\includegraphics[trim = 0mm 0mm 0mm 0mm, clip,width=0.4\linewidth]
{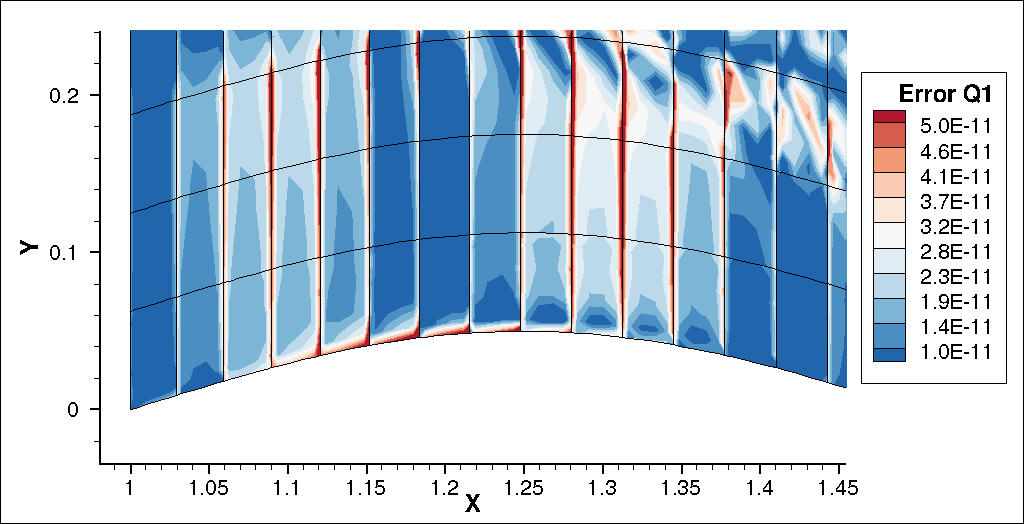}}
\caption{Error distribution in the vicinity of the slip wall for $\rho$ of MS-1 with the slip wall condition computed via the inexact and exact normals}
\label{fig:MS-1_error_ro_nrml}
\end{figure}

\subsection{Laminar flows on curved wall-bounded domain - MS-2}
MS-2 is devised here specially for the verification of wall-bounded laminar flows on curved domains and in incompressible as well as subsonic regimes at once. 

\begin{figure}[!hbt]
\centering
\includegraphics[trim = 4mm 2mm 40mm 8mm, clip,width=0.43\linewidth]{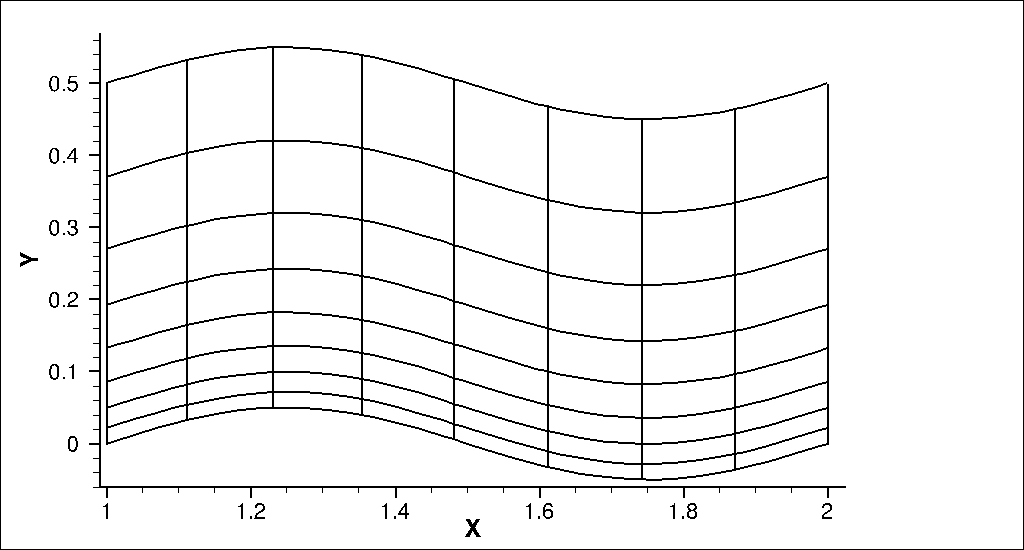}
\caption{Domain and typical grid of MS-2}
\label{fig:MS-2_domain}
\end{figure}

The domain of MS-2 is the same as that of MS-1, defined via the transformation  \eqref{eq:MS-1_map}. However, the grids are generated by the sequential coarsening of a fine grid of $1024 \times 1024$  quadrangular elements with geometric series expansion ratios of $r_\mathcal{X} = 0.9995$ and $r_\mathcal{Y} = 1.002$ in untransformed coordinates directions. This results in a clustering of anisotropic elements close to the wall boundary, tailored to capture the velocity profile growth. A typical grid is illustrated in Fig. \ref{fig:MS-2_domain}. The  no-slip wall BC is applied on the bottom while all other sides of the domain are treated by the Riemann and viscous BCs. The  manufactured fields are defined by:
\begin{equation}
\begin{aligned}
\rho^{\mathrm{MS}} &\equiv \rho_0 + \frac{\mathcal{Y}^2}{\rho_c^2},   &\\
u^{\mathrm{MS}}  &= \mathrm{erf}(\eta),   &  \\
v^{\mathrm{MS}}  &= \frac{\partial y}{\partial x} \, u^{\mathrm{MS}},   &  \\
p^{\mathrm{MS}}  &= p_0  + \mathcal{Y}^2, & 
\label{eq:trigo_MS-2}
\end{aligned}
\end{equation}
where $\rho_0=1.2$, ${\rho_c=3.0}$,  $\mathrm{erf}$ designates the error function,  $\eta$ is a similarity variable defined as
\begin{equation}
\eta = \sigma \frac{\mathcal{Y}}{\sqrt{\frac{4}{3}(\mathcal{X}^2+\frac{1}{2})}},
\end{equation}
with $\sigma=2.0$ and finally $p_0=2.0$. The horizontal velocity field is inspired from that of manufactured solutions in \cite{Eca-et-al_2007} and the dynamic viscosity is set to $\mu=10^{-5}$ to mimic an aerodynamic boundary layer flow with a relatively large Reynolds number. Nevertheless, let's note that low values of $\mu$ translate to a modest sensitivity of the verification process to errors in the viscous terms of the governing equations as discussed in \cite{Navah2017a}. In order to assess the sensitivity of the verification via MS-2 to bugs in the viscous terms, we introduce an inconsistency in the first component of the shear stress tensor via $(1 + d\alpha)\,F^{vis}_{11}$ where $d\alpha$ is gradually increased from its original value of zero until a modification in the orders of accuracy is detected. Thus, MS-2 revealed the alteration by the P4 and P5 discretizations as soon as $d\alpha \approx 10^{-4}$ as shown in Fig. \ref{fig:MS-2_bug}. This indicates that although MS-2 is sensitive to the presence of bugs in the viscous terms of the governing equations, its sensitivity is rather low and hence it is not fully adequate for the detection of minor bugs in the diffusive terms.  We therefore strongly recommend to employ MS-2 only once the code under examination has been verified for the free laminar flow of MS-3 of \cite{Navah2017a} which features a demonstrated balancing of the viscous versus inviscid terms and hence enables the detection of minor bugs in either. A prior verification via MS-3 of \cite{Navah2017a} allows therefore to safely proceed to MS-2 of the present study where the primary focus is on the verification of the no-slip wall condition on curved domains. 

The grid convergence of errors and OOAs for MS-3 in $L$ norms (Figs. \ref{fig:Err_allE_allP_MS-2} and \ref{fig:Orders_MS-2}) are indicative of absence of bugs for the numerical framework at hand, although negligible fluctuations exist in the OOAs for some polynomial degrees (ex: $\rho$ in P1 and P2).

\begin{figure}[!hbt]
\centering
\subfloat[Discretization errors]{
\includegraphics[trim = 7mm 3mm 18mm 12mm, clip,width=0.33\linewidth]
{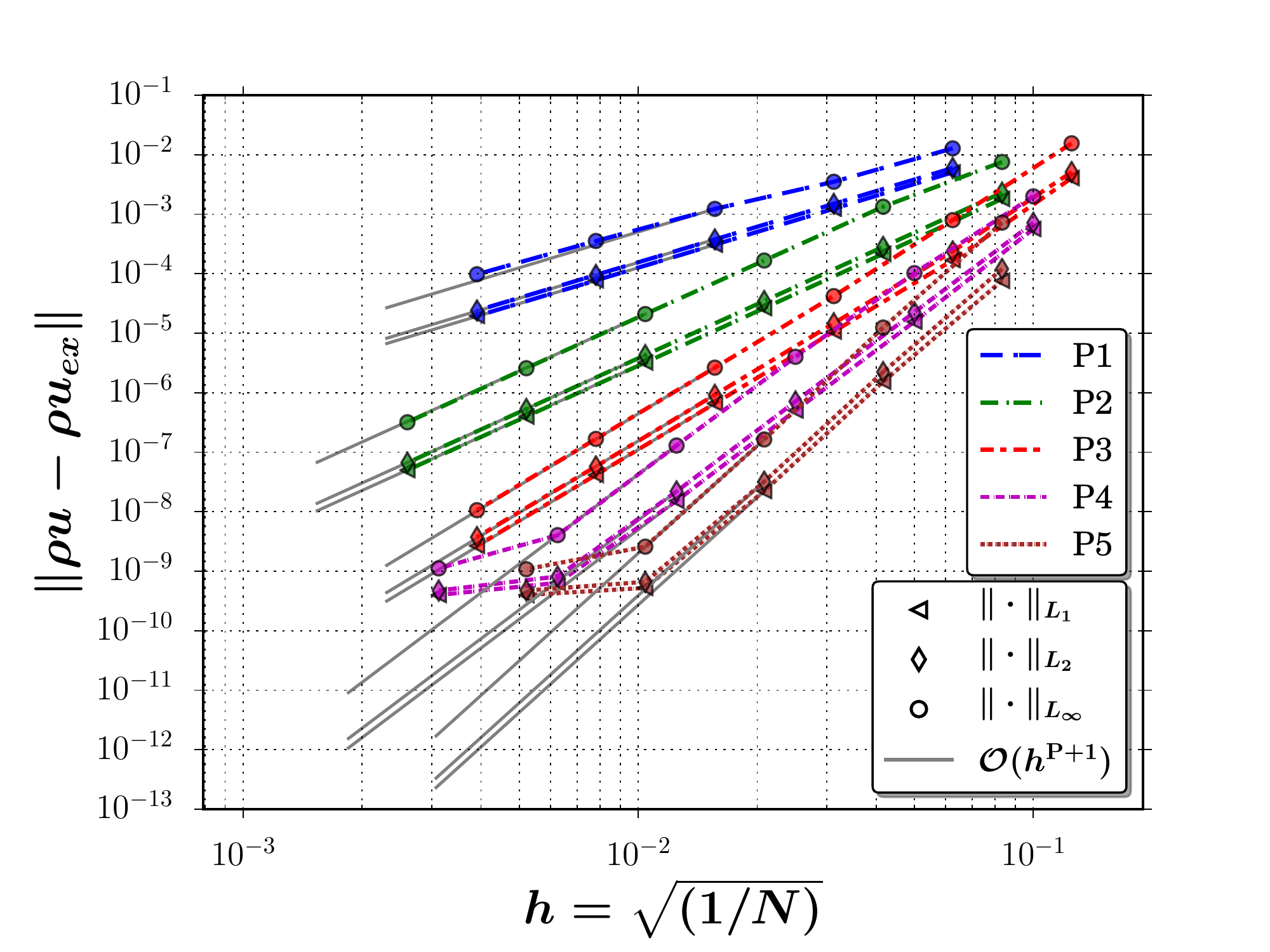}}
~~~
\subfloat[Orders of accuracy]{
\includegraphics[trim = 7mm 3mm 18mm 12mm, clip,width=0.33\linewidth]
{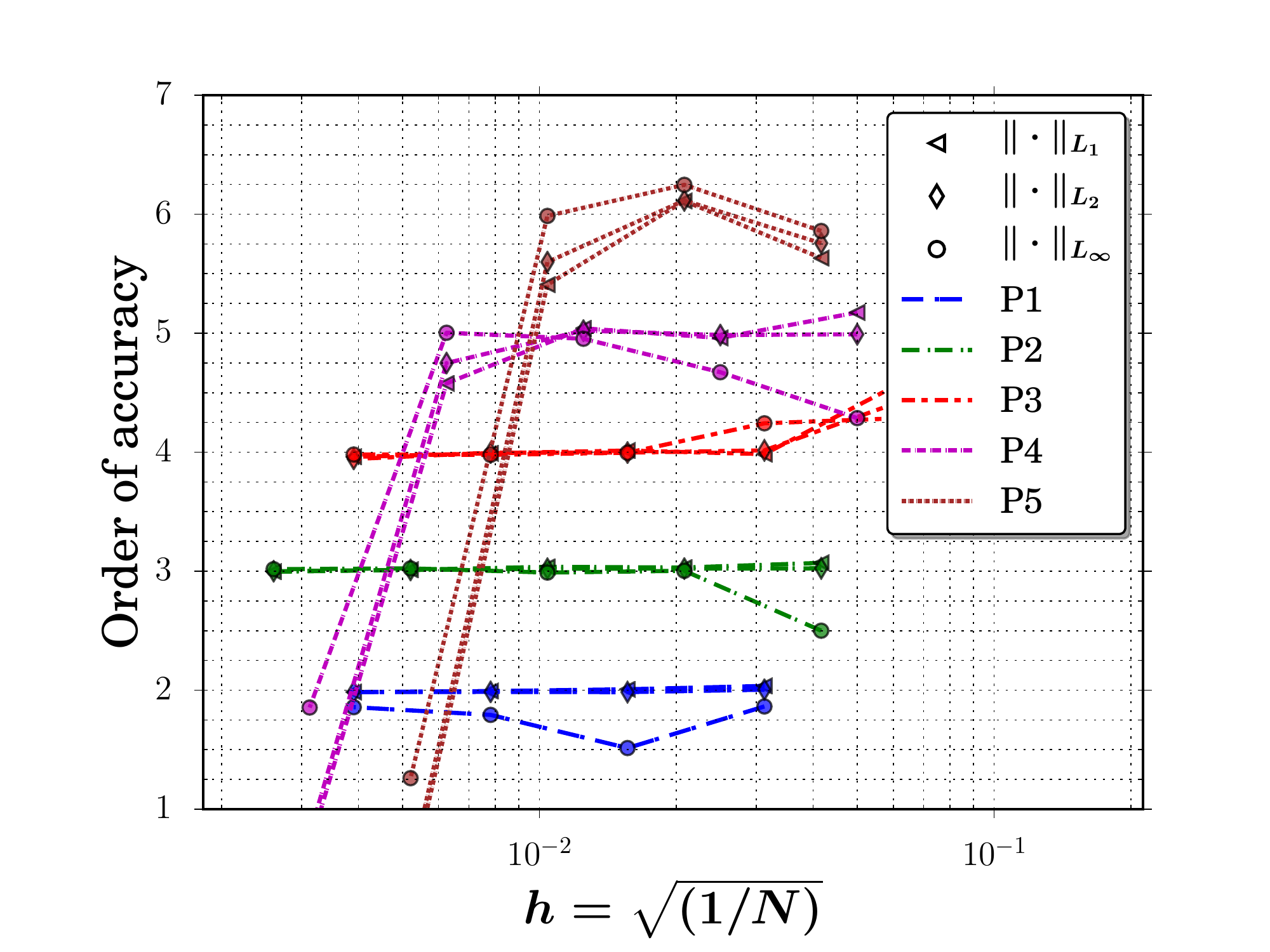}}
\caption{Evolution of the discretization error and OOAs in $L$ norms versus mesh refinement for MS-2  and polynomial degrees P1--P5 under the effect of the term $(1 + d\alpha)\,F^{vis}_{11}$ where $d\alpha \approx 10^{-4}$}
\label{fig:MS-2_bug}
\end{figure}

\subsection{Wall-bounded turbulent flows - MS-3 and MS-4}
The manufactured cases for turbulent flow such as those presented in \cite{Navah2017a} 
are based on trigonometric functions and hence do not mimic real flows; consequently, although the conducted budget analyses show that a sufficient level of balance between groups of terms of their forcing functions is achieved, this balance does not necessarily resemble the one occurring in the operation mode of the model in practical scenarios. Producing such a balance is the task of physically realistic MSs and serves to reinforce the conclusions of the verification campaign. In the case of RANS models, an example of a physically realistic MS is a boundary layer flow. Such MSs allow as well to examine the wall BC implementation for the turbulence model, to verify the rate of convergence of aerodynamic outputs and to assess the performance of error estimators and adaptive strategies in realistic scenarios. 

\paragraph{MS-3}
E\c{c}a and colleagues presented \cite{Eca-et-al_2007,Eca2007a,Eca2007} a wall-bounded RANS-based MS mimicking a realistic boundary layer, although with some deficiencies such as the lack of logarithmic layer in the velocity profile which furthermore corresponds to that of a laminar flow rather than a turbulent one,  in terms of shape factor and shear stress at the wall \cite{Eca2012}. Also, among the three presented $\tilde \nu$ profiles, the second order of accuracy was achieved only for one \cite{Eca2007}, labelled "MS2" in \cite{Eca-et-al_2007}, which features a quadratic dependency on the $y$ coordinate for $y\ll1$ versus the expected linear dependency from the law of the wall of the SA model \cite{Allmaras-et-al_2012}. Furthermore, this MS still fails to replicate the expected behaviour of the SA's production and destruction terms in the vicinity of   wall \cite{Oliver-et-al_2012}. We here extend its application to the verification of discretizations with up to the sixth OOA and we refer to it as MS-3, defined by the following dimensionless fields:
\begin{equation}
\begin{aligned}
\rho^{\mathrm{MS}} &= \rho_0   + \rho_x \,\mathrm{sin}(a_{\rho_{x}}    \pi   x / L)  + \rho_y \,\mathrm{cos}(a_{\rho_{y}}   \pi   y / L)  + \rho_{xy} \,\mathrm{cos}(a_{\rho_{x y}}   \pi   x / L)\, \mathrm{cos}(a_{\rho_{x y}}   \pi   y / L),&\\
u^{\mathrm{MS}}  &= \mathrm{erf}(\eta),&\\
v^{\mathrm{MS}}  &= \frac{1}{\sigma \sqrt{\pi}}\left(1-e^{-\eta^2}\right),&\\
p^{\mathrm{MS}}  &= p_0 + 0.5 \,\mathrm{ln}(2x-x^2+0.25)\,\mathrm{ln}(4y^3-3y^2+1.25) , & \\
\tilde{\nu}^{\mathrm{MS}}  &= \tilde{\nu}_\mathrm{max}\,\eta_\nu^2 \,e^{1-\eta_v^2}, & 
\label{eq:MS-3}
\end{aligned}
\end{equation}
where $\eta= \frac{\sigma y}{x}$ is a similarity variable, $\sigma=4$, $\tilde{\nu}_\mathrm{max}=10^3\, \nu_0$ with $\nu_0=\mu=10^{-6}$, $\eta_\nu= \frac{\sigma_\nu y}{x}$ with $\sigma_\nu=2.5\,\sigma=10$ and $p_0=0$ in the original version of this MS. The density field in \eqref{eq:MS-3} is defined by $\rho_0=1.1$, $\rho_x=0.2$, $\rho_y=0.5$, $\rho_{xy}=0.2$,   $a_{\rho_x}=1.0$, $a_{\rho_y}=2.0$, $a_{\rho_{xy}}=3.0$ and $L=1$. Finally, the domain of MS-3 is defined by $\Omega = [0.5,1.0]\times [0.0,0.5]$. 

We have applied two modifications to the original version of this MS:
\begin{itemize}
\item The unitary density field of the original MS meant for the verification of incompressible solvers is replaced by the trigonometric field in \eqref{eq:MS-3} to enable the verification of compressible RANS equations. The field is devised to satisfy the adiabatic boundary condition via null normal gradients at the wall; 
\item The original pressure field featured a region of null values in the domain that resulted in unphysical undershoots due to Gibbs oscillations on coarse grids. By setting $p_0=1$ in \eqref{eq:MS-3}, the pressure is lifted by a unit, thus avoiding these numerical difficulties.
\end{itemize}

The solution fields of MS-3 are presented in Fig. \ref{fig:MS-3}. The domain is discretized by a fine grid of $1024 \times 3072$ elements, the sizes of which are defined by a geometric series expansion in each space coordinates with ratios of $r_x=1.0$ and $r_y=1.0017$. A set of ten grids is generated by sequentially removing every other grid line from this fine grid. The results of the grid convergence study are presented in Figs. \ref{fig:Err_allE_allP_MS-3} and \ref{fig:Orders_MS-3} for $L$ norms and in Figs. \ref{fig:Err_allE_allP_H_MS-3} and \ref{fig:Orders_H_MS-3} for $H_1$ semi-norm of uncorrected and fully corrected derivatives. These figures show an especially late (in terms of $h$) occurrence of the asymptotic range for most variables and polynomial degrees. This is also manifested in $H_1$ semi-norm, particularly for $\rho$ and $\rho u$. In a first step, we ensure that the modification to the density field that we have introduced is not the reason behind this outcome. To this end, the OOAs are recomputed for MS-3 with a unitary density field. The results of this test in terms of $\rho$ and $\rho \tilde \nu$ variables, presented in Fig. \ref{fig:MS-3_orders_cons_dens}, attest that the modification to the density solution is indeed not responsible for the delayed monotonic ranges as this still occurs with the original definition of manufactured density.
\begin{figure}[!hbt]
\centering
\subfloat[$\rho$]{
\includegraphics[trim = 11mm 3mm 18mm 12mm, clip,width=0.33\linewidth]
{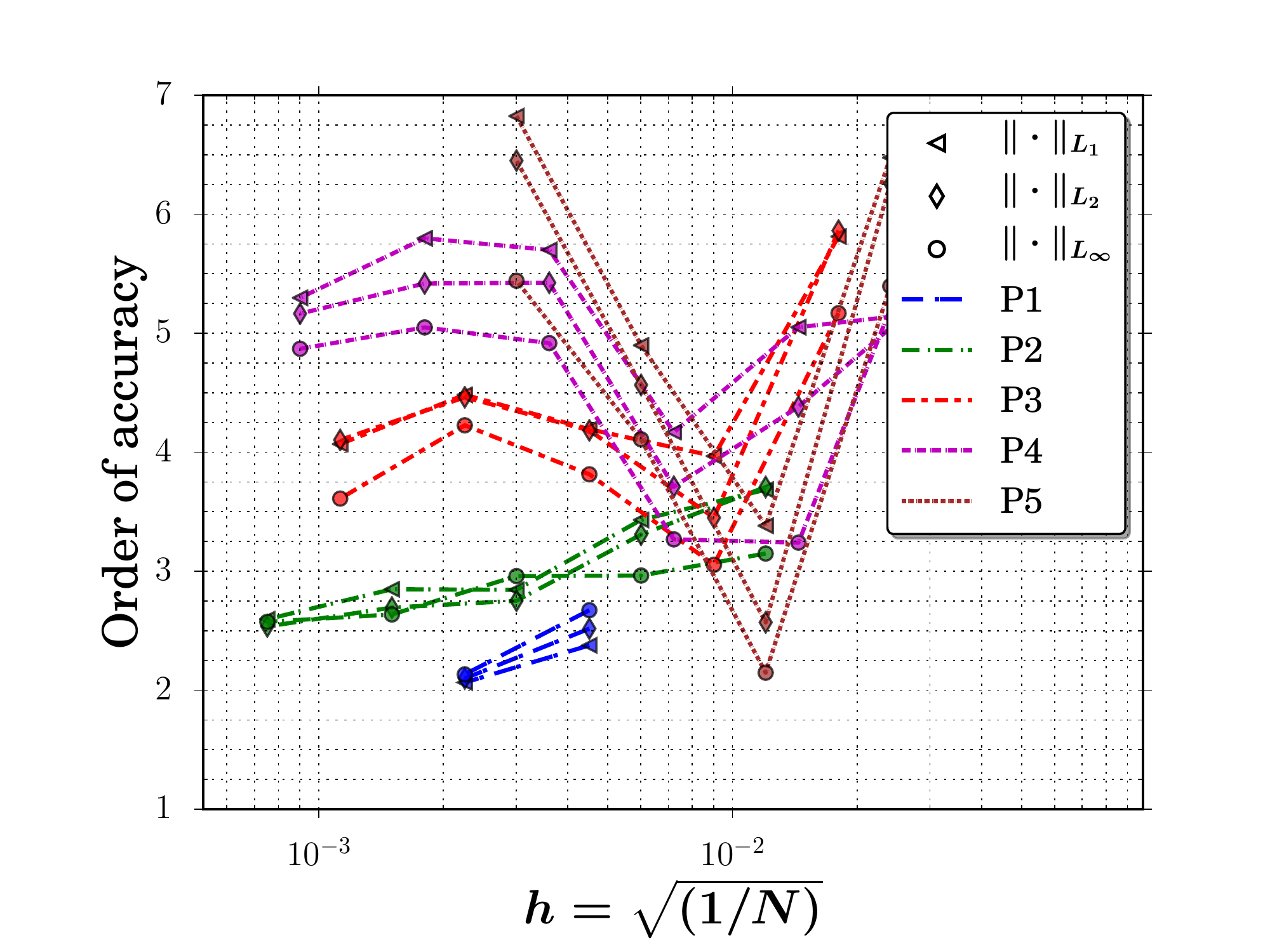}}
~~~
\subfloat[$\rho \tilde \nu$]{
\includegraphics[trim = 11mm 3mm 18mm 12mm, clip,width=0.33\linewidth]
{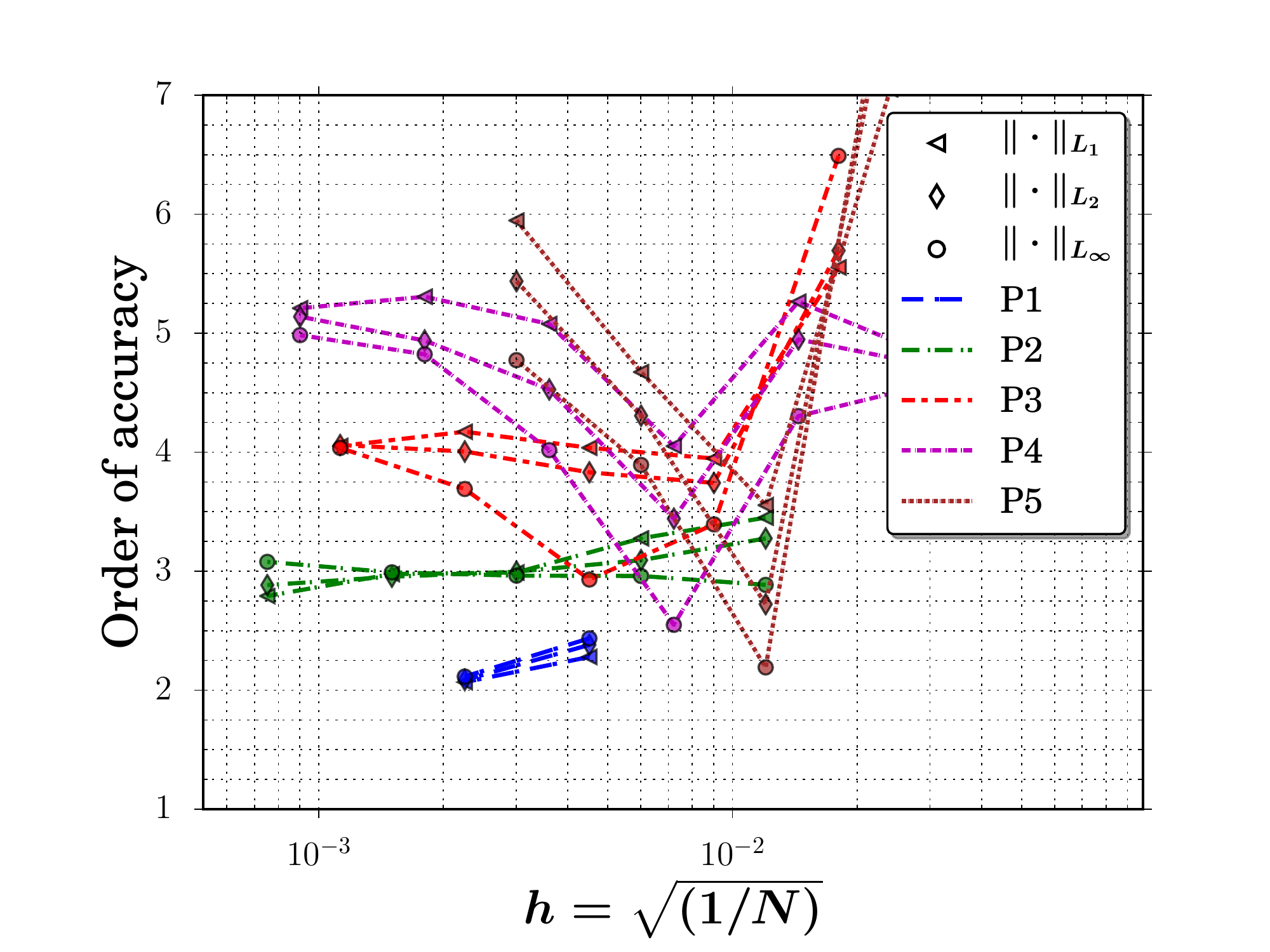}}
\caption{Evolution of the OOAs in $L$ norms versus mesh refinement for MS-3 along with a unitary density field   for $\rho$ and $\rho \tilde \nu$ variables and polynomial degrees P1--P5}
\label{fig:MS-3_orders_cons_dens}
\end{figure}

In our search for the reason behind this effect, we also look into the $\tilde \nu$ field and associated source terms. By deactivating the turbulent portion of MS-3 and hence applying it in the NS mode via $\tilde \nu=0$ in the RANS equations, through Eq. \eqref{eq:mu_t+}  and while keeping $\mu=10^{-6}$, more regular OOAs with an earlier asymptotic range are obtained on the same grid set as shown in Fig. \ref{fig:MS-3_NS}. One reason for the delay is the undershoot of $\tilde \nu$ values in large regions of the domain where it is defined very close to zero, thus activating the modified version of the SA model and hence mismatching the forcing function defined based on a positive $\tilde \nu^{MS}$. This especially occurs on coarse grids and for lower polynomial degrees (P1 is the most affected) in this case, thus delaying the appearance of the monotonic convergence. Another  reason behind the delayed monotonic convergence resides in the fact that by activating the $\tilde \nu$ field, a region of large discretization error appears in the middle of the domain as shown in Fig. \ref{fig:MS-3_NS_err}, which due to its extent, dominates the error norms. Accelerating the convergence would hence require a grid set that clusters the degrees of freedom (DOFs) not only at the wall but in this region as well. Another possibility is to use adaptive techniques as in \cite{Navah2009}.

The delayed asymptotic convergence of MS-3 along with the mentioned lack of compliance with the expected solutions of the RANS-SA model motivates us to consider as well another MS for RANS-modelled flows.

\begin{figure}[!hbt]
\centering
\subfloat[$\rho$]{
\includegraphics[trim = 11mm 3mm 18mm 12mm, clip,width=0.33\linewidth]
{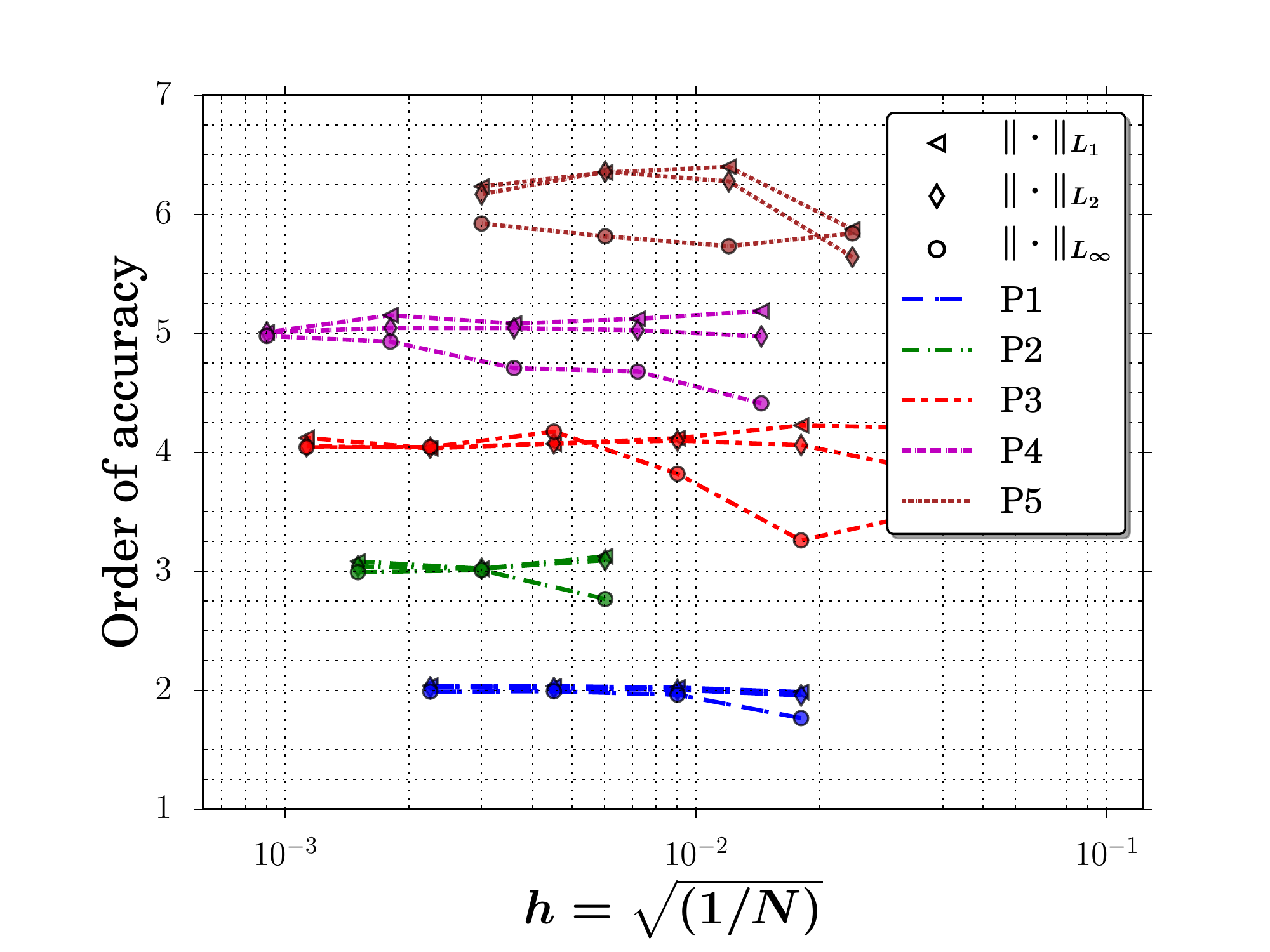}}
~~~
\subfloat[$\rho u$]{
\includegraphics[trim = 11mm 3mm 18mm 12mm, clip,width=0.33\linewidth]
{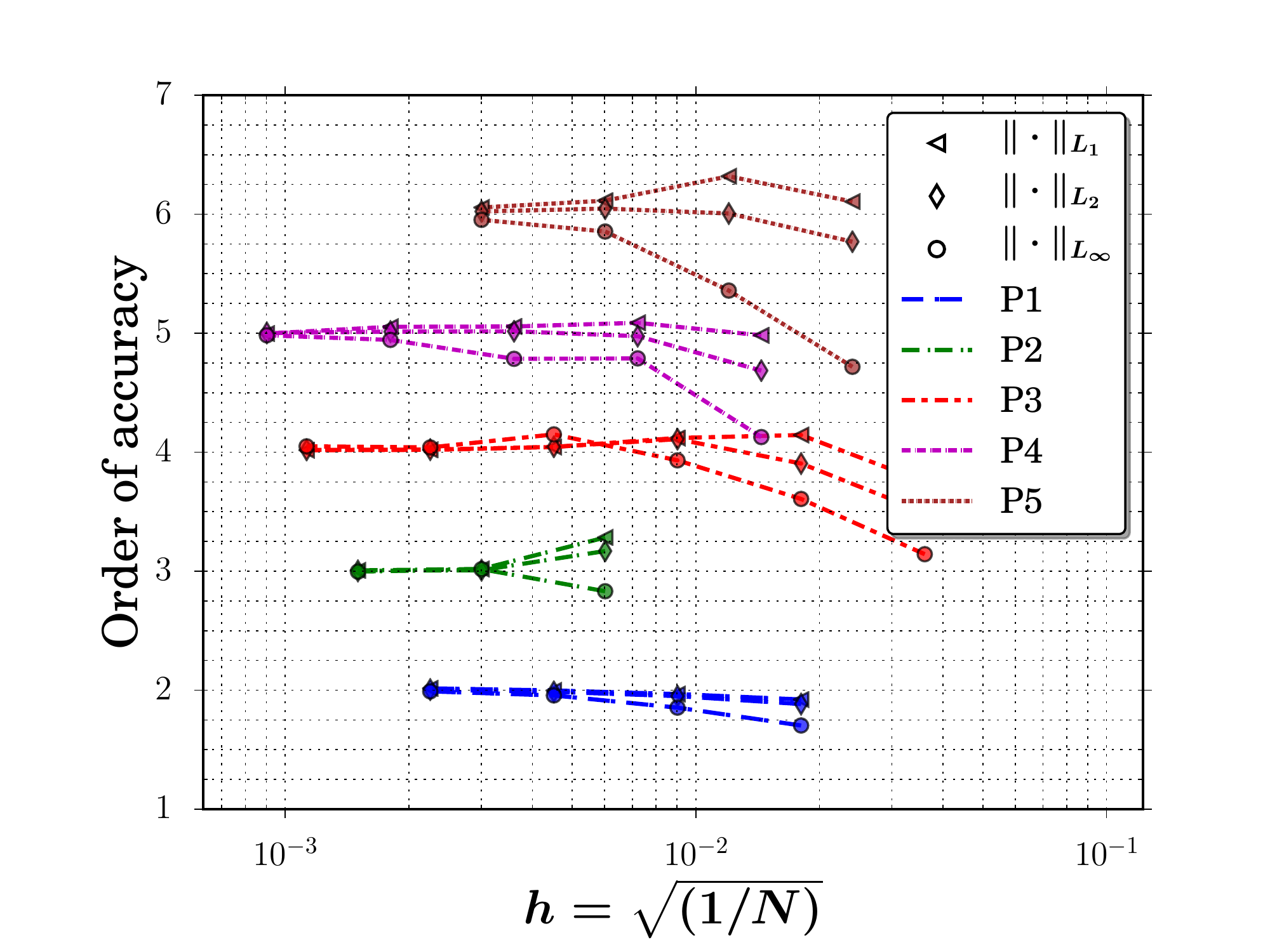}}
\caption{Evolution of the OOAs in $L$ norms versus mesh refinement for MS-3 in the NS mode (with $\tilde \nu=0$ and $\mu=10^{-6}$)  for $\rho$ and $\rho u$ variables and polynomial degrees P1--P5}
\label{fig:MS-3_NS}
\end{figure}

\begin{figure}[!hbt]
\centering
\subfloat[$\mu_\mathrm{eff}=\mu$]{
\includegraphics[width=0.3\linewidth]
{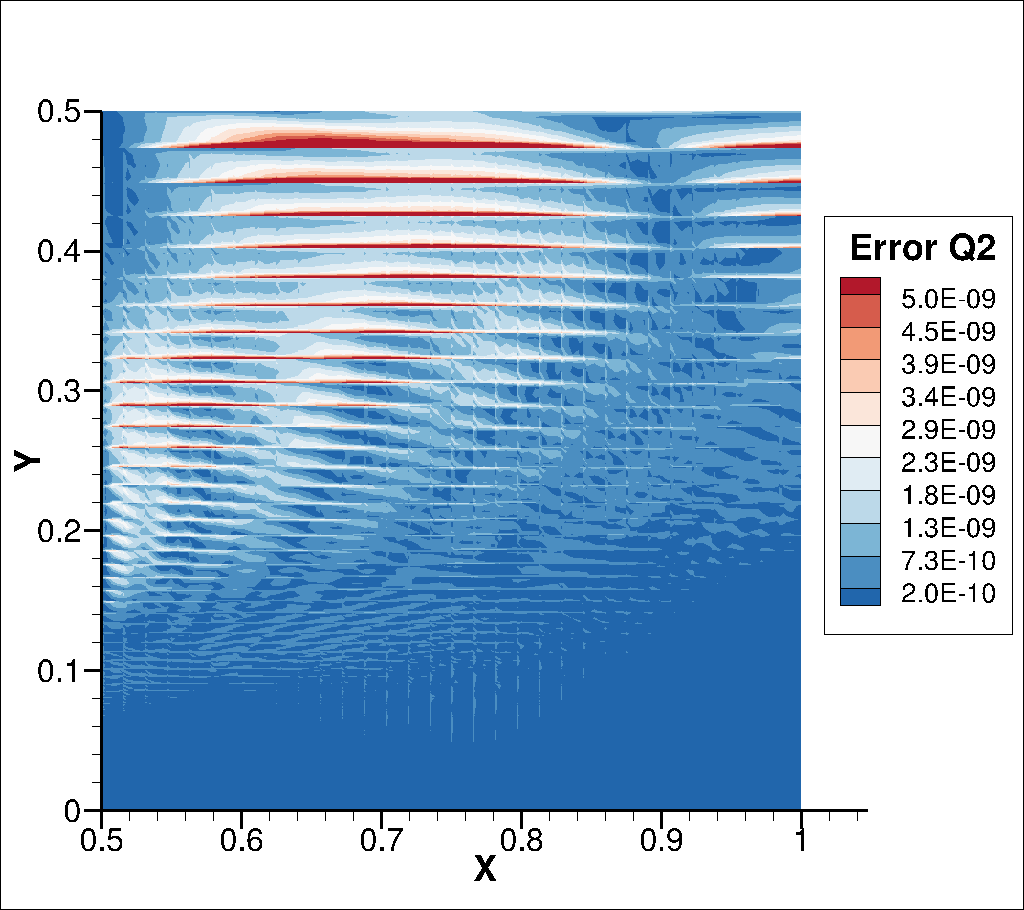}}
~~~
\subfloat[$\mu_\mathrm{eff}=\mu + \mu_t$]{
\includegraphics[width=0.3\linewidth]
{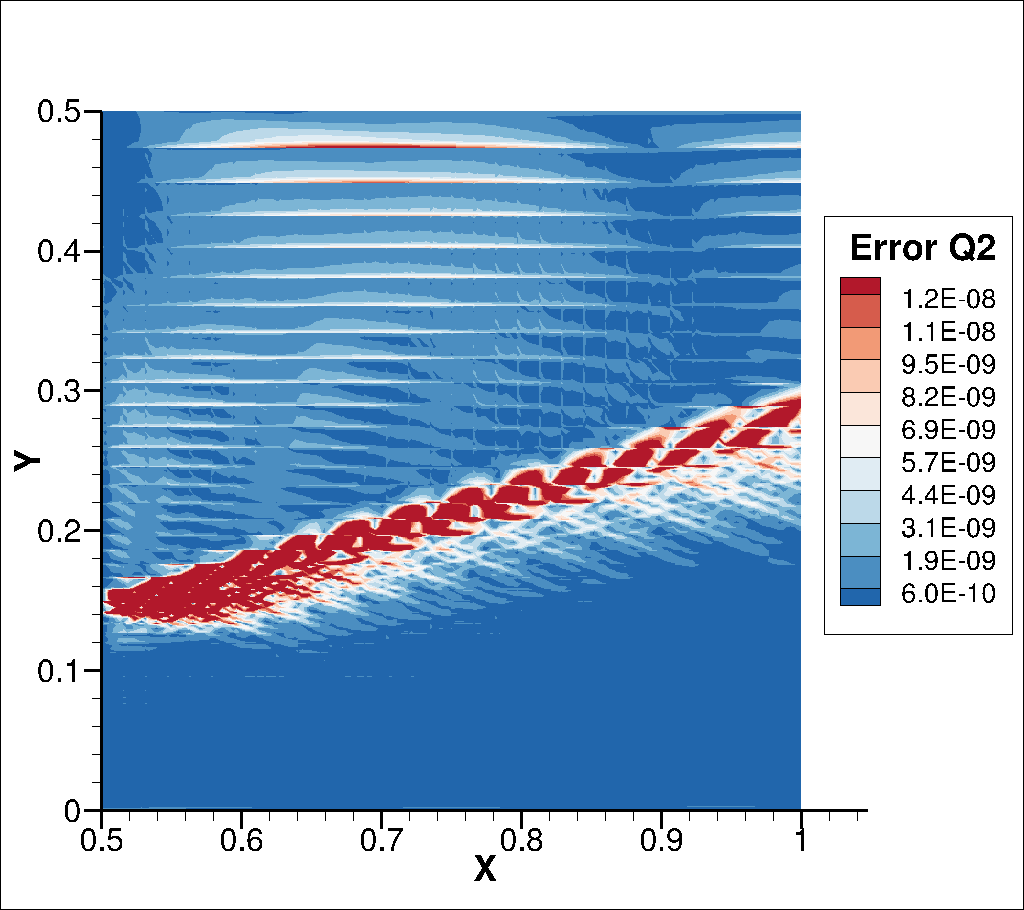}}
\caption{Distribution of the discretization error for the variable $\rho u$ of MS-3 and polynomial degree P4 in NS and RANS-SA modes}
\label{fig:MS-3_NS_err}
\end{figure}

\paragraph{MS-4}
Oliver and colleagues reviewed \cite{Oliver-et-al_2012} the physically realistic manufactured cases presented by E\c{c}a et al. in the year 2007 \cite{Eca-et-al_2007,Eca2007a,Eca2007} and discussed problematic aspects of these MSs that cause a departure from the expected RANS behaviour. They presented \cite{Oliver-et-al_2012} as well a compressible zero-pressure-gradient flat plate MS based on well-established correlations for boundary layer flow and the near-wall $\tilde \nu$ solutions of the SA model. This MS, labelled here MS-4, thus mimics the inner portion, that is, the  viscous sublayer and the logarithmic layer, of a RANS-modelled boundary layer. Oliver and colleagues demonstrated \cite{Oliver-et-al_2012} as well the application of this MS for the verification of a second order finite element solver.

We extend here the application of MS-4 to the verification of high-order schemes. Furthermore, the proper treatment of SA source terms at the no-slip wall in high-order frameworks is presented. The verification of the modified vorticity term of the modified SA equations is discussed. An analysis of the drag coefficient convergence is conducted and the effect of the mesh on the OOAs is presented. Finally,  a non-dimensional version of this MS is introduced and the effect of the non-dimensionalization on the conditioning of the linear system is studied.

We refer the reader to \cite{Oliver-et-al_2012} for explanations on the construction of MS-4 and only present its primitive fields here that are expressed by
\begin{equation}
\begin{aligned}
\rho^{\mathrm{MS}} &= \frac{p_0}{R\,T^{\mathrm{MS}}},   &\\
u^{\mathrm{MS}}  &= \frac{u_\infty}{A}\mathrm{sin} \left( \frac{A}{u_\infty}u_{eq}\right),&\\
v^{\mathrm{MS}}  &= -\eta_v \frac{d u_\tau}{dx}y,   &  \\
p^{\mathrm{MS}}  &= p_0 , & \\
\tilde{\nu}^{\mathrm{MS}}  &= \kappa u_\tau y -\alpha y^2, & 
\label{eq:trigo_MS-4}
\end{aligned}
\end{equation}
where
\begin{equation}
\begin{gathered}
R=287.0 \, (\mathrm{J/(kg\,K})),  \quad p_0=1\times 10^4 \, (\mathrm{N/m^2}), \\ T_\infty =250.0\, (\mathrm{K}), \quad \mu=1\times 10^{-4} \,(\mathrm{kg/(m\,s)}),\quad \alpha= 5.0\,(\mathrm{1/s}),
\label{eq:vars_MS-4-a}
\end{gathered}
\end{equation}
and furthermore 
\begin{equation}
\begin{gathered}
{Ma}_\infty = 0.8, \quad T^{\mathrm{MS}}=T_\infty \left[ 1+r_T\frac{\gamma-1}{2}{Ma}_\infty^2 \left( 1- \left( \frac{u^\mathrm{MS}}{u_\infty} \right)^2 \right) \right], \quad
r_T=0.9,\\ u_\infty = {Ma}_\infty\sqrt{\gamma \,R\, T_\infty},   
\quad
A=\sqrt{1-T_\infty/T_w}, \quad T_w = T_\infty \left[ 1+r_T \frac{\gamma-1}{2} {Ma}_\infty^2 \right], \quad
u_{eq} = u_\tau u_{eq}^+, \\
u_\tau = \sqrt{\frac{\tau_w}{\rho_w}} = u_\infty \sqrt{\frac{c_f}{2}},  \quad c_f = \frac{1}{F_c}c_{f,inc} \left(\frac{1}{F_c} \mathrm{Re}_x \right), \quad
F_c = \frac{T_w/T_\infty -1}{(\mathrm{sin}^{-1}A)^2}, \\
c_{f,inc}(\mathrm{Re}_x) = C_{cf}\mathrm{Re}_x^{-1/7}, \quad C_{cf} = 0.027, 
\quad \mathrm{Re}_x = \rho_\infty u_\infty x/\mu,\\
\rho_\infty= \frac{p_0}{R\,T_\infty}, \quad u_{eq}^+ = \frac{1}{\kappa}\mathrm{log}\left( 1+ \kappa y^+  \right) + C_1 \left[ 1 - e^{-y^+/\eta_1} - \frac{y^+}{\eta_1}e^{-y^+b} \right], \\
\quad C_1=-(1/\kappa)\mathrm{log}(\kappa)+C, \quad \eta_1 = 11.0, \quad b=0.33,\quad C=5.0,\\
y^+=\frac{y}{l_v}, \quad l_v = \frac{\nu_w}{u_\tau}, \quad \nu_w = \frac{\mu}{\rho_w}, \quad \rho_w = \frac{p_0}{R\,T_w}, \quad \mathrm{and}\quad \eta_v = 30.0.
\label{eq:vars_MS-4-b}
\end{gathered}
\end{equation}

We set all the remaining constants and parameters necessary for the determination of the forcing functions to the values specified in Section \ref{sec:goveq}. The domain is defined by $\Omega = [0.5,0.55]\times [0.0,0.034]$ for which the Reynolds number varies between $176,690  \le \mathrm{Re}_x \le 194,359$. The boundary conditions are the no-slip adiabatic wall at the bottom and the Riemann and viscous BCs on all other frontiers of the domain. The fields of MS-4 are presented in Fig. \ref{fig:MS-4}.

\subsubsection{Asymptotic values of the SA source terms at the wall}
With regards to the value of some of the SA source terms at the wall (see Fig. \ref{fig:MS-4_budget} for the SA budget of MS-4  versus $y$ at $x=0.525$), there is a subtle aspect worth of mention. More precisely, the production and destruction terms of the original SA model, respectively expressed by Eqs. \eqref{eq:prod+} and \eqref{eq:D+}, scale with quotients of the form $\propto \frac{\tilde \nu^{a_1}}{d_w^{a_2}}$  where $a_1$ and $a_2$ are real exponents.  These quotients could numerically take NaN values at the wall, for example in situations where a weak boundary condition is employed to enforce $\tilde \nu \approx 0$, as it is commonly the case in compact high-order frameworks, resulting in $\tilde \nu  \neq 0$ at the wall, whereas an exact computation of the wall distance yields ${d_w}=0$. To alleviate this, we rather impose the limit values of the production and destruction terms at the wall which respectively read 
\begin{equation}
(\rho \, \mathcal{P}) |_{\mathrm{wall}} = -c_{b1} \,c_{t3}\,\tau_\mathrm{wall} + c_{b1}\, \tau_\mathrm{wall},
\end{equation}
and
\begin{equation}
(\rho \, \mathcal{D}) |_{\mathrm{wall}} = -c_{b1} \,c_{t3}\,\tau_\mathrm{wall} + c_{w1}\,\kappa^2\, \tau_\mathrm{wall},
\end{equation}
where $\tau_\mathrm{wall}= t_{k-1}\,F^{vis}_{kj}|_{\mathrm{wall}}\, n_j$ with $k \in [2\,..\,{N_\mathrm{d}+1}]$, $j$ $\in [1\,..\,N_\mathrm{d}]$ and $t_{k-1}$ and $n_j$ denote the components of respectively the unit tangential and outward normal vectors at the wall. The tangent vector is the one on the plane defined by the vectors ${\bf{e}}_{k-1}\,F^{vis}_{kj}|_{\mathrm{wall}}\, n_j$ and ${\bf{e}}_j n_j$ and its direction is aligned with the free stream.

This results in a smooth transition of the production and destruction quantities in the domain to their asymptotic values at the wall, thus avoiding numerical invalidity and solution inaccuracy. Let's note that for code verification purposes, the MS forcing functions should as well account for these wall limit values to enable the verification of the asymptotic behaviour of these terms.

\subsubsection{Verification of the modified vorticity term}

None of the trigonometric MSs for turbulent flow presented in \cite{Navah2017a} verifies the term \eqref{eq:S-} which is part of the modified SA version but can be activated in regions of $\tilde \nu > 0$ and affect both the production and the destruction terms. The condition that enables this term is rather $\overline s<-c_{v2}s$. The distributions of $\overline s$, $-c_{v2}s$ and $\overline s+c_{v2}s$ for MS-4 along $x=0.525$ are illustrated in Fig. \ref{fig:MS-4_Sbar} demonstrating that this MS does contain a region, viz. $4.3\times10^{-4} \leq y \leq 6.6\times10^{-4}$, where $\overline s+c_{v2}s < 0$. Numerical tests confirmed that MS-4 is indeed sensitive to the presence of an alteration in the implementation of $\frac{(1+d\alpha)\,s(c_{v2}^2s+c_{v3}\bar{s})}{(c_{v3}-2c_{v2})s-\bar{s}}$ (see Eq. \eqref{eq:S-}) where $d\alpha$ is changed from its genuine value of $d\alpha=0$ (the unaltered term) to $d\alpha=1\times 10^{-7}$, in which case the OOAs drop from their expected values, thus revealing the presence of a bug in the implementation. 

\begin{figure}[!hbt]
\centering
\includegraphics[trim = 2mm 4mm 24mm 21mm, clip,width=0.33\linewidth]
{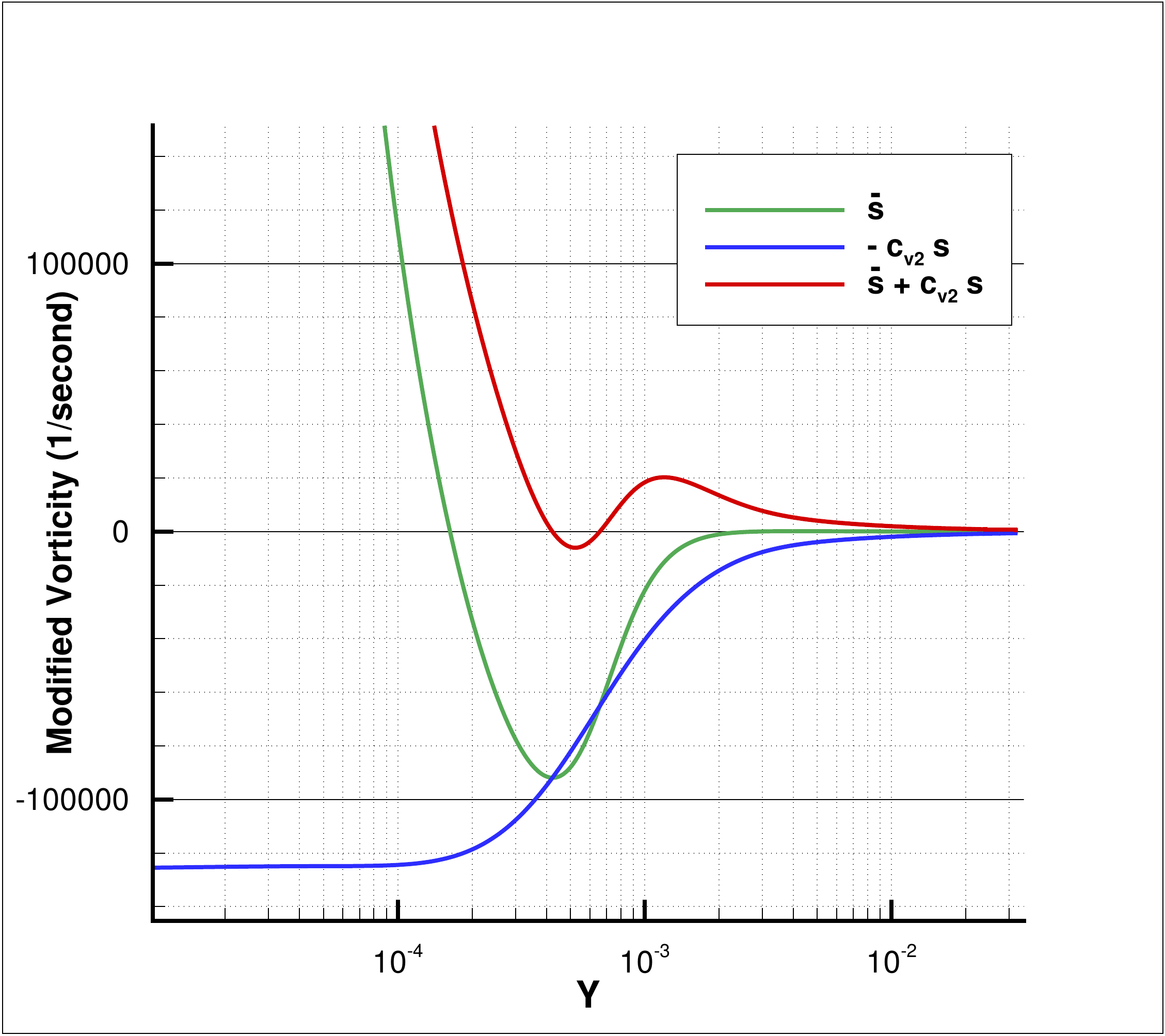}
\caption{Distributions of $\overline s$, $-c_{v2}s$ and $\overline s+c_{v2}s$ for MS-4 along $x=0.525$}
\label{fig:MS-4_Sbar}
\end{figure}

\subsubsection{Grid sensitivity of wall-bounded turbulent flows}
\label{sec:grisenswall}

The effects of mesh stretching and the height of the first element at the wall on the performance of high-order codes are studied here via MS-4 and by considering three grid sets, labelled A, B and C, each of which is generated by sequentially merging couples of adjacent element rows and columns of the respective fine grid, resulting in 10 mesh levels: L0 to L9 in the order of increasing refinement. The fine grids of all sets share the same number of elements, $1024 \times 3072$, along with a uniform element size distribution in the $x$ direction. As for the $y$ direction, geometric series  expansion ratios of $r_y^\mathrm{A}=1.00275822561$, $r_y^\mathrm{B}=1.0016969186$ and $r_y^\mathrm{C}=1.00111787853$ are utilized to establish mesh size distributions that cluster the elements at the wall. These ratios are furthermore chosen such that the normalized heights, $y^+ =\frac{y\,\sqrt{\tau_\mathrm{wall}}/\rho_\mathrm{wall}}{\nu_\mathrm{wall}}$, of the first element at the wall and at $x=0.525$ of the mesh L9 of the sets B and C are respectively 16 and 64 times that of the set A. The progression of exact $y^+$ values versus mesh refinement is compared in Fig. \ref{fig:MS-4_gridsets} for the three grid sets.

\begin{figure}[!hbt]
\centering
\subfloat[$y^+$]{
\includegraphics[trim = 2mm 4mm 18mm 10mm, clip,width=0.33\linewidth]
{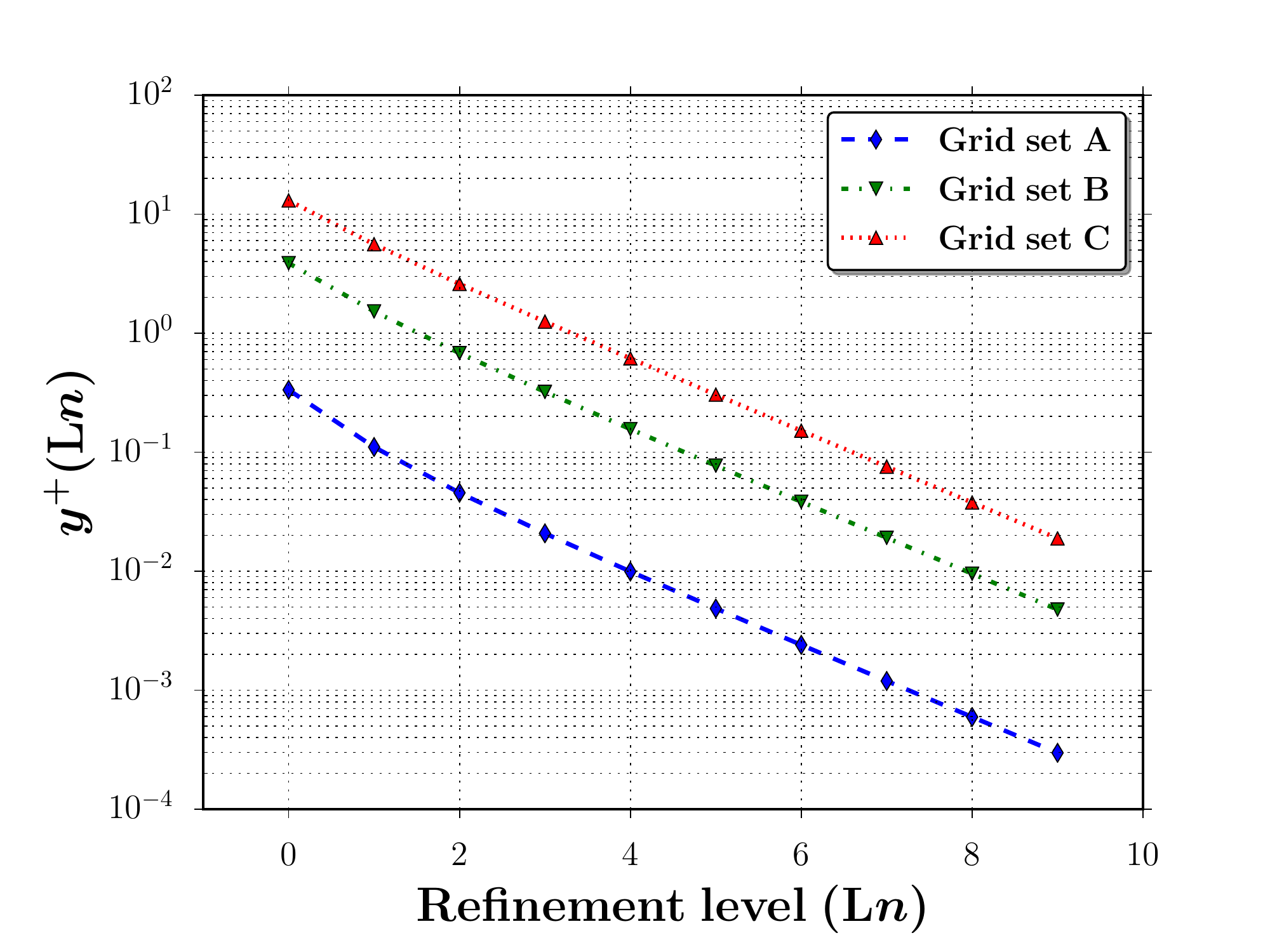}}~~~
\subfloat[$y^+$ refinement rate]{
\includegraphics[trim = 2mm 4mm 18mm 10mm, clip,width=0.33\linewidth]
{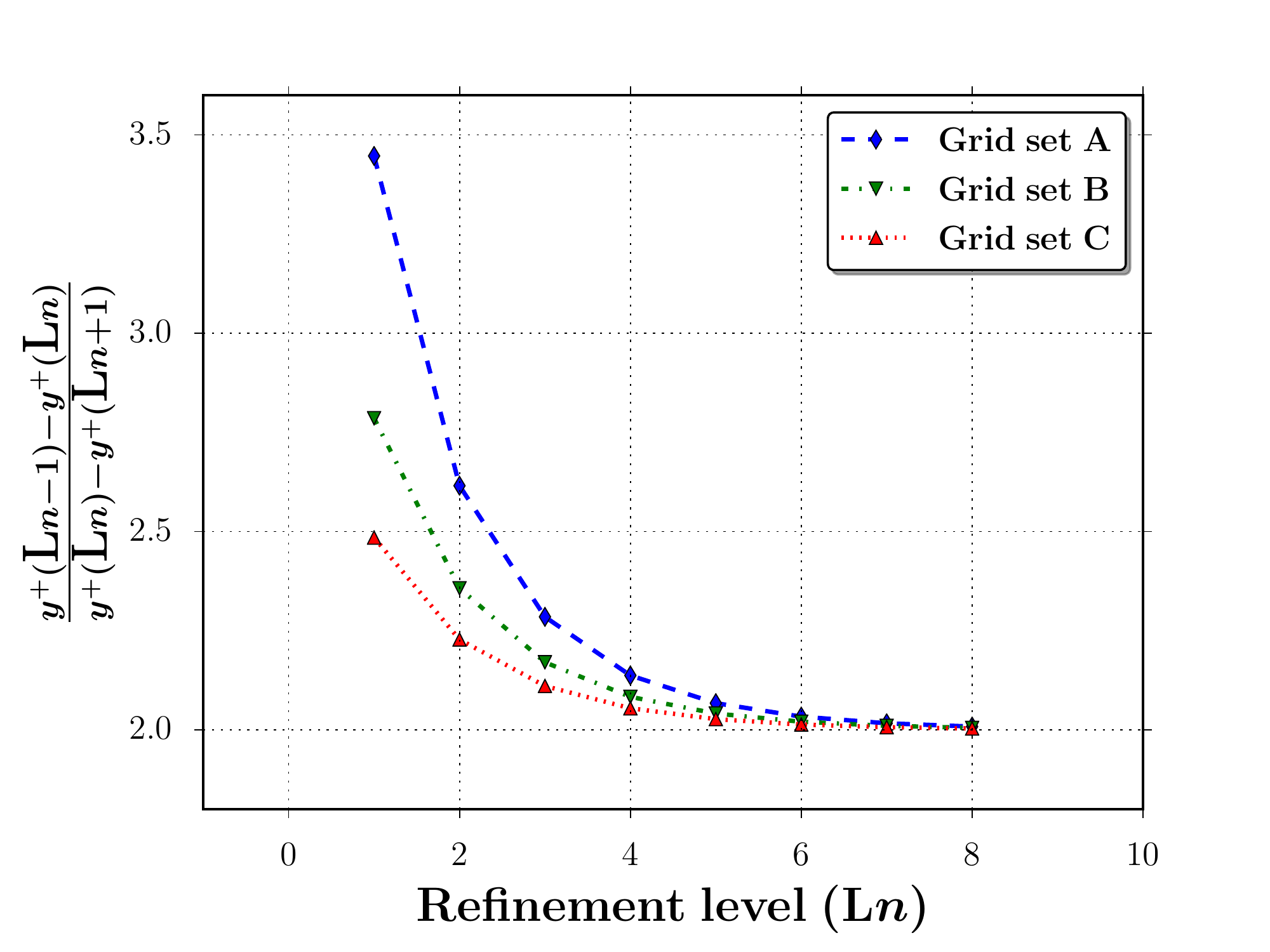}}
\caption{Comparison of exact $y^+$ and its rate of refinement based on the $1^{st}$ element height at the wall at $x=0.525$ for grid sets A, B and C of MS-4}
\label{fig:MS-4_gridsets}
\end{figure}

The results of MS-4 in terms of discretization errors and OOAs versus mesh refinement for the grid set B are presented in Figs. \ref{fig:Err_allE_allP_MS-4} and \ref{fig:Orders_MS-4} for the $L$ norms and in Figs. \ref{fig:Err_allE_allP_H_MS-4} and \ref{fig:Orders_H_MS-4} for the $H$ norms. In general, the error levels drop consistently with both mesh and polynomial refinements as expected. The rates of convergence are nevertheless not perfectly monotonic and minor discrepancies with the theoretical values are observed for some variables and norms especially for high-order discretizations. Similar results were obtained in the laminar mode, i.e., by deactivating the SA field of MS-4 and considering $\mu_t=0$; thus suggesting that the observed discrepancies are not caused in this case by the additional complexity of the SA equation.  Compared to other state variables, the OOAs of $\rho\tilde \nu$ are indeed very satisfactory in $L$ norms, while exhibiting a rather slow but steady convergence in $H$ norms. Amongst all variables, the results of $\rho v$ are the most irregular with regards to the expected OOAs. 

At this point, it can be concluded that the high-order code can not be considered verified solely based on the results of MS-4 on grid set B, since the obtained orders are not sufficiently and steadily close to expected values for all variables and all norms. However, by taking into account the evidence accumulated by verification via manufactured cases of \cite{Navah2017a} as well as by  MS-1 through MS-3 of the present study, it is very unlikely that the observed irregularities be caused by an implementation error. A possible explanation could rather be provided by the grid sensitivity study conducted previously \cite{Navah2017a} for the trigonometric manufactured cases where it is shown that the inadequate application of grid stretching to solutions exempt of sharp gradients delays the appearance of the asymptotic range. One such solution field is for example $\rho v$ of MS-4 featuring very mild gradients compared to those of $\rho u$ for which  the orders are closer to the expected values and exhibit a rather monotonic evolution. A comparison of the OOAs of the latter solution field in $L$ norms between the three grid sets is provided in Fig. \ref{fig:Orders_MS-4_sets} where it can be observed that the grid set A and B produce similar orders but that the grid set C yields a delayed asymptotic range except for the P1 discretization, which thus proves to be the least sensitive to grid stretching amongst all considered polynomial degrees in this case. 
%

\begin{figure}[!hbt]
\centering
\subfloat[Grid set A]{
\includegraphics[trim = 16mm 3mm 18mm 13mm, clip,width=0.33\linewidth]{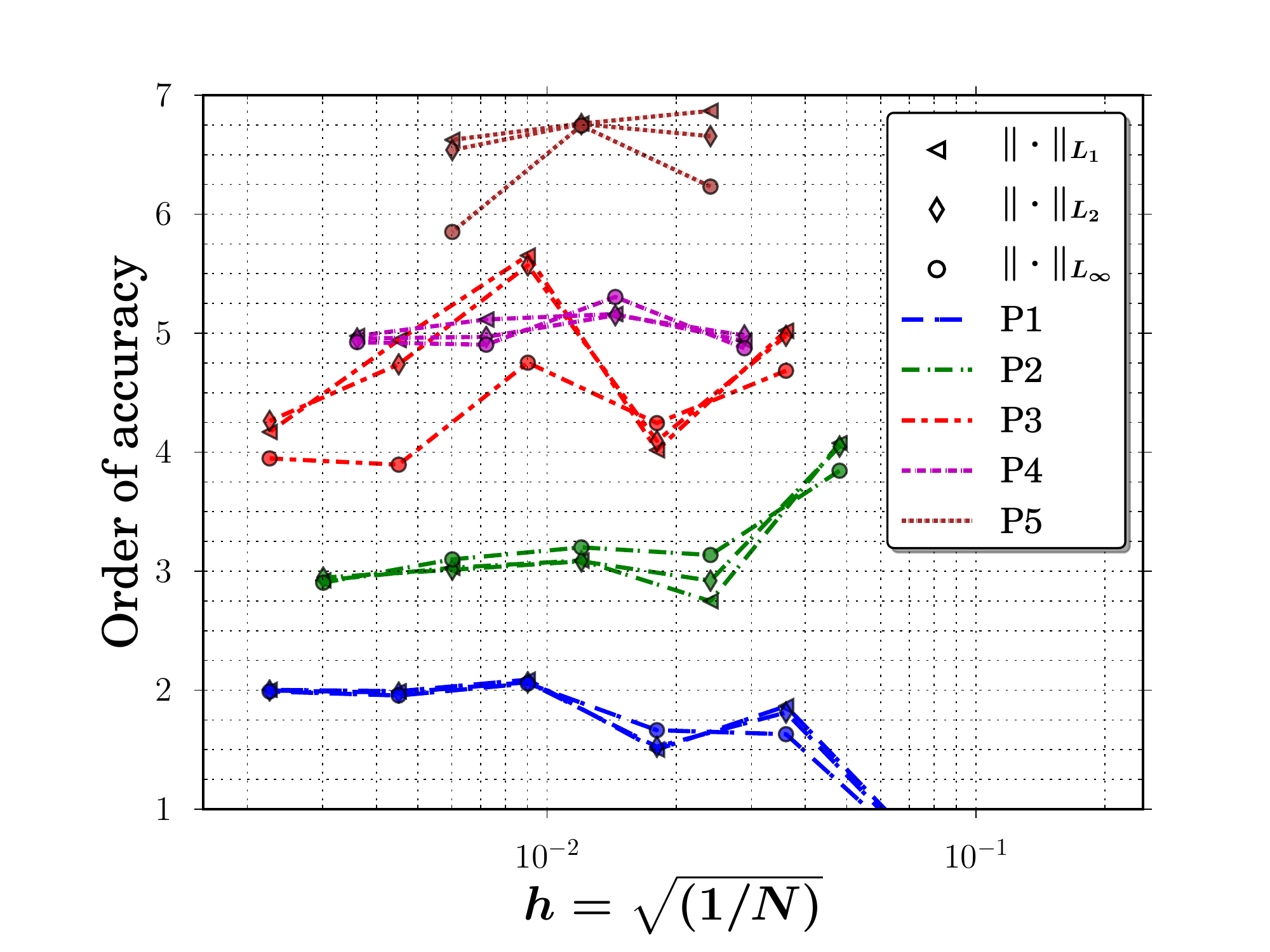}}~~~
\subfloat[Grid set B]{
\includegraphics[trim = 16mm 3mm 18mm 13mm, clip,width=0.33\linewidth]{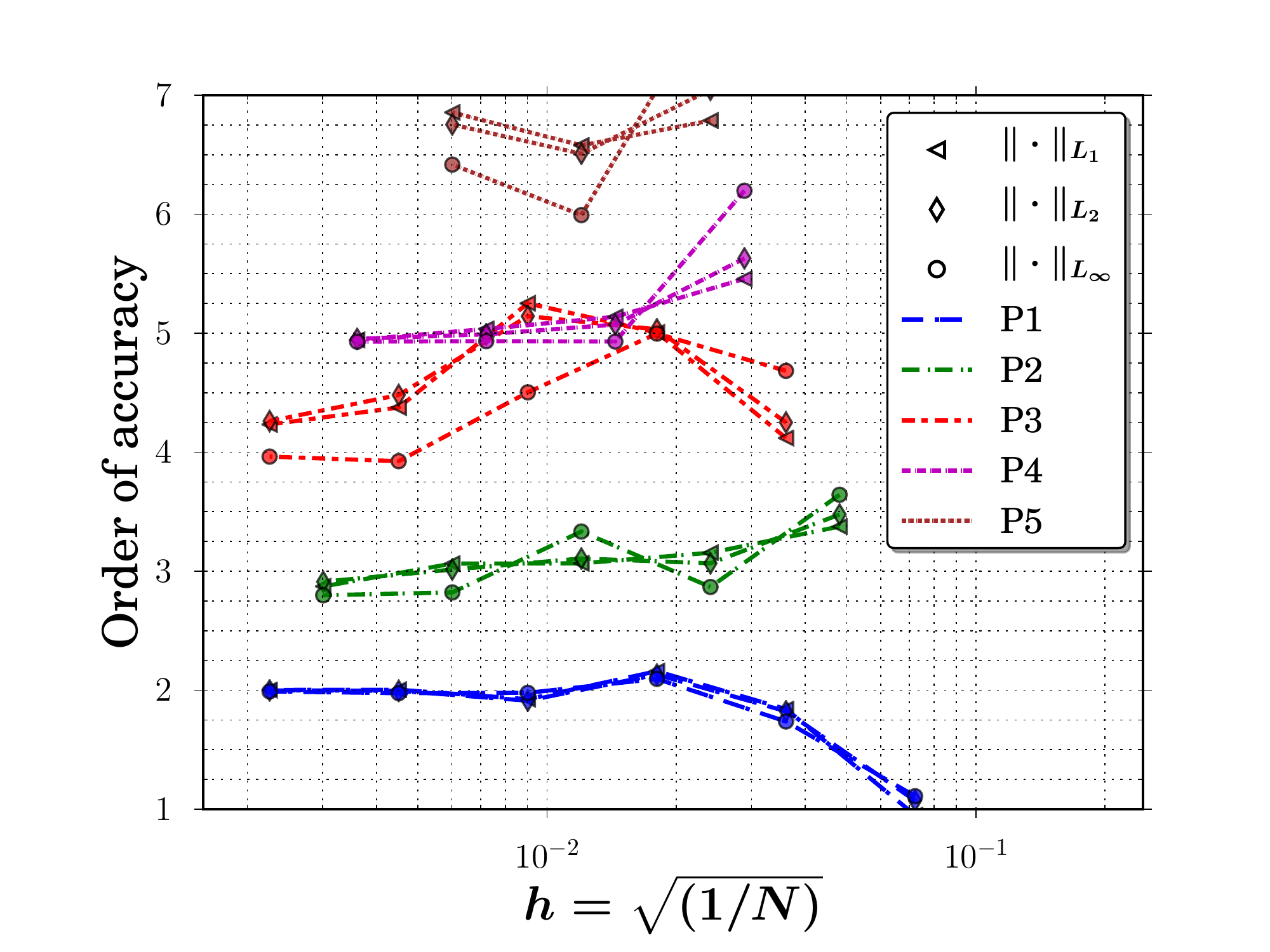}}
\vfill
\subfloat[Grid set C]{
\includegraphics[trim = 16mm 3mm 18mm 13mm, clip,width=0.33\linewidth]{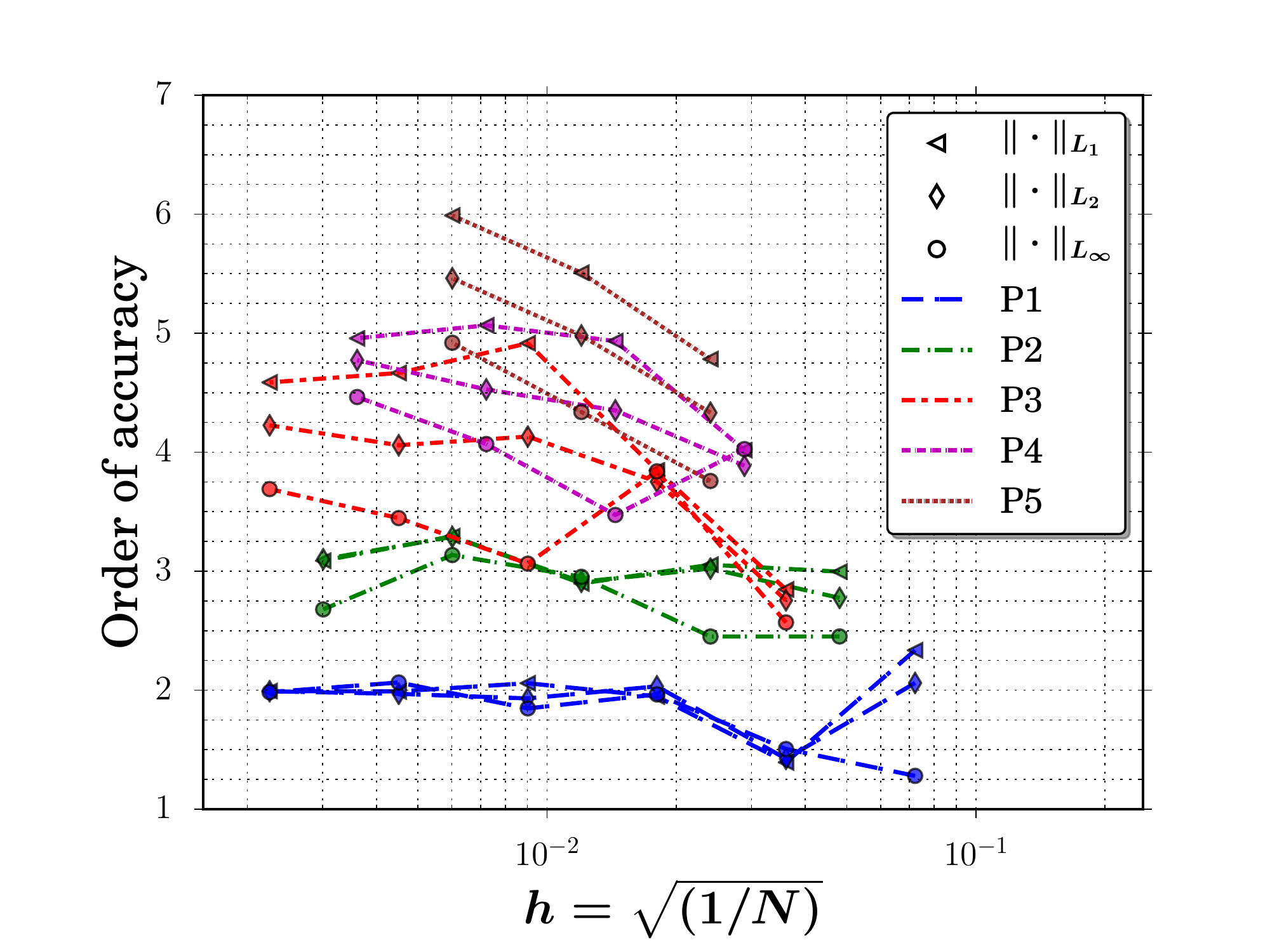}}
\caption{Evolution of the OOAs in $L_1$, $L_2$ and $L_\infty$ norms versus mesh refinement for $\rho u$ of dimensional MS-4 and polynomial degrees $\mathrm{P}1$--$\mathrm{P}5$ on grid sets A, B and C}
\label{fig:Orders_MS-4_sets}
\end{figure}

\begin{figure}[!hbt]
\centering
\subfloat[Grid set A]{
\includegraphics[trim = 5mm 4mm 4mm 10mm, clip,width=0.36\linewidth]{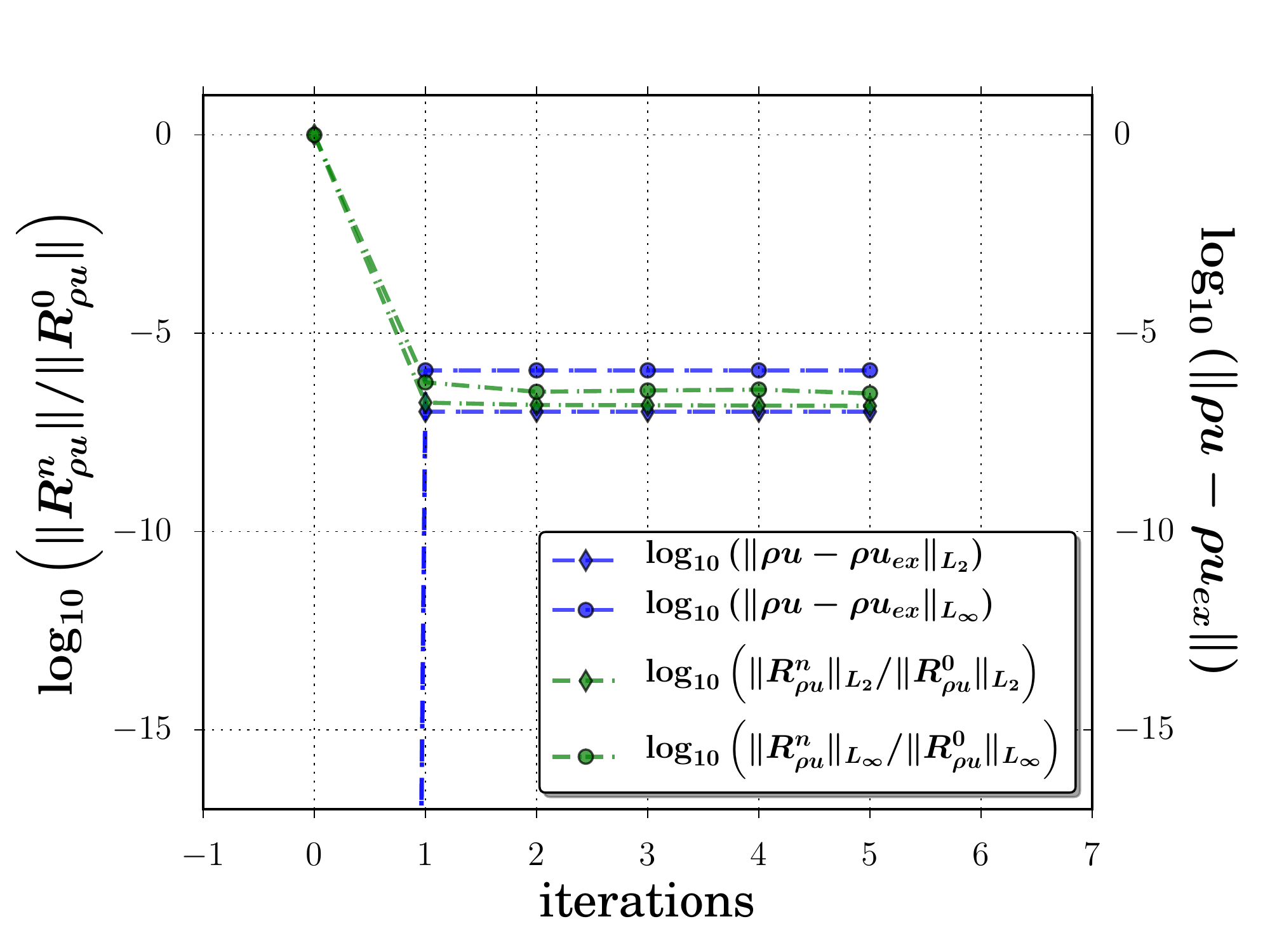}}
~~~
\subfloat[Grid set B]{
\includegraphics[trim = 5mm 4mm 4mm 10mm, clip,width=0.36\linewidth]{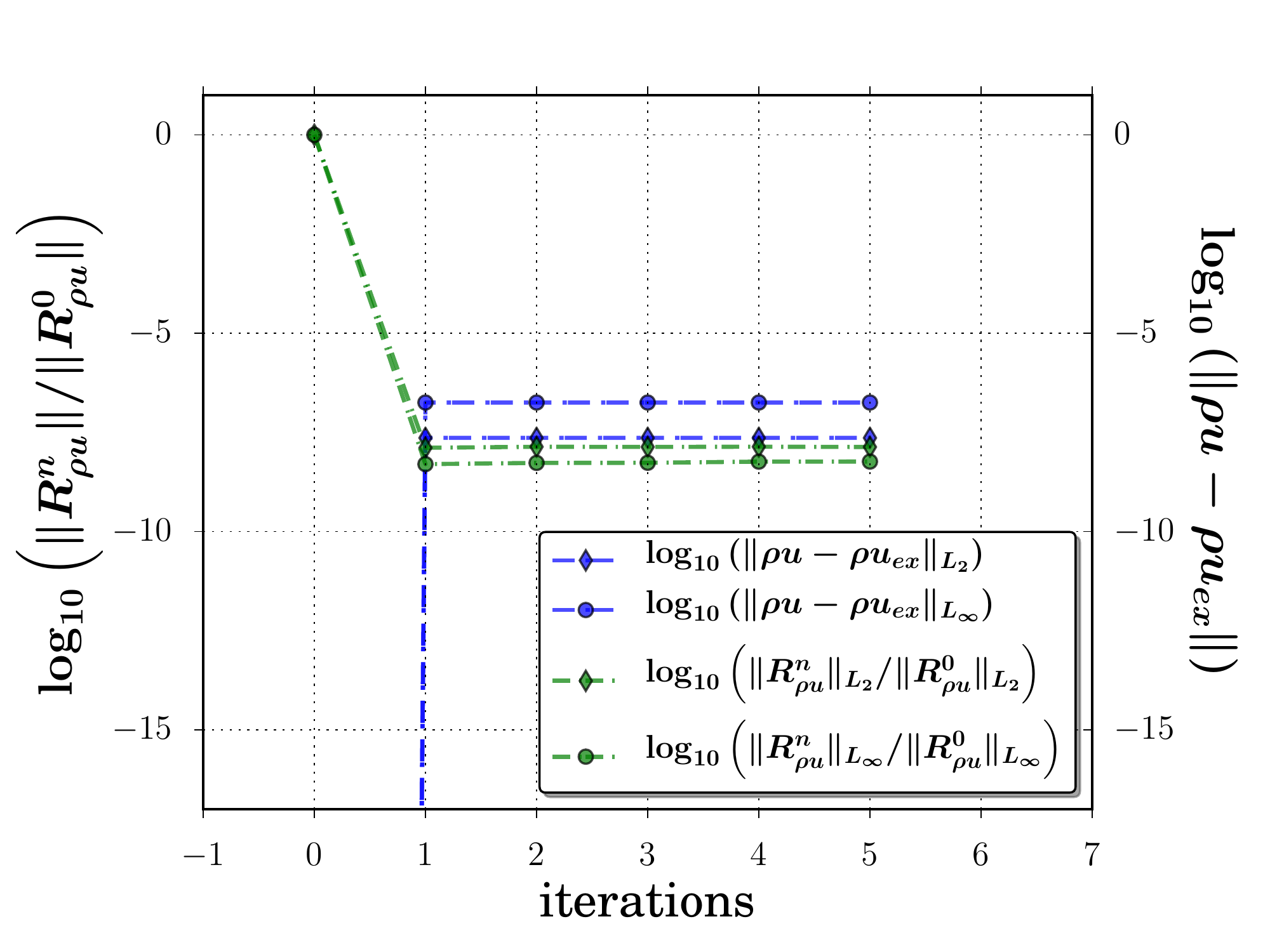}}
\vfill
\subfloat[Grid set C]{
\includegraphics[trim = 5mm 4mm 4mm 10mm, clip,width=0.36\linewidth]{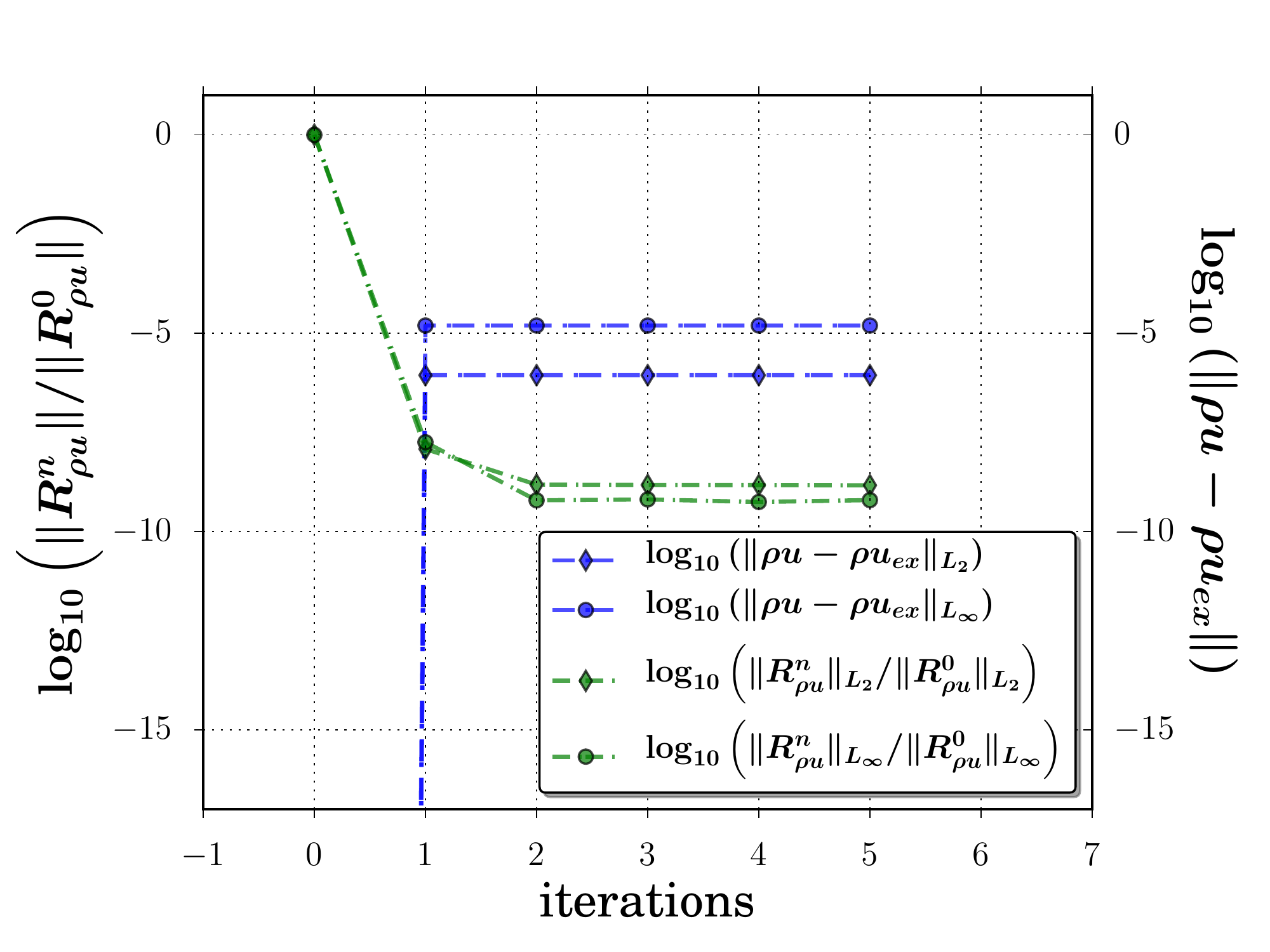}}
\caption{Residual convergence and discretization error of $\rho u$ in $L_2$ and $L_\infty$ norms versus number of Newton iterations for the dimensional MS-4 with polynomial degree $\mathrm{P}3$ and refinement level L4 of grid sets A, B and C}
\label{fig:Orders_MS-4_sets_conv}
\end{figure}

The grid clustering at the wall decreases from the grid sets A to B to C as their corresponding geometric expansion ratios approach the value of unity in the same order. This is also reflected in the increase of their corresponding exact $y^+$ values (see Fig. \ref{fig:MS-4_gridsets}). This clustering affects the level of achievable numerical convergence as well as how the discretization error is tackled by the grid. These two effects are reflected in the plots of Fig. \ref{fig:Orders_MS-4_sets_conv} comparing the evolution of the relative residual and that of the discretization error of $\rho u$ versus the number of Newton iterations for P3 and grid level L4. These plots show that the grid clustering limits the residual minimization as the final residuals in both $L_2$ and $L_\infty$ norms decrease as the grid distribution approaches uniformity in the $y$ direction. The discretization error norms at convergence on the other hand, decrease from grid set A to B to re-increase from B to C, signifying that the grid distribution B approximates the optimal element size distribution in the $y$ direction the best. In sum, these observations suggest that the numerical stiffness increases with mesh clustering at the wall but that the clustering intensification does not reduce the overall discretization error after a certain point.

The analysis of the grid stretching effect can be further enriched by considering the computation of output functionals relevant to engineering design such  as the drag coefficient, $C_d$. The verification of the latter is enabled by the choice of a physically realistic wall-bounded manufactured solution that provides the exact value of the drag coefficient, which for MS-4 is
\begin{equation}
C_d^{\mathrm{ex}} =\frac{\int_\mathrm{wall} \,\mu_\mathrm \,\left(\frac{\partial u}{\partial y} \right)^{\mathrm{MS}} \,dx}{p_\mathrm{dyn}\int_\mathrm{wall} \,dx}=3.6013213414944\times 10^{-03},
\end{equation}
where $p_\mathrm{dyn}=0.5\,p_0$ is the reference dynamic pressure. It thus is possible to evaluate the exact $C_d$ error as $\mathcal{E}^{\mathrm{ex}}_{C_d}=|C_d - C_d^\mathrm{ex} |$. 

\begin{figure}[!hbt]
\centering
\subfloat[Grid set A]{
\includegraphics[trim = 2mm 1mm 18mm 10mm, clip,width=0.33\linewidth]
{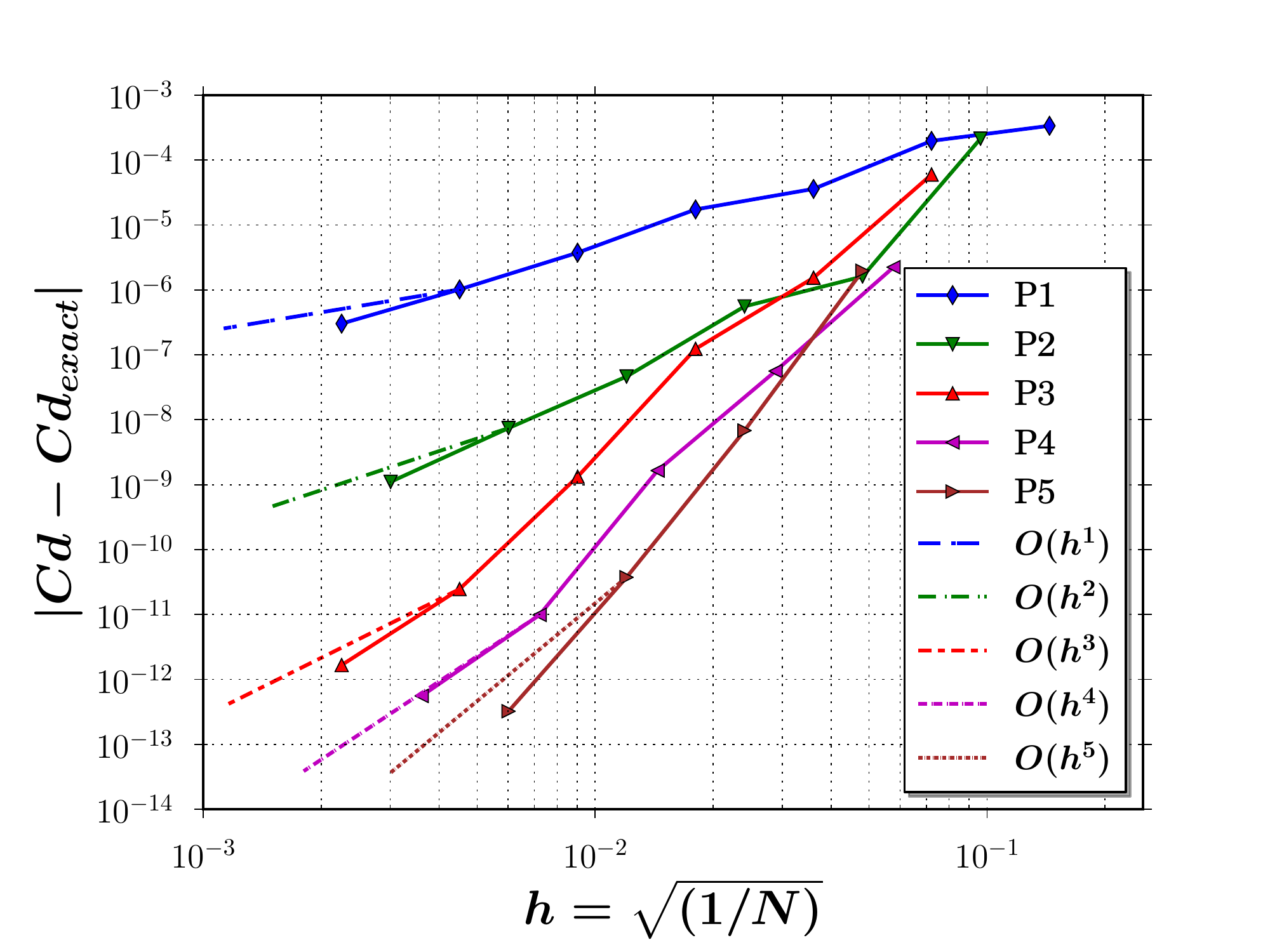}}
~~~
\subfloat[Grid set B]{
\includegraphics[trim = 2mm 1mm 18mm 10mm, clip,width=0.33\linewidth]
{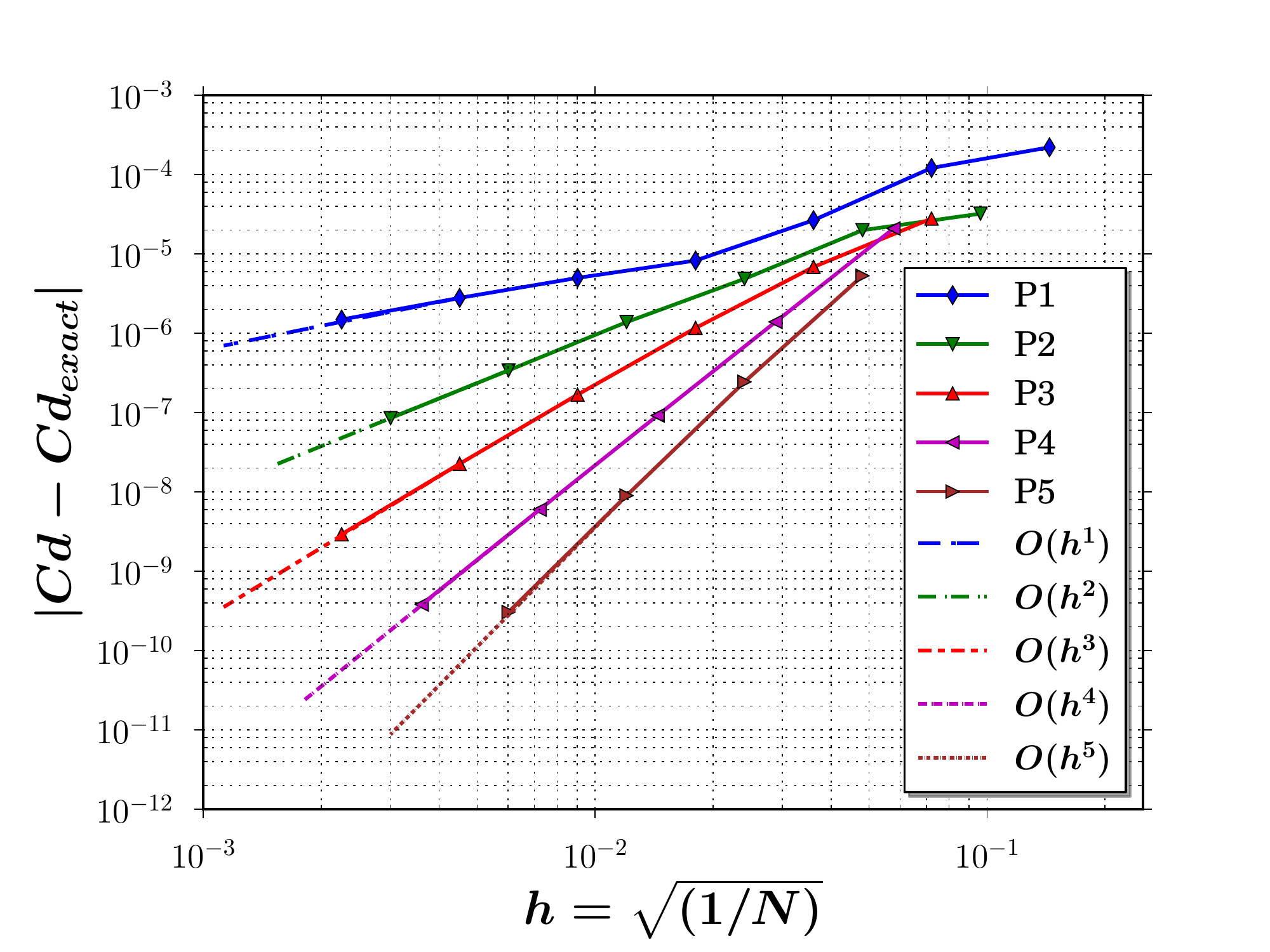}}
\vfill
\subfloat[Grid set C]{
\includegraphics[trim = 2mm 1mm 18mm 10mm, clip,width=0.33\linewidth]
{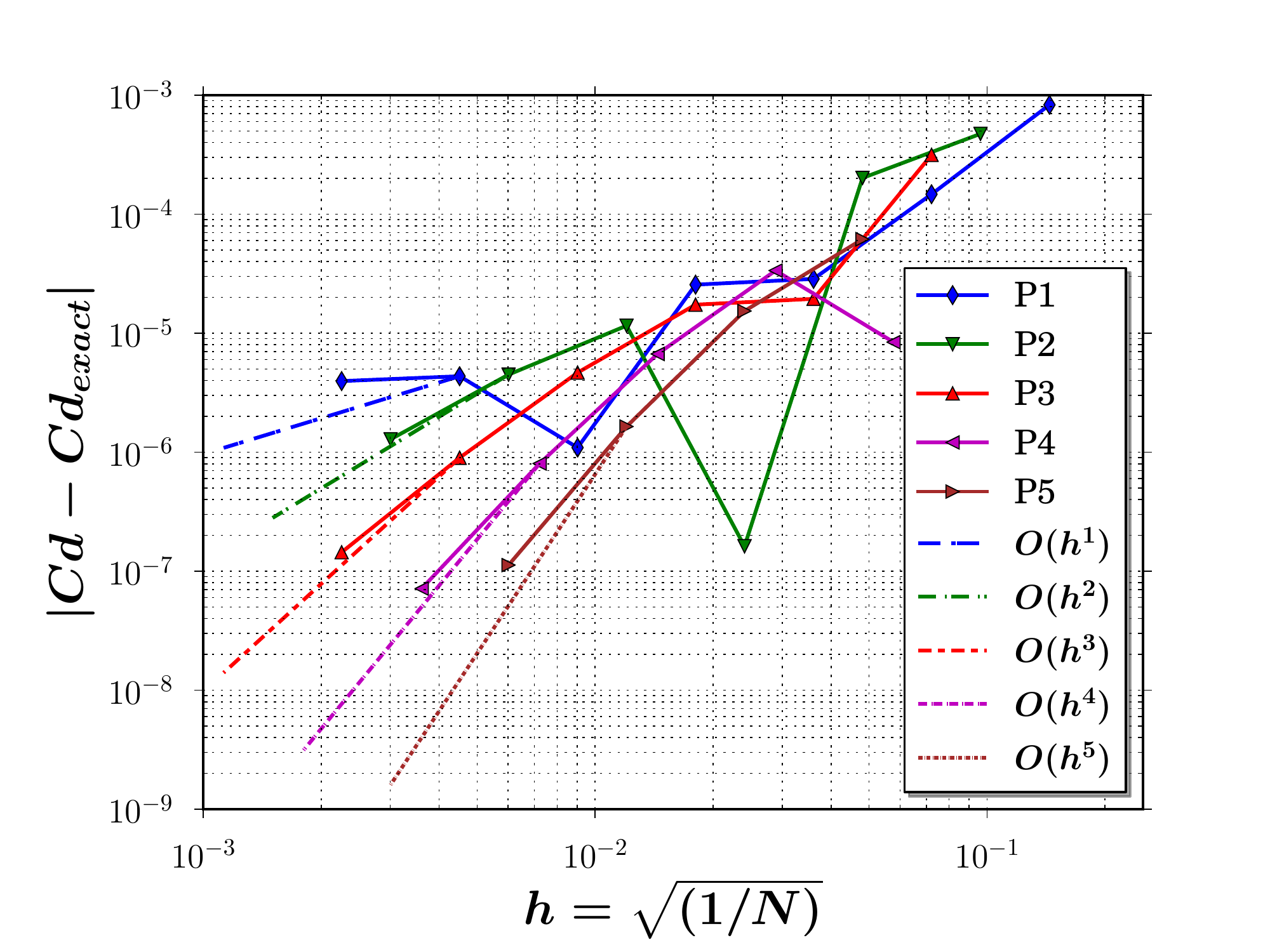}}
\caption{Evolution of the exact drag coefficient errors versus mesh refinement for MS-4 and polynomial degrees $\mathrm{P}1$--$\mathrm{P}5$ on grid sets A, B and C}
\label{fig:Ms-CD_sets}
\end{figure}

The evolution of $\mathcal{E}^{\mathrm{ex}}_{C_d}$ versus mesh refinement is plotted in Fig. \ref{fig:Ms-CD_sets} for all grid sets. We expect an OOAs of $\mathcal{O}(h^\mathrm{P})$ for $C_d$ solutions in agreement with the values reported for high-order DG solutions of a flat-plate case \cite{Nguyen-et-al_2007,Bassi-et-al_2014,Crivellini-et-al_2013a} and high-order CPR solutions of an airfoil case \cite{Shi2015}. Amongst the three considered grid sets, only the grid set B achieves the expected OOAs along with a monotonic convergence for all P, whereas the grid sets A and C respectively produce non-monotonic super- and sub-convergences. The super-convergence of the grid set A is explained by its relatively large value of geometric series expansion ratio that is related to how fast the element size grows in the $y$ direction. For a higher growth rate, an element resulting from merging two consecutive rows of elements at the wall inherits a cumulative off-wall size larger than what the mere doubling of the size of the smaller of the two parents would have produced. This causes a refinement ratio of the first element height at the wall  that is larger than the desired value of 2.0, expected from doubling the number of elements in each spatial direction. This effect is depicted in Fig. \ref{fig:MS-4_gridsets} (b) that shows the refinement ratio of $y^+$ of the first element height at the wall, demonstrating that amongst all, the grid set A has the highest refinement ratio on the coarsest grids. This highest refinement rate explains the apparent super-convergence of the drag coefficients produced by the grid set A. Let's note that although the expected convergence rates in $C_d$ are not achieved by the grid set A, it provides the most accurate values of this output functional compared to other grid sets. 
As for the grid set C, the sub-convergence is due to the relative coarseness of its grids, in the vicinity of the wall, which fail to attain the asymptotic range for the considered values of $h$, although grids L2 through L9 of this set feature $y^+<5$ (see Fig. \ref{fig:MS-4_gridsets}). The latter value stems from the common standards of engineering grid generation practices. These standards hence seem insufficient for achieving the expected convergence rate of aerodynamics quantities. The sub-convergence of $C_d$ by the grid set C also corroborates with the observed non-asymptotic convergence in the $\rho u$ errors for this grid set (Fig. \ref{fig:Orders_MS-4_sets}).

Based on these observations, it can be concluded that the verification of the drag coefficient requires grids with smaller $y^+$ values than the engineering standard of $y^+<5$, necessary to correctly capture the velocity profile for low-order solvers. In the case of high-order methods, the difference is even larger since it is shown \cite{Drosson-et-al_2013} that a value of $y^+=64$ produces an acceptable velocity profile in a P4 DG solution.

\subsubsection{Solution verification versus code verification}
Based on the observations accumulated thus far, it can be concluded that monotonic convergence of aerodynamic output functionals such as the drag coefficient of realistic (and by extension real) wall-bounded flows is conditional to the adequacy of the grid set in terms of specific combinations of first wall element $y^+$ and its refinement ratios, fulfilled here for example, by the grid set B only. Hence, employing an inappropriate grid set could lead to discrepancies between expected and  observed OOAs of such output functionals. It therefore is prudent to avoid fully replacing code verification by solution verification of these outputs via error estimation techniques such as Richardson extrapolation \cite{Roy2005}. Solution verification should instead be adopted jointly and complementarily to code verification, via analytical/manufactured solutions for which, the expected OOAs for flow variables are produced by the solver under examination. 

Another reason to avoid substituting solution verification to code verification is that output functionals are often integral values on a select number of boundaries and hence do not account for the discretization error distribution on the entirety of the domain. As a result, the presence of local inconsistencies could go undetected if they do not directly affect the considered output. To illustrate this, we have introduced a spurious alteration in the value of the Riemann BC at the top of the domain by considering $({\rho u})^\mathrm{BC} = (1 + d\alpha)\,({\rho u})^{\mathrm{MS}}\vert_{\Gamma_\mathrm{top}}$ where $d\alpha$ is modified from its original null value to $d\alpha = 1 \times 10^{-8}$. This results in the solution contamination in the superior region of the domain as shown by the distribution of the discretization error in $\rho u$ in Fig. \ref{fig:Ms-Q1_topBC} where the transport of the error from the top boundary into the domain is depicted. The error due to the incorrect BC persists throughout mesh refinement, thus affecting the OOAs in $L_1$, $L_2$ and $L_\infty$ norms for P3 through P5 in Fig. \ref{fig:Ms-Q1ordervsCD_topBC} where on the other hand, the drag coefficient is shown to fail in detecting the presence of the erroneous boundary condition for all polynomial degrees. 

Finally, in practical cases where the exact value of the output functional is not known, the verification would rely on estimation techniques such as the Richardson extrapolation which is sensitive to deviations from convergence  monotonicity as it relies on three consecutive solutions in the asymptotic range in order to estimate the zero-mesh-size solution, in contrast to two solutions required to compute the OOAs in the present manufactured case for example. This sensitivity constitutes another argument in avoiding the replacement of proper code verification by solution verification of output functionals in turbulent RANS-modelled flows, the most simplest of which often proves challenging in terms of generating grid sequences that deliver asymptotic grid convergence \cite{pelletier_2008}.

\begin{figure}[!hbt]
\centering
\includegraphics[trim = 5mm 1mm 3mm 7mm, clip,width=0.45\linewidth]
{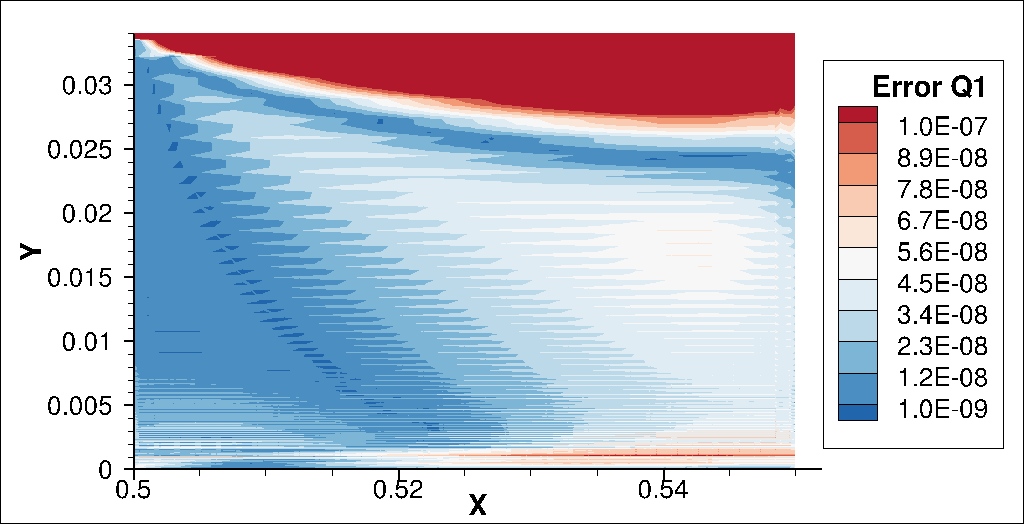}
\caption{Spatial distribution of the discretization error in $\rho u$ under the effect of a spurious boundary condition at the top of the domain}
\label{fig:Ms-Q1_topBC}
\end{figure}

\begin{figure}[!hbt]
\centering
\subfloat[$\rho u$ OOAs in $L$ norms]{
\includegraphics[trim = 2mm 1mm 18mm 10mm, clip,width=0.33\linewidth]
{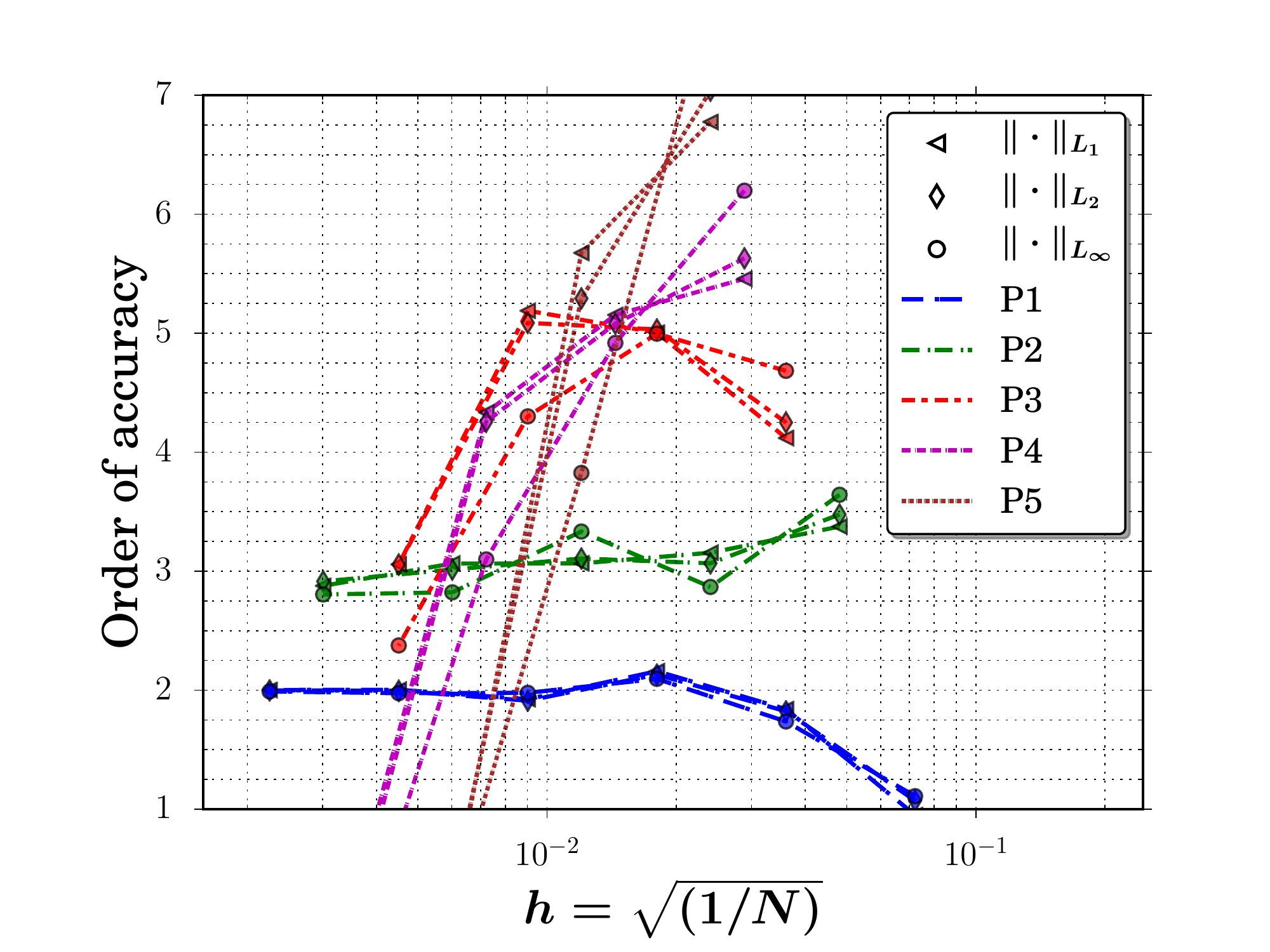}}~~~
\subfloat[$C_d$ error]{
\includegraphics[trim = 2mm 1mm 18mm 10mm, clip,width=0.33\linewidth]
{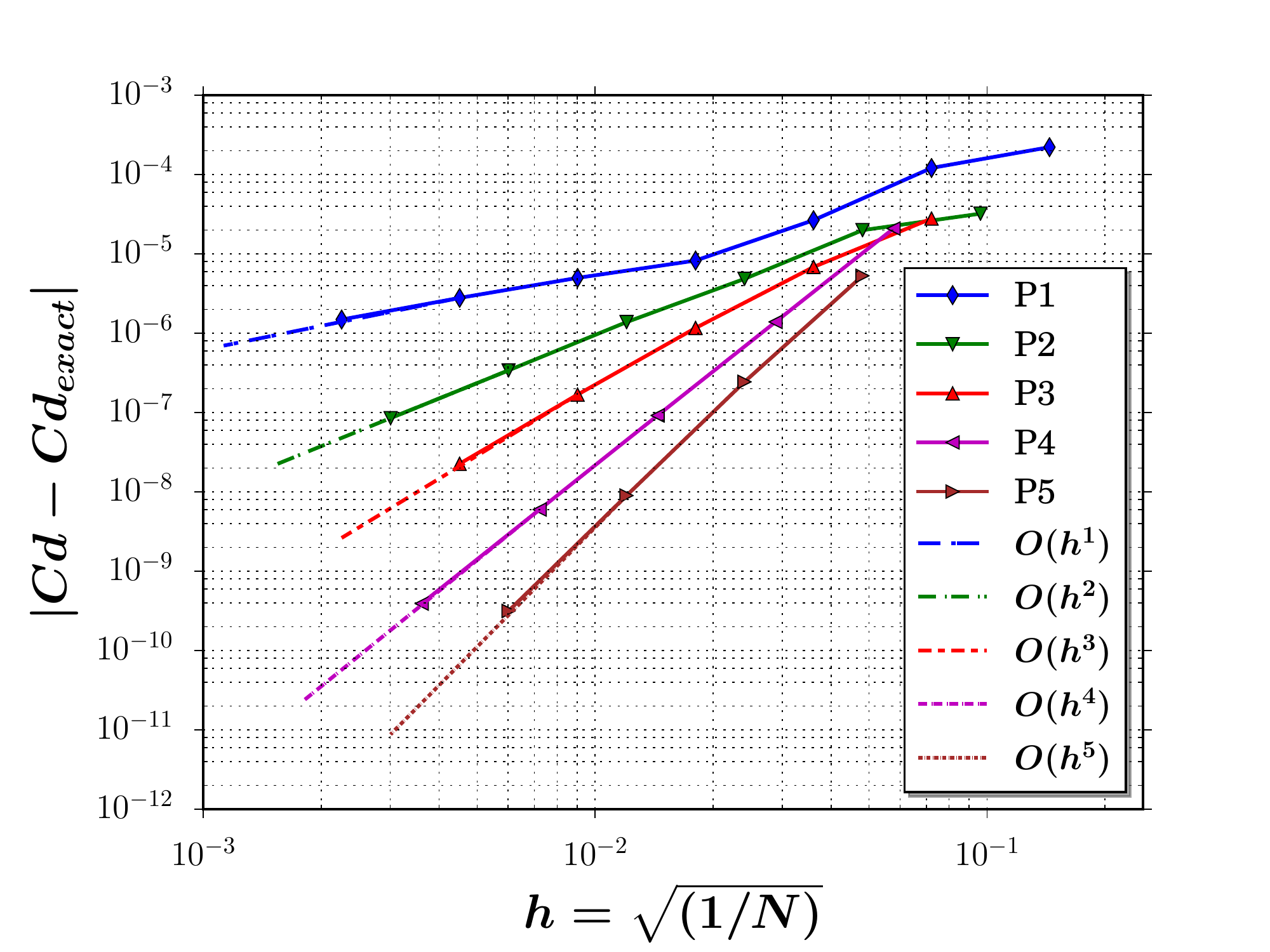}}
\caption{Comparison of the performance of global versus local (wall output) error metrics in detecting localized implementation inconsistencies for MS-4 and polynomial degrees $\mathrm{P}1$--$\mathrm{P}5$}
\label{fig:Ms-Q1ordervsCD_topBC}
\end{figure}

%
%

\subsubsection{Non-dimensionalization effect}
Implicit solution methods are particularly effective for tackling stiff problems. A representative example of this class of techniques is the Newton's method which requires the linearization of the discrete set of non-linear residual equations, 
\begin{equation}
\bm{\mathscr{R}}(\bm{Q})=\bm{0},
\label{eq:res_eqs}
\end{equation}
where  $\bm{Q},\bm{\mathscr{R}}\in{\rm I\!R}^{N_\mathrm{dof}}$ with $N_\mathrm{dof}$ being the total number of DOFs, by computing the Jacobian matrix, $\underline{\underline{\mathscr{A}}}(\bm{Q})$, defined as $\mathscr{A}_{ij} = \partial \mathscr{R}_i(\bm{Q})/\partial Q_j$ with $i,j \in [1\,..\,N_\mathrm{dof}]$, in order to solve an algebraic system of equations,
\begin{equation}
\underline{\underline{\mathscr{A}}}(\bm{Q}^n)\, \bm{\Delta Q}^n=-\bm{\mathscr{R}}(\bm{Q}^n),
\label{eq:alg_sys}
\end{equation}
such that the solution of Eq. \eqref{eq:res_eqs} is iteratively reached via successive solution updates of the form $\bm{Q}^{n+1} = \bm{Q}^n + \bm{\Delta Q}^n$ until $\bm{\mathscr{R}}(\bm{Q}^{n+1})$ is sufficiently close to zero.
In practice, the Jacobian matrix is often approximated by $\underline{\underline{\mathscr{A}}}^\prime=(\underline{\underline{\mathscr{A}}} + \underline{\underline{d \mathscr{A}}})$  and the right hand side of Eq. \eqref{eq:alg_sys} by $\bm{\mathscr{R}}^\prime=\bm{\mathscr{R}} + \bm{d\mathscr{R}}$. The discrepancies, $d \mathscr{A}_{ij}$, could be minor if only due to round-off errors as in the case of an \textit{exact linearization} or they could be of considerable amplitude as in the case of an \textit{inexact linearization} where the Jacobian is computed by an approximative approach such as finite differences. In all cases, a high condition number amplifies the propagation of $d \mathscr{A}_{ij}$ and $d \mathscr{R}_i$ into $\Delta Q_j$. The proof of this is an intermediary result of the Theorem 1.2.3 of \cite{Bjorck2015} which we present here for the sake of completeness.  Let's consider the algebraic system 
\begin{equation}
\underline{\underline{\mathscr{A}}}^\prime\,\bm{\Delta {Q}}'=-\bm{\mathscr{R}}^\prime,
\label{eq:alg_sys2}
\end{equation}
where $\bm{\Delta {Q}}' = \bm{\Delta {Q}} + \bm{d \Delta {Q}}$ is an approximation of the solution of Eq. \eqref{eq:alg_sys}. Equation \eqref{eq:alg_sys2} can be expressed as
\begin{equation}
(\underline{\underline{\mathscr{A}}} + \underline{\underline{d \mathscr{A}}})\,(\bm{\Delta {Q}} + \bm{d\Delta {Q}})=-(\bm{\mathscr{R}} + \bm{d\mathscr{R}}).
\label{eq:alg_sys3}
\end{equation}
Subtracting Eq. \eqref{eq:alg_sys} from \eqref{eq:alg_sys3}, assuming det$(\underline{\underline{\mathscr{A}}})\neq0$ and a few algebraic operations yield
\begin{equation}
\bm{d\Delta {Q}}=\underline{\underline{\mathscr{A}}}^{-1}(-\underline{\underline{d \mathscr{A}}}\,\bm{\Delta {Q}}' - \bm{d\mathscr{R}}).
\label{eq:alg_sys4}
\end{equation}
Furthermore, by considering the inequality $\| \underline{\underline{\mathscr{X}}} \, \underline{\underline{\mathscr{Y}}}  \| \leq \| \underline{\underline{\mathscr{X}}}\|\,\| \underline{\underline{\mathscr{Y}}}\|$ in a proper norm, we obtain
\begin{equation}
\frac{\| \bm{d\Delta {Q}}\|}{\|\bm{\Delta {Q'}}\|} \leq \|\underline{\underline{\mathscr{A}}}^{-1}\|\,\|\underline{\underline{\mathscr{A}}}\| \left( \frac{\|\underline{\underline{d \mathscr{A}}}\|}{\|\underline{\underline{\mathscr{A}}}\|} + \frac{\|\bm{d\mathscr{R}}\|}{\|\underline{\underline{\mathscr{A}}}\|\, \|\bm{\Delta {Q}}'\|}  \right),
\label{eq:alg_sys5}
\end{equation}
where $\|\underline{\underline{\mathscr{A}}}^{-1}\|\,\|\underline{\underline{\mathscr{A}}}\|$ is the condition number of $\underline{\underline{\mathscr{A}}}$, thus demonstrating that the sensitivity of the solution to errors in residual equation and the Jacobian matrix can be attenuated if conditioning is improved (condition number reduced closer to unity). By propagating unwanted perturbations into the solution, ill-conditioning furthermore deteriorates the performance of iterative approaches such as the Newton's method or the GMRES algorithm respectively employed to solve the non-linear system \eqref{eq:res_eqs} and the linearized system \eqref{eq:alg_sys}. As a result, the iterative solution process could stall or even diverge, thus preventing a sufficient reduction in round-off and iterative errors via residual minimization \cite{Roy2005}. Since it is crucial to effectively minimize these sources of error for verification purposes, it hence is desirable to have well-conditioned systems in addition to employing an exact linearization.

\begin{proposition} 
A proper scaling of the solution acts as a preconditioning to the linear system \eqref{eq:alg_sys}. 
\end{proposition}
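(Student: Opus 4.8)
The plan is to realize ``proper scaling of the solution'' as a linear change of variables and then show, by the chain rule, that the Jacobian of the rescaled Newton system is exactly a preconditioned version of $\underline{\underline{\mathscr{A}}}$. First I would model a non-dimensionalization as a constant, invertible (diagonal) scaling matrix $\underline{\underline{\mathscr{S}}}$ relating the dimensional unknowns to the scaled ones through $\bm{Q} = \underline{\underline{\mathscr{S}}}\,\hat{\bm{Q}}$, where each diagonal entry is a fixed reference quantity (density, velocity, length scale, etc.) and is therefore independent of $\bm{Q}$. This constancy is the crucial hypothesis: it guarantees that differentiating the scaled variables introduces no extra solution-dependent terms, so that the transformation reduces to a pure matrix product.

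Next I would express the residual in the scaled variables as $\hat{\bm{\mathscr{R}}}(\hat{\bm{Q}}) = \bm{\mathscr{R}}(\underline{\underline{\mathscr{S}}}\,\hat{\bm{Q}})$ and compute its Jacobian $\hat{\mathscr{A}}_{ij} = \partial \hat{\mathscr{R}}_i / \partial \hat{Q}_j$. Applying the chain rule and using $\partial Q_k / \partial \hat{Q}_j = \mathscr{S}_{kj}$ yields the identity $\underline{\underline{\hat{\mathscr{A}}}} = \underline{\underline{\mathscr{A}}}\,\underline{\underline{\mathscr{S}}}$. The rescaled Newton step $\underline{\underline{\hat{\mathscr{A}}}}\,\hat{\bm{\Delta Q}} = -\hat{\bm{\mathscr{R}}}$ then reads $\underline{\underline{\mathscr{A}}}\,\underline{\underline{\mathscr{S}}}\,\hat{\bm{\Delta Q}} = -\bm{\mathscr{R}}$ with $\bm{\Delta Q} = \underline{\underline{\mathscr{S}}}\,\hat{\bm{\Delta Q}}$, which is precisely the right-preconditioned form of \eqref{eq:alg_sys} with $\underline{\underline{\mathscr{S}}}$ playing the role of the preconditioner. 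If the governing equations are non-dimensionalized as well, the same computation with an equation-scaling matrix $\underline{\underline{\mathscr{E}}}$ produces the two-sided preconditioned operator $\underline{\underline{\mathscr{E}}}^{-1}\underline{\underline{\mathscr{A}}}\,\underline{\underline{\mathscr{S}}}$, still a preconditioning of the original system.

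Finally, I would close the loop using the conditioning estimate already established in \eqref{eq:alg_sys5}: since the preconditioned operator $\underline{\underline{\hat{\mathscr{A}}}}$ replaces $\underline{\underline{\mathscr{A}}}$ in that error-propagation bound, a scaling chosen so that the blocks of $\underline{\underline{\hat{\mathscr{A}}}}$ are of comparable magnitude drives the condition number $\|\underline{\underline{\hat{\mathscr{A}}}}^{-1}\|\,\|\underline{\underline{\hat{\mathscr{A}}}}\|$ closer to unity, thereby attenuating the amplification of $\underline{\underline{d \mathscr{A}}}$ and $\bm{d\mathscr{R}}$ into the solution. I expect the main obstacle to be the precise formalization of the word \emph{proper}: the structural identity $\underline{\underline{\hat{\mathscr{A}}}} = \underline{\underline{\mathscr{A}}}\,\underline{\underline{\mathscr{S}}}$ holds for any constant invertible scaling, so the equivalence ``scaling $=$ preconditioning'' is immediate, but establishing that a physically motivated non-dimensionalization actually \emph{reduces} the condition number (rather than merely changing it) requires an argument about the relative magnitudes of the dimensional residual blocks, a point best supported by the numerical evidence rather than proven in full generality.
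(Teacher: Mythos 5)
Your proposal is correct and takes essentially the same route as the paper: a constant diagonal change of variables built from the reference quantities, the chain rule yielding the scaled Jacobian as the original Jacobian times that diagonal matrix (the paper's $\breve{\mathscr{A}}_{ij}=\mathscr{A}_{i(j)}\,\mathring{Q}_{(j)}$, your $\underline{\underline{\hat{\mathscr{A}}}}=\underline{\underline{\mathscr{A}}}\,\underline{\underline{\mathscr{S}}}$), and the identification of the resulting system as the right-preconditioned form of \eqref{eq:alg_sys} solved in two sequential steps. Your closing remarks on equation scaling and on the distinction between ``any constant scaling is formally a preconditioner'' and ``a proper scaling actually improves the condition number'' go slightly beyond the paper's proof, which likewise leaves that latter point to numerical validation.
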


\begin{proof}
Let's consider
\begin{equation}
\breve{Q}_i=\frac{Q_{(i)}}{\mathring{Q}_{(i)}},
\label{eq:nondim_Q}
\end{equation}
where $\breve{Q}_i$ and ${\mathring{Q}_i}$ respectively denote a non-dimensional solution value and its corresponding dimensional reference  quantity and the parenthesized indices are not summed over. The change of variable \eqref{eq:nondim_Q} results in a new set of non-linear residual operators, $\bm{\mathscr{\breve R}}\in{\rm I\!R}^{N_\mathrm{dof}}$ with the following property:
\begin{equation}
\bm{\mathscr{\breve R}}(\bm{\breve Q})=\bm{\mathscr{R}}(\bm{Q})=\bm{0}.
\label{eq:res_eqs_nd}
\end{equation}
The corresponding linearized system is
\begin{equation}
\underline{\underline{\mathscr{\breve A}}}(\bm{\breve Q})\, \bm{\Delta \breve Q}=-\bm{\mathscr{\breve R}}(\bm{\breve Q}),
\label{eq:alg_sys_nd}
\end{equation}
where
\begin{equation}
\breve{\mathscr{A}}_{ij}= \frac{\partial \mathscr{\breve{R}}_{i}(\bm{\breve Q})}{\partial \breve{Q}_{(j)}}= \frac{\partial \mathscr{R}_{i}(\bm{Q})}{\partial Q_{(j)}}  \frac{\partial Q_{(j)}}{\partial \breve{Q}_{(j)}} =\frac{\partial \mathscr{{R}}_{i}(\bm{ Q})}{\partial {Q}_{(j)}}\,{\mathring{Q}_{(j)}}=\mathscr{A}_{i(j)}{\mathring{Q}_{(j)}} .
\label{eq:nondim_A}
\end{equation}
The equivalence between the systems \eqref{eq:alg_sys_nd} and \eqref{eq:alg_sys} is established through
\begin{equation}
\underline{\underline{\mathscr{A}}}(\bm{Q})\, \underline{\underline{\mathscr{P}}}^{-1} \,  \underline{\underline{\mathscr{P}}} \,\bm{\Delta Q}=-\bm{\mathscr{R}}(\bm{Q}),
\label{eq:alg_sys_prec}
\end{equation}
where $\underline{\underline{\mathscr{P}}}$ is a diagonal matrix, the components of which are defined by ${\mathscr{P}}_{ij}= \frac{\delta_{i(j)}}{\mathring{Q}_{(j)}}$. It is easy to notice that \eqref{eq:alg_sys_prec} is in fact a right preconditioned system if solved sequentially via
\begin{equation}
\underline{\underline{\mathscr{A}}}(\bm{Q})\, \underline{\underline{\mathscr{P}}}^{-1} \, \bm{\mathscr{S}} =-\bm{\mathscr{R}}(\bm{Q}),\label{eq:alg_sys_prec1}
\end{equation}
for $\bm{\mathscr{S}}$ first and then for $\bm{\Delta Q}$ by
\begin{equation}
\underline{\underline{\mathscr{P}}} \,\bm{\Delta Q} =\bm{\mathscr{S}}.
\label{eq:alg_sys_prec2}
\end{equation}

The proof in thus complete by showing that a proper non-dimensionalization can hence act as an intrinsic preconditioning mechanism and thereby contribute to reducing the propagation of numerical perturbations into the solution.
\end{proof}

To further validate this result numerically, we devise a non-dimensionalized version of MS-4 by considering the following reference values arising from dimensional analysis:
\begin{equation}
\begin{gathered}
\mathring{R}=R, \quad 
\mathring{p}=p_0, \quad
\mathring{T}=T_\infty, \quad
\mathring{U}= \sqrt{\mathring{R}\,\mathring{T}},\\
\mathring{\mu}=\frac{\mathring{L}\,\mathring{p}}{\mathring{U}}\quad \mathrm{and}\quad
\mathring{L}= 1.0 \, (\mathrm{m}),
\end{gathered}
\end{equation}
such that the non-dimensional version is set by the same parameters as for the dimensional version in Eqs. \eqref{eq:vars_MS-4-a} and \eqref{eq:vars_MS-4-b} except for
\begin{equation}
\begin{gathered}
R=1.0, \quad 
p_0=1.0, \quad
T_\infty=1.0, \quad
\mu=2.67861904719577\times 10^{-6}, \\
\mathrm{and}\quad
\alpha= 1.86663348236639\times 10^{-2},
\end{gathered}
\end{equation}
where the superscript $\breve{\cdot}$ is omitted for the sake of simplicity.
\begin{figure}[!hbt]
\centering
\includegraphics[trim = 5mm 4mm 1mm 3mm, clip,width=0.35\linewidth]
{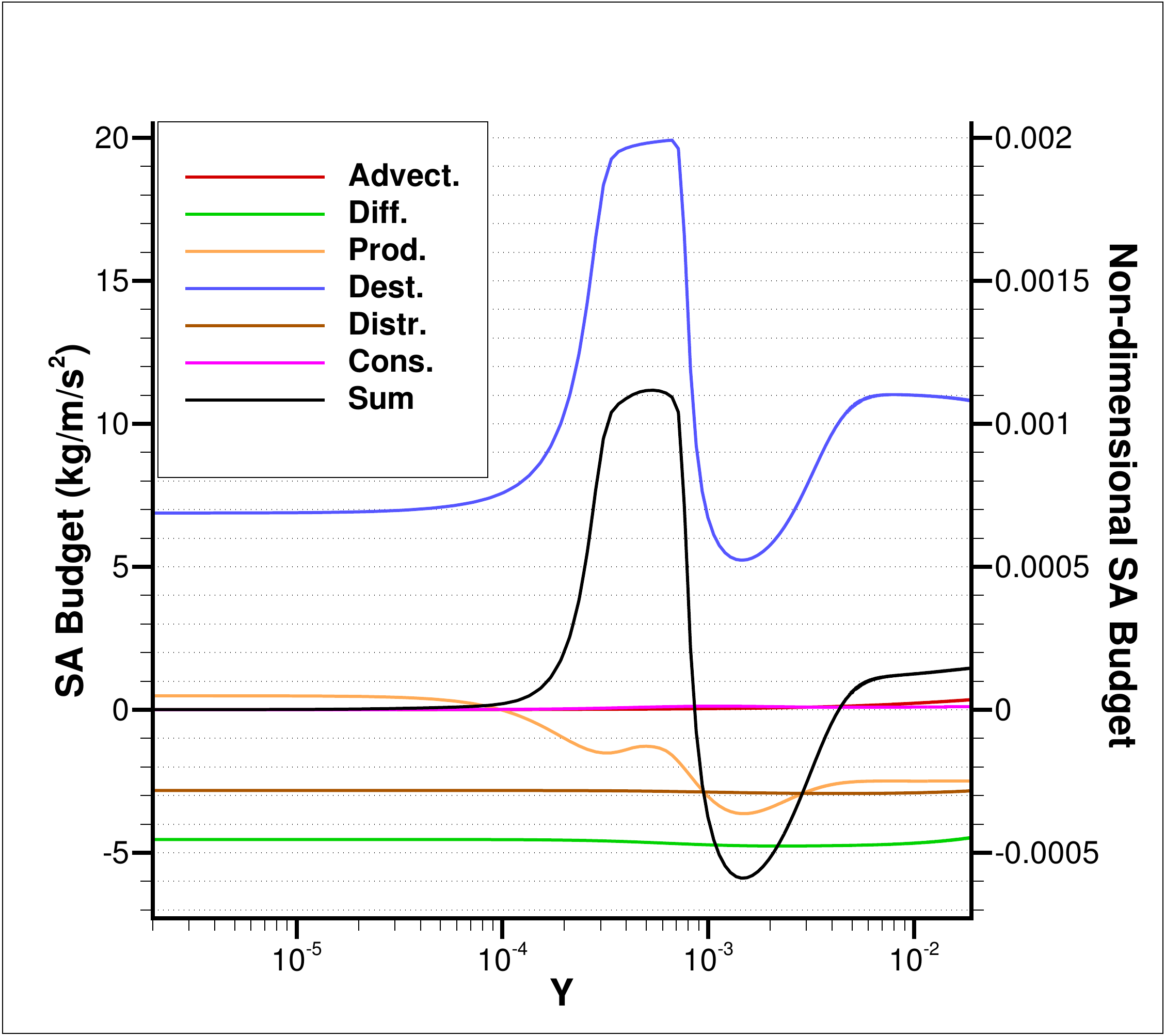}
\caption{Comparison of the SA forcing function budget between the dimensional and non-dimensional versions of MS-4 along $x=0.525$}
\label{fig:MS-4_budget}
\end{figure}

Figure \ref{fig:MS-4_budget} shows the budget of the SA source terms along $x=0.525$ by comparing the dimensional versus non-dimensional versions of MS-4. We first notice that the advection and conservation terms have negligible contributions to the total forcing function. We hence recommend that this MS be combined to MS-4 from \cite{Navah2017a} in order to provide a more reliable diagnostic on the verification of all SA model terms. It can also be noted in the Figure \ref{fig:MS-4_budget} that all the non-dimensional terms are scaled with regards to their dimensional counterparts by a factor corresponding to $\mathring{p}$. In other words, the non-dimensionalization of MS-4 results in $\breve{\mathscr{R}}(\bm{\breve Q})= C_0\,{\mathscr{R}}(\bm{\breve Q})=\,{\mathscr{R}}(\bm{Q})$ where $C_0=p_0$ for the SA equation. 

To examine the effect of non-dimensionalization on the behavior of iterative methods, the systems \eqref{eq:alg_sys} and \eqref{eq:alg_sys_prec1} are constituted and solved via GMRES algorithm. The relative residual of the linear system $\underline{\underline{{A}}} \, \bm{{x}}=\bm{b}$ is computed at each iteration by $\frac{\|\bm{b}- \underline{\underline{{A}}} \, \bm{{x}}\|}{\|\bm{b} \|}$. The results for an unpreconditioned GMRES are presented in Figs. \ref{fig:MS-4_cond_P1} and \ref{fig:MS-4_cond_P5} for respectively P1 and P5 polynomials. For each polynomial degree, a coarse and a fine grid is considered. In all cases, the beneficial effect of the non-dimensionalization on the convergence of the GMRES algorithm is evident as it allows to minimize the residual to lower levels compared to the original system for which the iterative procedure either stalls or diverges. The non-dimensionalization effect is also studied here for a GMRES algorithm preconditioned globally by a block Jacobi method and locally by an incomplete lower upper factorization  
at each block (associated to each of the four processors used to partition the domain) \cite{petsc-efficient}. The results, illustrated also in Figs. \ref{fig:MS-4_cond_P1} and \ref{fig:MS-4_cond_P5}, show that although the preconditioning does allow the dimensional system to converge and reduce the gap between the dimensional and non-dimensional versions, the discrepancy is nevertheless still non-negligible. We hence recommend the non-dimensionalization of the system of governing equations to mitigate the propagation of numerical errors into the solution, especially for stiff high-order turbulent problems.

\begin{figure}[!hbt]
\centering
\subfloat[$8 \times 24$ elements]{
\includegraphics[trim = 1mm 4mm 13mm 14mm, clip,width=0.36\linewidth]
{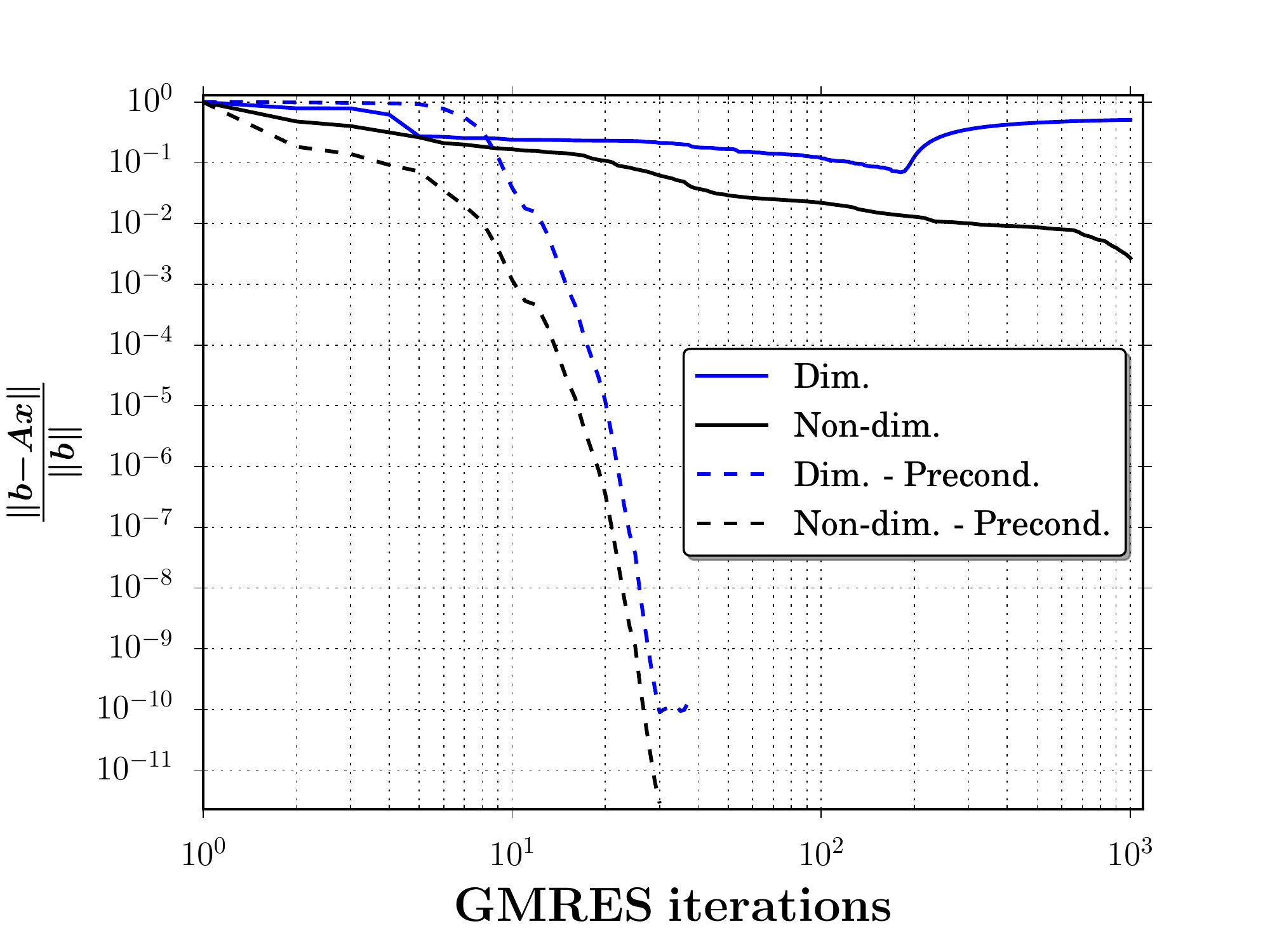}}~~~
\subfloat[$128 \times 384$ elements]{
\includegraphics[trim = 1mm 4mm 13mm 14mm, clip,width=0.36\linewidth]
{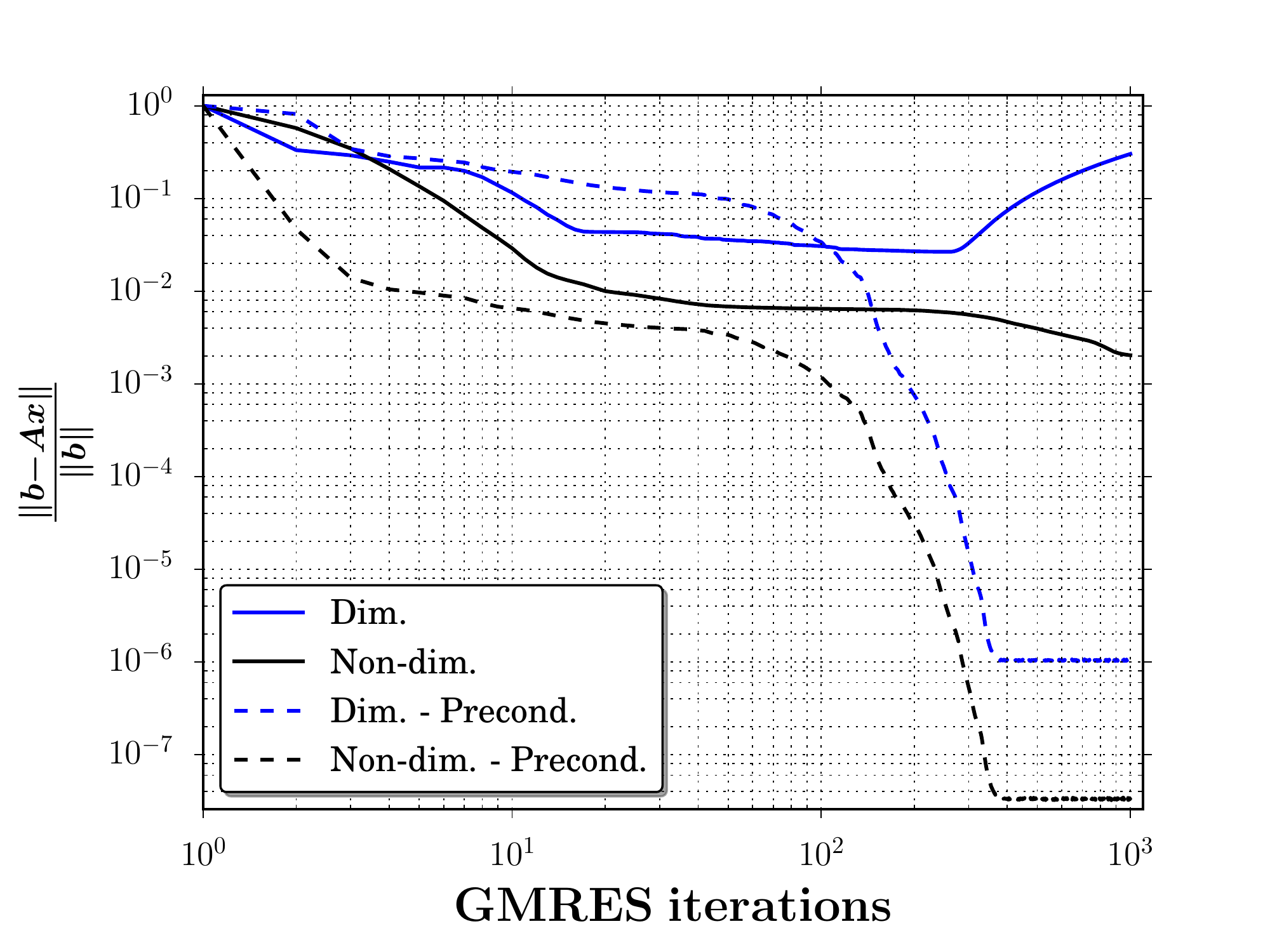}}
\caption{Relative residual convergence of GMRES  algorithm for the dimensional versus non-dimensional algebraic systems of MS-4 discretized by P1 polynomials}
\label{fig:MS-4_cond_P1}
\end{figure}

\begin{figure}[!hbt]
\centering
\subfloat[$2 \times 6$ elements]{
\includegraphics[trim = 1mm 4mm 13mm 14mm, clip,width=0.36\linewidth]
{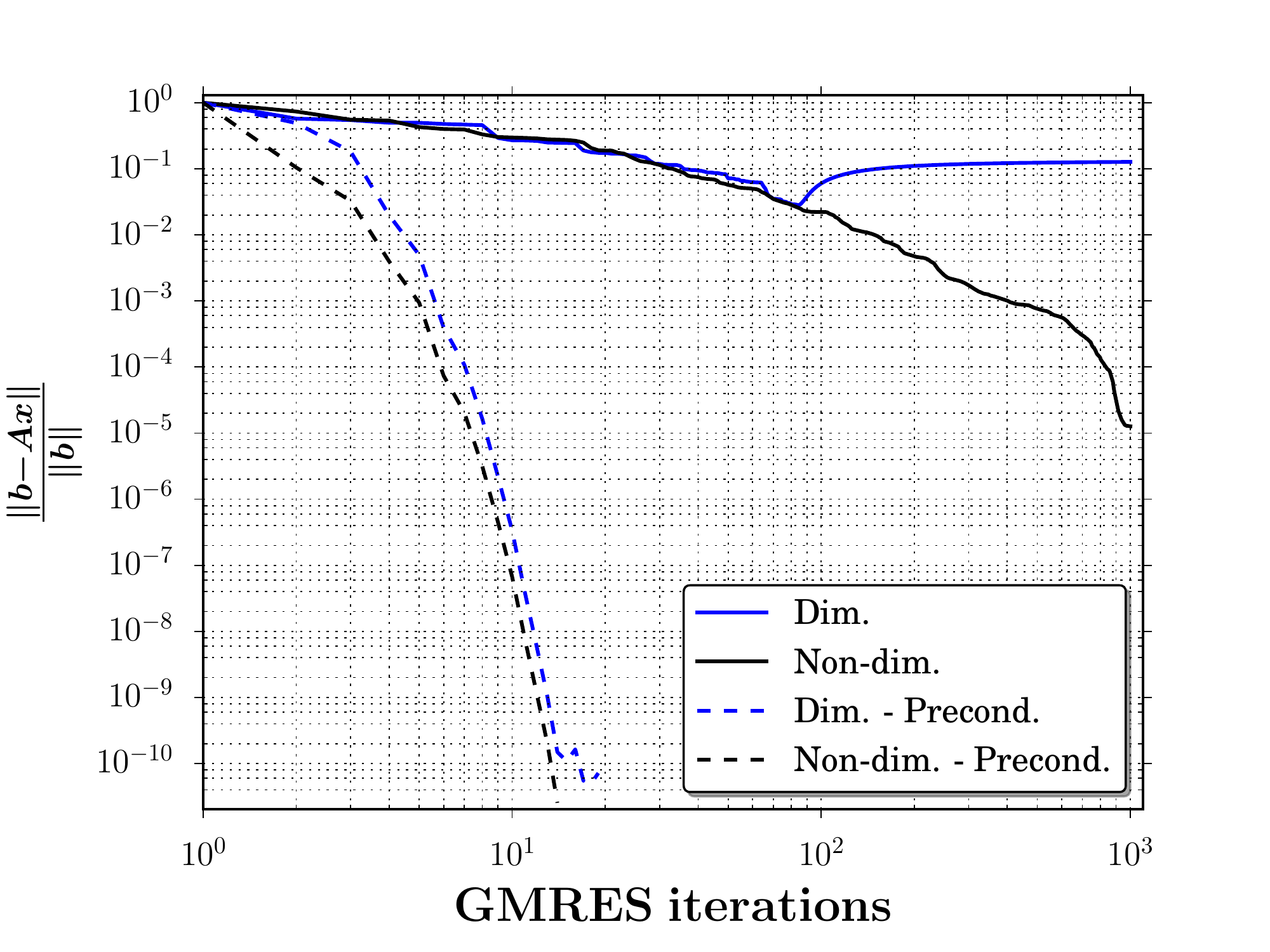}}~~~
\subfloat[$16 \times 48$ elements]{
\includegraphics[trim = 1mm 4mm 13mm 14mm, clip,width=0.36\linewidth]
{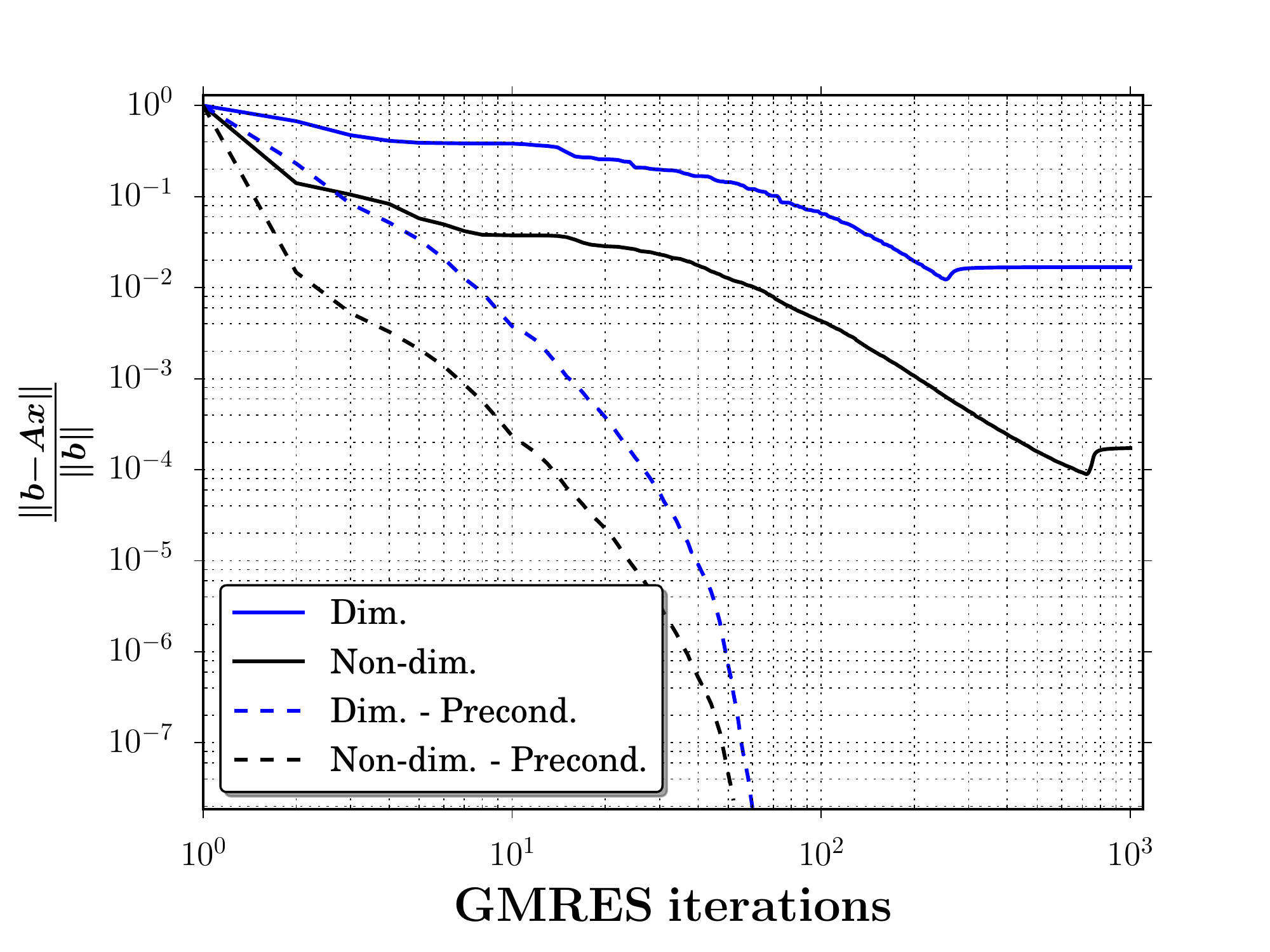}}
\caption{Relative residual convergence of GMRES  algorithm for the dimensional versus non-dimensional algebraic systems of MS-4 discretized by P5 polynomials}
\label{fig:MS-4_cond_P5}
\end{figure}

Regarding the effect of non-dimensionalization on the OOAs, let's note that as expected, the dimensional and non-dimensional versions produce the same error versus mesh size curves up to the multiplicative constant corresponding to the reference value $Q_0$ and consequently the same exact progressions of OOAs versus mesh refinement. This is for example shown in Fig. \ref{fig:Ms-dim_vs_nondim_error} for the variable $\rho\tilde \nu$.
\begin{figure}[!hbt]
\centering
\subfloat[Dimensional]{
\includegraphics[trim = 5mm 2mm 18mm 13mm clip,width=0.33\linewidth]
{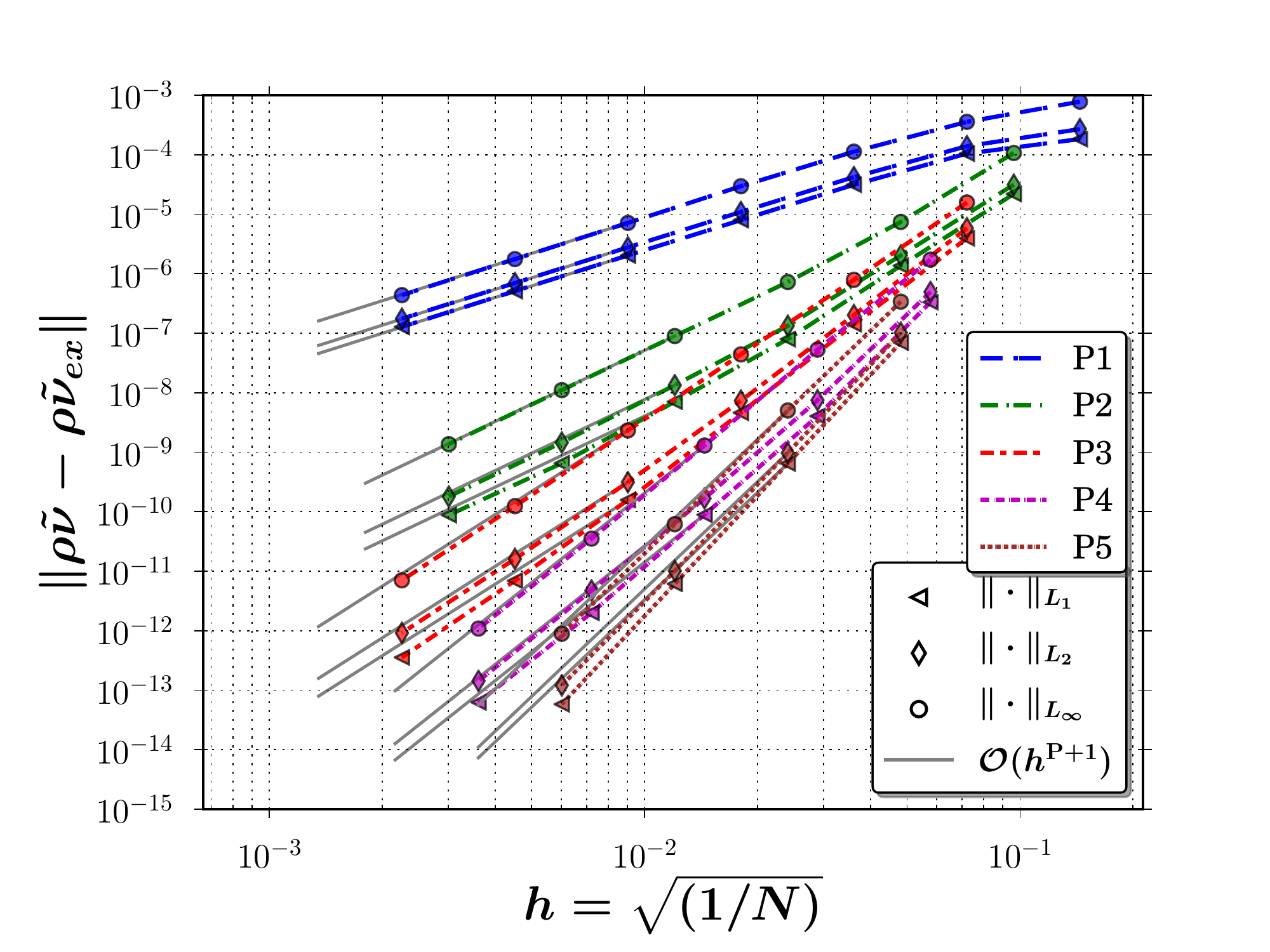}}~~~
\subfloat[Non-dimensional]{
\includegraphics[trim = 5mm 2mm 18mm 13mm clip,width=0.33\linewidth]
{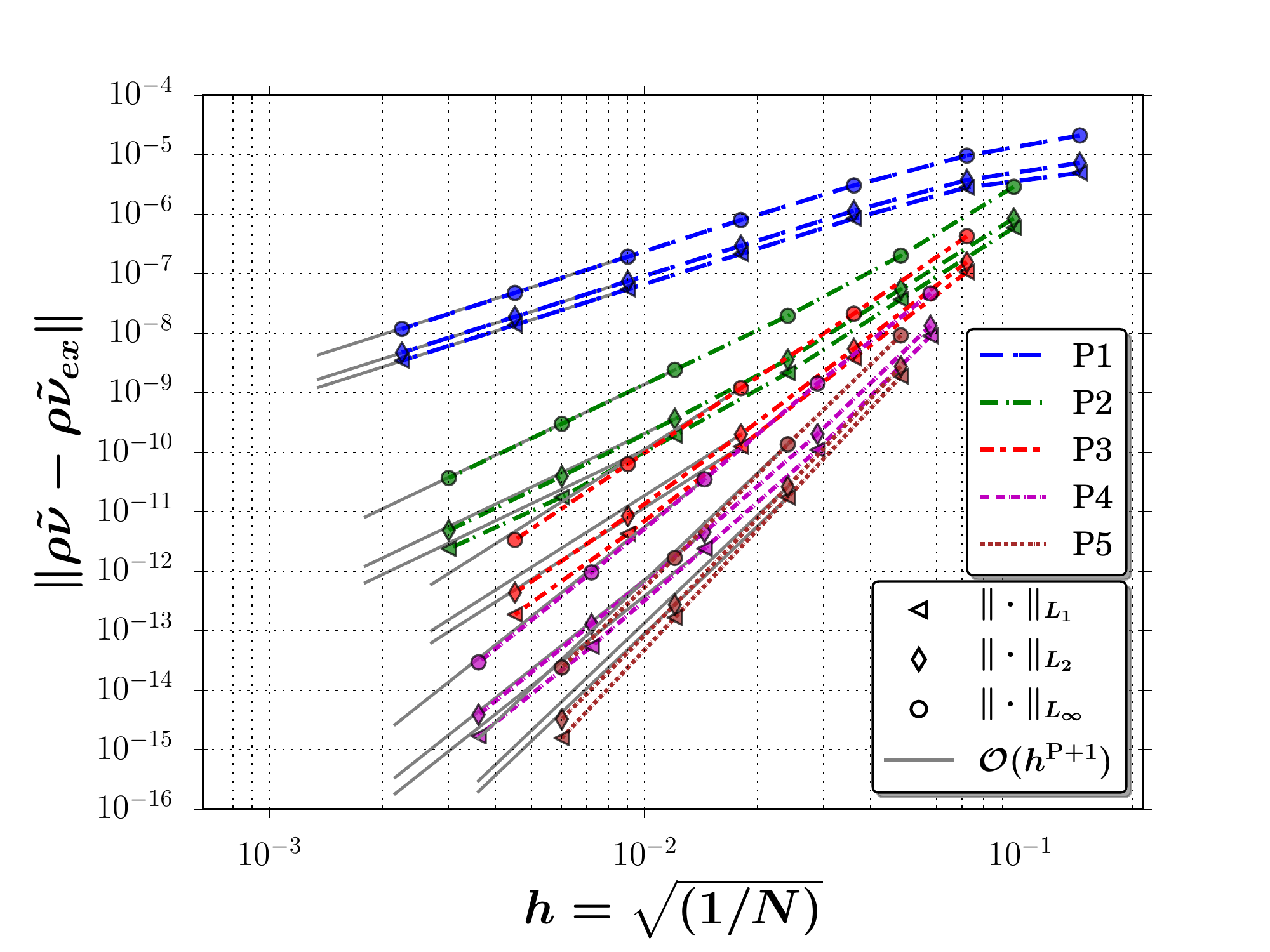}}
\caption{Evolution of the discretization error in $L_1$, $L_2$ and $L_\infty$ norms versus mesh refinement for $\rho\tilde \nu$, polynomial degrees $\mathrm{P}1$--$\mathrm{P}5$ and the dimensional compared to the non-dimensional versions of MS-4}
\label{fig:Ms-dim_vs_nondim_error}
\end{figure}

\section{Conclusions}\label{concs}

In the continuation of the work presented in \cite{Navah2017a}, a series of MSs is considered which targets the verification of high-order solvers for the solution of inviscid and laminar flows on curved wall-bounded domains in a first step, followed then by the verification for boundary layer RANS flows. We choose a numerical framework composed of the RANS-SA system of equations discretized by the CPR/FR scheme with DG correction functions.
 
Regarding the first goal, two new manufactured solutions are introduced. The first one is meant for the verification of inviscid transonic regime along with a slip wall condition on curved domains. This MS clearly displays the loss of OOAs in $L_2$ and $L_\infty$ norms due to insufficient accuracy of no-penetration condition of slip wall if based on wall normals computed by an isoparametric mapping of curved geometries. This MS is then extended to laminar flows bounded by curved, no-slip, adiabatic wall conditions and its adequacy for the verification of discretizations up to the sixth OOAs is demonstrated. Regarding code verification for the solution of RANS-modelled turbulent wall-bounded flows, two existing realistic MSs from the literature are visited and their applicability to the realization of high OOAs is discussed. The first one is the "MS2" from \cite{Eca-et-al_2007}, the adequacy of which to high-order verification of compressible solvers is assessed and a number of modifications in this regard are proposed. The addition of the SA field to the baseline Navier-Stokes portion of this MS resulted in the appearance of a high error region in the middle of the domain that delayed the appearance of the asymptotic range on grids with element clustering at the wall. This MS furthermore suffers from deficiencies in matching the characteristics of turbulent solutions such as the logarithmic velocity profile and the near-wall behaviour of some of the SA source terms. Hence a second MS, from \cite{Oliver-et-al_2012}, is as well considered which is devised to be compliant with essential characteristics of RANS-modelled flow in the vicinity of walls. This MS served to conduct a number of numerical investigations leading to the following conclusions:

\begin{itemize}
\item Explicit expressions for the evaluation of SA source terms with in-determinate form at the wall are presented which need to be incorporated into the forcing functions of physically realistic MSs for wall-bounded flow. Otherwise, the in-determination leads to NaN residual evaluations at the wall when the wall condition is weakly enforced as is customary in compact high-order solvers;
\item The modified vorticity term of the modified SA equation, left out by trigonometric MSs, is shown to be correctly verified with this wall-bounded MS;
\item A grid sensitivity analysis in terms of first element size at the wall and expansion ratio in wall normal direction is conducted showing that the realization of the OOAs requires smaller $y^+$ values than prescribed by engineering best practices ($y^+\approx 5)$. Also, observations suggest that numerical stiffness increases with mesh clustering at the wall but that the clustering intensification does not reduce the overall discretization error after a certain point.

\item It is concluded that solution verification via the estimation of numerical uncertainties associated to an output functionals such as the drag coefficient can not fully reflect the soundness of the implementation and hence should not replace a proper code verification;
\item A non-dimensional version of the turbulent wall-bounded MS is proposed and it is mathematically proven and numerically validated that a proper non-dimensionalization of the unknowns acts as an intrinsic preconditioning mechanism for iterative solution techniques, thus contributing to the minimization of iterative and round-off errors.
\end{itemize}

To assist the interested reader in the application of the presented verification methodology, Python scripts and C code routines are provided as accompanying material (see \cite{navah017_github}).


\section*{Acknowledgments}
The authors gratefully acknowledge the generous support from the Fonds de recherche du Qu\'ebec -- Nature et technologies (FRQNT), the Natural Sciences and Engineering Research Council (NSERC), and the Department of Mechanical Engineering of McGill University.
%
%
%
%
%


\appendix

\clearpage

\section{Definition of the grid convergence metrics}
\label{sec:norms}
The norms employed throughout this work to measure the discretization error are defined as follows: 

\vspace{0.1cm}
\begin{itemize}
\item $L_\infty$ norm:
\begin{equation}
\| \mathcal{E}_Q \|_\infty =
\mathrm{max} \lvert Q_i - Q_i^{\mathrm{ex}} \lvert \;\; \mathrm{for} \;\; {i\,\in \,\left[ 1\,..\,{N_\mathrm{DOF}} \right] }
\end{equation}

\item $L_1$ norm:
\begin{equation}
\| \mathcal{E}_Q \|_{L_1} =
\frac{\int_{\Omega} \, \lvert Q - Q^{\mathrm{ex}}\lvert \, d\Omega}{{\int_{\Omega} d\Omega}}
\end{equation}

\item $L_2$ norm:
\begin{equation}
\| \mathcal{E}_Q \|_{L_2} = \left( \frac{ \int_{\Omega} \, \left( Q - Q^{\mathrm{ex}}\right)^2d\Omega}{{\int_{\Omega} d\Omega}} \right)^{\frac{1}{2}}
\end{equation}
\item $H_1$ norm:
\begin{equation}
\| \mathcal{E}_Q \|_{H_1} = \left( \frac{\int_{\Omega} \, \left( Q - Q^{\mathrm{ex}}\right)^2  d\Omega\,+ \,\int_{\Omega}\,\sum_{q=1}^{N_\mathrm{d}}\left(\partial_q Q - (\partial_q Q)^{\mathrm{ex}}\right)^2 \, d\Omega}{{\int_{\Omega} d\Omega}} \right)^{\frac{1}{2}}
\end{equation}
\item $H_1$ semi-norm of uncorrected gradients:
\begin{equation}
\abs[\big]{ \mathcal{E}_Q }_{H_1} =\left(\frac{\int_{\Omega}\,\sum_{q=1}^{N_\mathrm{d}}\left(\partial_q Q - (\partial_q Q)^{\mathrm{ex}} \right)^2 \, d\Omega}{{\int_{\Omega} d\Omega}}\right)^{\frac{1}{2}}
\end{equation}
\item $H_1$ semi-norm of fully corrected gradients:
\begin{equation}
\abs[\big]{ \mathcal{E}_Q }_{\overline{\overline{H_1}}} =\left(\frac{\int_{\Omega}\,\sum_{q=1}^{N_\mathrm{d}}\left({\overline{\overline{\partial_q Q}}} - (\partial_q Q)^{\mathrm{ex}}\right)^2 \, d\Omega}{{\int_{\Omega} d\Omega}} \right)^{\frac{1}{2}}
\end{equation}
\end{itemize}

The integrals are computed by GLL quadratures and the typical element size, $h$, is estimated via 
\begin{equation}
h=\sqrt[{-N_d}]{N_{\mathrm{DOF}}},
\label{eq:elem_size}
\end{equation}
where $N_{\mathrm{DOF}}=\sum_{e_i}(N_\mathrm{node})_{e_i}$ is the total number of DOFs per equation.

\clearpage

\section{Manufactured solution results}
\subsection{MS-1}

\begin{figure}[!hbt]
\centering
\subfloat[$\rho^{\mathrm{MS}}$]{
\includegraphics[trim = 0mm 0mm 0mm 0mm, clip,width=0.4\linewidth]
{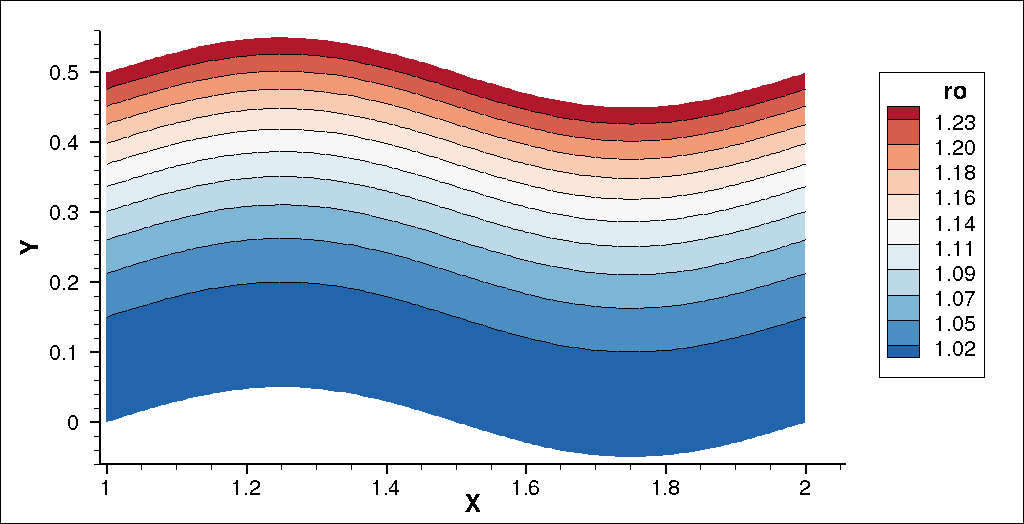}}
~~~
\subfloat[$u^{\mathrm{MS}}$]{
\includegraphics[trim = 0mm 0mm 0mm 0mm, clip,width=0.4\linewidth]
{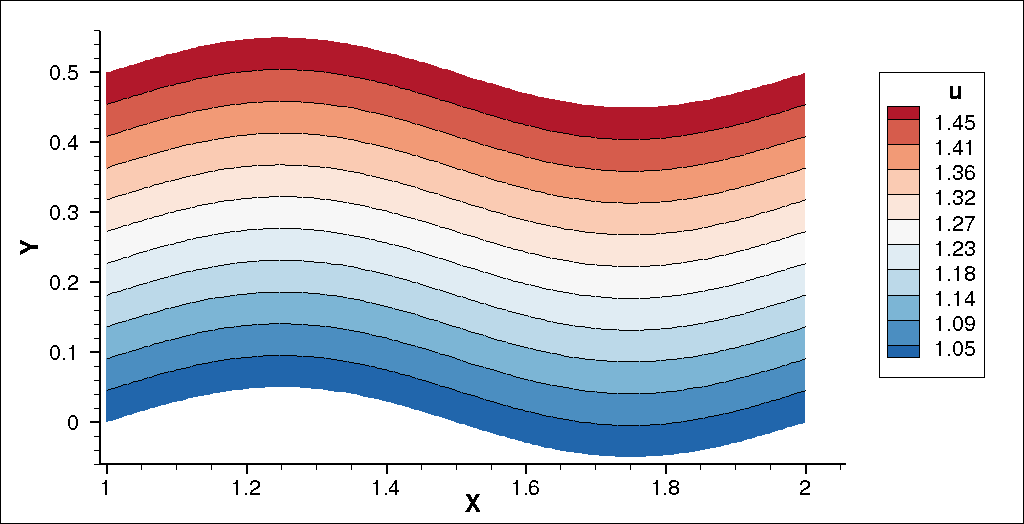}}
\vfill
\subfloat[$v^{\mathrm{MS}}$]{
\includegraphics[trim = 0mm 0mm 0mm 0mm, clip,width=0.4\linewidth]{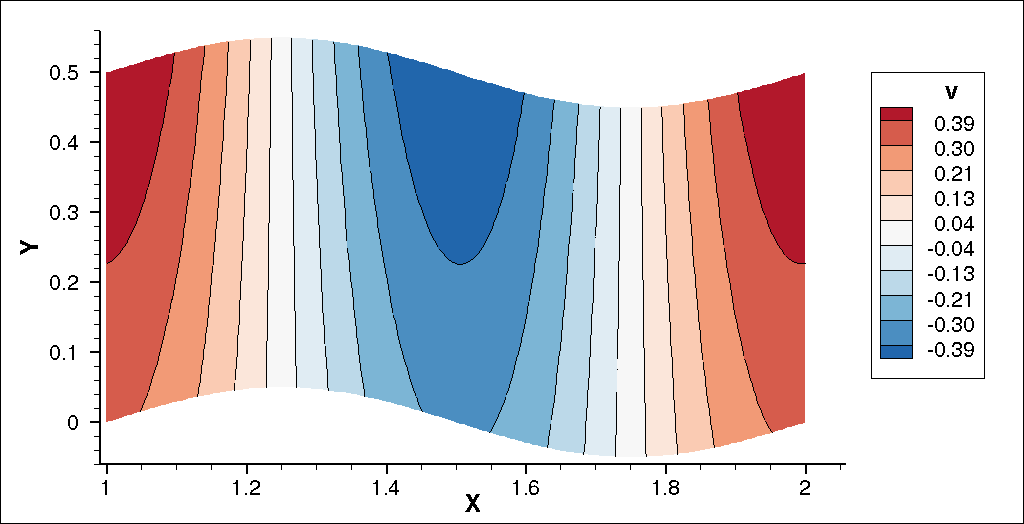}}
~~~
\subfloat[$p^{\mathrm{MS}}$]{
\includegraphics[trim = 0mm 0mm 0mm 0mm, clip,width=0.4\linewidth]{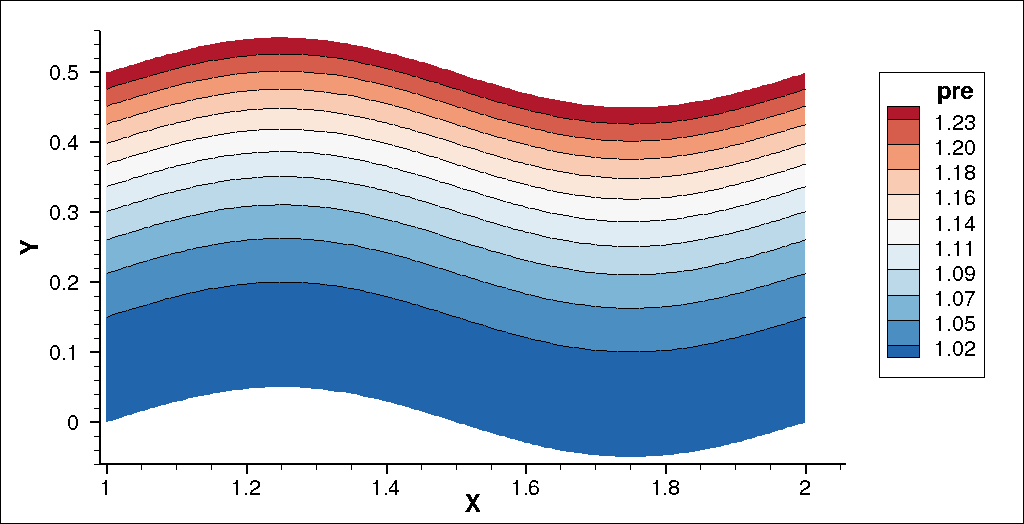}}
\vfill
\subfloat[$\mbox{Ma}^{\mathrm{MS}}$ and ${\bm{u}}^{\mathrm{MS}}$]{
\includegraphics[trim = 0mm 0mm 0mm 0mm, clip,width=0.4\linewidth]{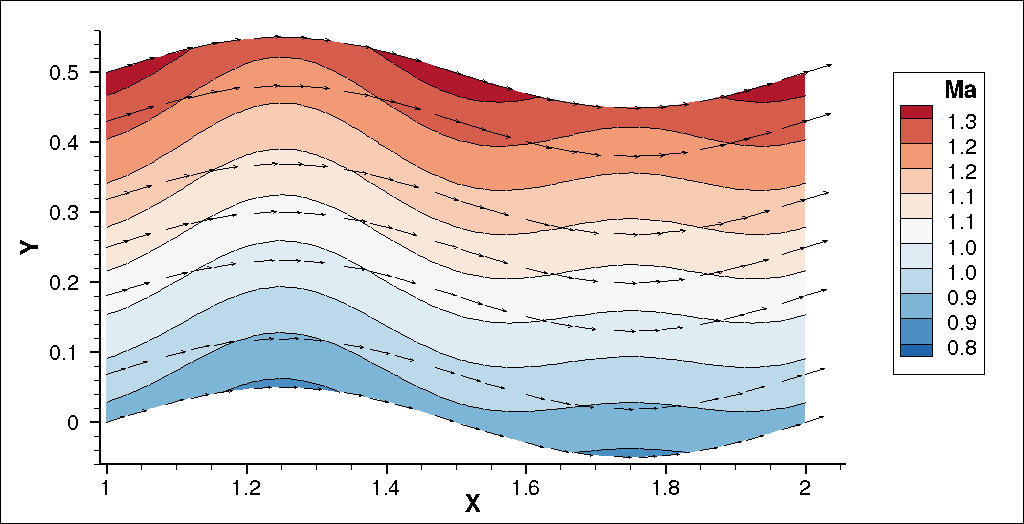}}
\caption{Manufactured solution MS-1}
\label{fig:MS-1}
\end{figure}

\begin{figure}[!hbt]
\centering
\subfloat[$\rho$]{
\includegraphics[trim = 5mm 2mm 18mm 13mm, clip,width=0.32\linewidth]
{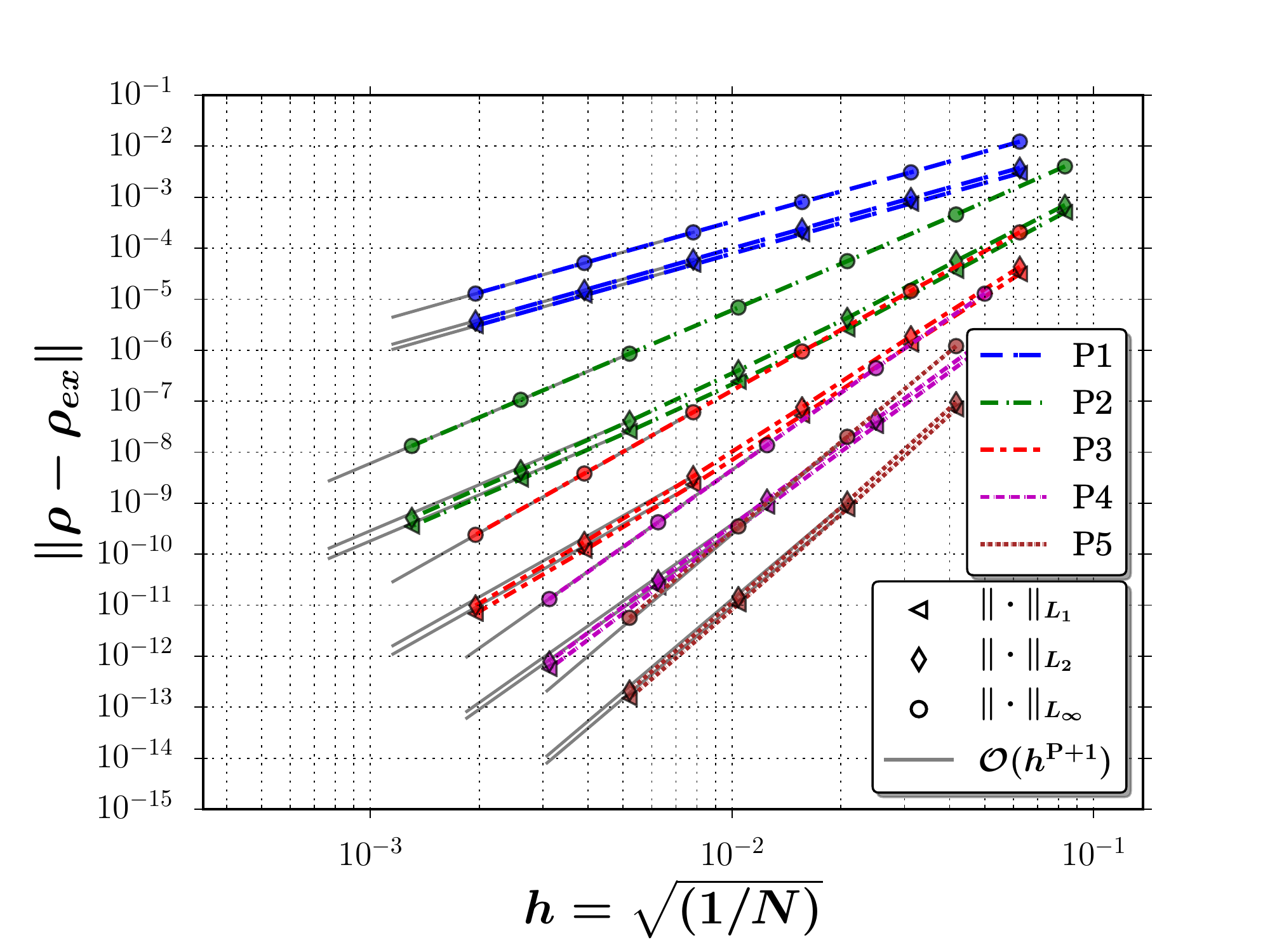}}
~~~
\subfloat[$\rho u$]{
\includegraphics[trim = 5mm 2mm 18mm 13mm, clip,width=0.32\linewidth]{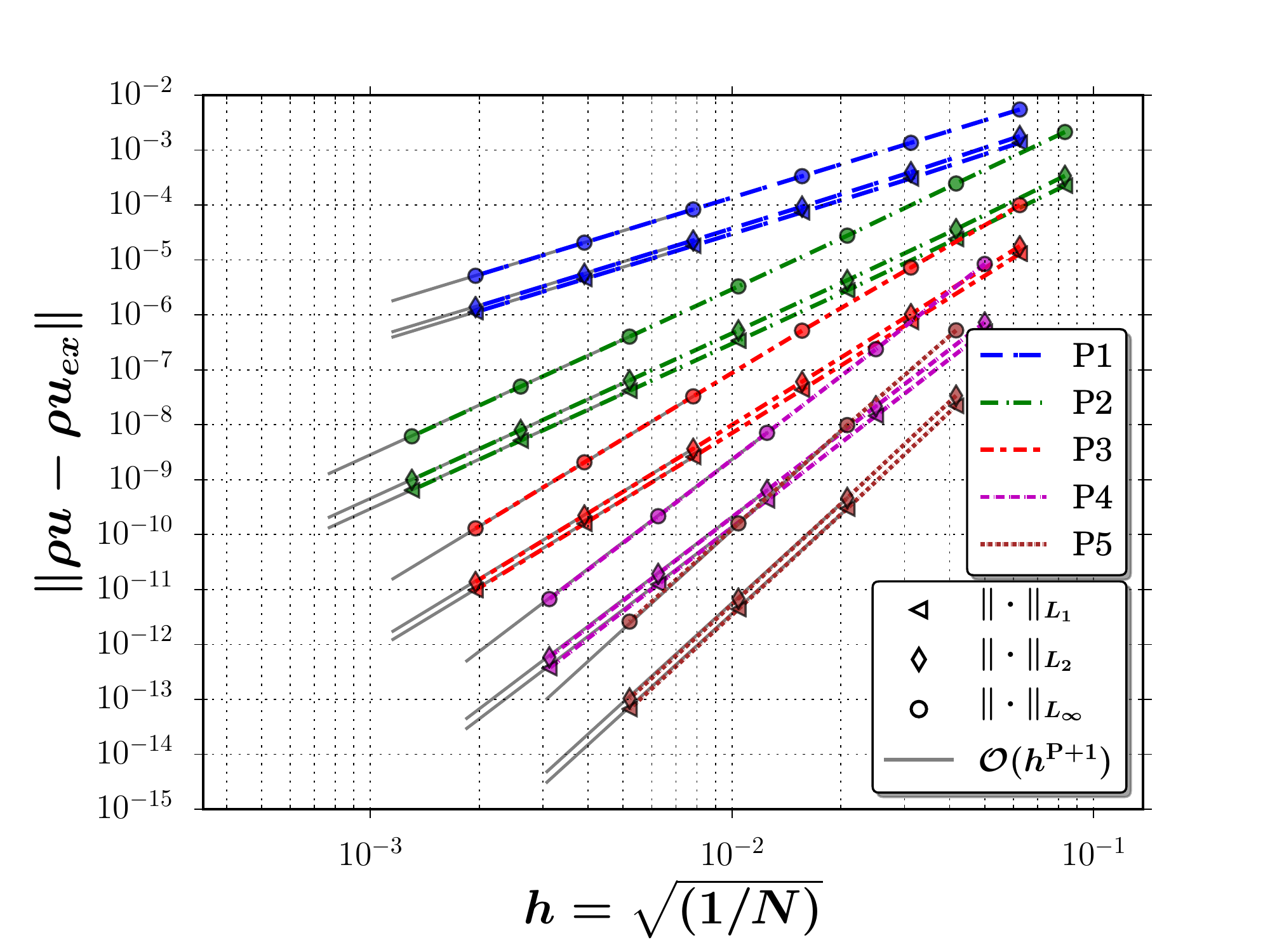}}
\vfill
\subfloat[$\rho v$]{
\includegraphics[trim = 5mm 2mm 18mm 13mm, clip,width=0.32\linewidth]
{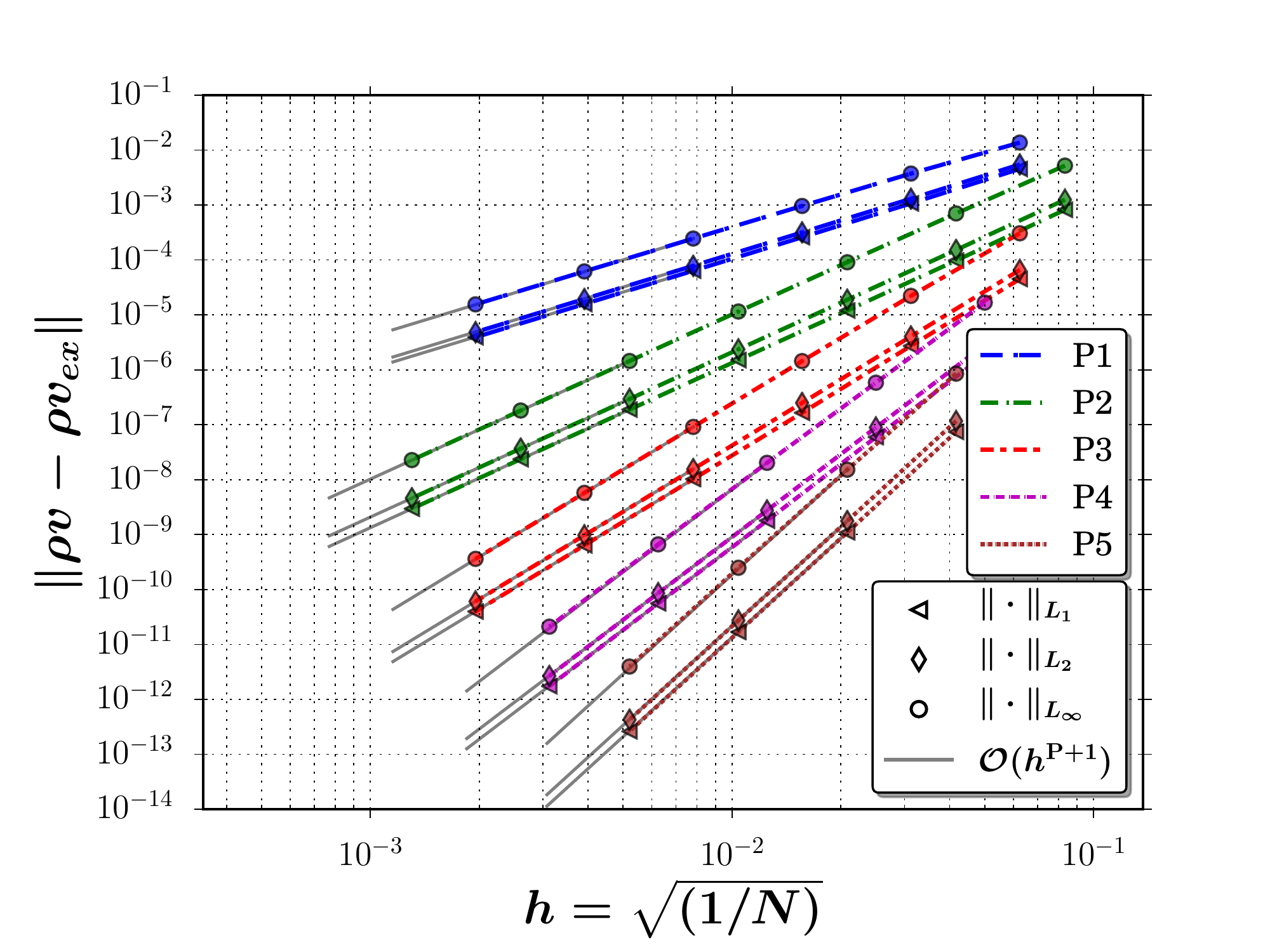}}
~~~
\subfloat[$\rho E$]{
\includegraphics[trim = 5mm 2mm 18mm 13mm, clip,width=0.32\linewidth]{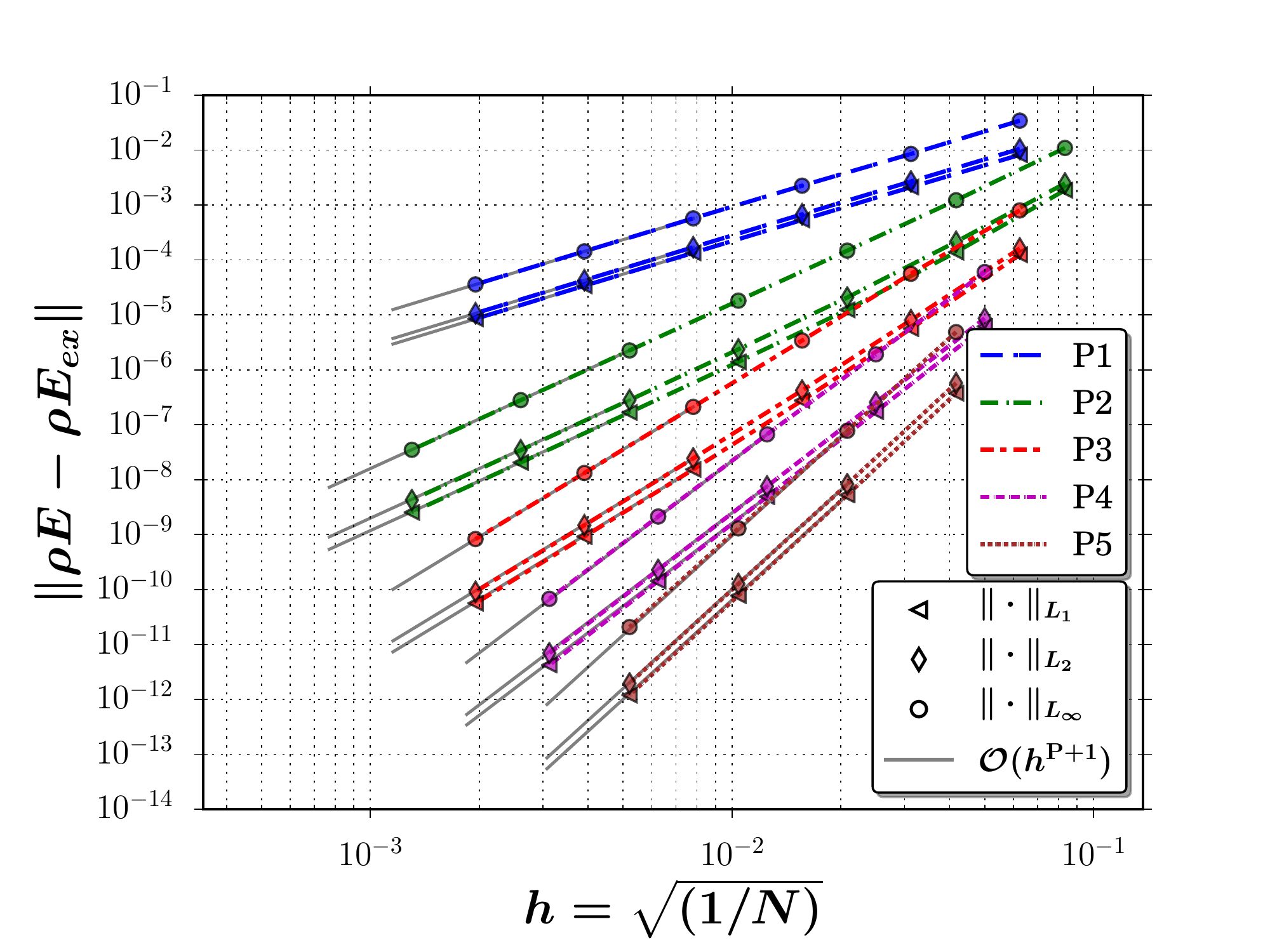}}
\caption{Evolution of the discretization error in $L_1$, $L_2$ and $L_\infty$ norms versus mesh refinement for MS-1 with analytical normal at the wall and  $\mathrm{P}1$--$\mathrm{P}5$}
\label{fig:Err_allE_allP_MS-1}
\end{figure}
\begin{figure}[!hbt]
\centering
\subfloat[$\rho$]{ 
\includegraphics[trim = 16mm 3mm 18mm 13mm, clip,width=0.3\linewidth]
{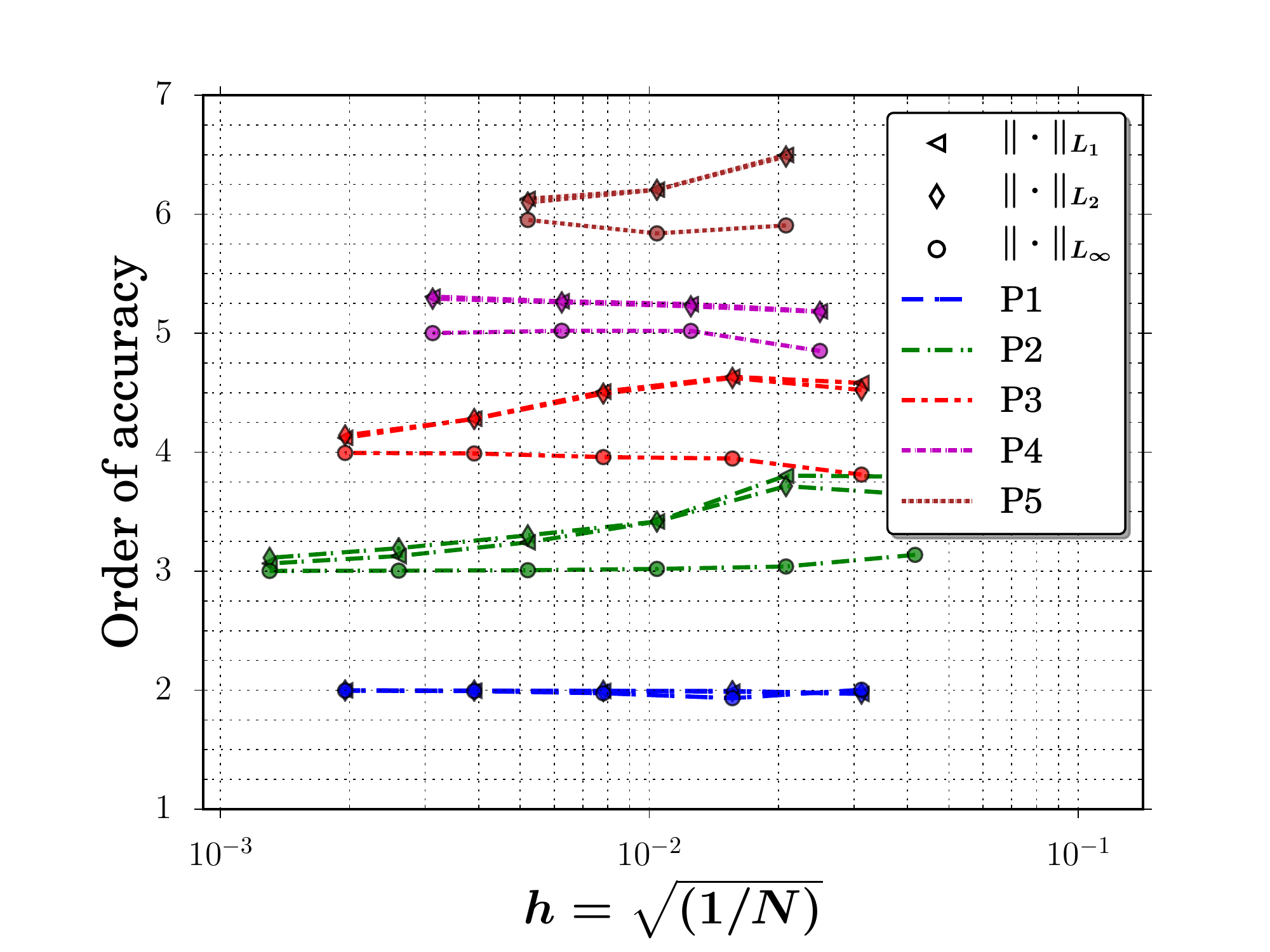}}
~~~
\subfloat[$\rho u$]{
\includegraphics[trim = 16mm 3mm 18mm 13mm, clip,width=0.3\linewidth]{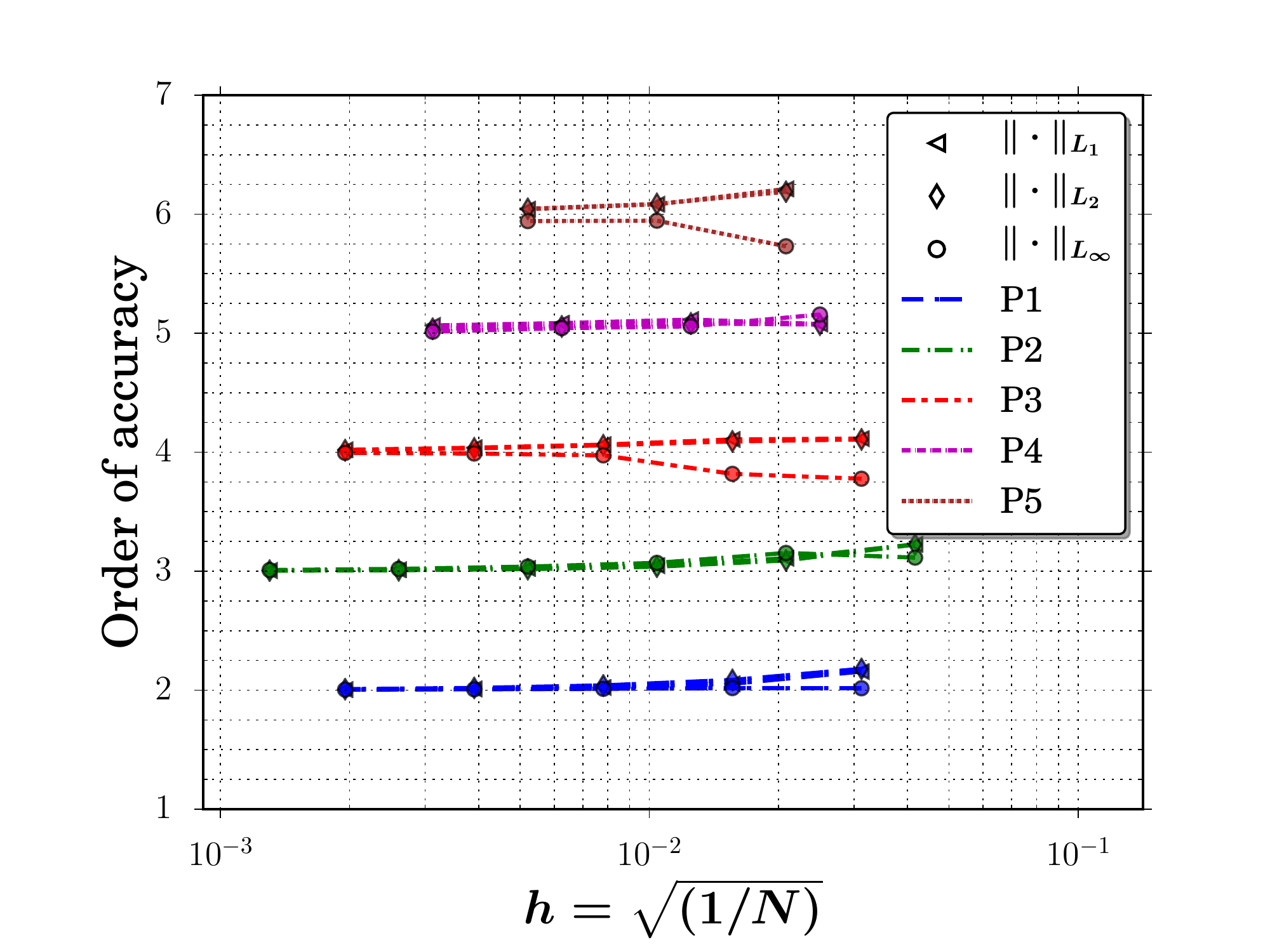}}
\vfill
\subfloat[$\rho v$]{
\includegraphics[trim = 16mm 3mm 18mm 13mm, clip,width=0.3\linewidth]
{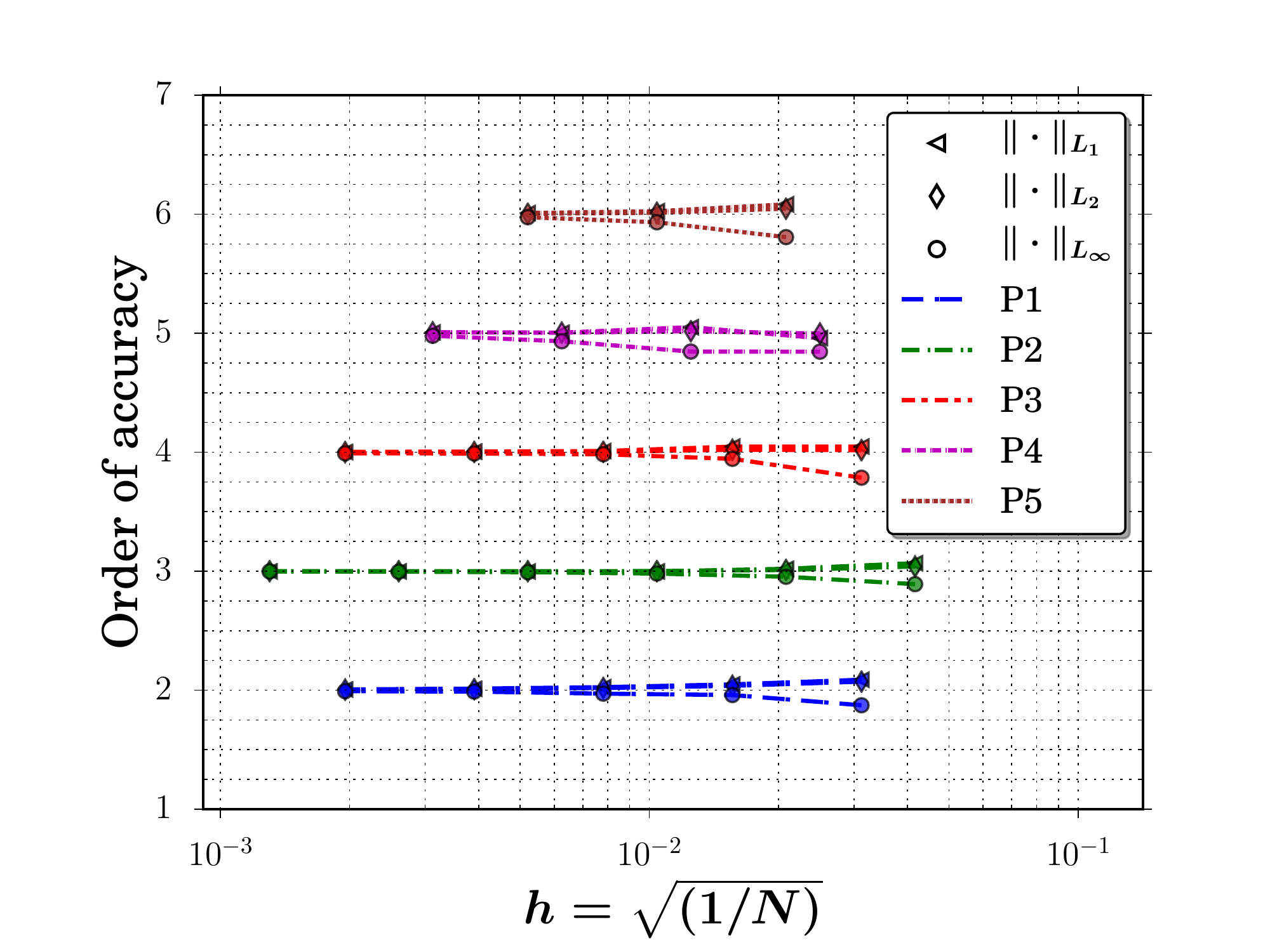}}
~~~
\subfloat[$\rho E$]{
\includegraphics[trim = 16mm 3mm 18mm 13mm, clip,width=0.3\linewidth]{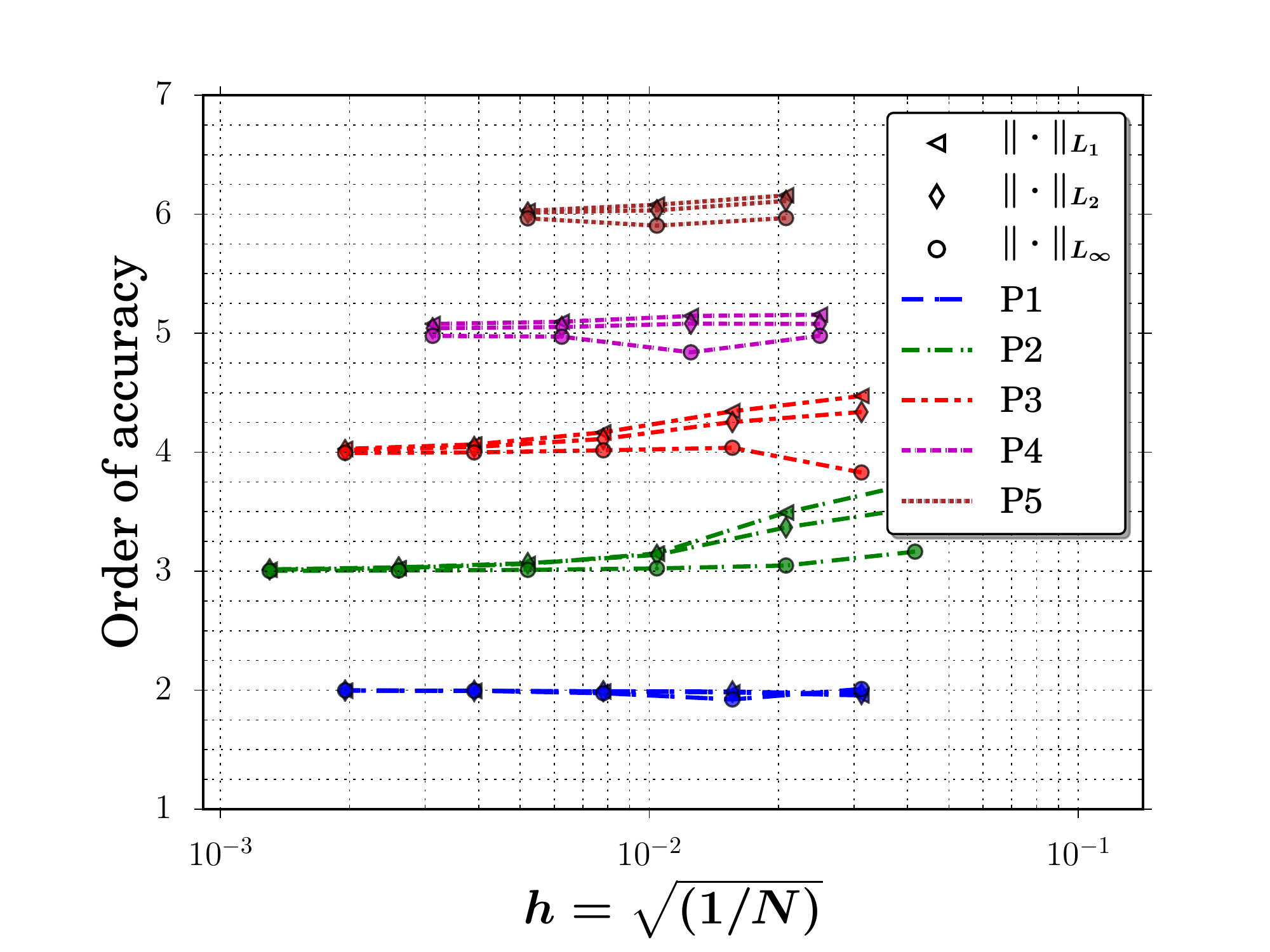}}
\caption{Evolution of the OOAs in $L_1$, $L_2$ and $L_\infty$ norms versus mesh refinement for MS-1 with analytical normal at the wall  and  $\mathrm{P}1$--$\mathrm{P}5$}
\label{fig:Orders_MS-1}
\end{figure}

\clearpage

\subsection{MS-2}

\begin{figure}[!hbt]
\centering
\vspace{1mm}
\subfloat[$\rho^{\mathrm{MS}}$]{
\includegraphics[trim = 0mm 0mm 0mm 0mm, clip,width=0.4\linewidth]
{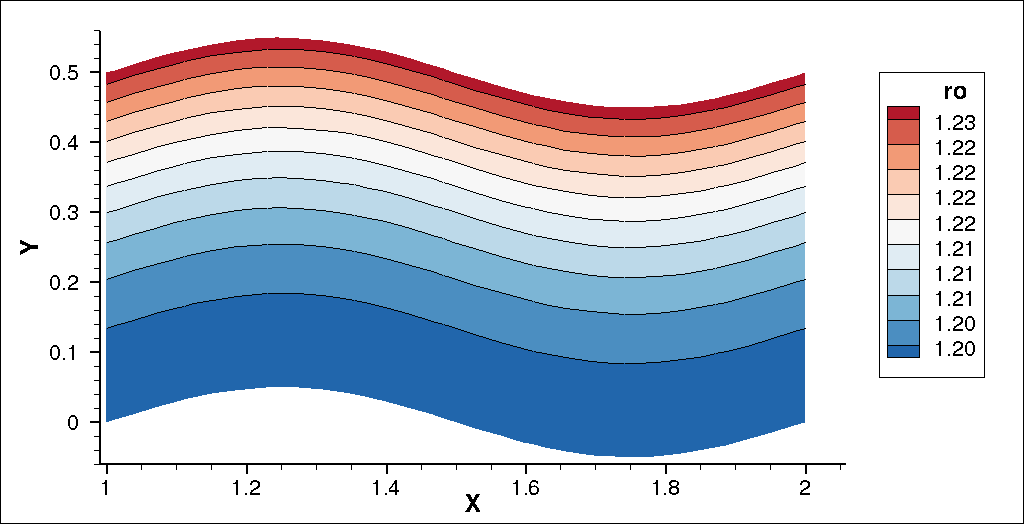}}
~~~
\subfloat[$u^{\mathrm{MS}}$]{
\includegraphics[trim = 0mm 0mm 0mm 0mm, clip,width=0.4\linewidth]
{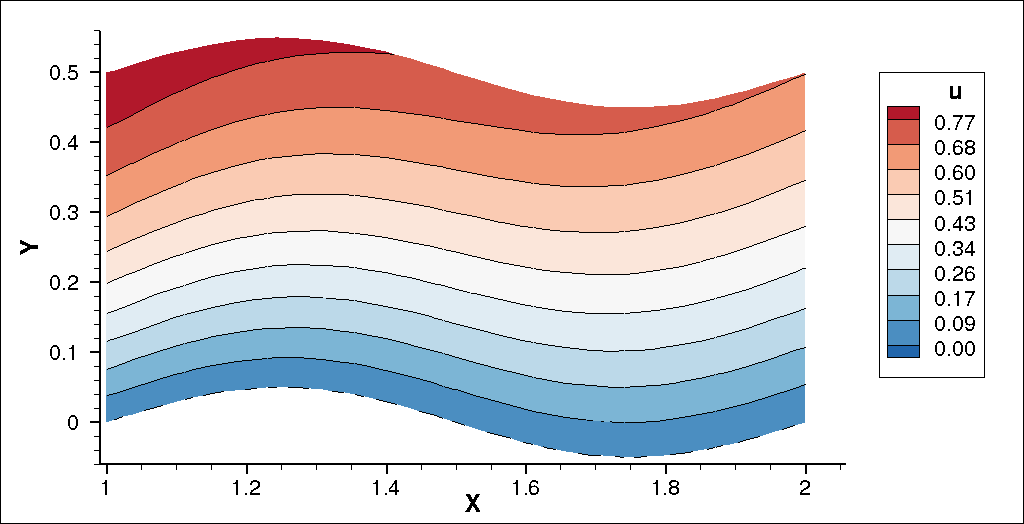}}
\vfill
\subfloat[$v^{\mathrm{MS}}$]{
\includegraphics[trim = 0mm 0mm 0mm 0mm, clip,width=0.4\linewidth]{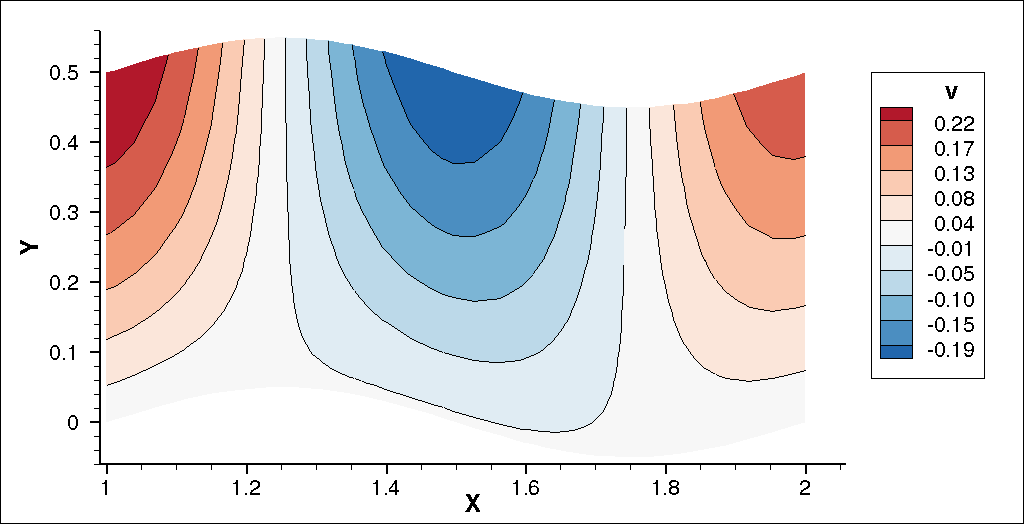}}
~~~
\subfloat[$p^{\mathrm{MS}}$]{
\includegraphics[trim = 0mm 0mm 0mm 0mm, clip,width=0.4\linewidth]{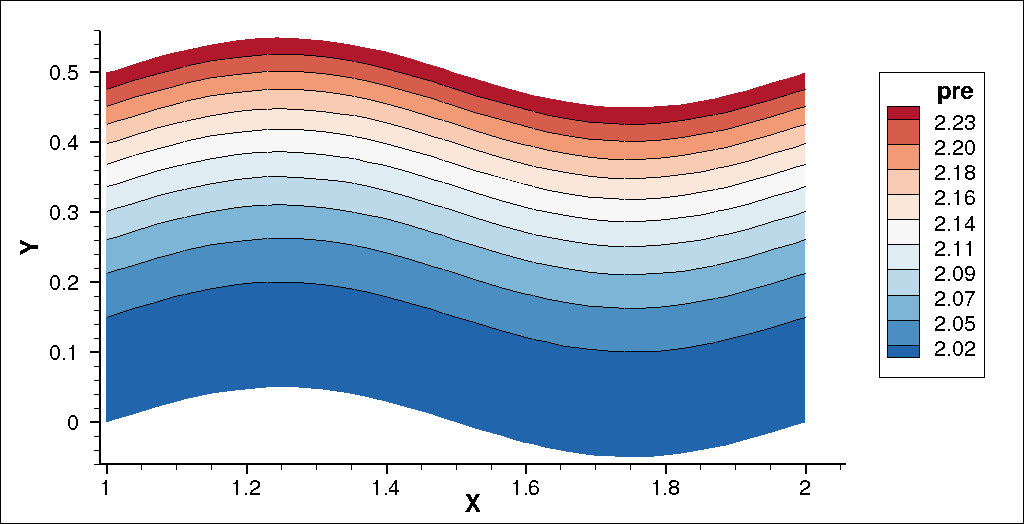}}
\vfill
\subfloat[$\mbox{Ma}^{\mathrm{MS}}$ and ${\bm{u}}^{\mathrm{MS}}$]{
\includegraphics[trim = 0mm 0mm 0mm 0mm, clip,width=0.4\linewidth]{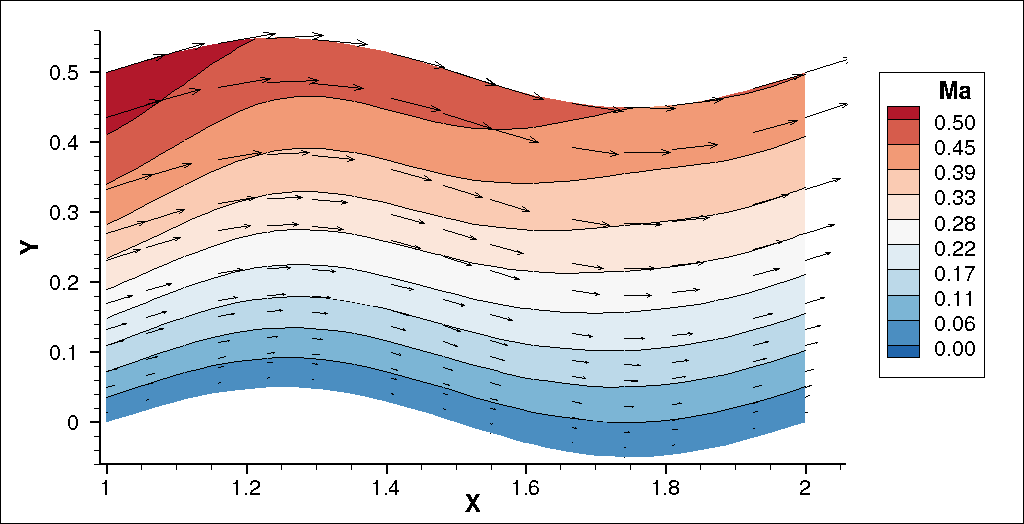}}
\caption{Manufactured solution MS-2}
\label{fig:MS-2}
\end{figure}

\begin{figure}[!hbt]
\centering
\subfloat[$\rho$]{
\includegraphics[trim = 5mm 2mm 18mm 13mm, clip,width=0.32\linewidth]
{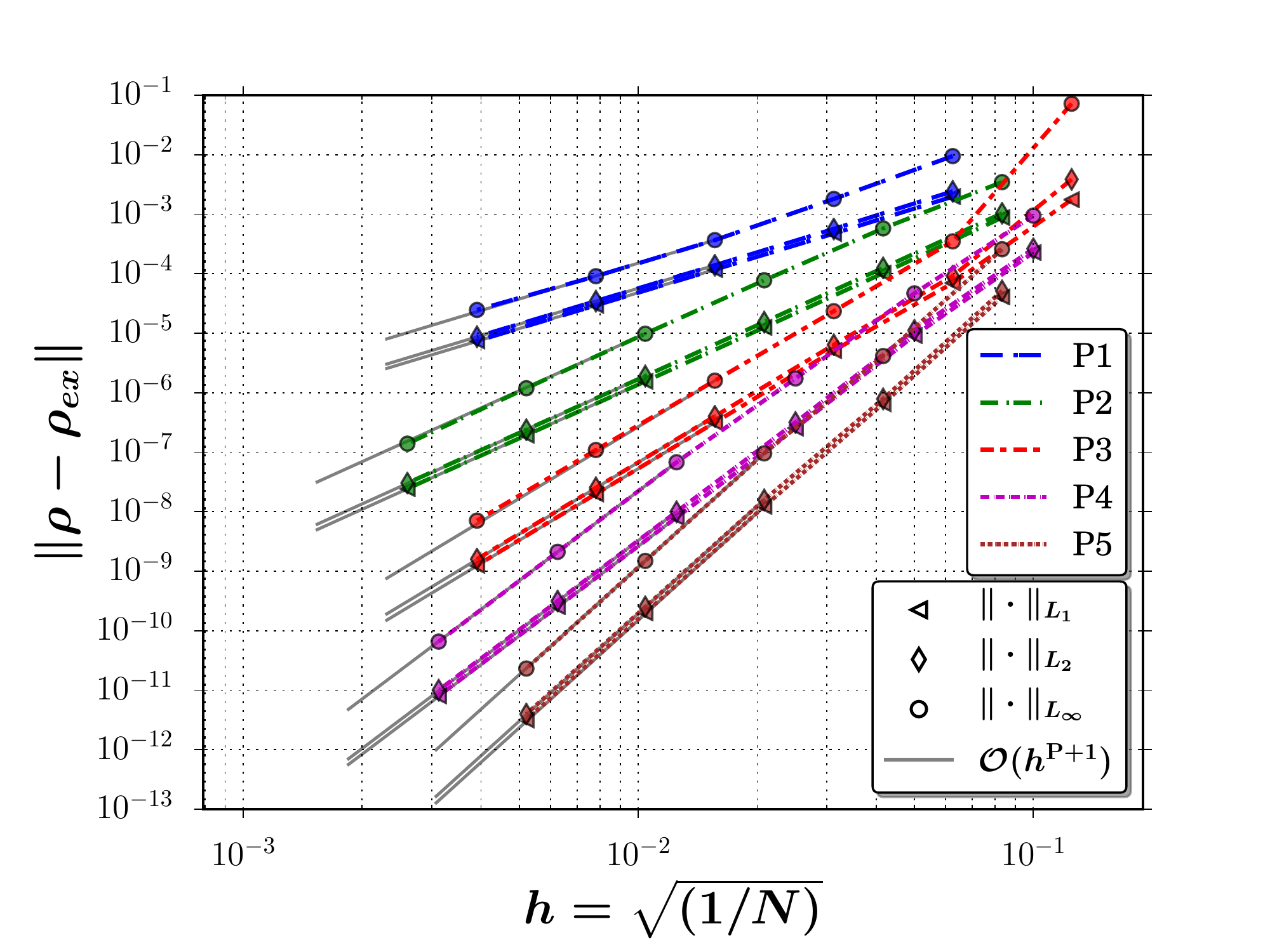}}
~~~
\subfloat[$\rho u$]{
\includegraphics[trim = 5mm 2mm 18mm 13mm, clip,width=0.32\linewidth]{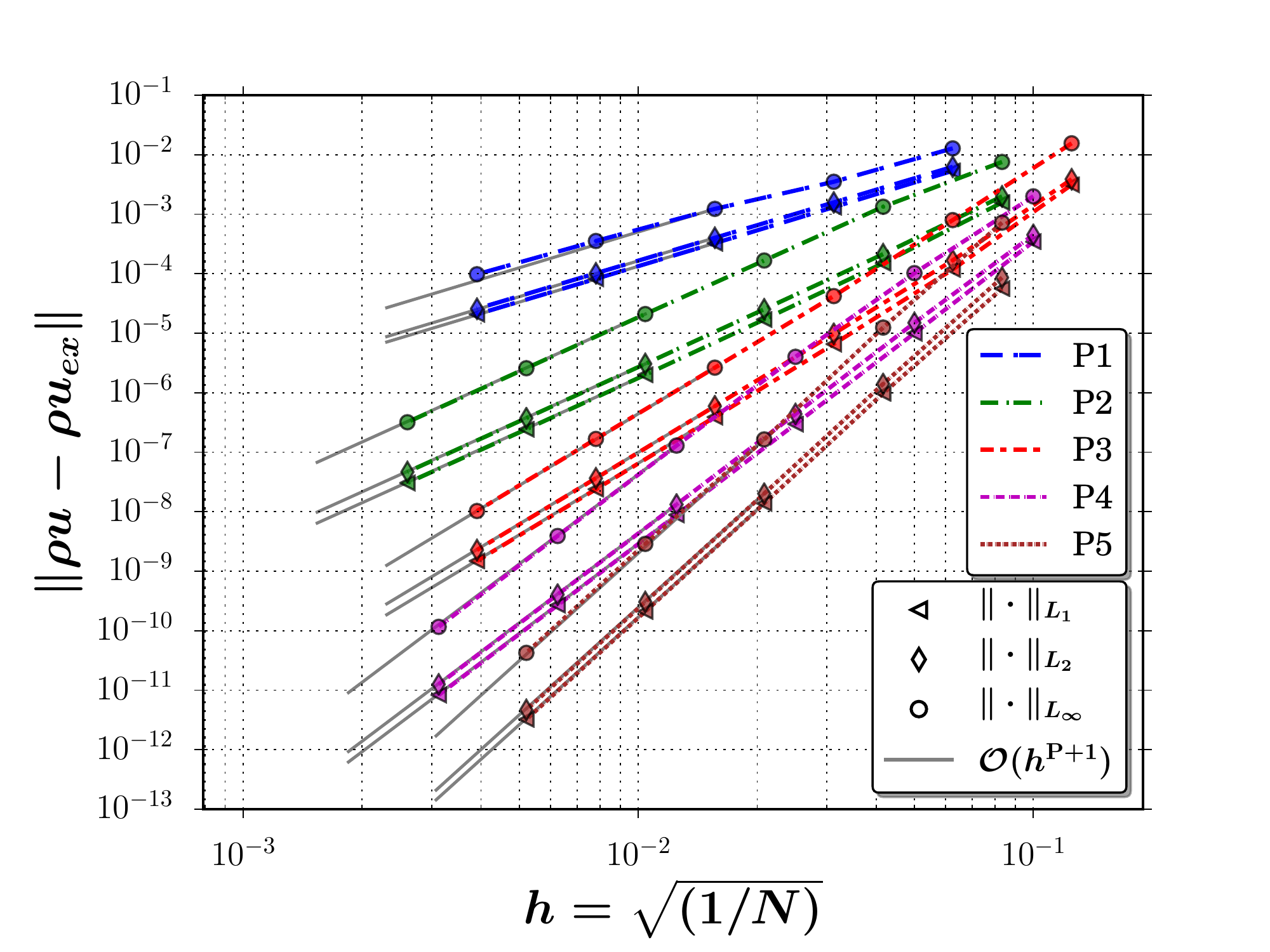}}
\vfill
\subfloat[$\rho v$]{
\includegraphics[trim = 5mm 2mm 18mm 13mm, clip,width=0.32\linewidth]
{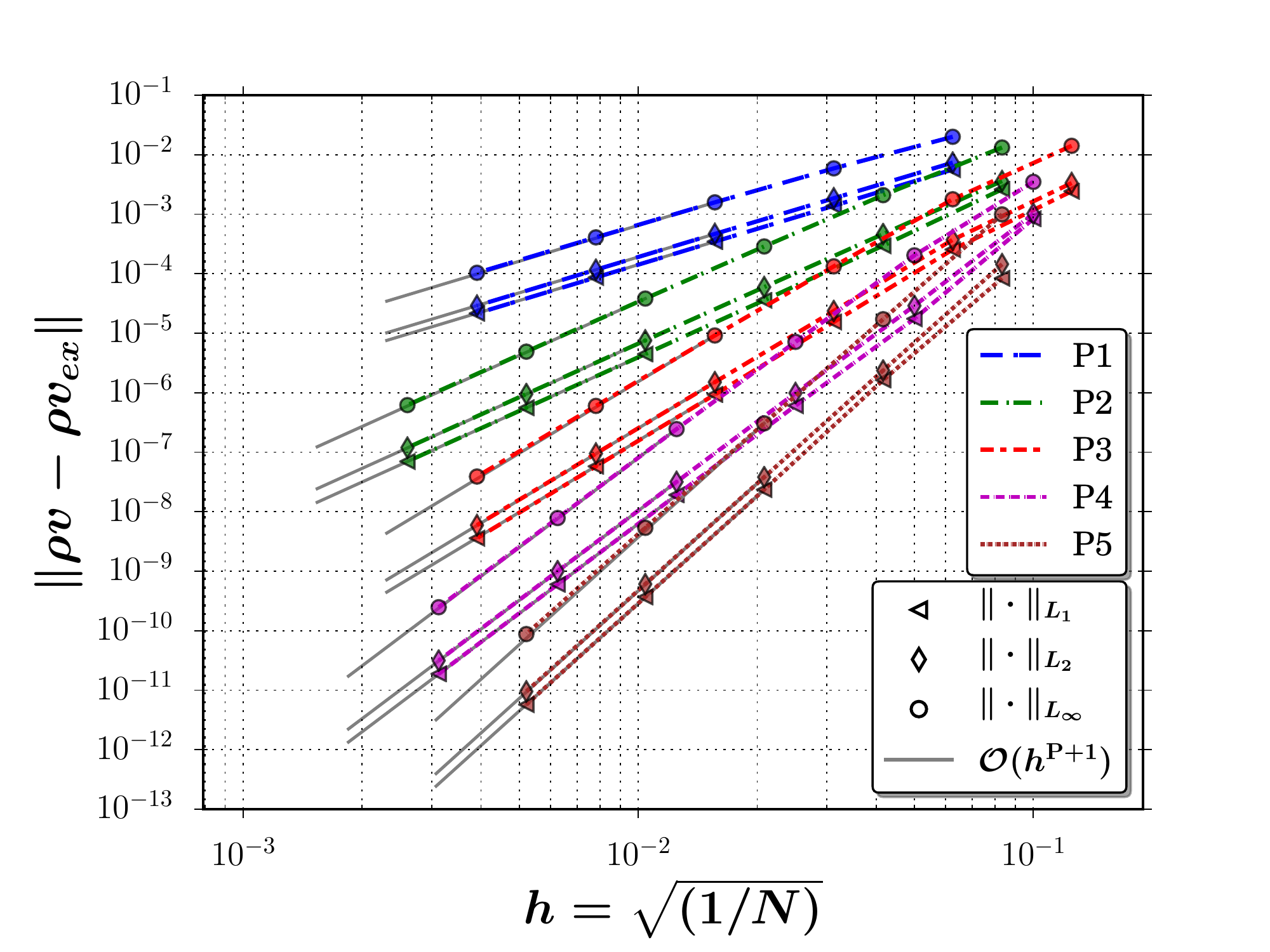}}
~~~
\subfloat[$\rho E$]{
\includegraphics[trim = 5mm 2mm 18mm 13mm, clip,width=0.32\linewidth]{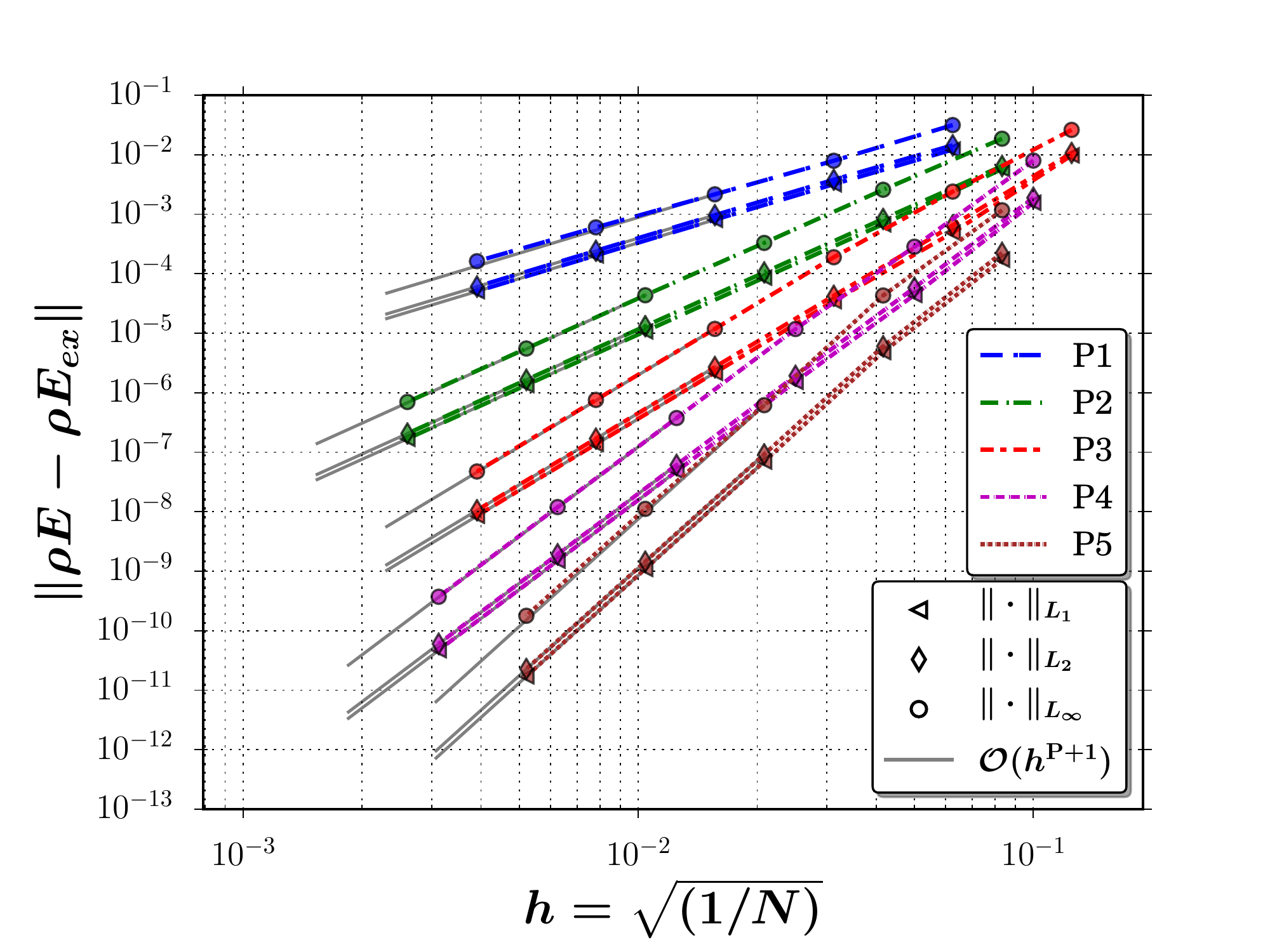}}
\caption{Evolution of the discretization error in $L_1$, $L_2$ and $L_\infty$ norms versus mesh refinement for MS-2 with polynomial degrees $\mathrm{P}1$--$\mathrm{P}5$}
\label{fig:Err_allE_allP_MS-2}
\end{figure}
\begin{figure}[!hbt]
\centering
\subfloat[$\rho$]{ 
\includegraphics[trim = 16mm 3mm 18mm 13mm, clip,width=0.3\linewidth]
{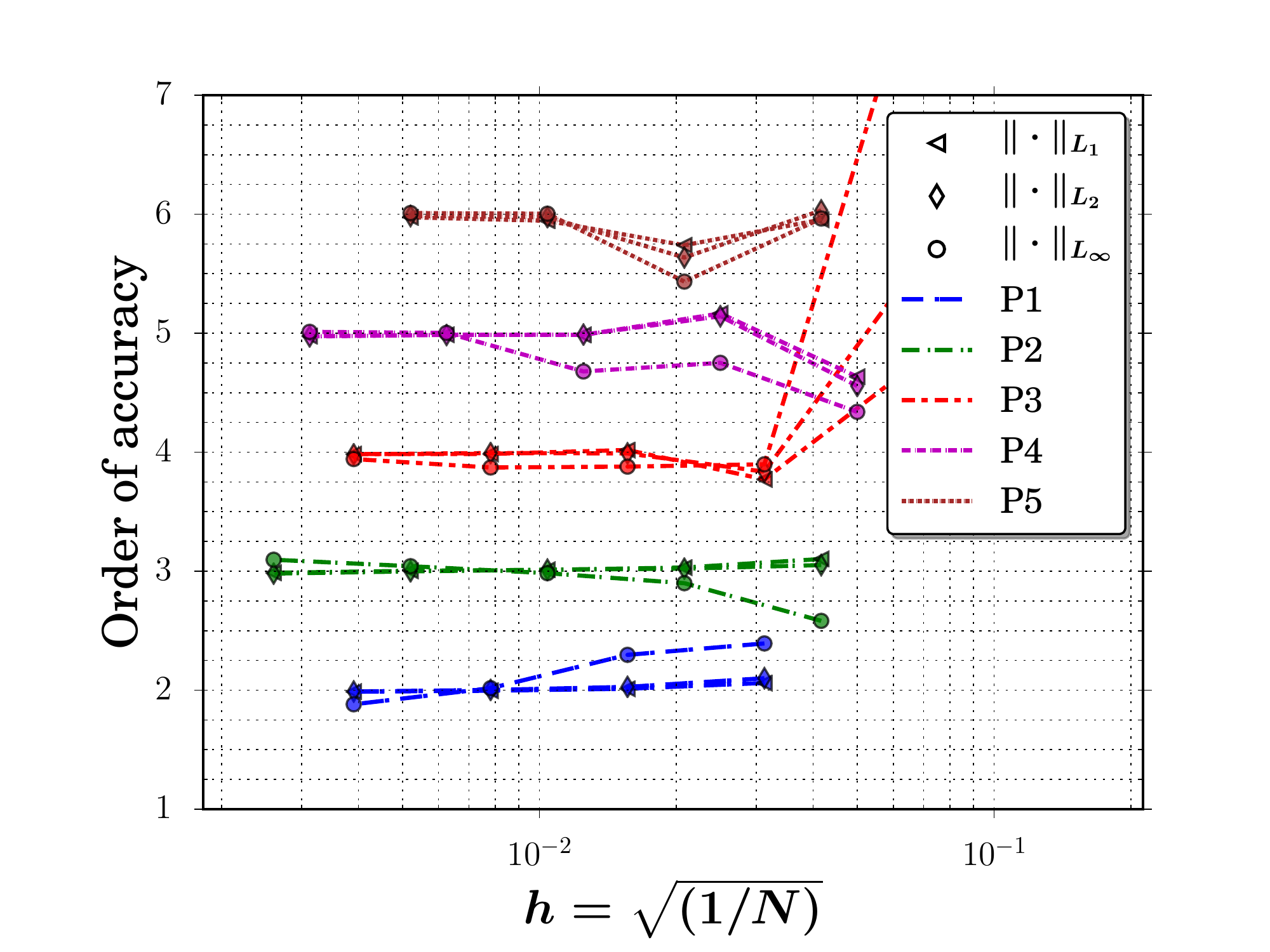}}
~~~
\subfloat[$\rho u$]{
\includegraphics[trim = 16mm 3mm 18mm 13mm, clip,width=0.3\linewidth]{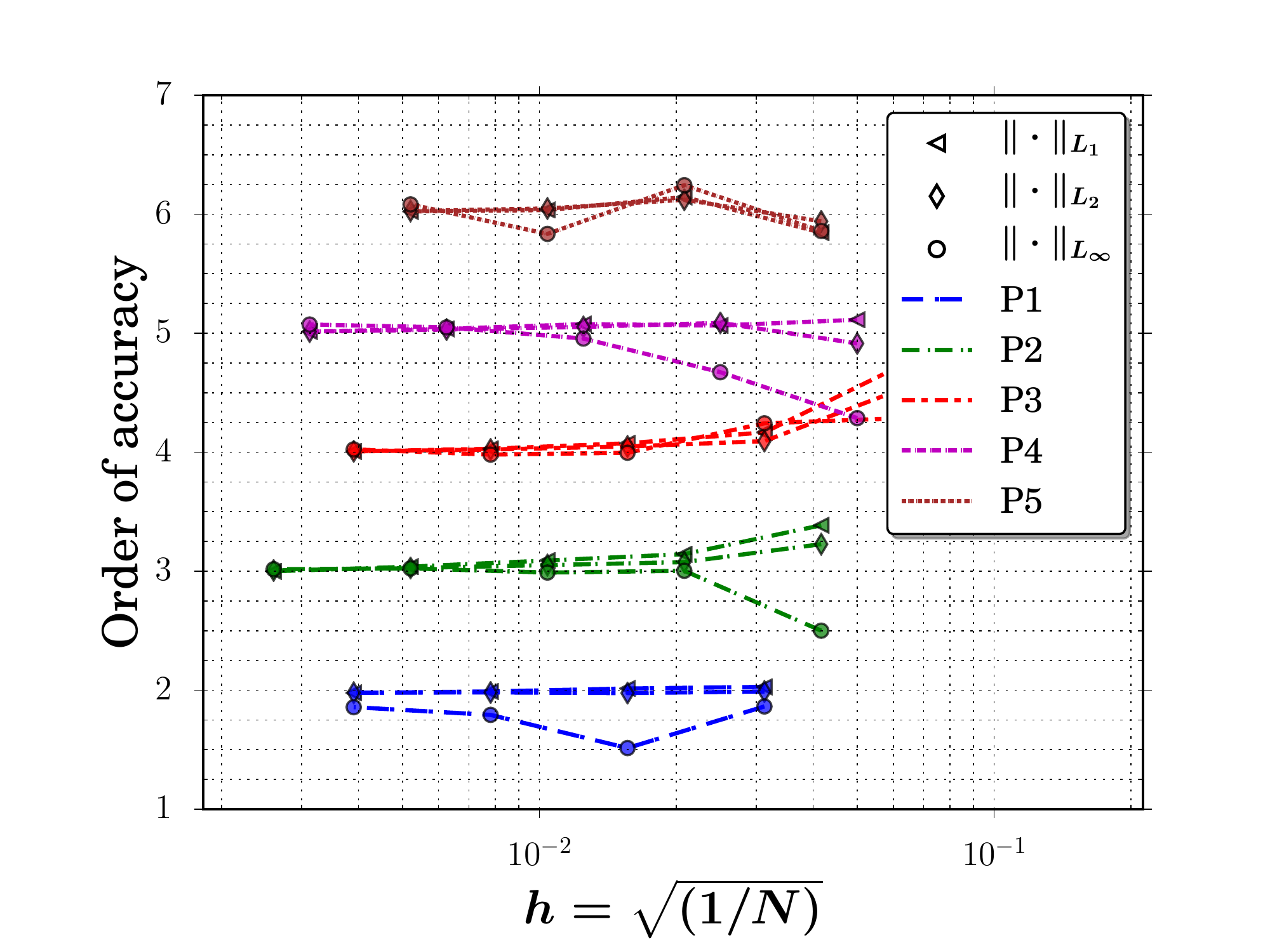}}
\vfill
\subfloat[$\rho v$]{
\includegraphics[trim = 16mm 3mm 18mm 13mm, clip,width=0.3\linewidth]
{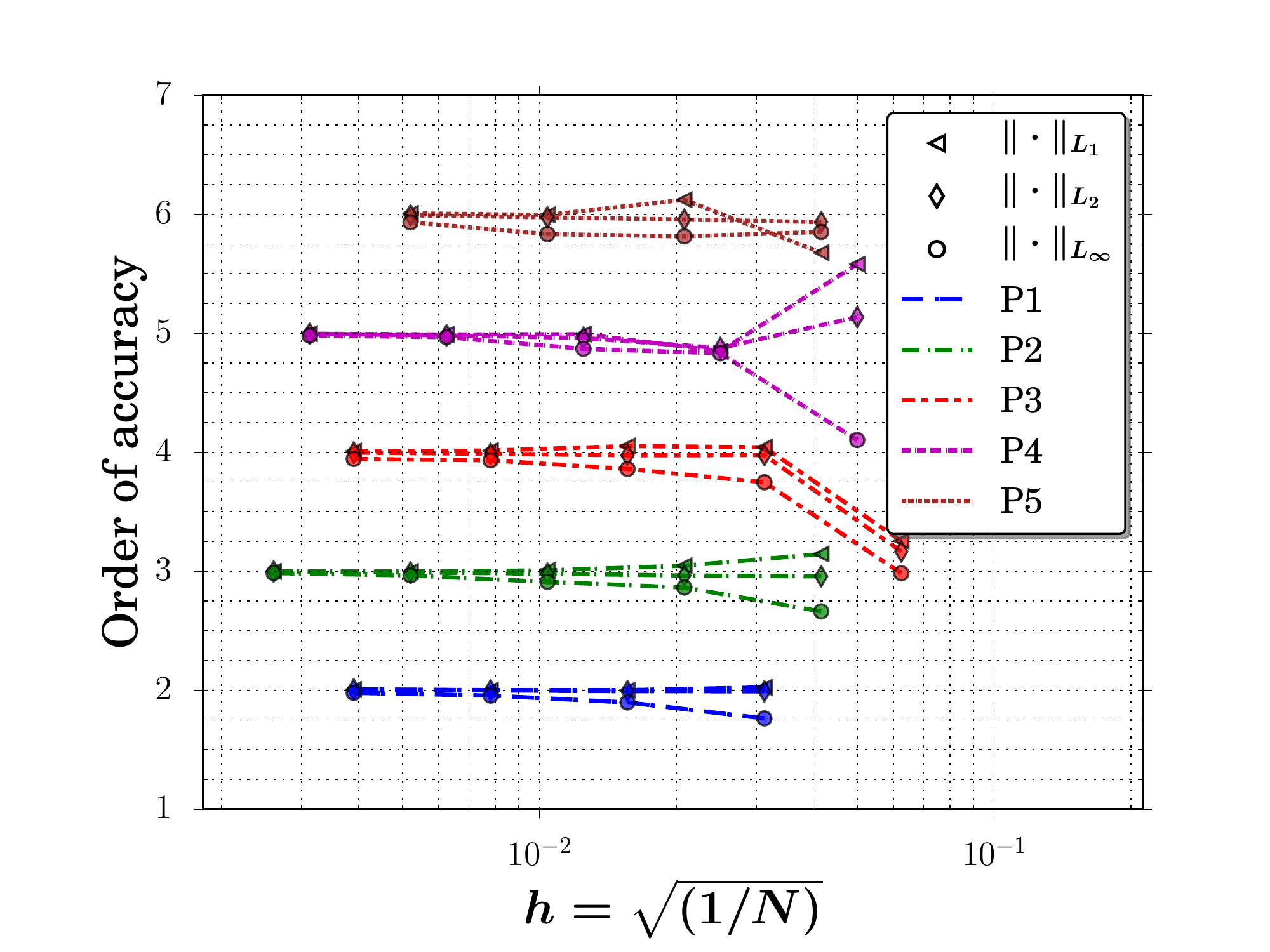}}
~~~
\subfloat[$\rho E$]{
\includegraphics[trim = 16mm 3mm 18mm 13mm, clip,width=0.3\linewidth]{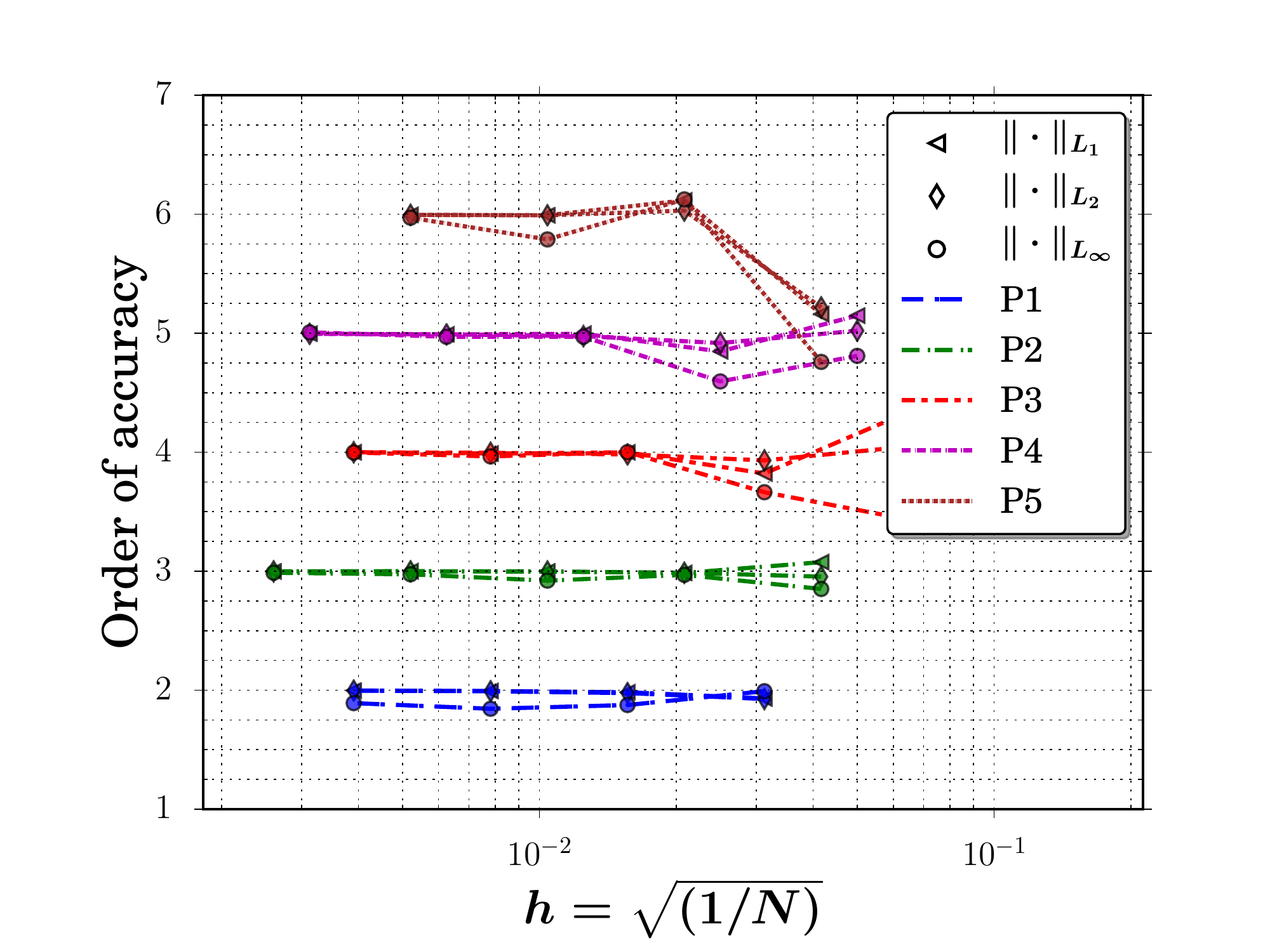}}
\caption{Evolution of the OOAs in $L_1$, $L_2$ and $L_\infty$ norms versus mesh refinement for MS-2 with polynomial degrees $\mathrm{P}1$--$\mathrm{P}5$}
\label{fig:Orders_MS-2}
\end{figure}

\begin{figure}[!hbt]
\centering
\subfloat[$\rho$]{
\includegraphics[trim = 5mm 2mm 18mm 13mm, clip,width=0.32\linewidth]
{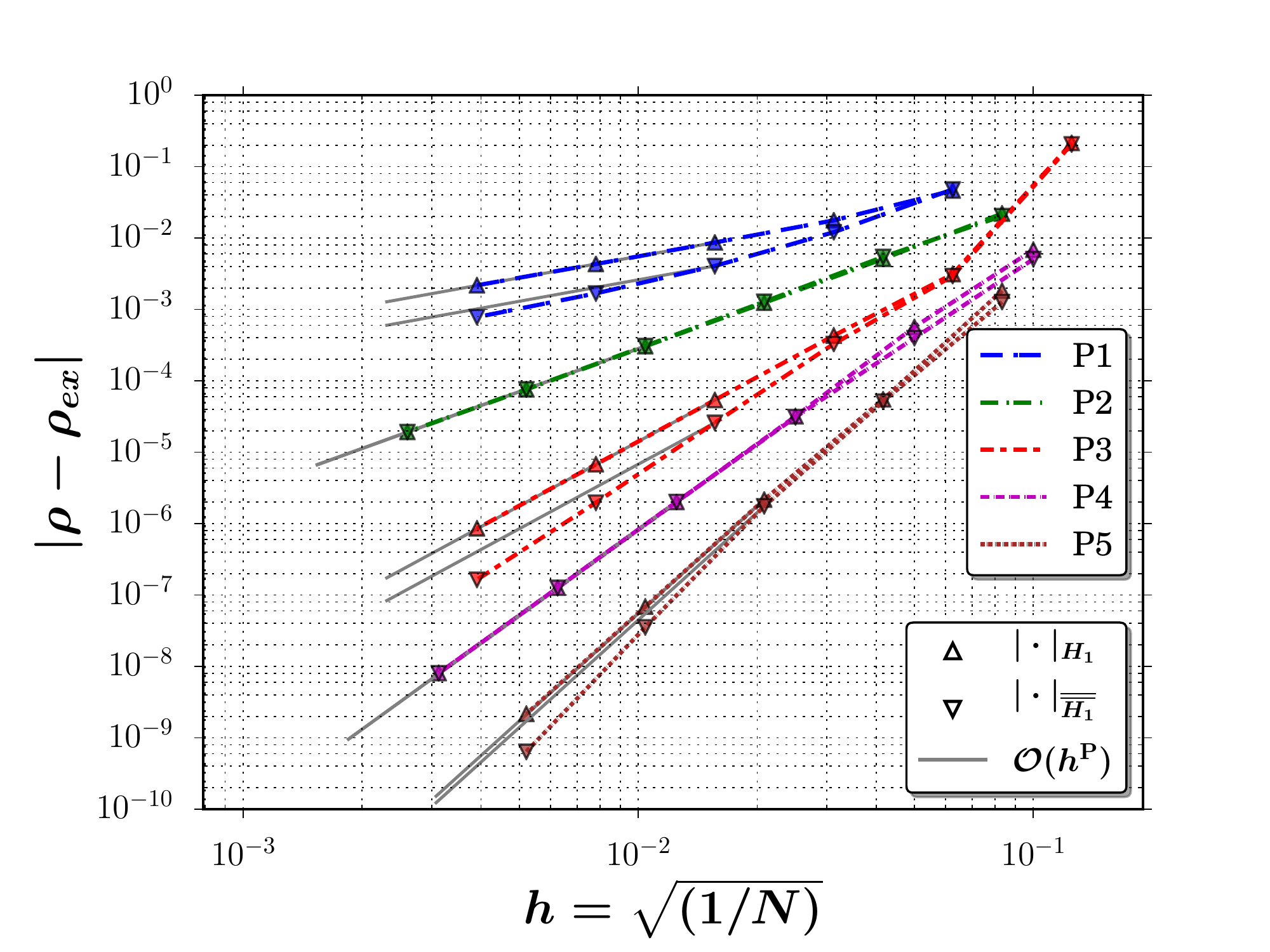}}
~~~
\subfloat[$\rho u$]{
\includegraphics[trim = 5mm 2mm 18mm 13mm, clip,width=0.32\linewidth]{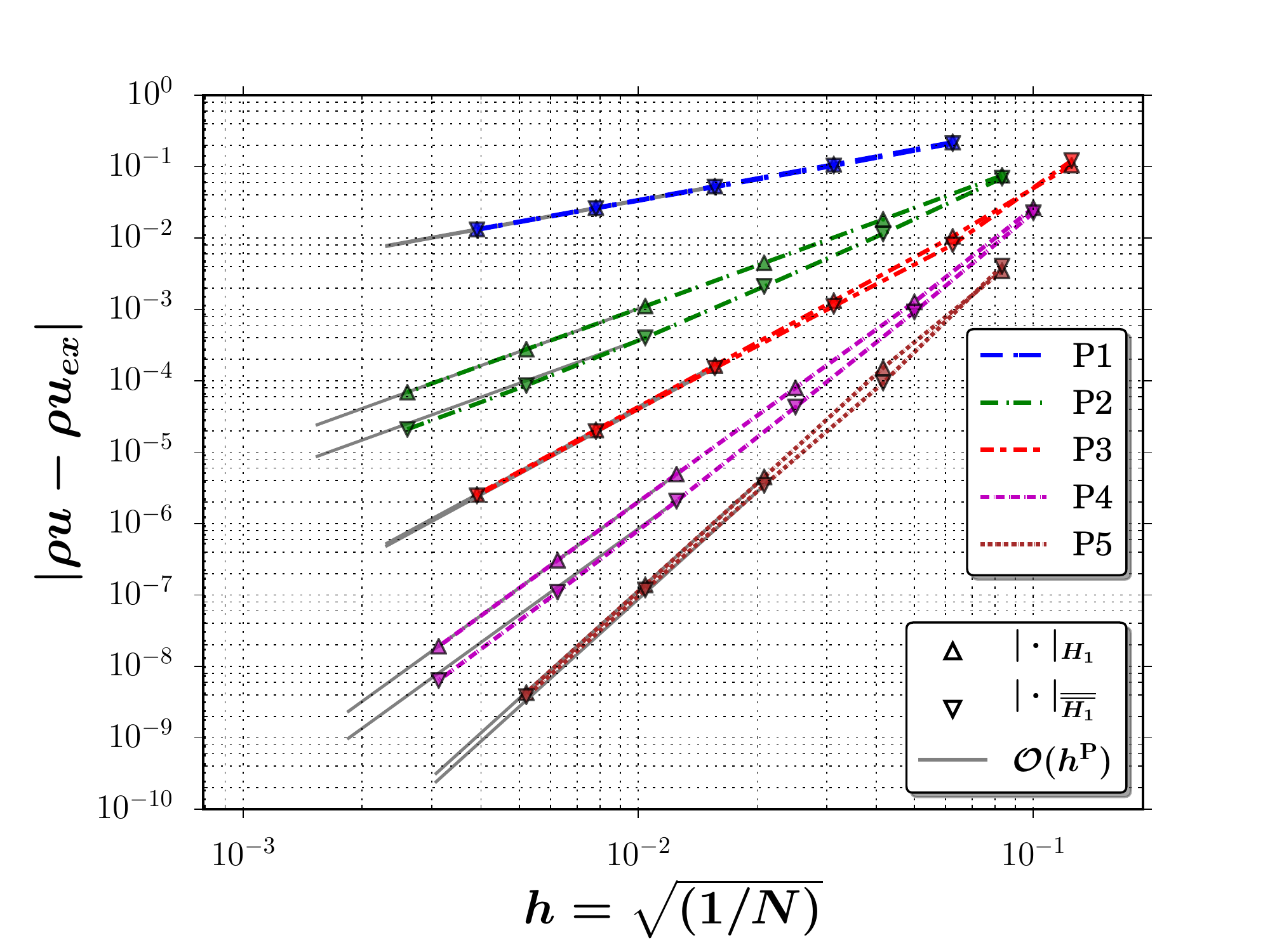}}
\vfill
\subfloat[$\rho v$]{
\includegraphics[trim = 5mm 2mm 18mm 13mm, clip,width=0.32\linewidth]
{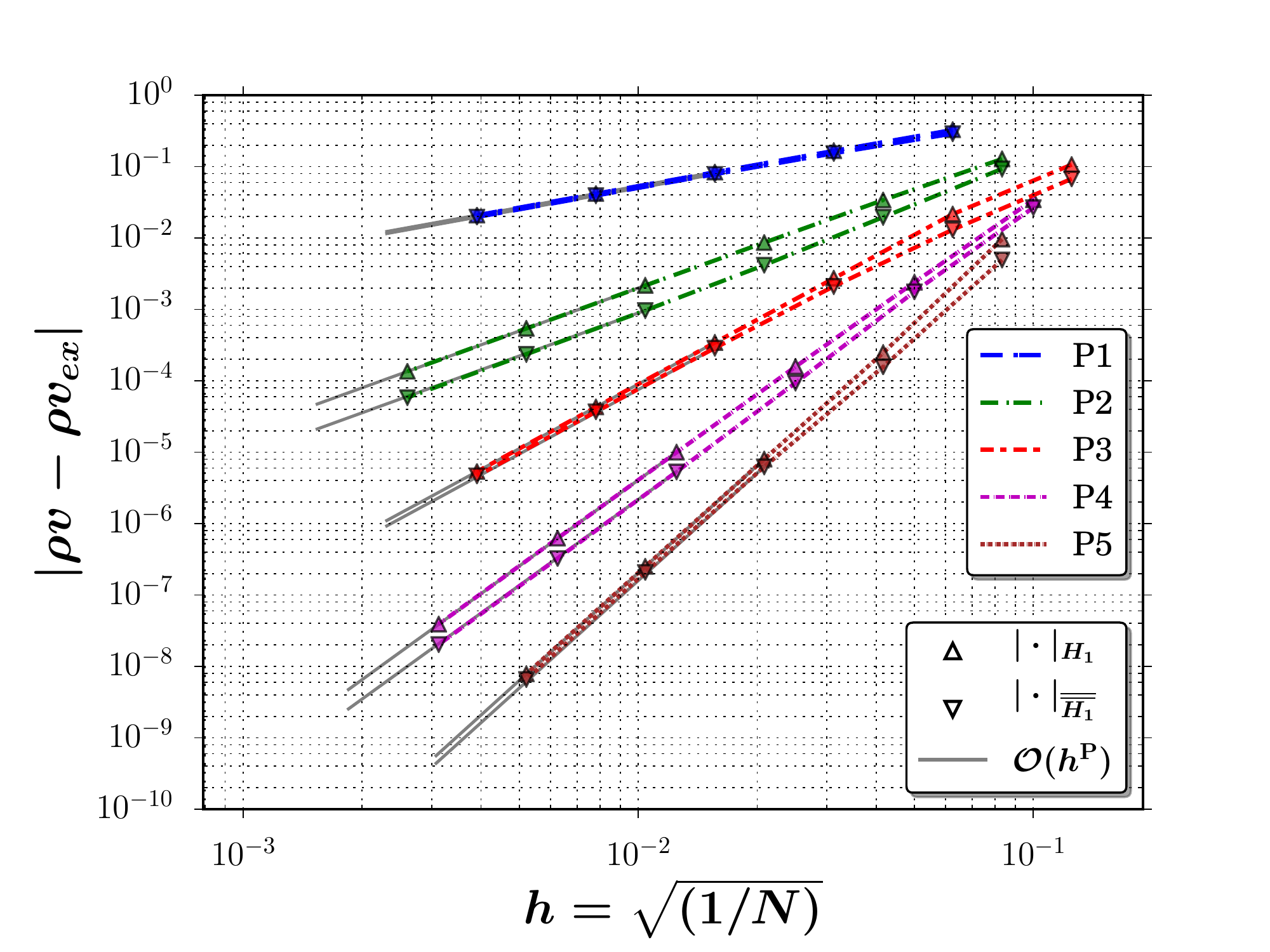}}
~~~
\subfloat[$\rho E$]{
\includegraphics[trim = 5mm 2mm 18mm 13mm, clip,width=0.32\linewidth]{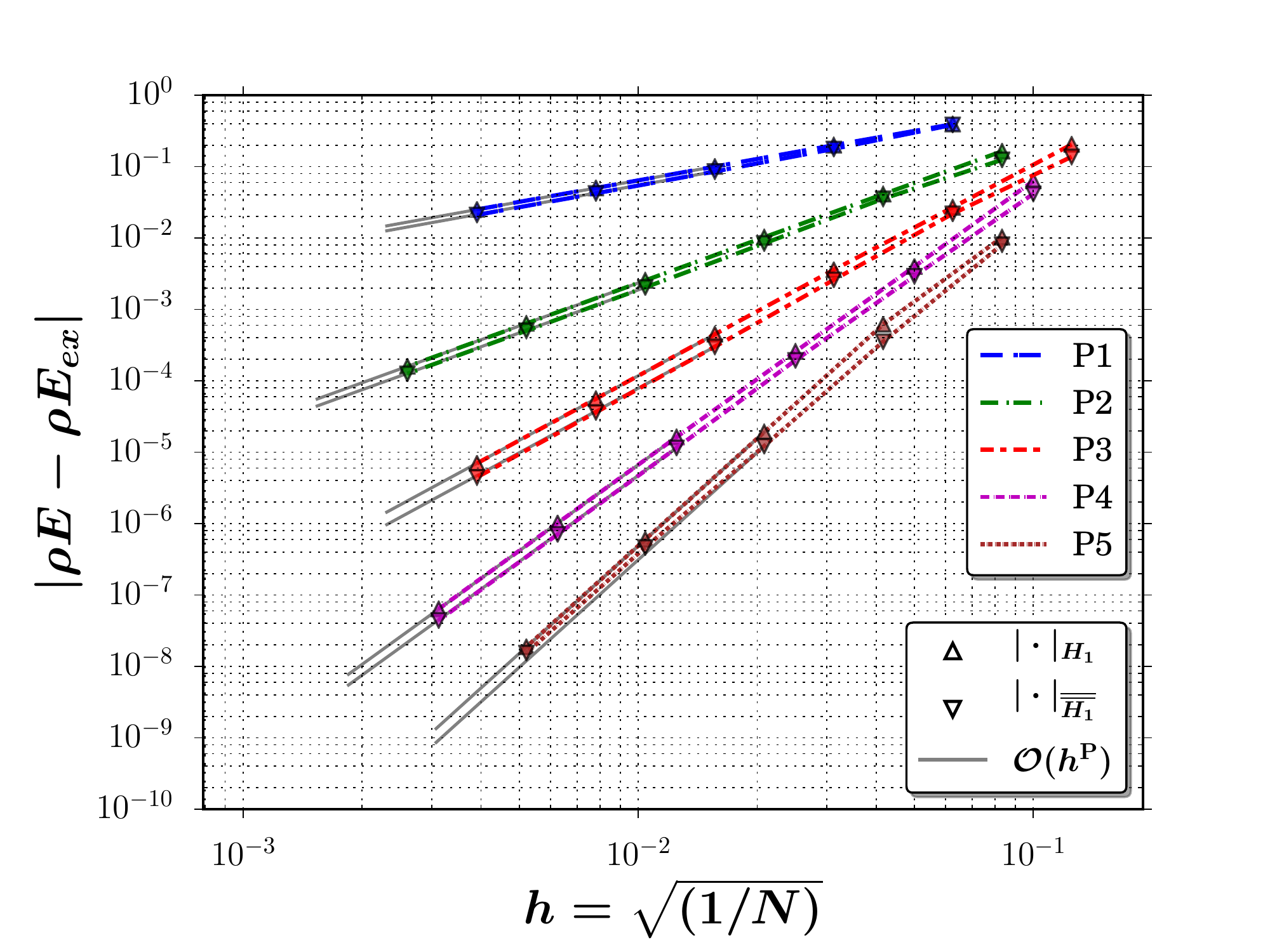}}
\caption{Evolution of the discretization error in $H_1$ semi-norm (for uncorrected and  fully corrected derivatives) versus mesh refinement for MS-2 and  $\mathrm{P}1$--$\mathrm{P}5$}
\label{fig:Err_allE_allP_H_MS-2}
\end{figure}

\begin{figure}[!hbt]
\centering
\subfloat[$\rho$]{ 
\includegraphics[trim = 16mm 3mm 18mm 13mm, clip,width=0.3\linewidth]
{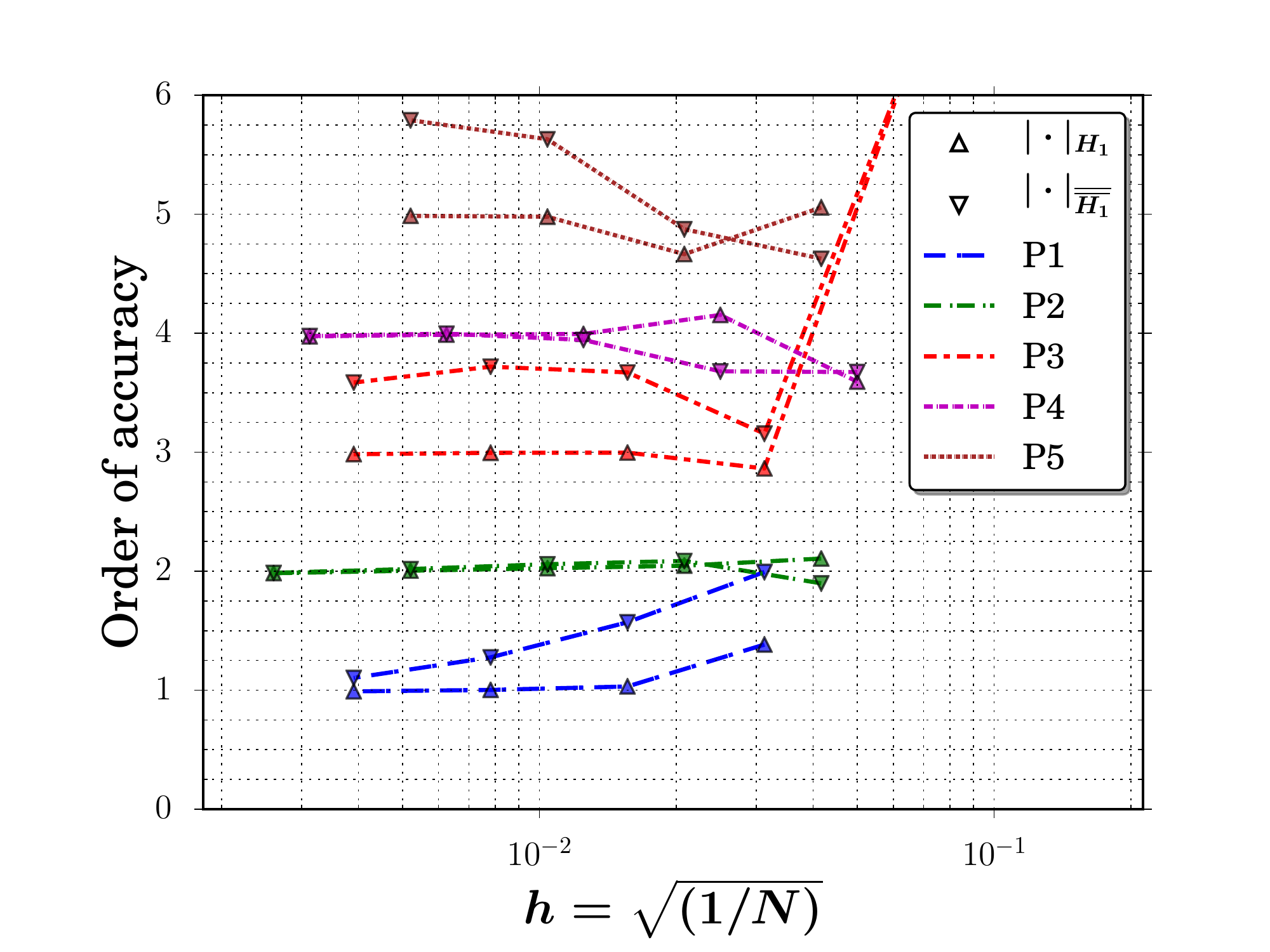}}
~~~
\subfloat[$\rho u$]{
\includegraphics[trim = 16mm 3mm 18mm 13mm, clip,width=0.3\linewidth]{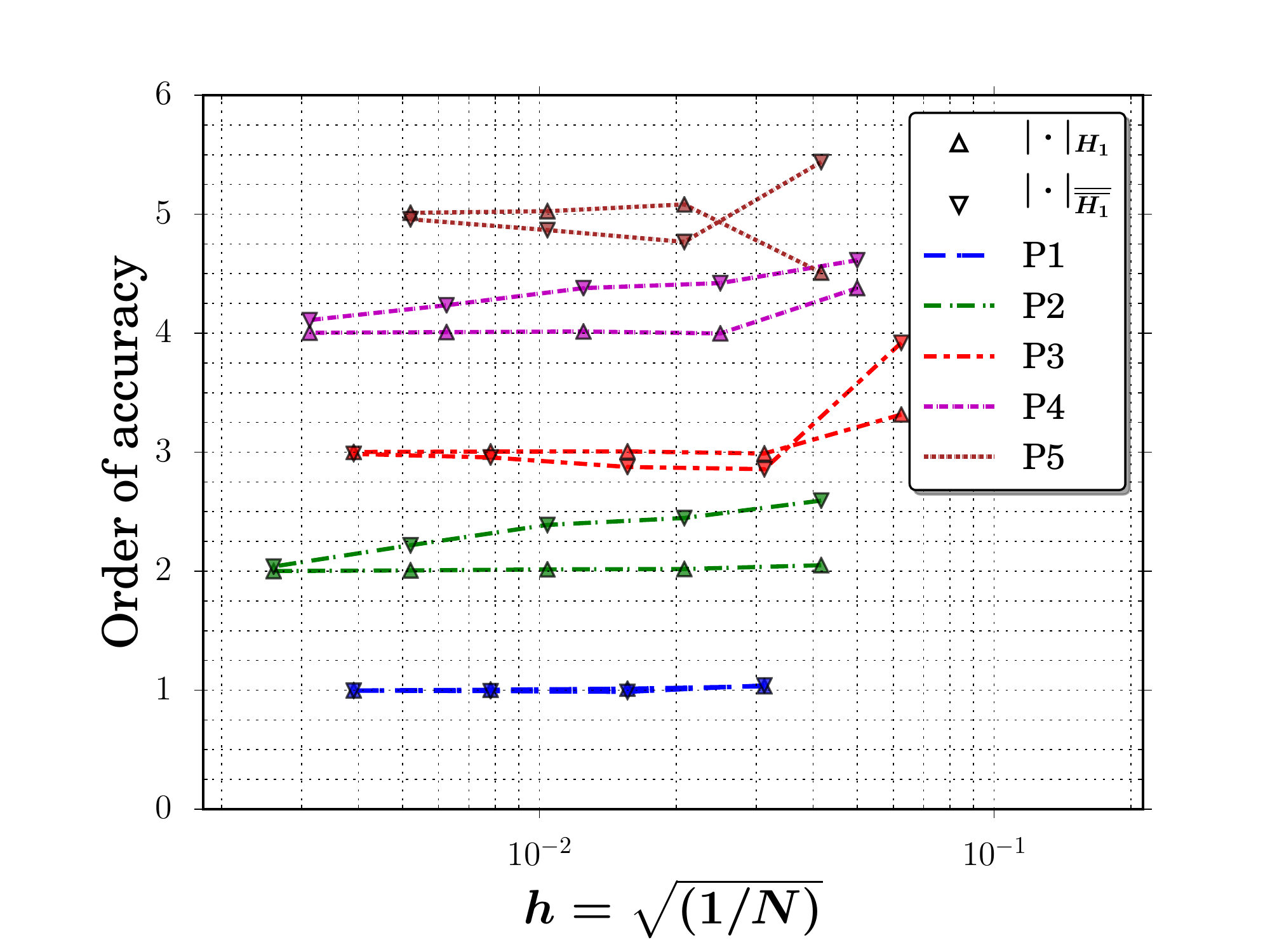}}
\vfill
\subfloat[$\rho v$]{
\includegraphics[trim = 16mm 3mm 18mm 13mm, clip,width=0.3\linewidth]
{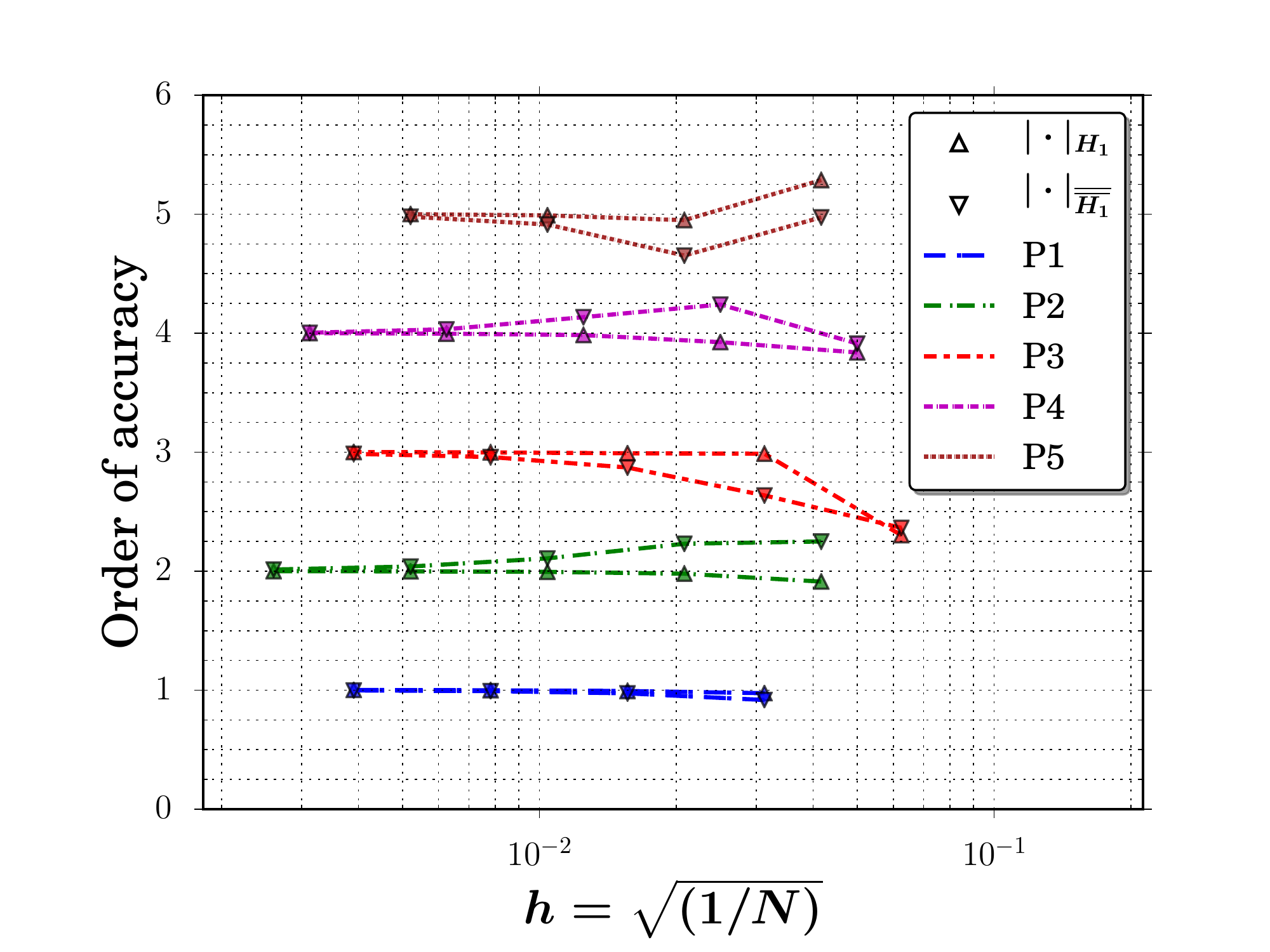}}
~~~
\subfloat[$\rho E$]{
\includegraphics[trim = 16mm 3mm 18mm 13mm, clip,width=0.3\linewidth]{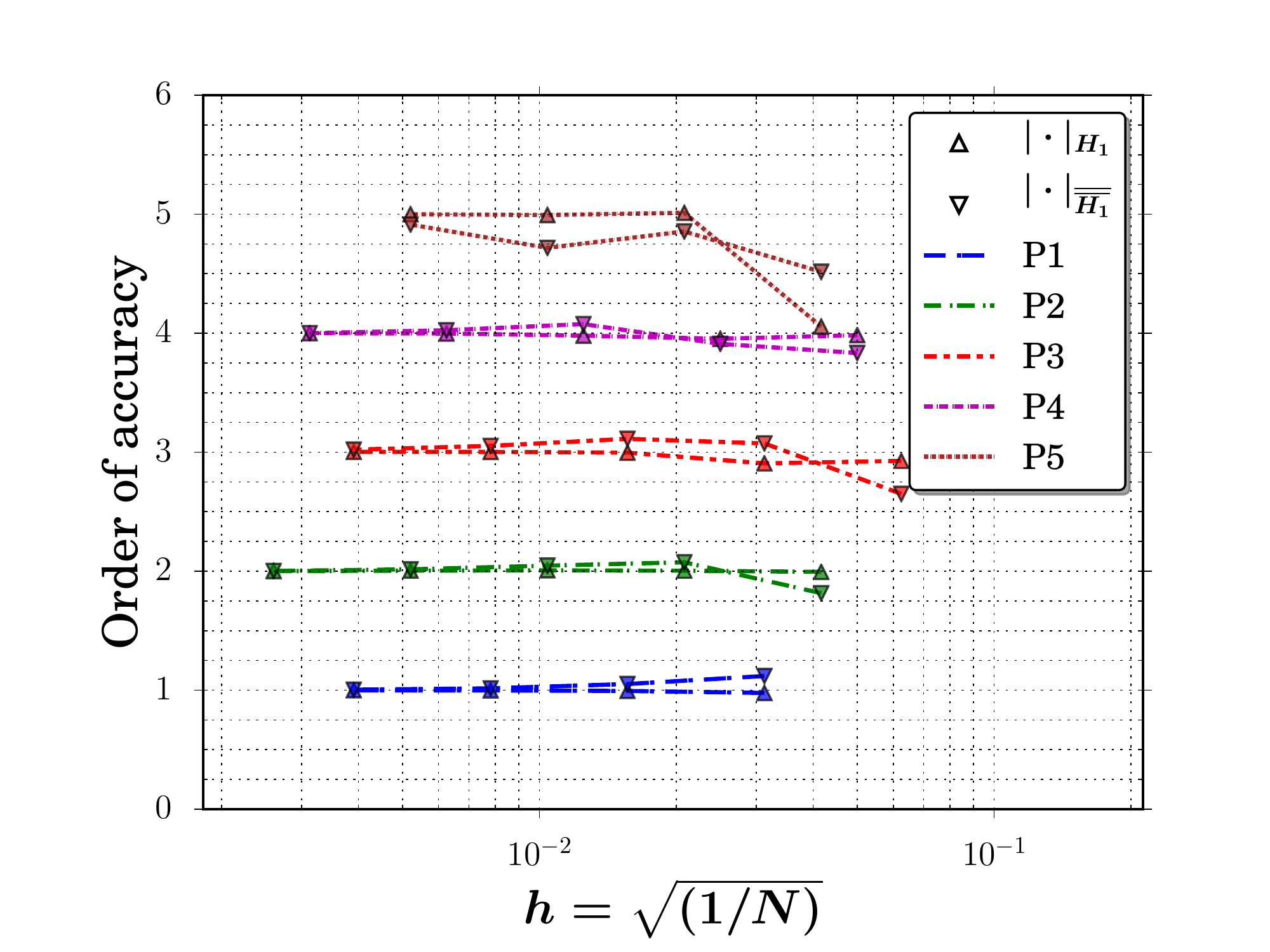}}
\caption{Evolution of the OOAs in $H_1$ semi-norm (for uncorrected and fully corrected derivatives) versus mesh refinement for MS-2 and  $\mathrm{P}1$--$\mathrm{P}5$}
\label{fig:Orders_H_MS-2}
\end{figure}

\clearpage
\subsection{MS-3}
\begin{figure}[!hbt]
\centering
\vspace{1mm}
\subfloat[$\rho^{\mathrm{MS}}$]{
\includegraphics[trim = 0mm 0mm 0mm 0mm, clip,width=0.33\linewidth]
{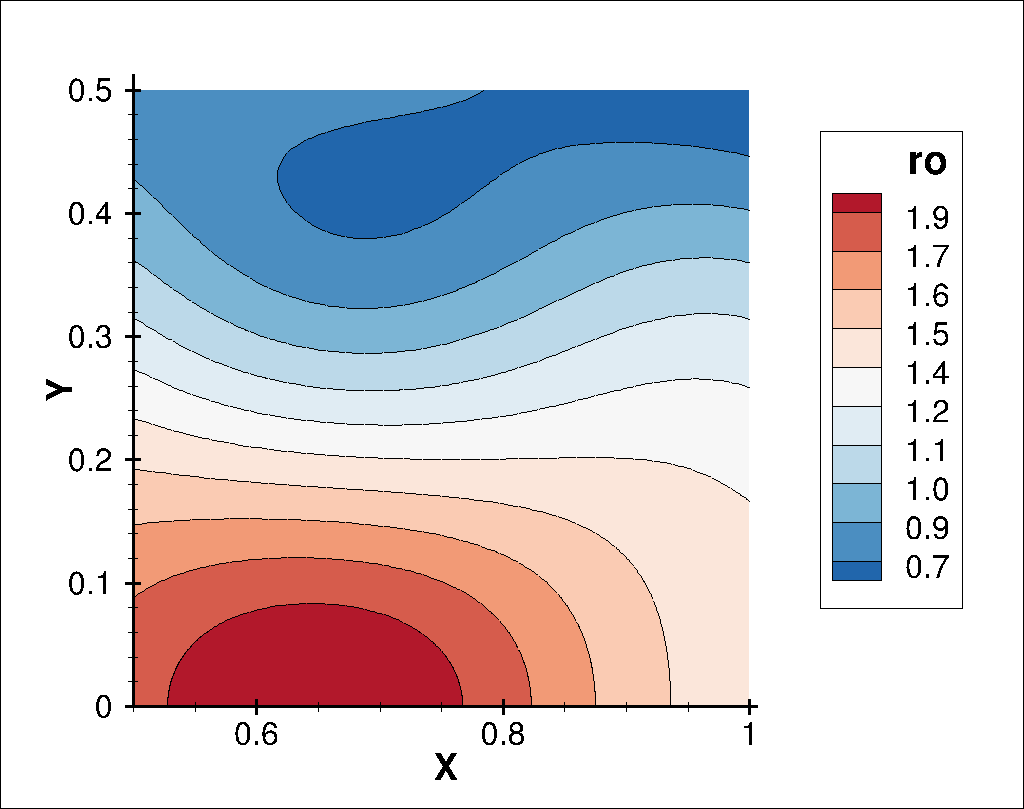}}
~~~
\subfloat[$u^{\mathrm{MS}}$]{
\includegraphics[trim = 0mm 0mm 0mm 0mm, clip,width=0.33\linewidth]
{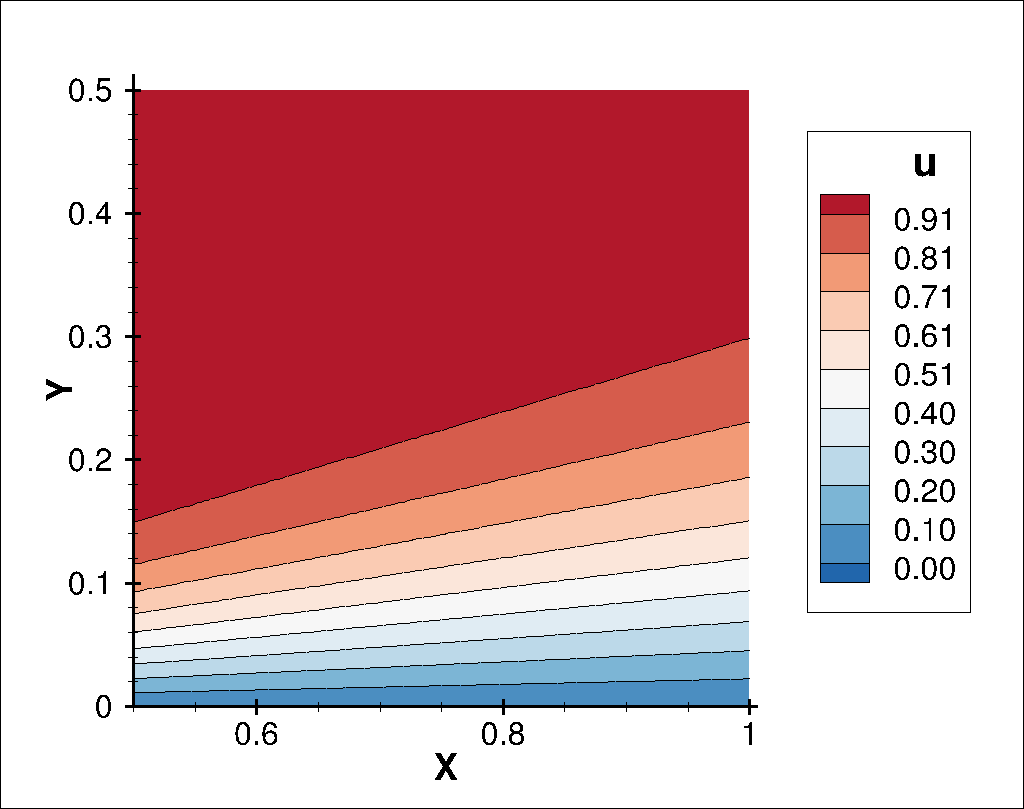}}
\vfill
\subfloat[$v^{\mathrm{MS}}$]{
\includegraphics[trim = 0mm 0mm 0mm 0mm, clip,width=0.33\linewidth]{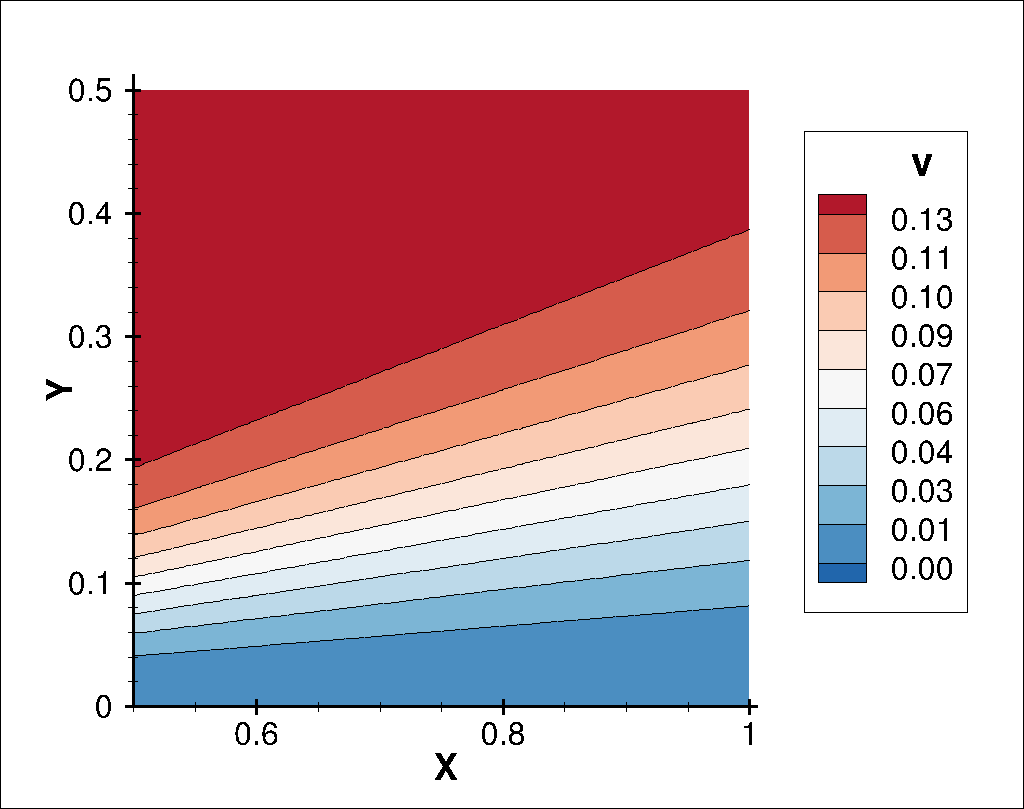}}
~~~
\subfloat[$p^{\mathrm{MS}}$]{
\includegraphics[trim = 0mm 0mm 0mm 0mm, clip,width=0.33\linewidth]{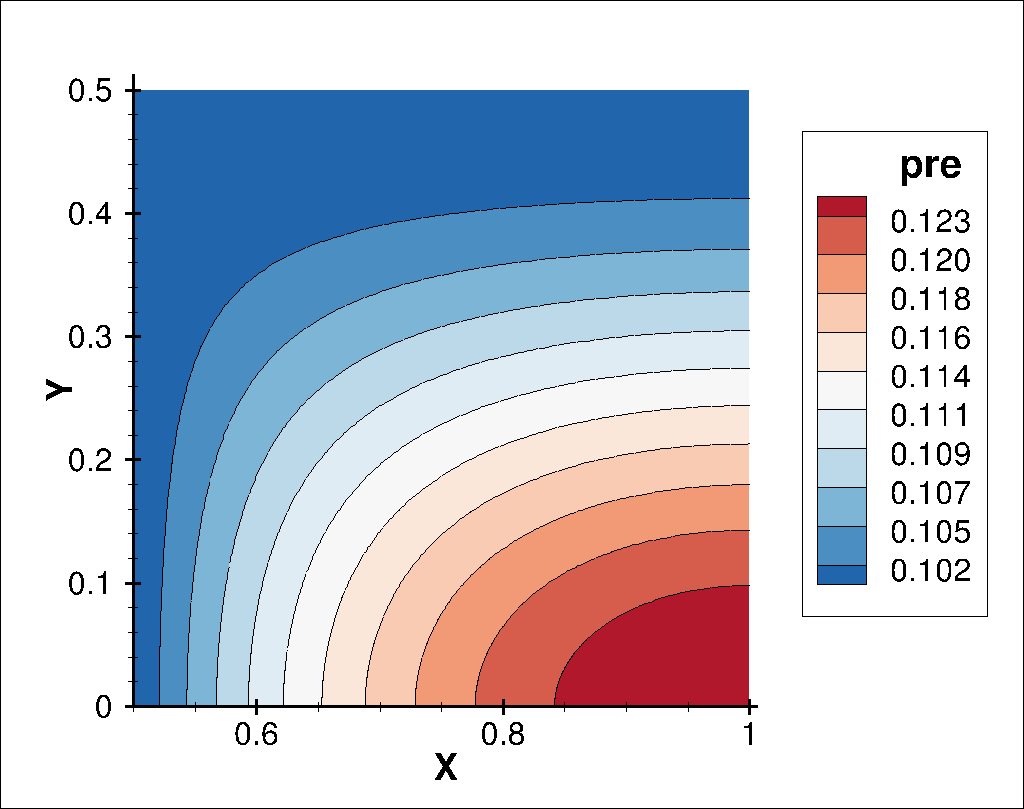}}
\vfill
\subfloat[${\tilde \nu}^{\mathrm{MS}}$]{
\includegraphics[trim = 0mm 0mm 0mm 0mm, clip,width=0.33\linewidth]{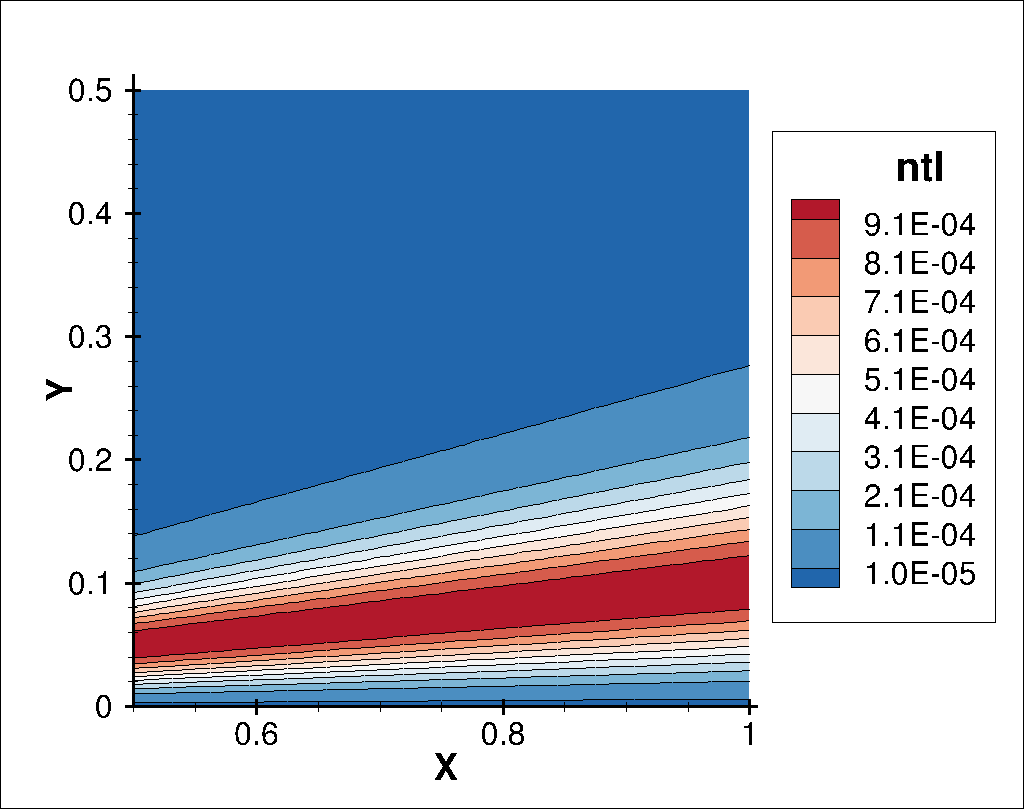}}
~~~
\subfloat[${Ma}^{\mathrm{MS}}$ and ${\bm{u}}^{\mathrm{MS}}$]{
\includegraphics[trim = 0mm 0mm 0mm 0mm, clip,width=0.33\linewidth]{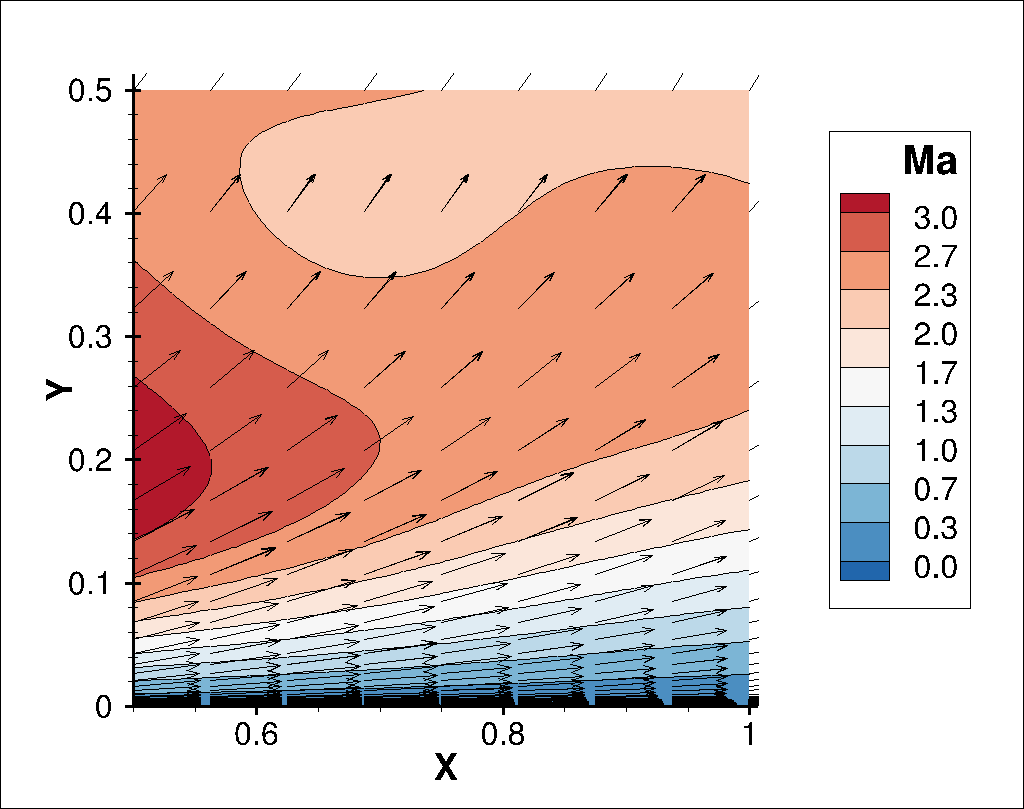}}
\caption{Manufactured solution MS-3}
\label{fig:MS-3}
\end{figure} 
 
\begin{figure}[!hbt]
\centering
\subfloat[$\rho$]{
\includegraphics[trim = 5mm 2mm 18mm 13mm, clip,width=0.32\linewidth]
{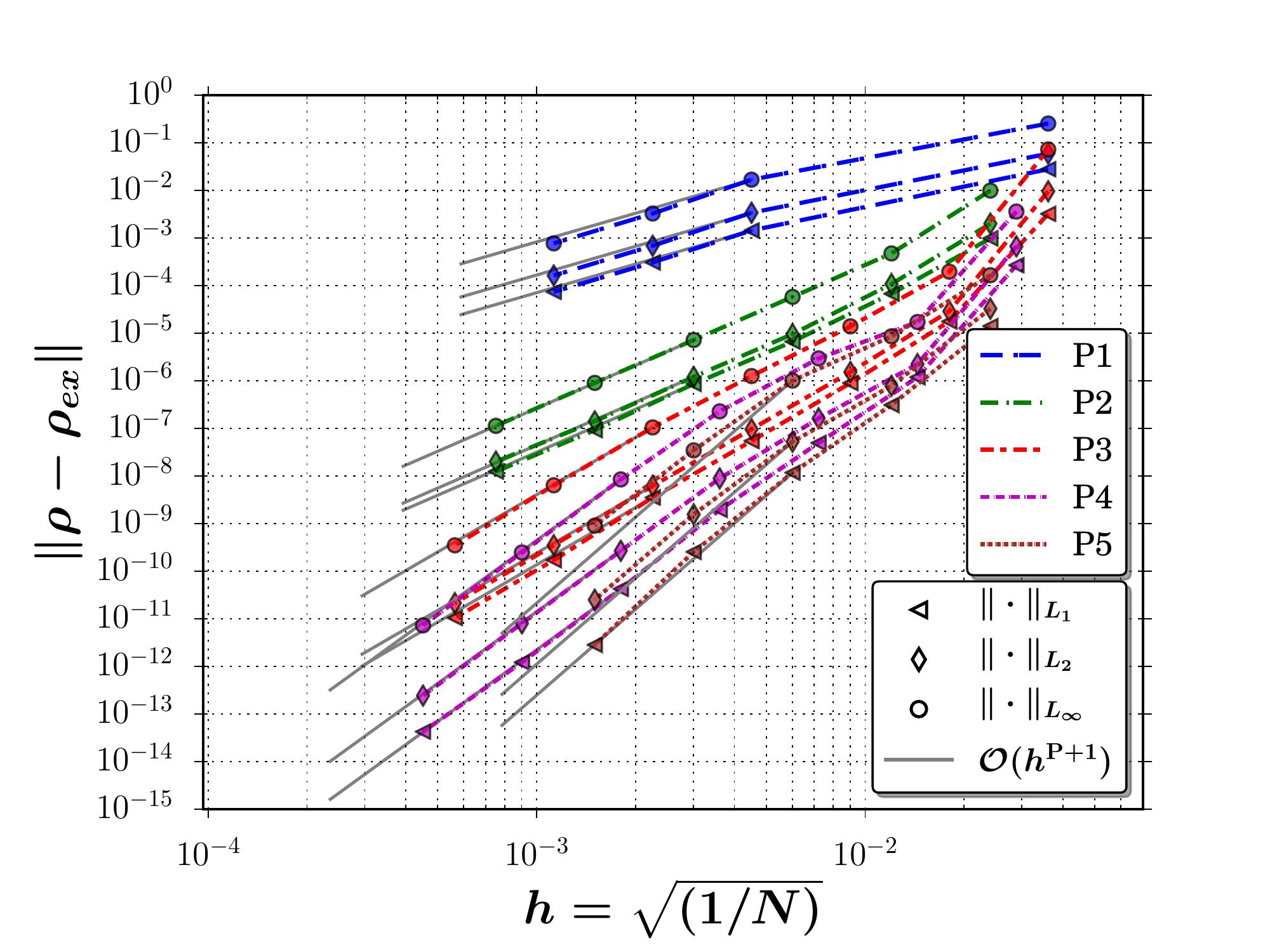}}
~~~
\subfloat[$\rho u$]{
\includegraphics[trim = 5mm 2mm 18mm 13mm, clip,width=0.32\linewidth]{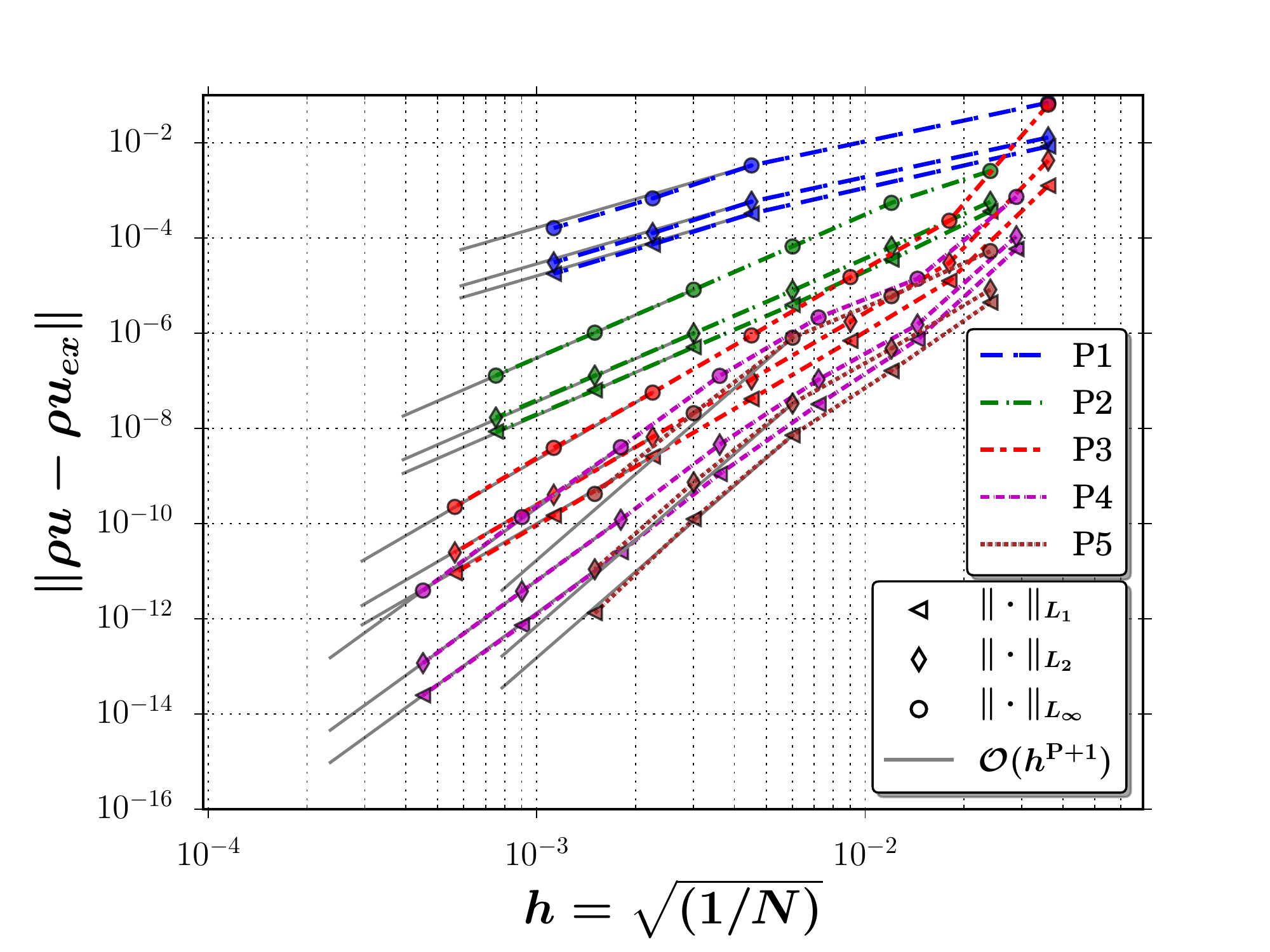}}
\vfill
\subfloat[$\rho v$]{
\includegraphics[trim =5mm 2mm 18mm 13mm, clip,width=0.32\linewidth]
{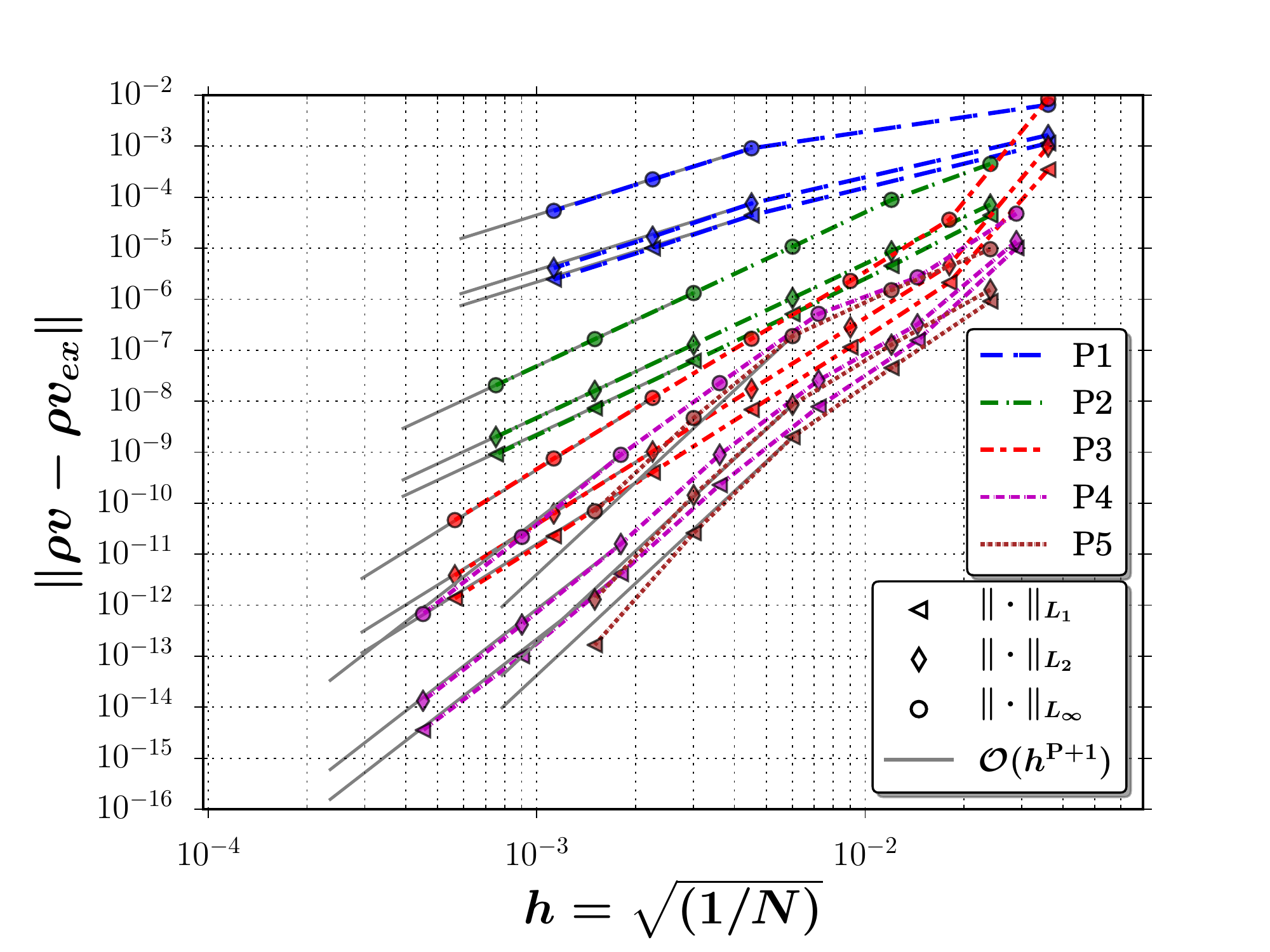}}
~~~
\subfloat[$\rho E$]{
\includegraphics[trim = 5mm 2mm 18mm 13mm, clip,width=0.32\linewidth]{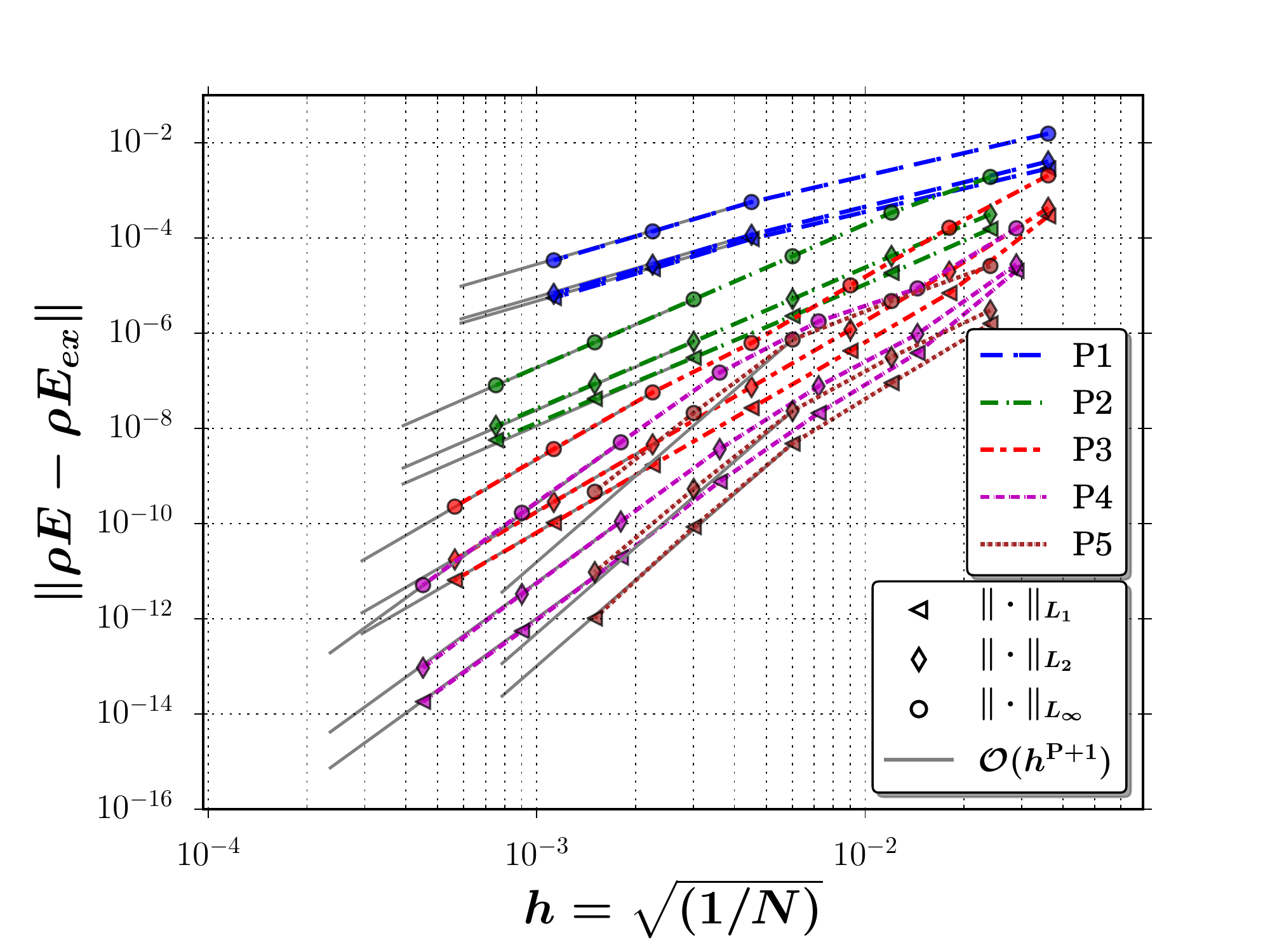}}
\vfill
\subfloat[$\rho \tilde{\nu}$]{
\includegraphics[trim = 5mm 2mm 18mm 13mm, clip,width=0.32\linewidth]
{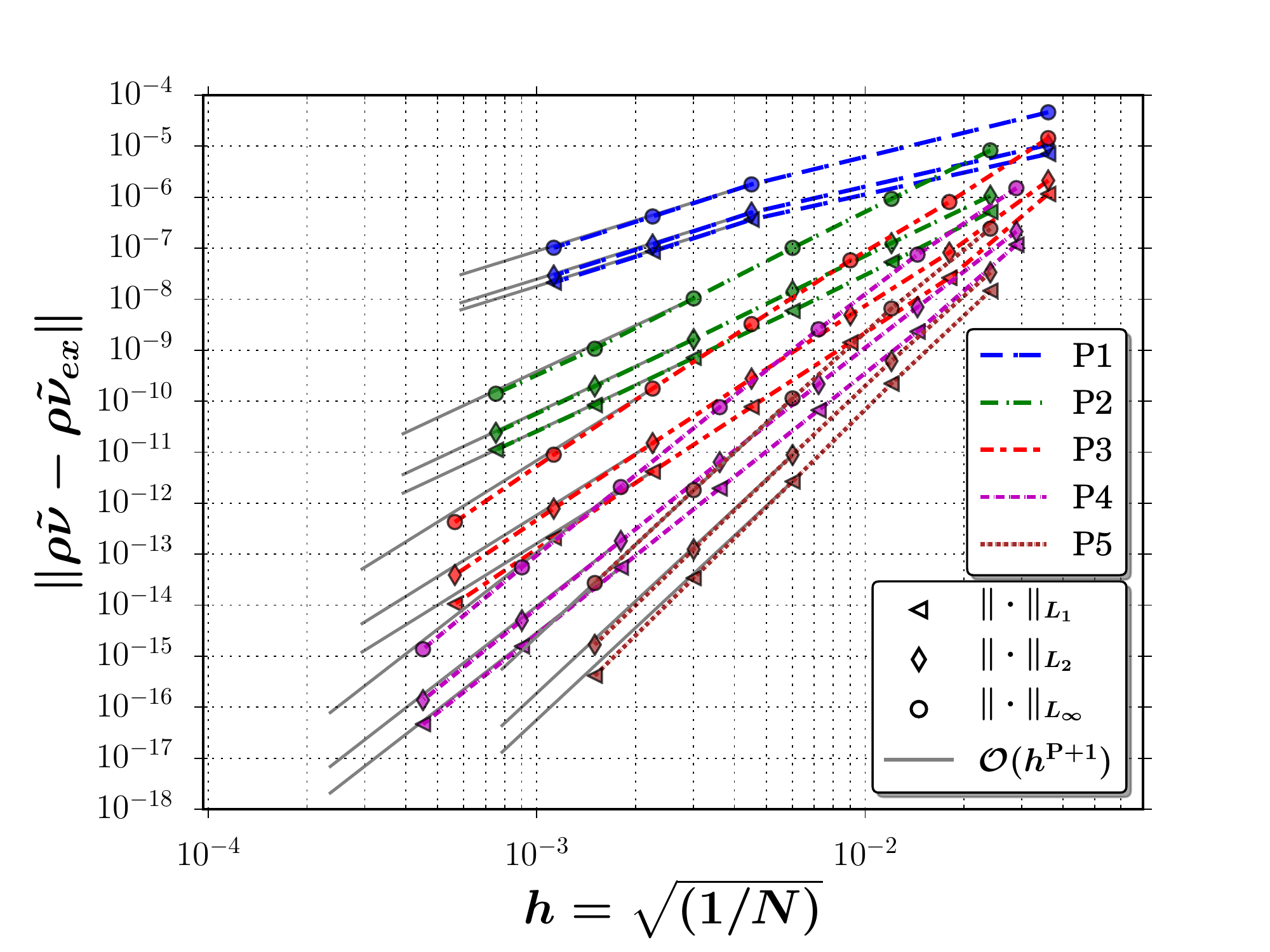}}
\caption{Evolution of the discretization error in $L_1$, $L_2$ and $L_\infty$ norms versus mesh refinement for MS-3 and  $\mathrm{P}1$--$\mathrm{P}5$}
\label{fig:Err_allE_allP_MS-3}
\end{figure}

\begin{figure}[!hbt]
\centering
\subfloat[$\rho$]{ 
\includegraphics[trim = 16mm 3mm 18mm 13mm, clip,width=0.3\linewidth]
{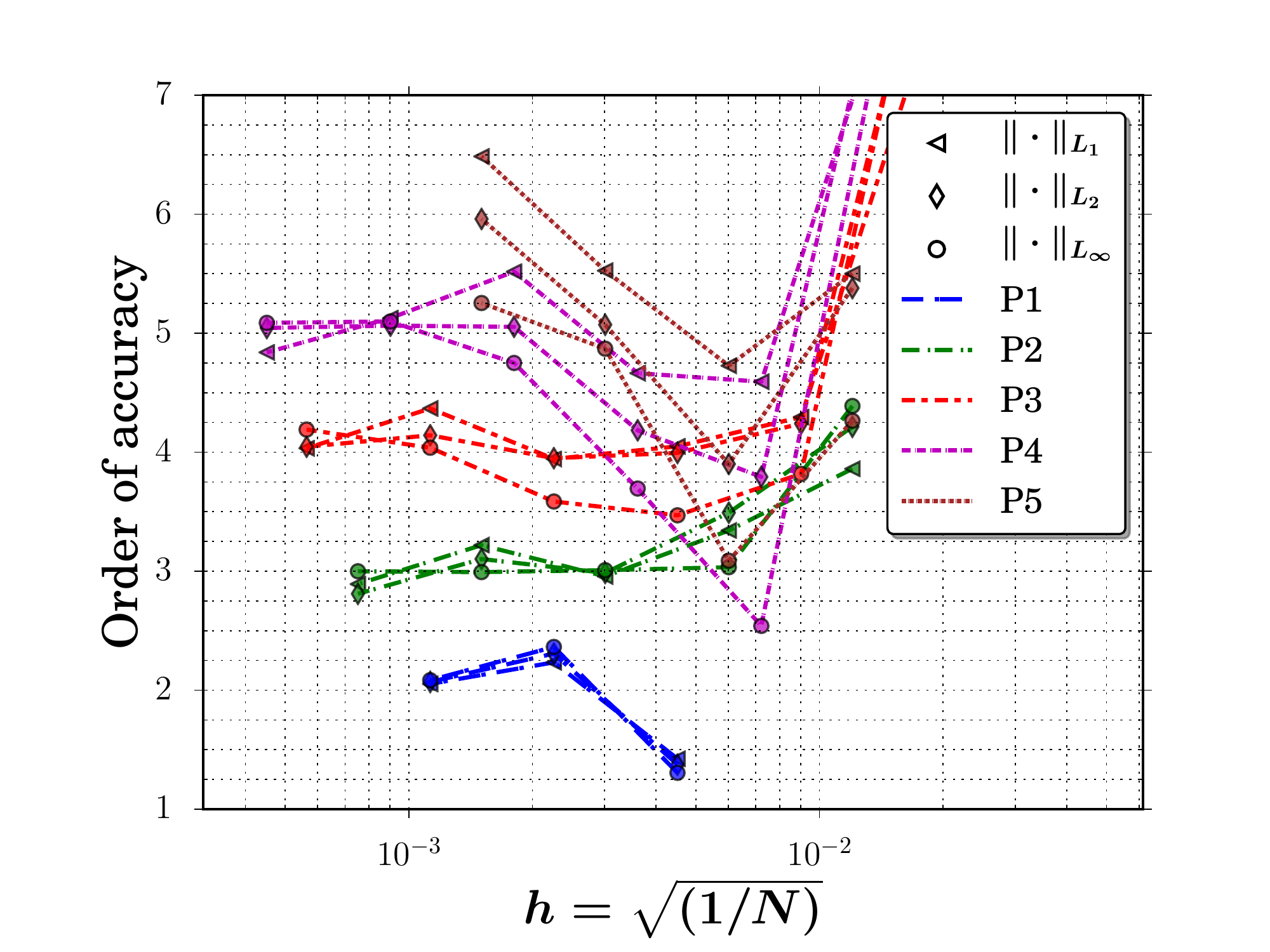}}
~~~
\subfloat[$\rho u$]{
\includegraphics[trim = 16mm 3mm 18mm 13mm, clip,width=0.3\linewidth]{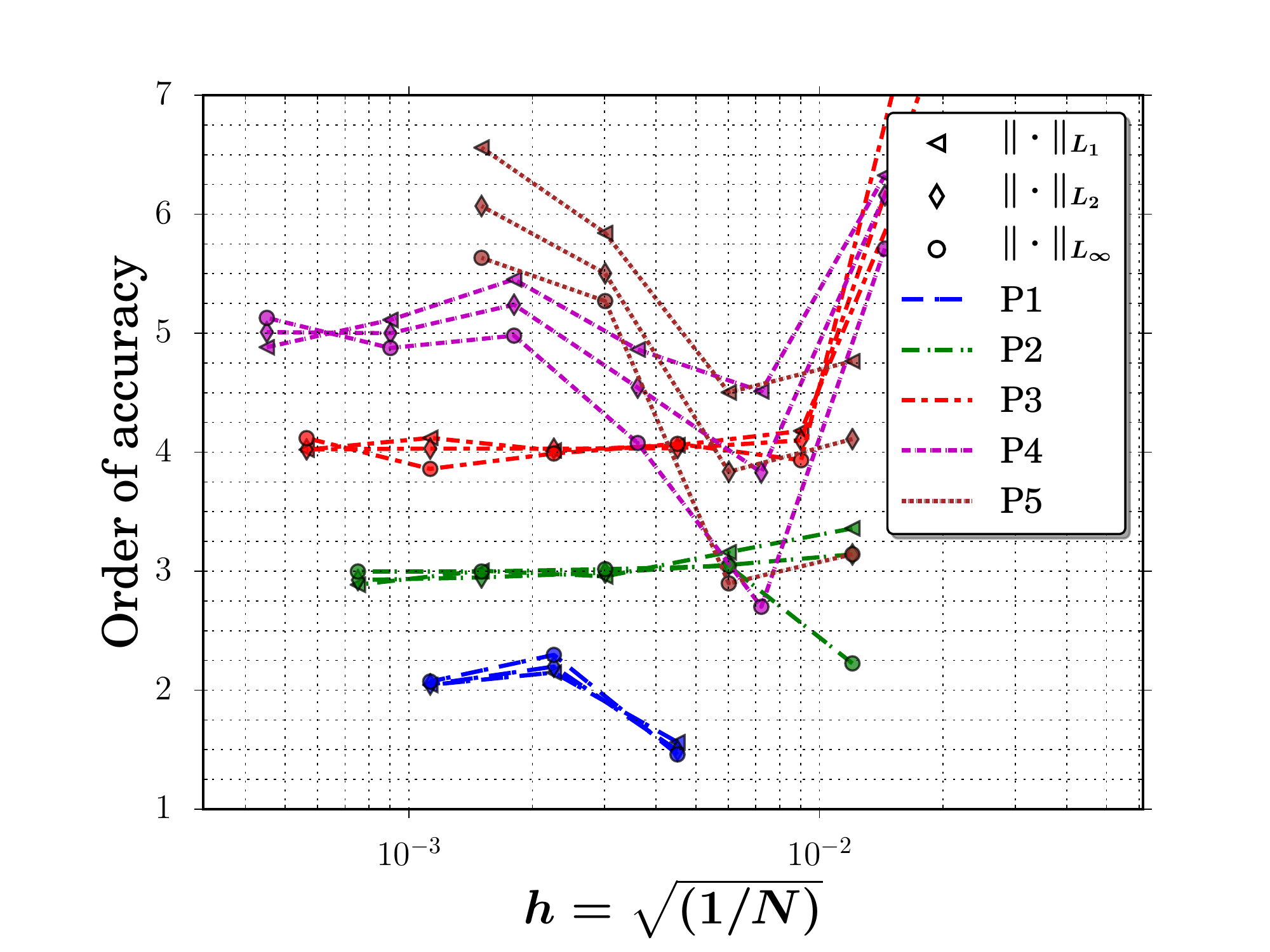}}
\vfill
\subfloat[$\rho v$]{
\includegraphics[trim = 16mm 3mm 18mm 13mm, clip,width=0.3\linewidth]
{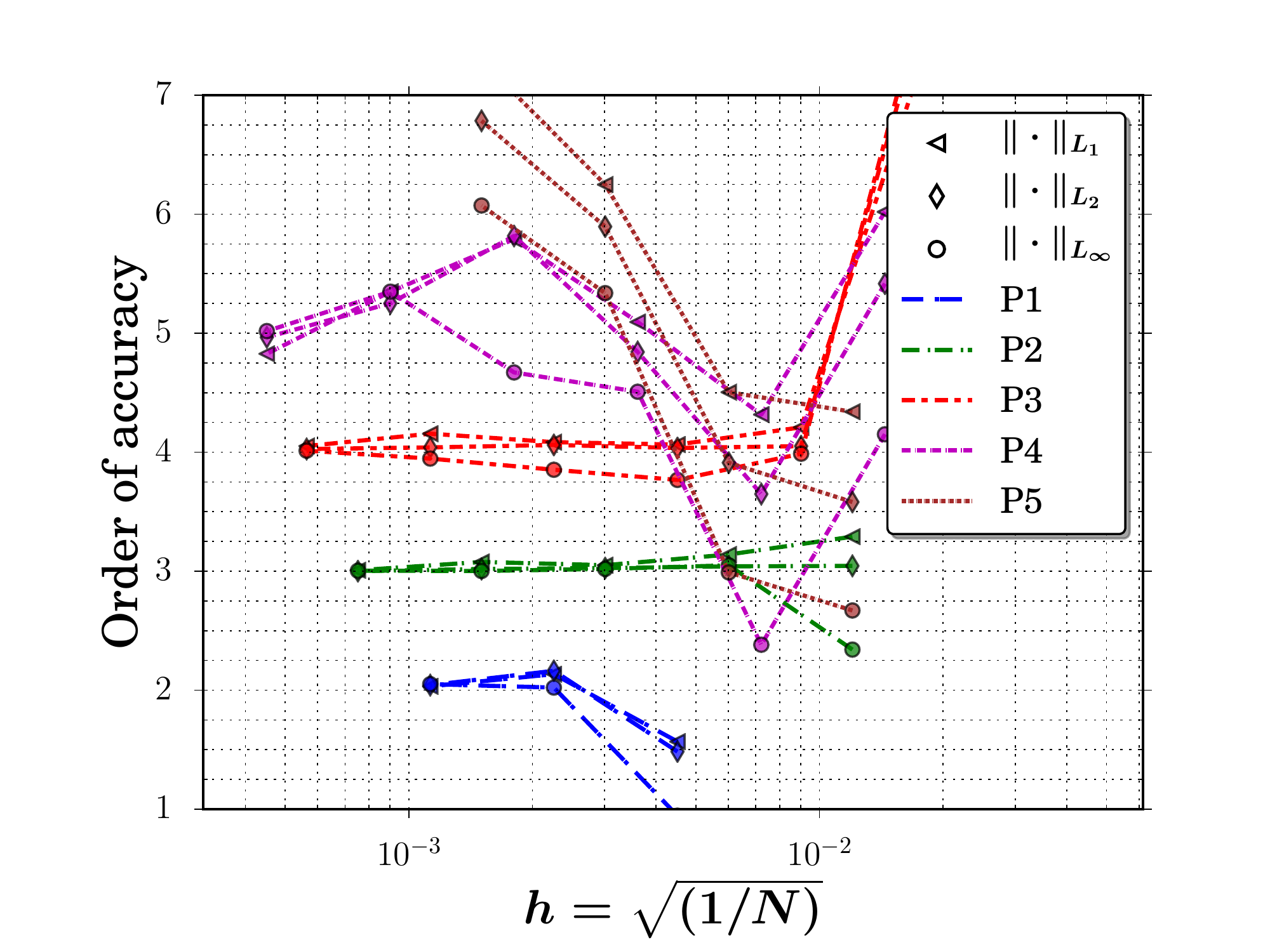}}
~~~
\subfloat[$\rho E$]{
\includegraphics[trim = 16mm 3mm 18mm 13mm, clip,width=0.3\linewidth]{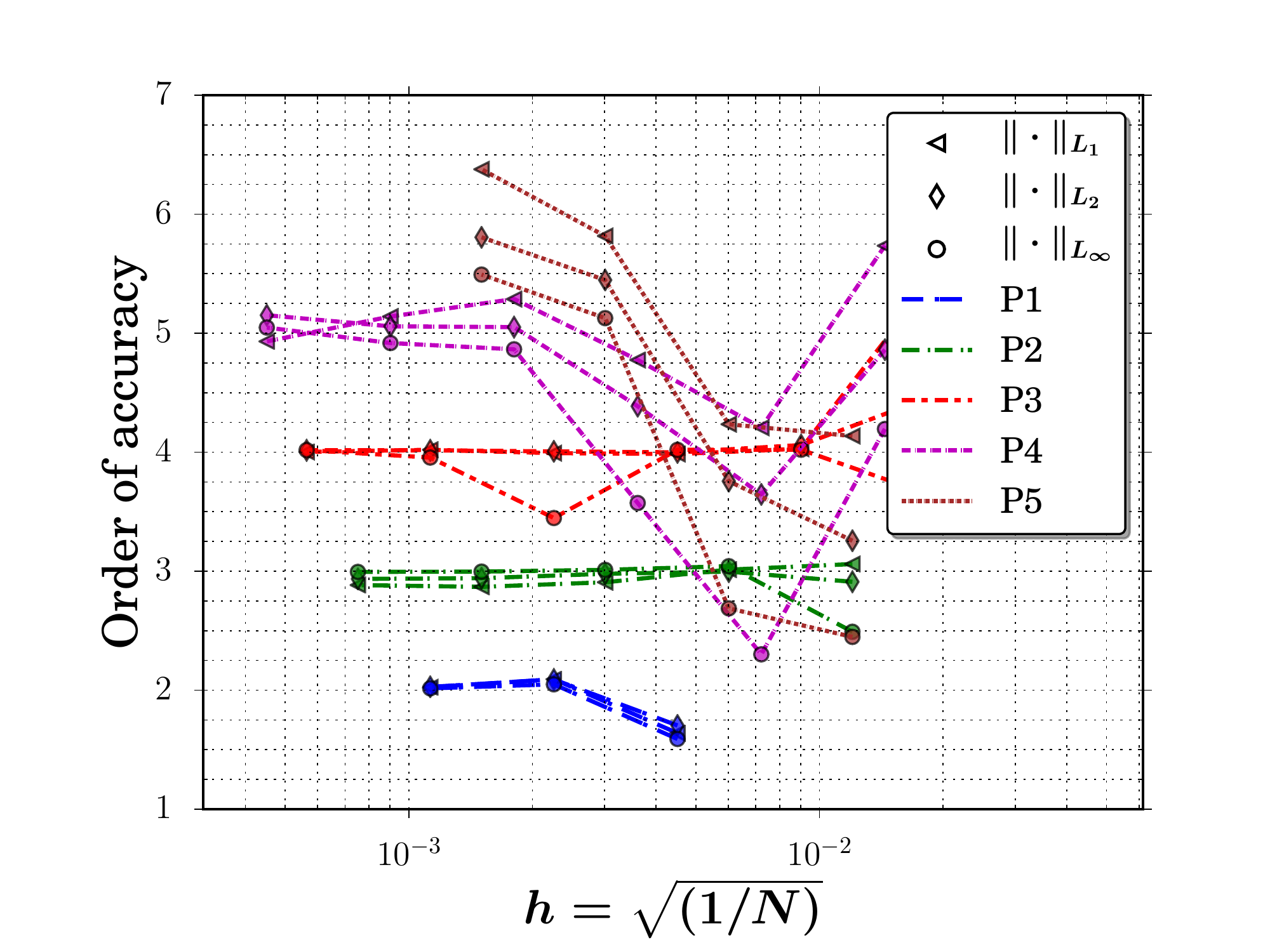}}
\vfill
\subfloat[$\rho \tilde{\nu}$]{
\includegraphics[trim = 16mm 3mm 18mm 13mm, clip,width=0.3\linewidth]{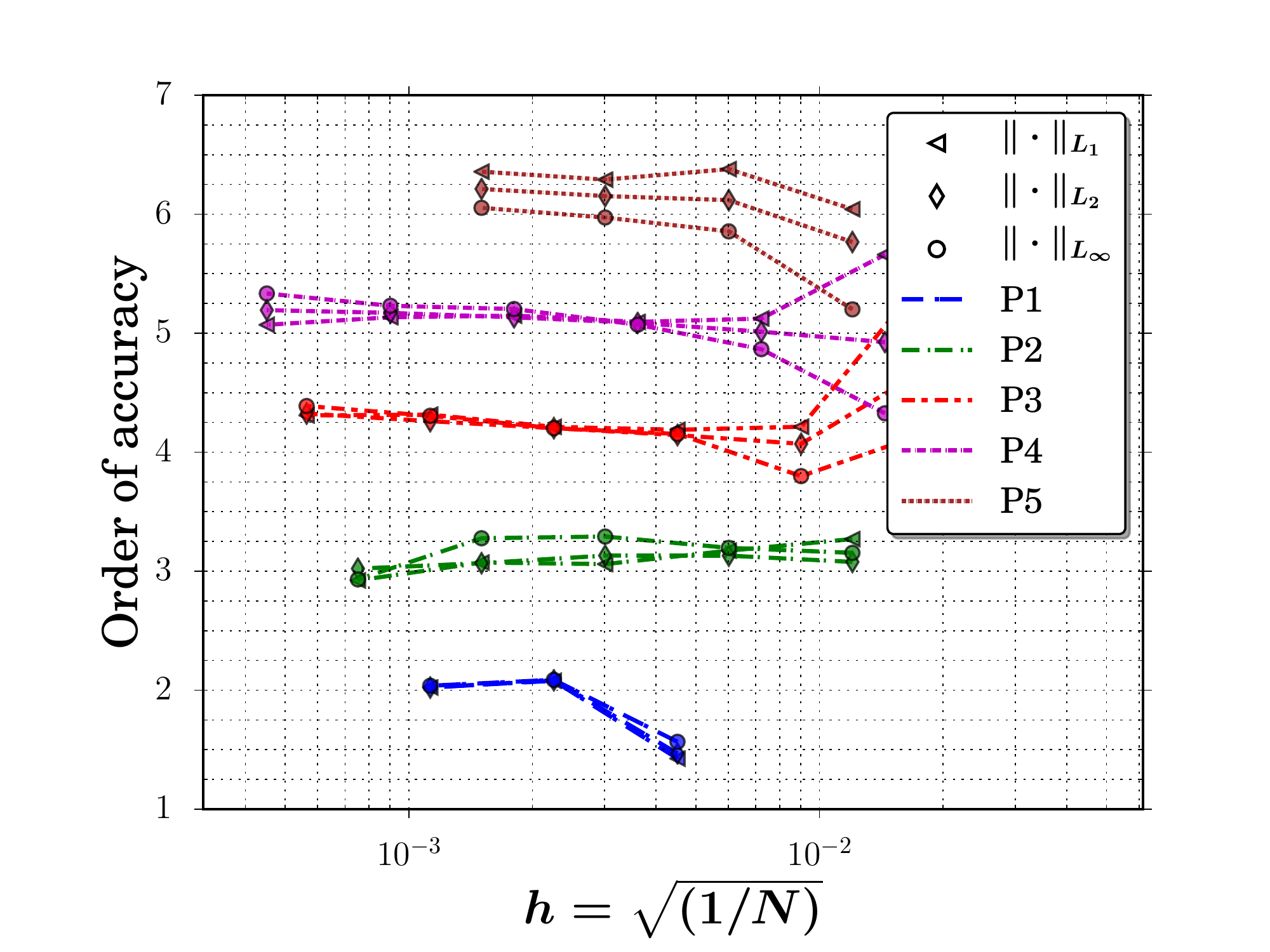}}
\caption{Evolution of the OOAs in $L_1$, $L_2$ and $L_\infty$ norms versus mesh refinement for MS-3 and  $\mathrm{P}1$--$\mathrm{P}5$}
\label{fig:Orders_MS-3}
\end{figure}

\begin{figure}[!hbt]
\centering
\subfloat[$\rho$]{
\includegraphics[trim = 5mm 2mm 18mm 13mm, clip,width=0.32\linewidth]
{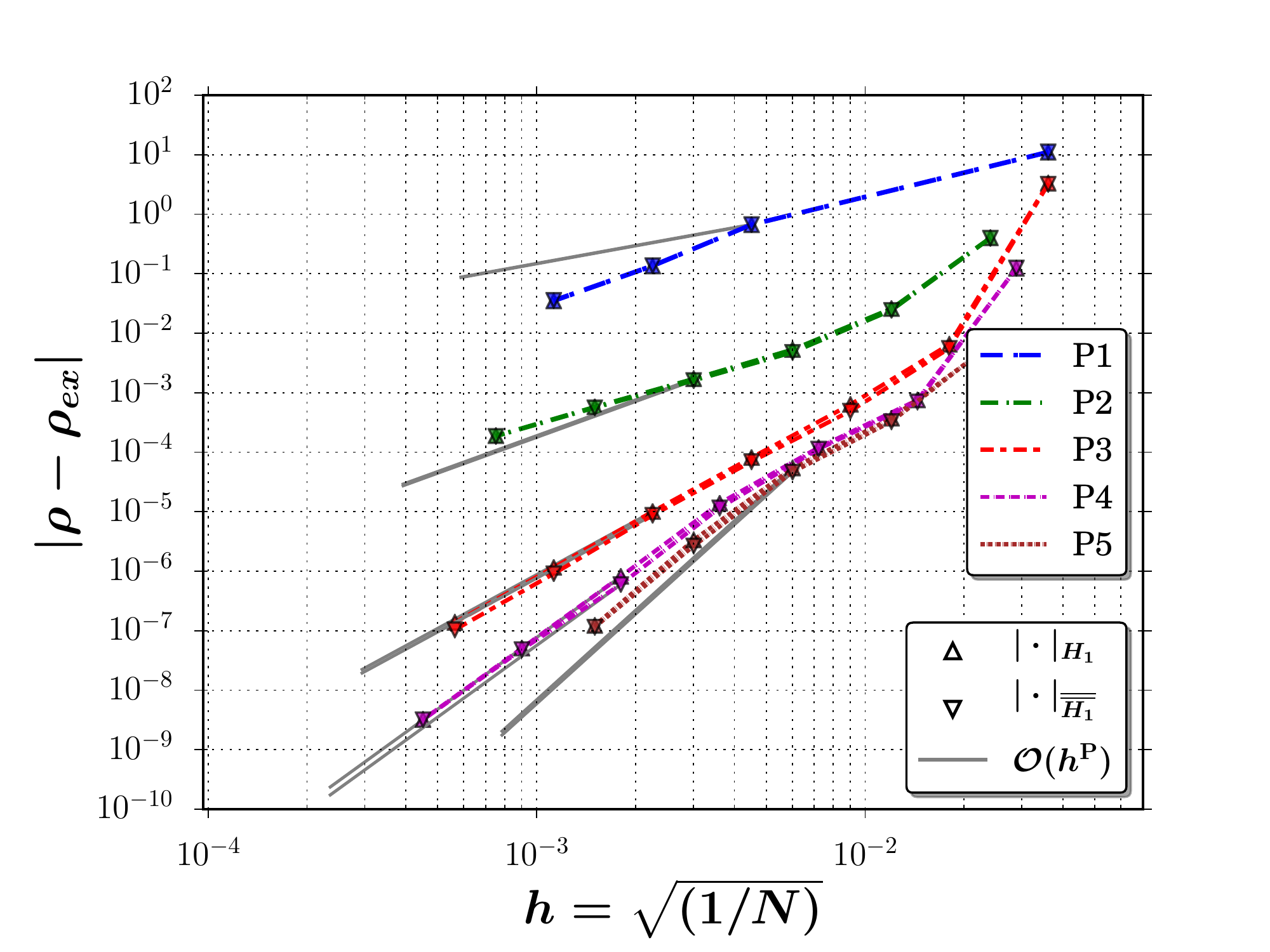}}~~~
\subfloat[$\rho u$]{
\includegraphics[trim = 5mm 2mm 18mm 13mm, clip,width=0.32\linewidth]{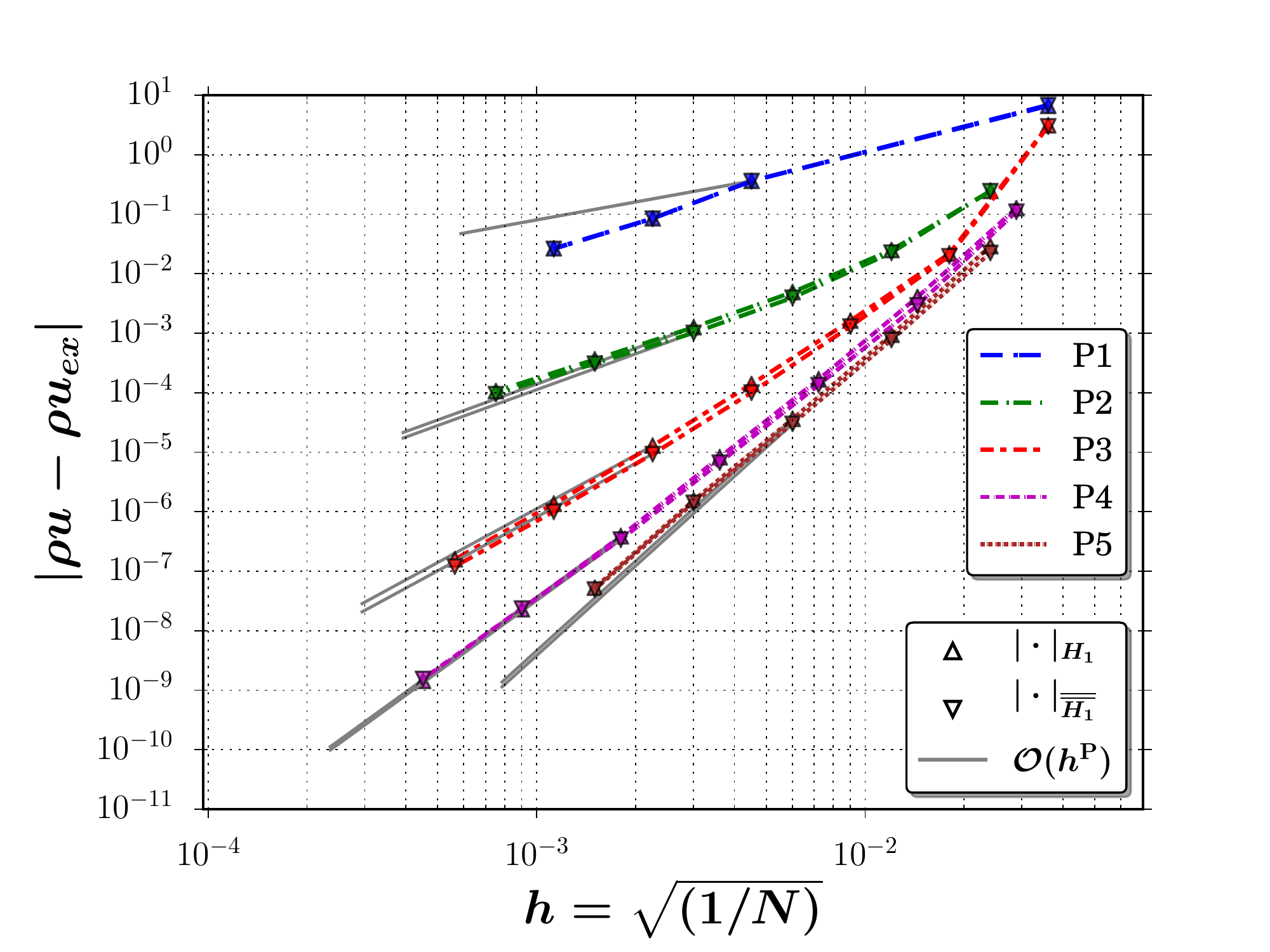}}
\vfill
\subfloat[$\rho v$]{
\includegraphics[trim = 5mm 2mm 18mm 13mm, clip,width=0.32\linewidth]
{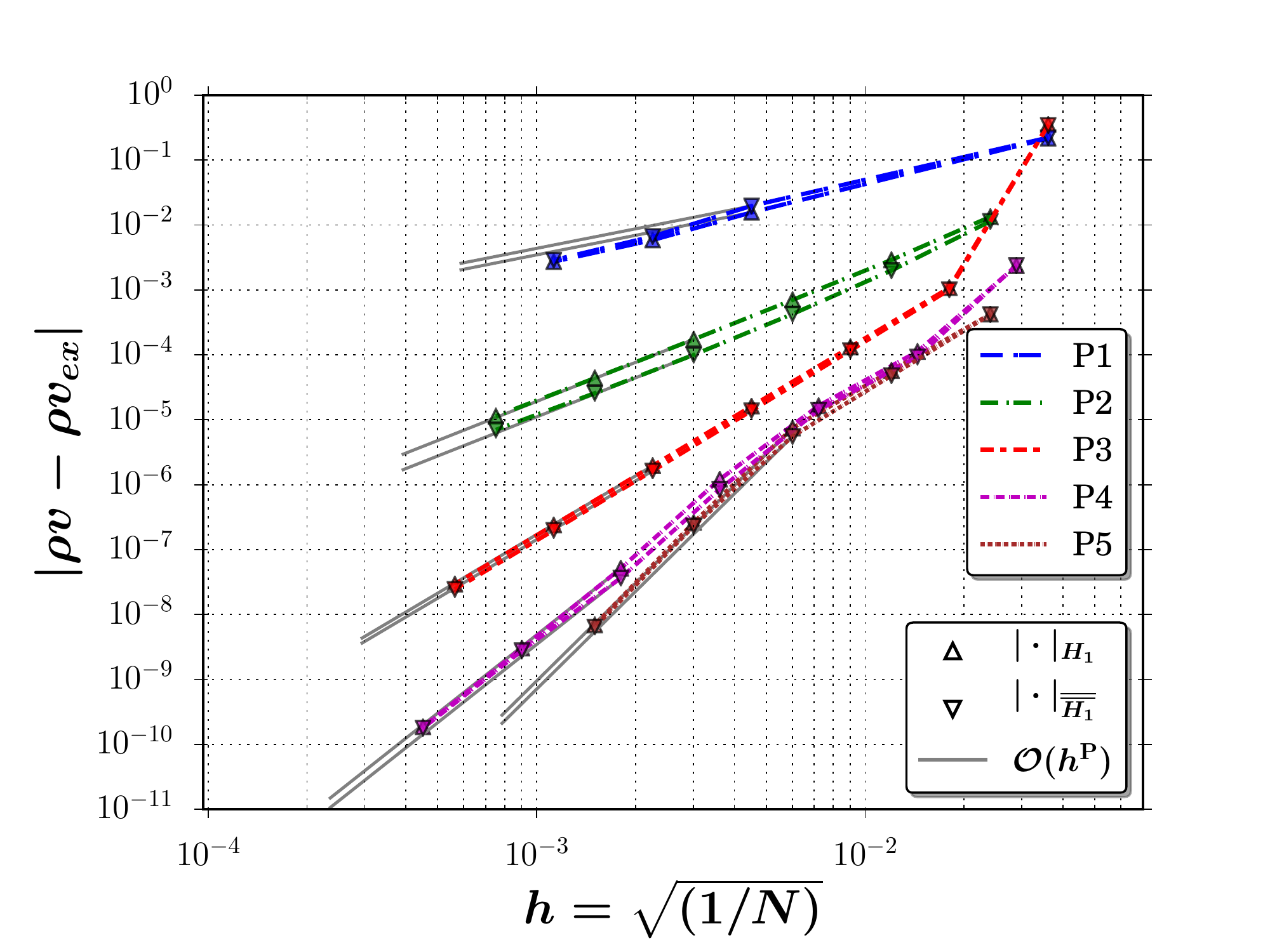}}
~~~
\subfloat[$\rho E$]{
\includegraphics[trim = 5mm 2mm 18mm 13mm, clip,width=0.32\linewidth]{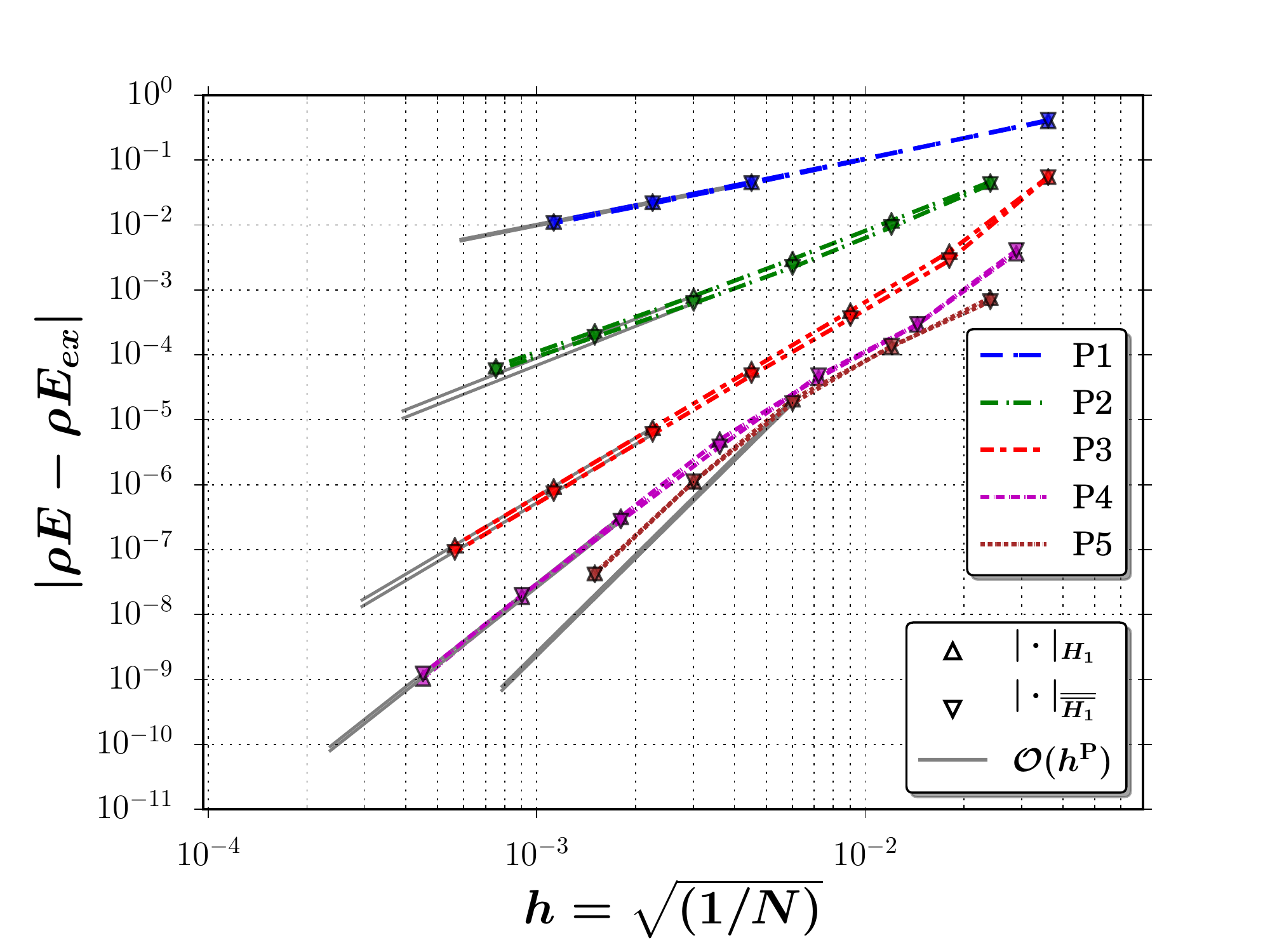}}
\vfill
\subfloat[$\rho \tilde{\nu}$]{
\includegraphics[trim = 5mm 2mm 18mm 13mm, clip,width=0.32\linewidth]{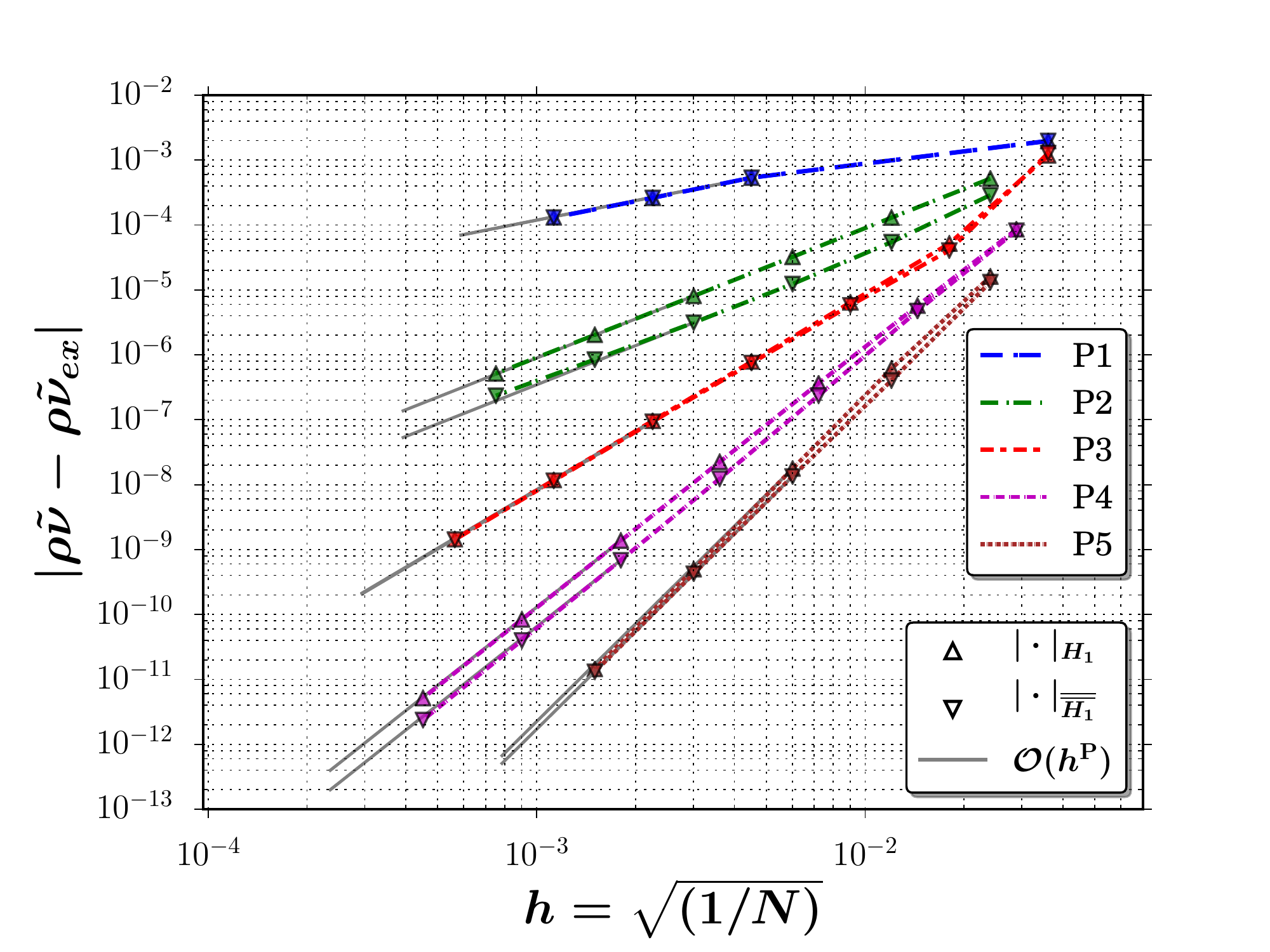}}
\caption{Evolution of the discretization error in $H_1$ semi-norm (for uncorrected and  fully corrected derivatives) versus mesh refinement for MS-3 and  $\mathrm{P}1$--$\mathrm{P}5$}
\label{fig:Err_allE_allP_H_MS-3}
\end{figure}

\begin{figure}[!hbt]
\centering
\subfloat[$\rho$]{ 
\includegraphics[trim = 16mm 3mm 18mm 13mm, clip,width=0.3\linewidth]
{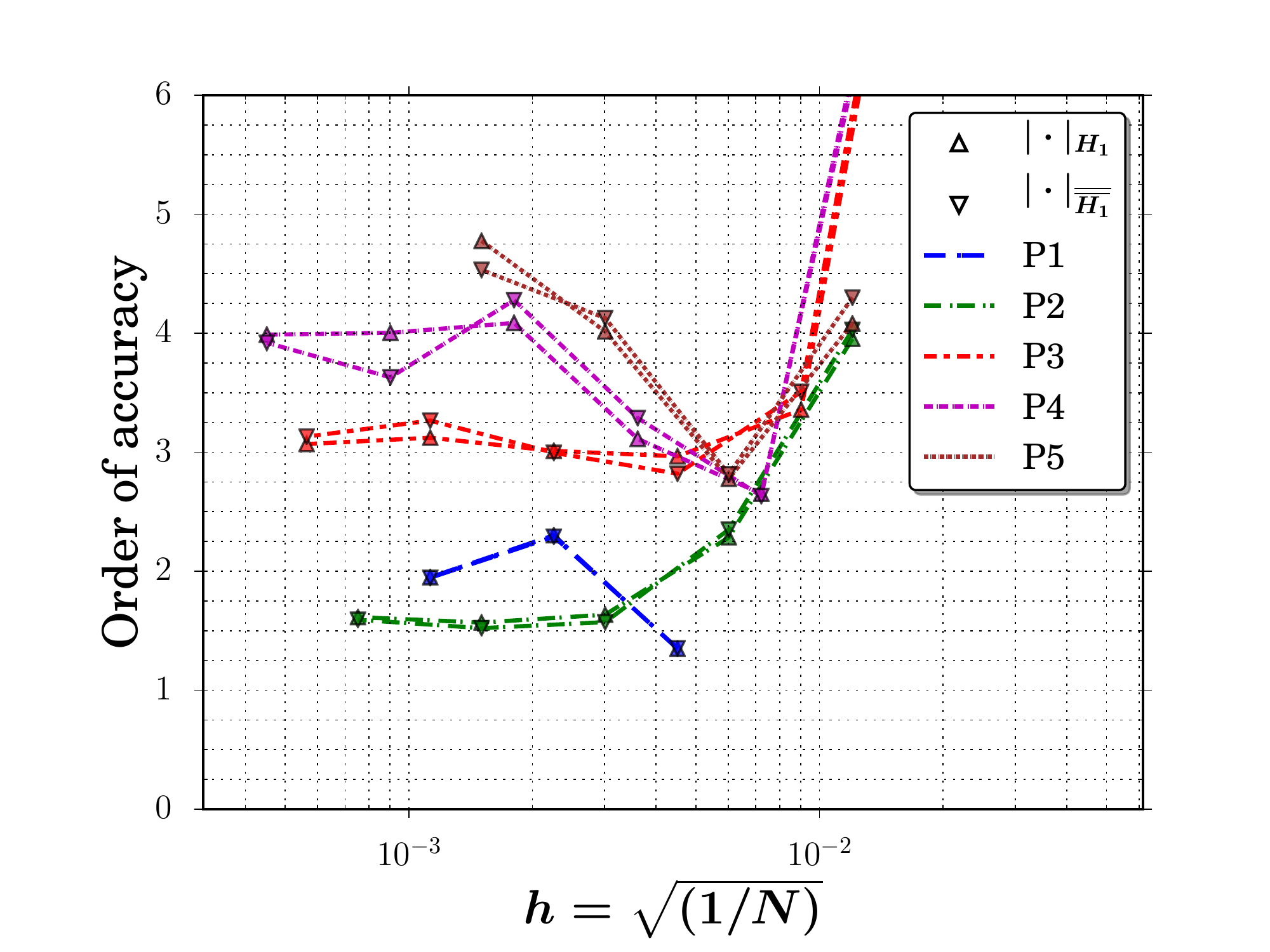}}
~~~
\subfloat[$\rho u$]{
\includegraphics[trim = 16mm 3mm 18mm 13mm, clip,width=0.3\linewidth]{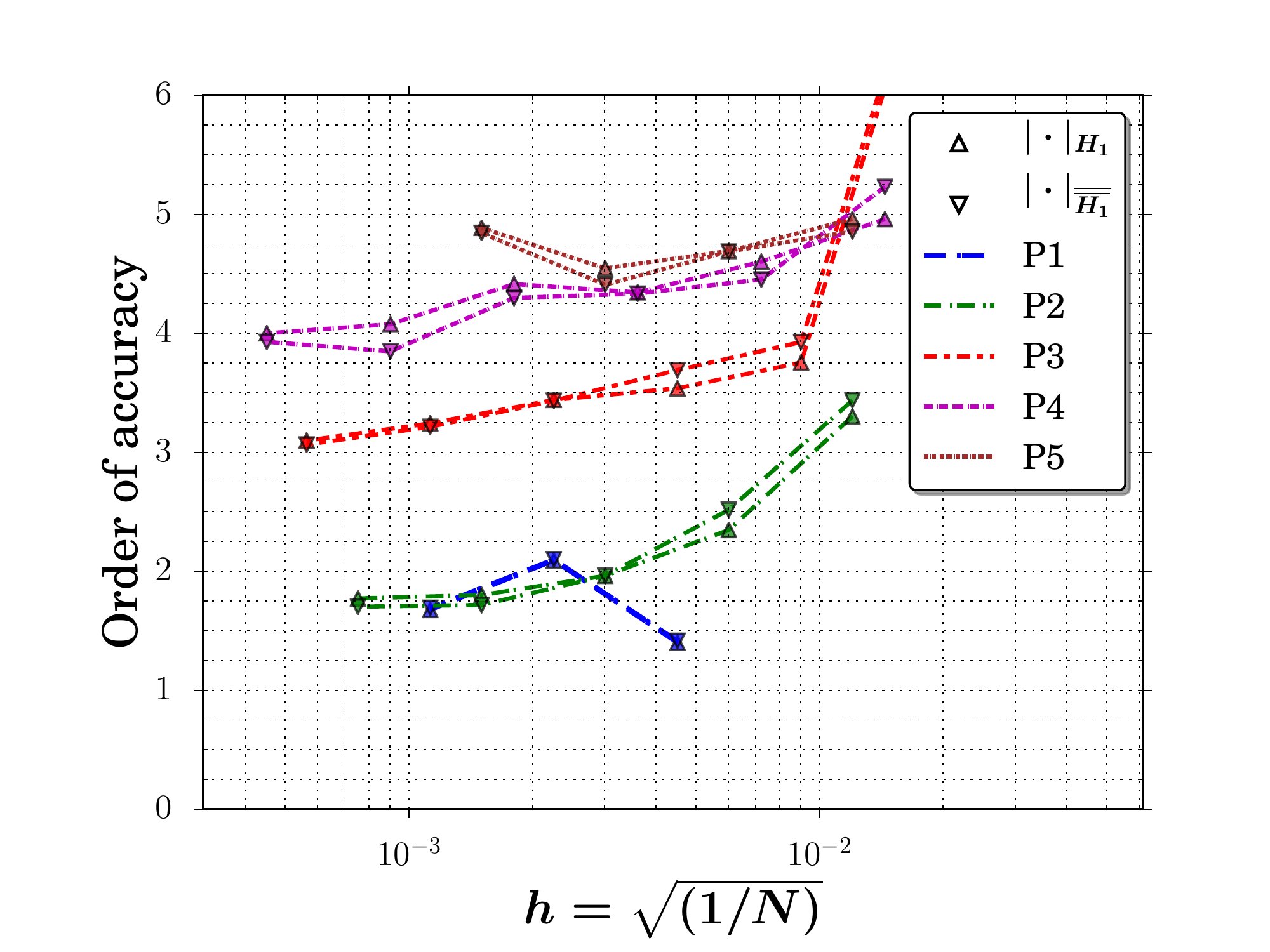}}
\vfill
\subfloat[$\rho v$]{
\includegraphics[trim = 16mm 3mm 18mm 13mm, clip,width=0.3\linewidth]
{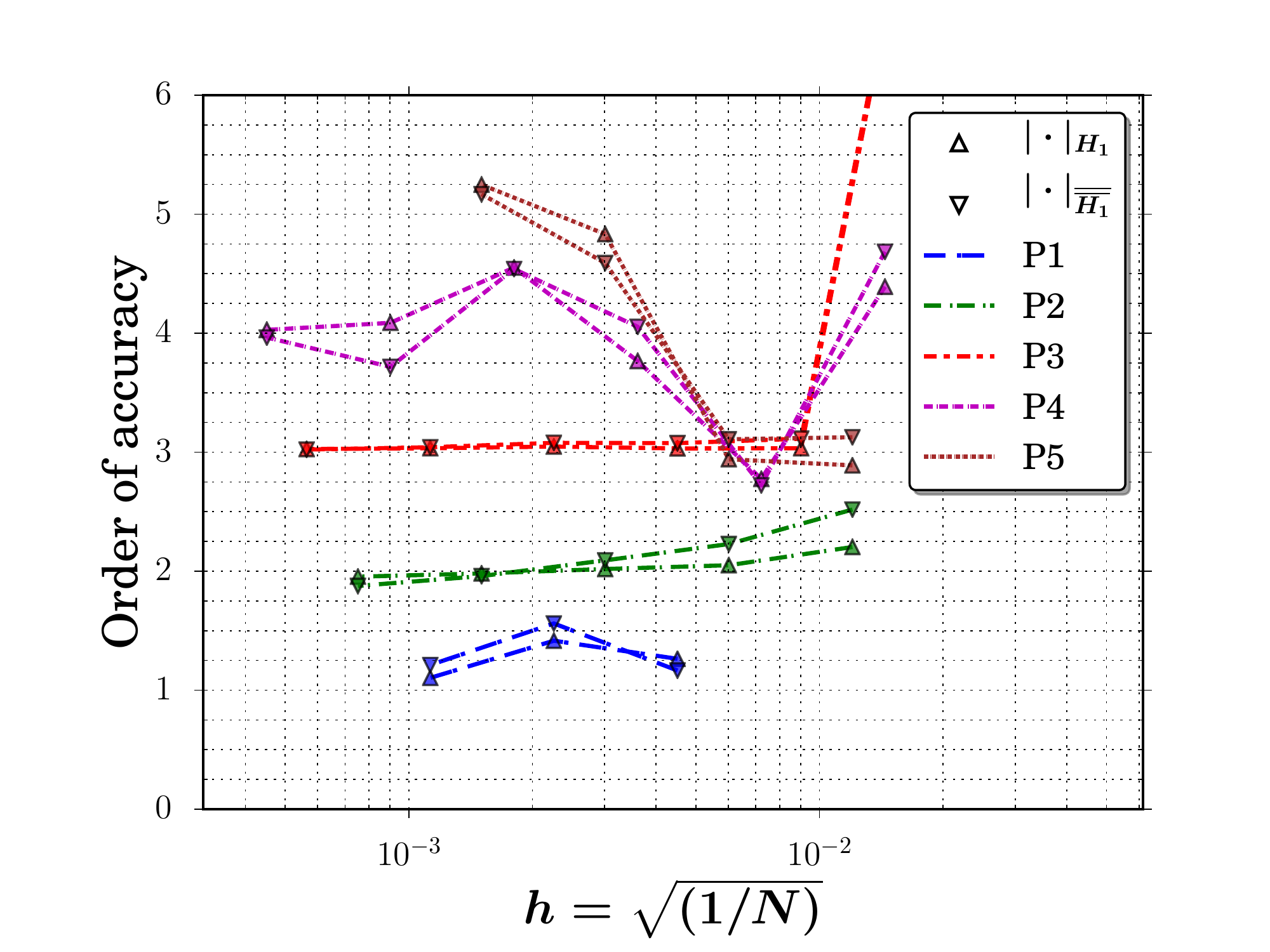}}
~~~
\subfloat[$\rho E$]{
\includegraphics[trim = 16mm 3mm 18mm 13mm, clip,width=0.3\linewidth]{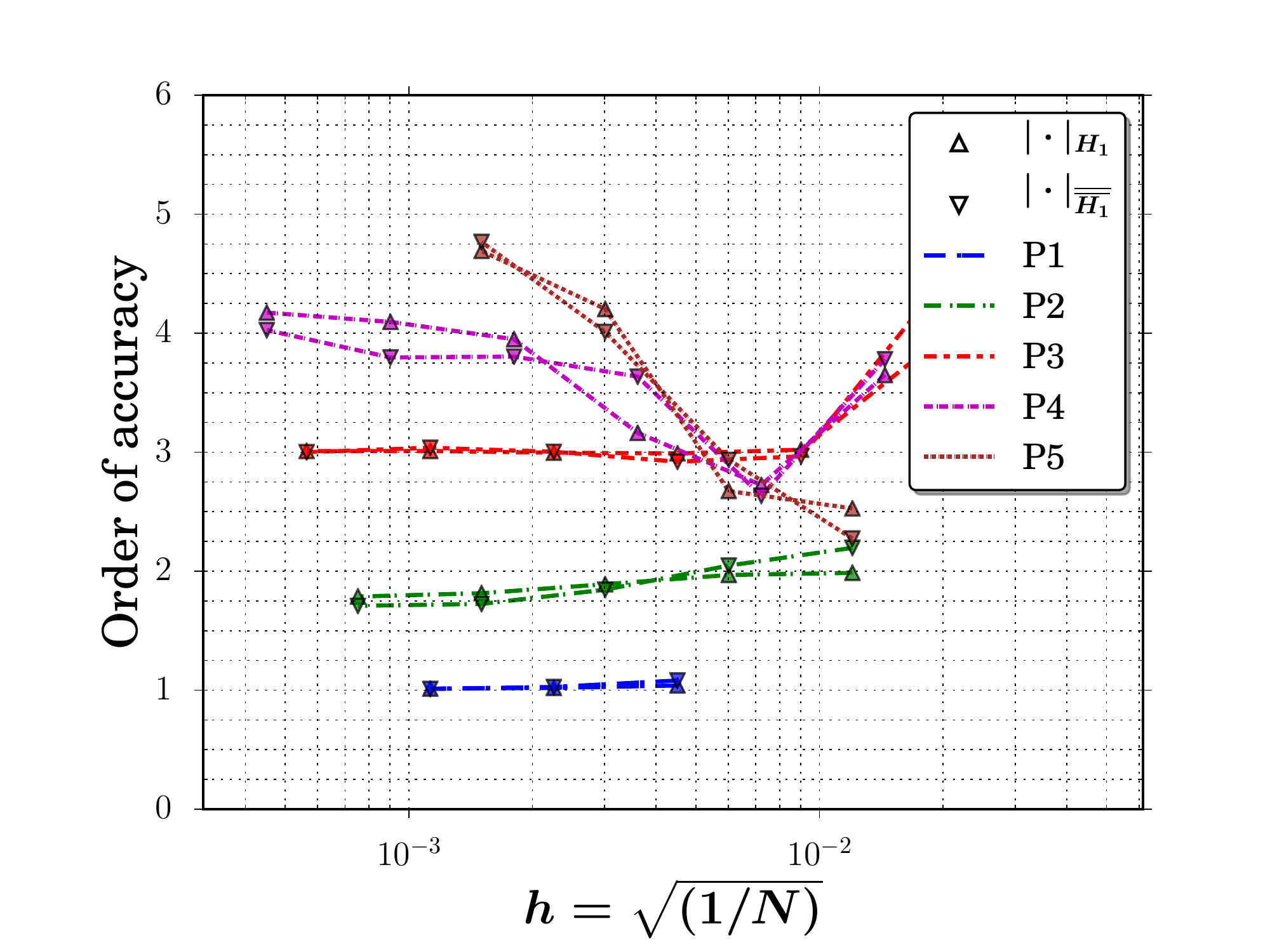}}
\vfill
\subfloat[$\rho \tilde{\nu}$]{
\includegraphics[trim = 16mm 3mm 18mm 13mm, clip,width=0.3\linewidth]{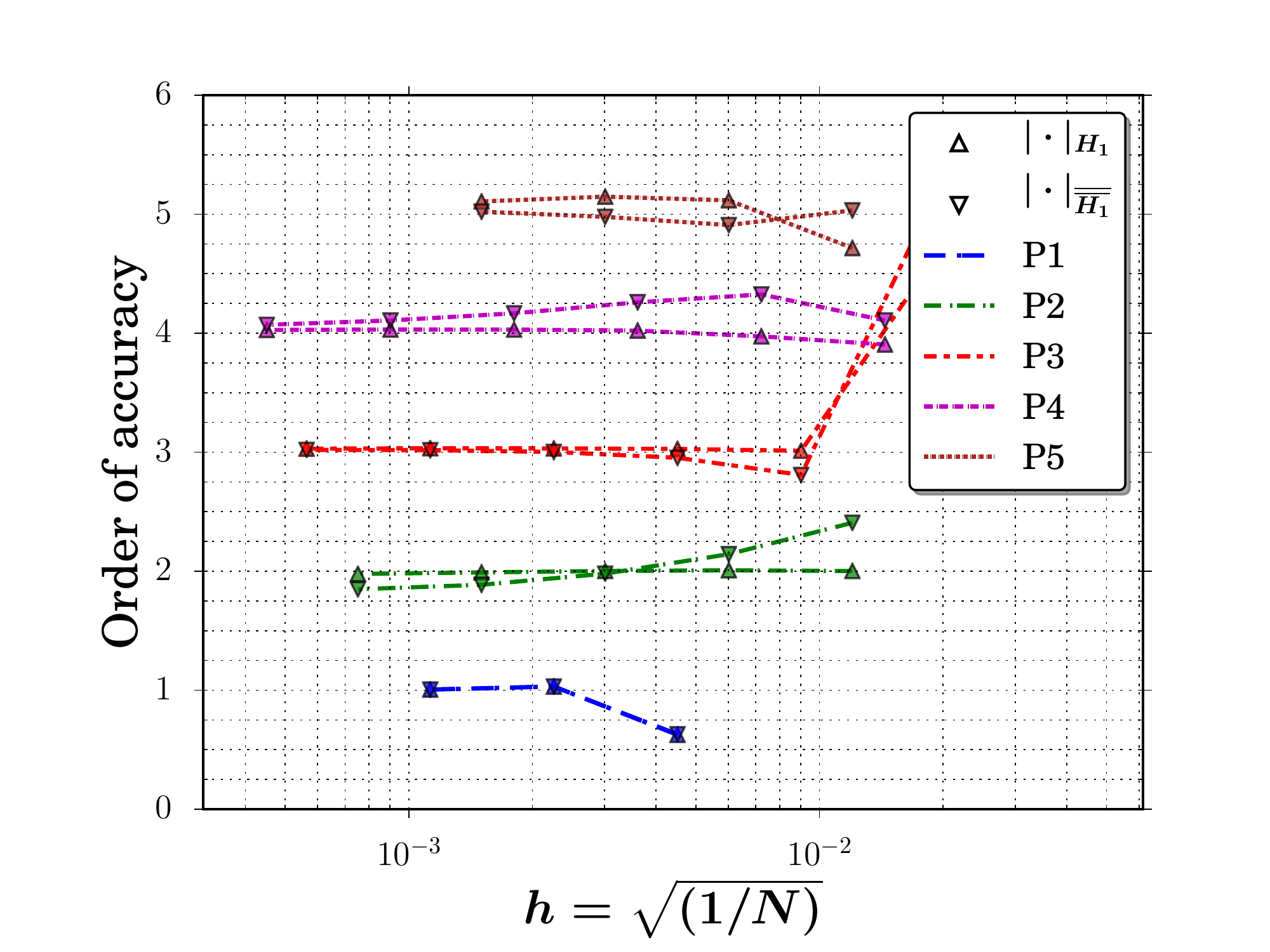}}
\caption{Evolution of the OOAs in $H_1$ semi-norm (for uncorrected and  fully corrected derivatives) versus mesh refinement for MS-3 and  $\mathrm{P}1$--$\mathrm{P}5$}
\label{fig:Orders_H_MS-3}
\end{figure} 
 
\clearpage
\subsection{MS-4}
\begin{figure}[!hbt]
\centering
\subfloat[$\rho^{\mathrm{MS}}$]{
\includegraphics[trim = 0mm 0mm 0mm 0mm, clip,width=0.4\linewidth]
{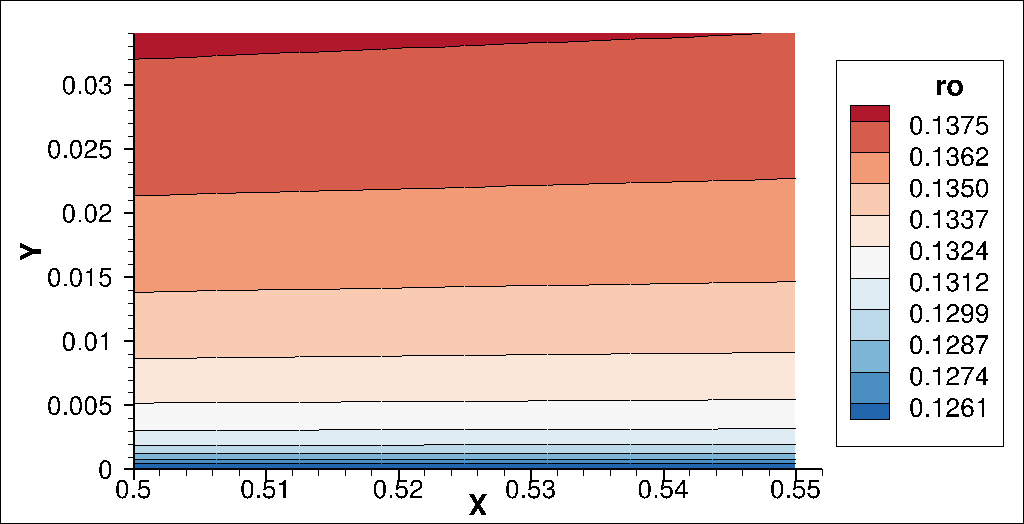}}~
\subfloat[$u^{\mathrm{MS}}$]{
\includegraphics[trim = 0mm 0mm 0mm 0mm, clip,width=0.4\linewidth]
{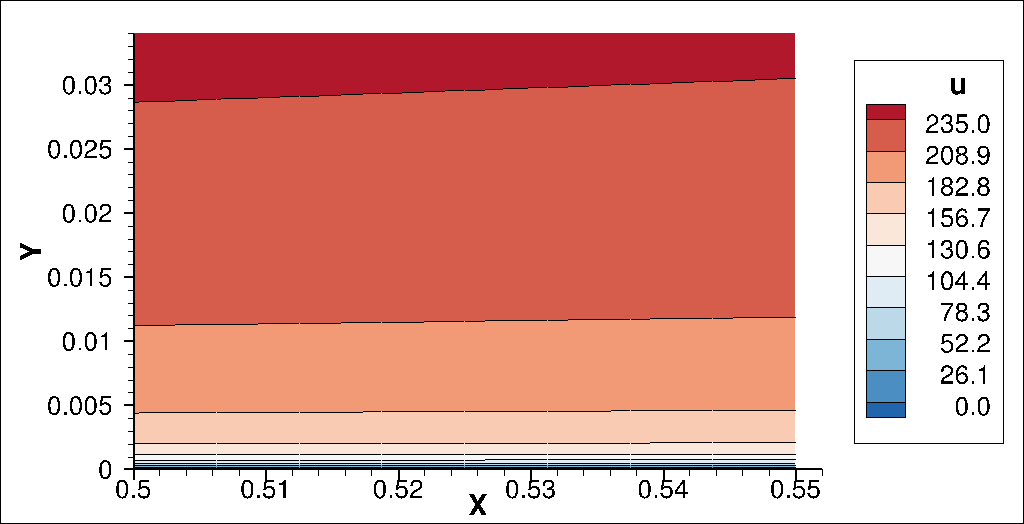}}
\vfill
\subfloat[$v^{\mathrm{MS}}$]{
\includegraphics[trim = 0mm 0mm 0mm 0mm, clip,width=0.4\linewidth]{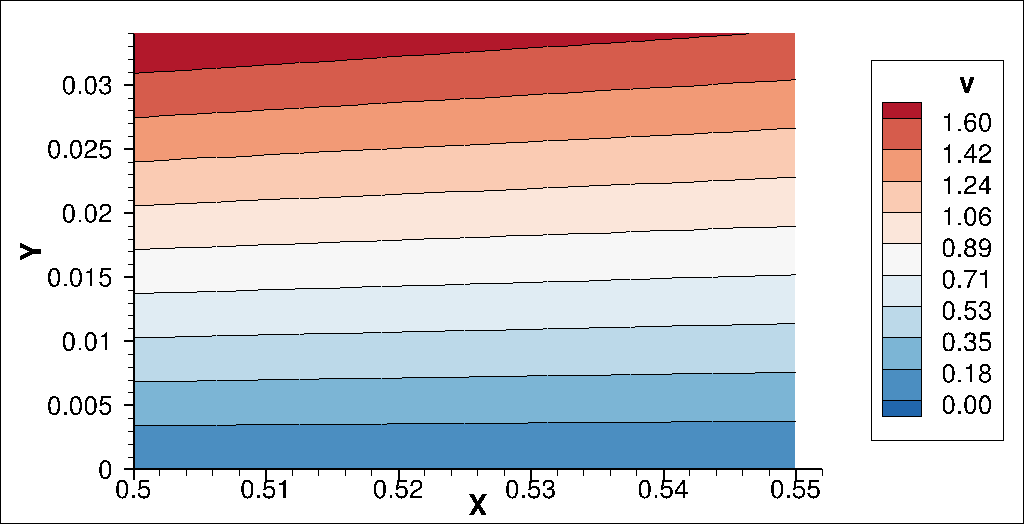}}~
\subfloat[${\tilde \nu}^{\mathrm{MS}}$]{
\includegraphics[trim = 0mm 0mm 0mm 0mm, clip,width=0.4\linewidth]{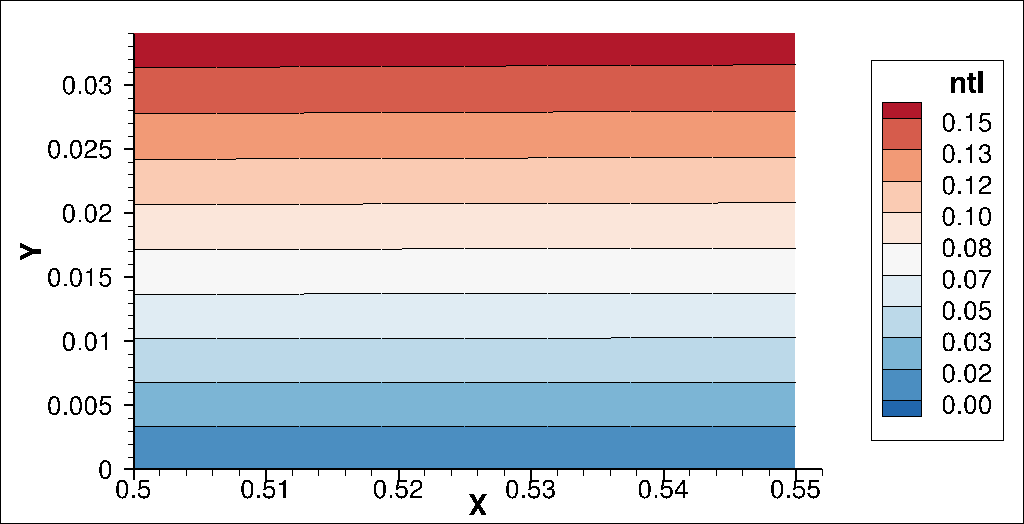}}
\vfill
\subfloat[${Ma}^{\mathrm{MS}}$ and ${\bm{u}}^{\mathrm{MS}}$]{
\includegraphics[trim = 0mm 0mm 0mm 0mm, clip,width=0.4\linewidth]{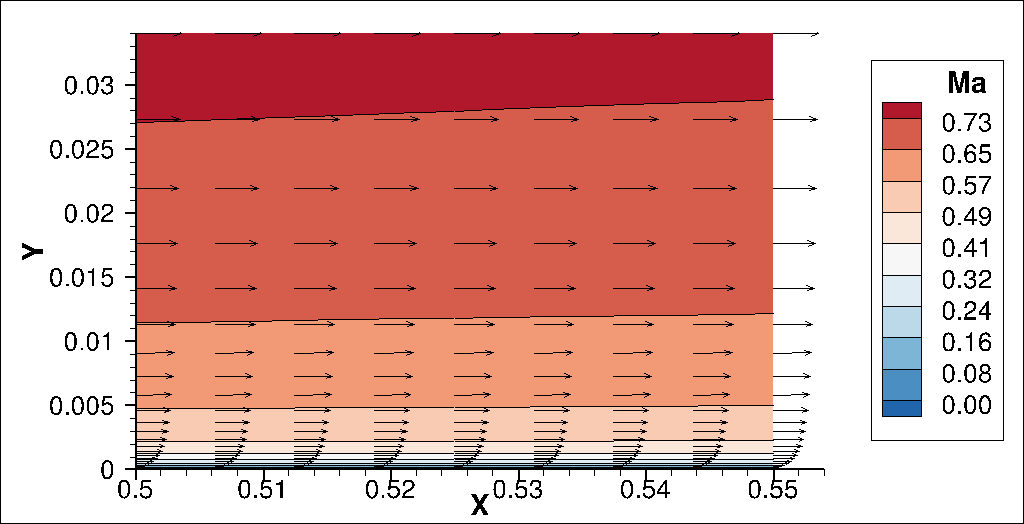}}
\caption{Manufactured solution MS-4 (dimensional)}
\label{fig:MS-4}
\end{figure} 
 
\begin{figure}[!hbt]
\centering
\subfloat[$\rho$]{
\includegraphics[trim = 5mm 2mm 18mm 13mm, clip,width=0.32\linewidth]
{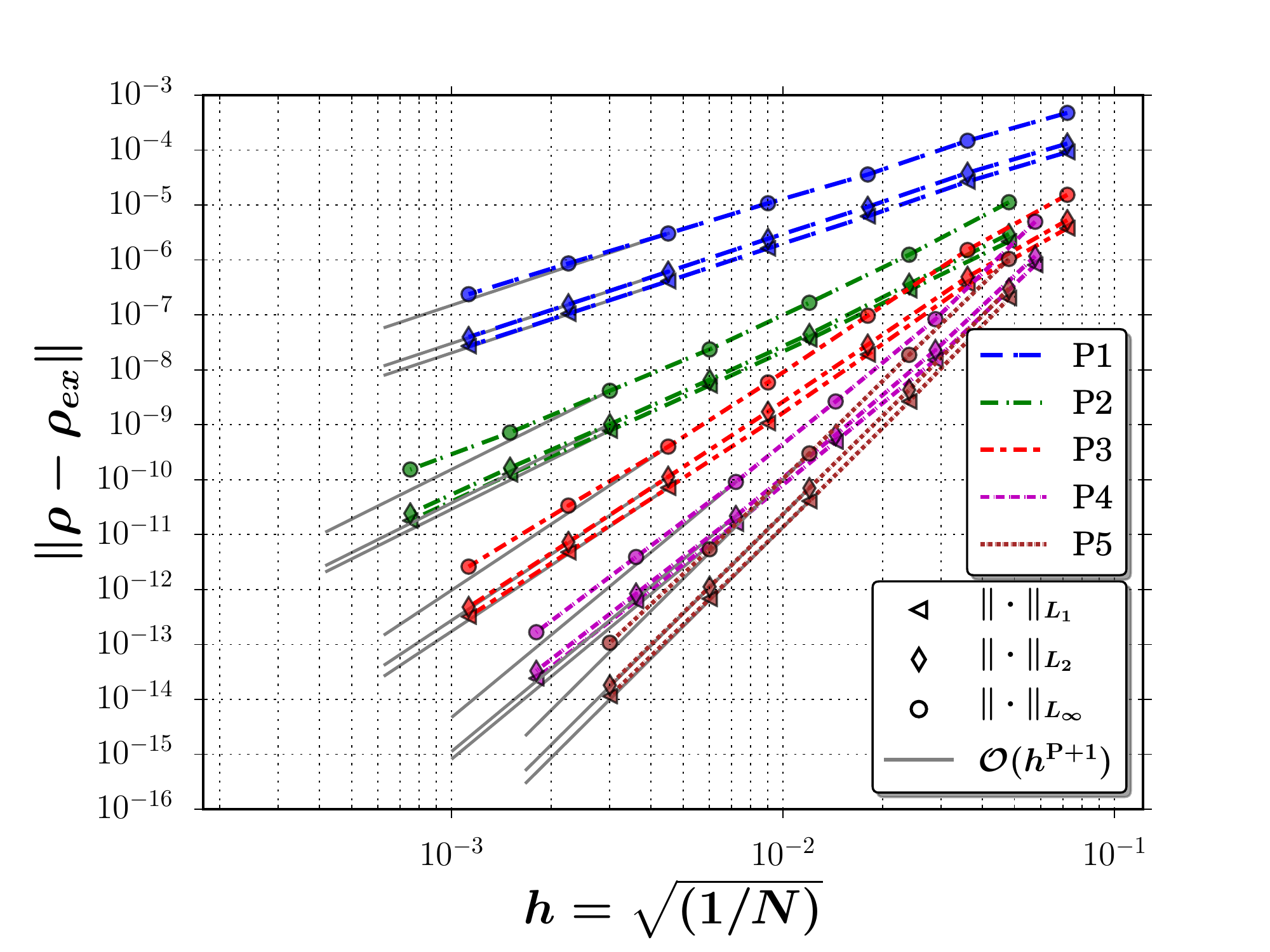}}
~~~
\subfloat[$\rho u$]{
\includegraphics[trim = 5mm 2mm 18mm 13mm, clip,width=0.32\linewidth]{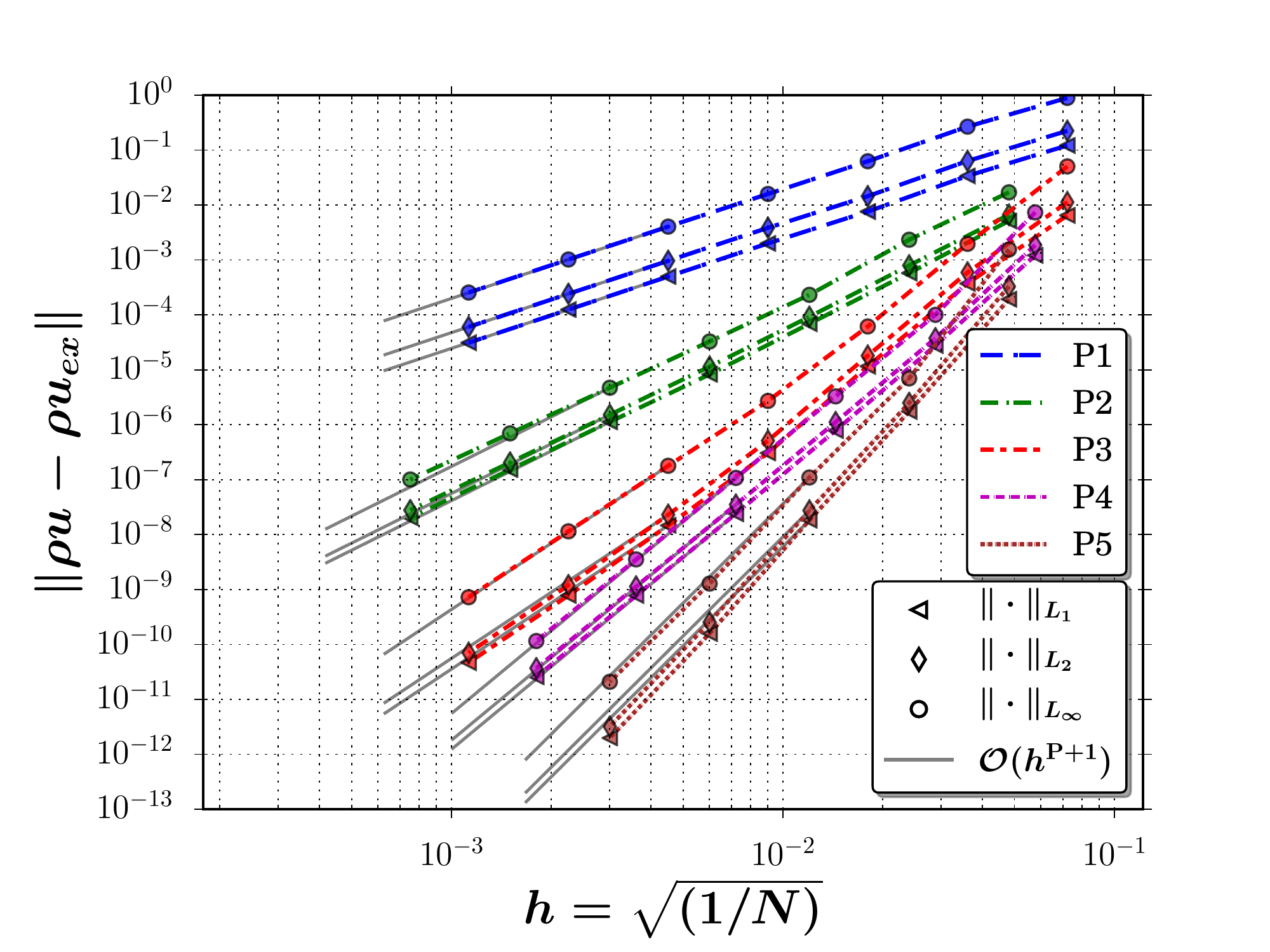}}
\vfill
\subfloat[$\rho v$]{
\includegraphics[trim =5mm 2mm 18mm 13mm, clip,width=0.32\linewidth]
{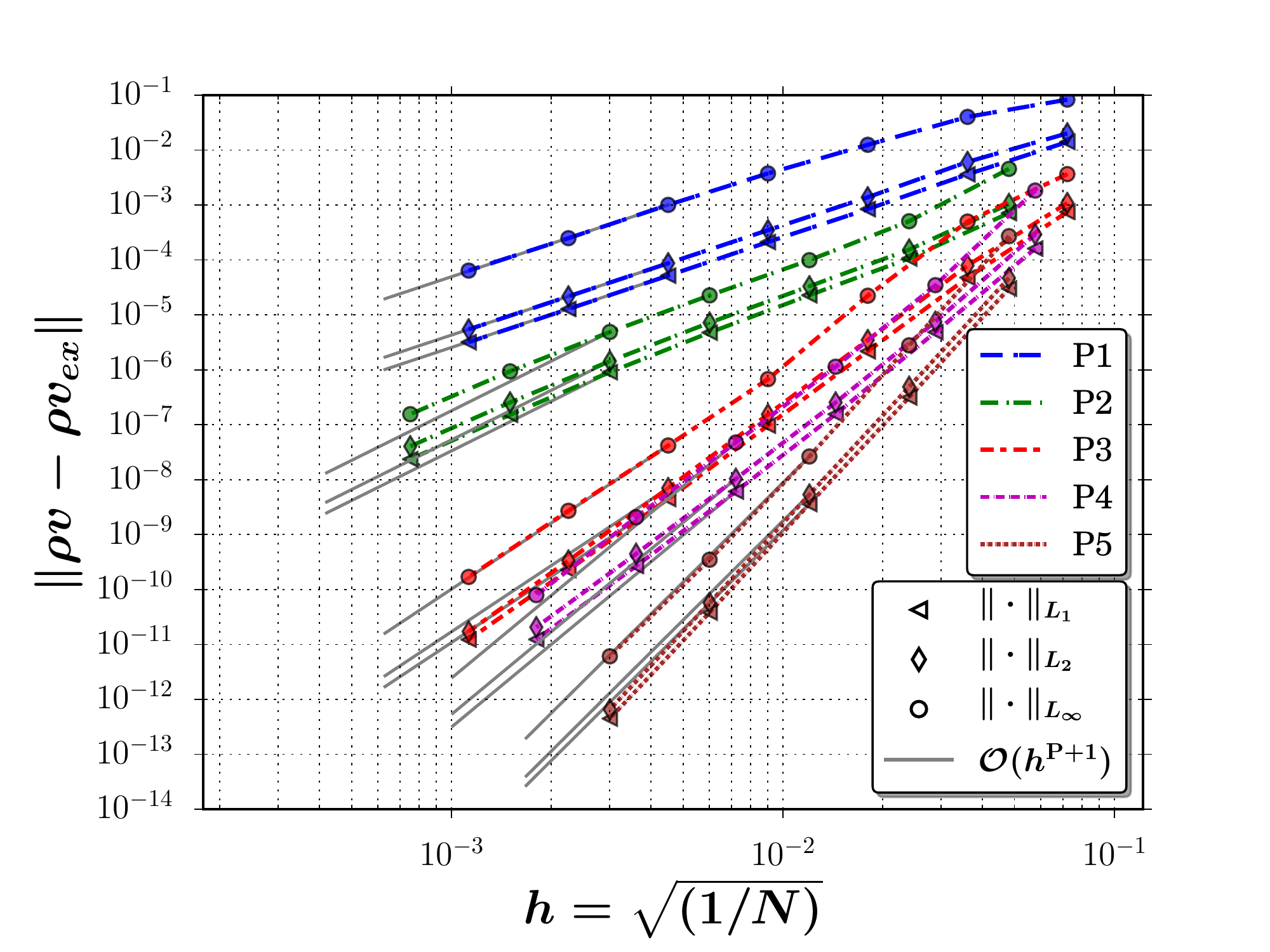}}
~~~
\subfloat[$\rho E$]{
\includegraphics[trim = 5mm 2mm 18mm 13mm, clip,width=0.32\linewidth]{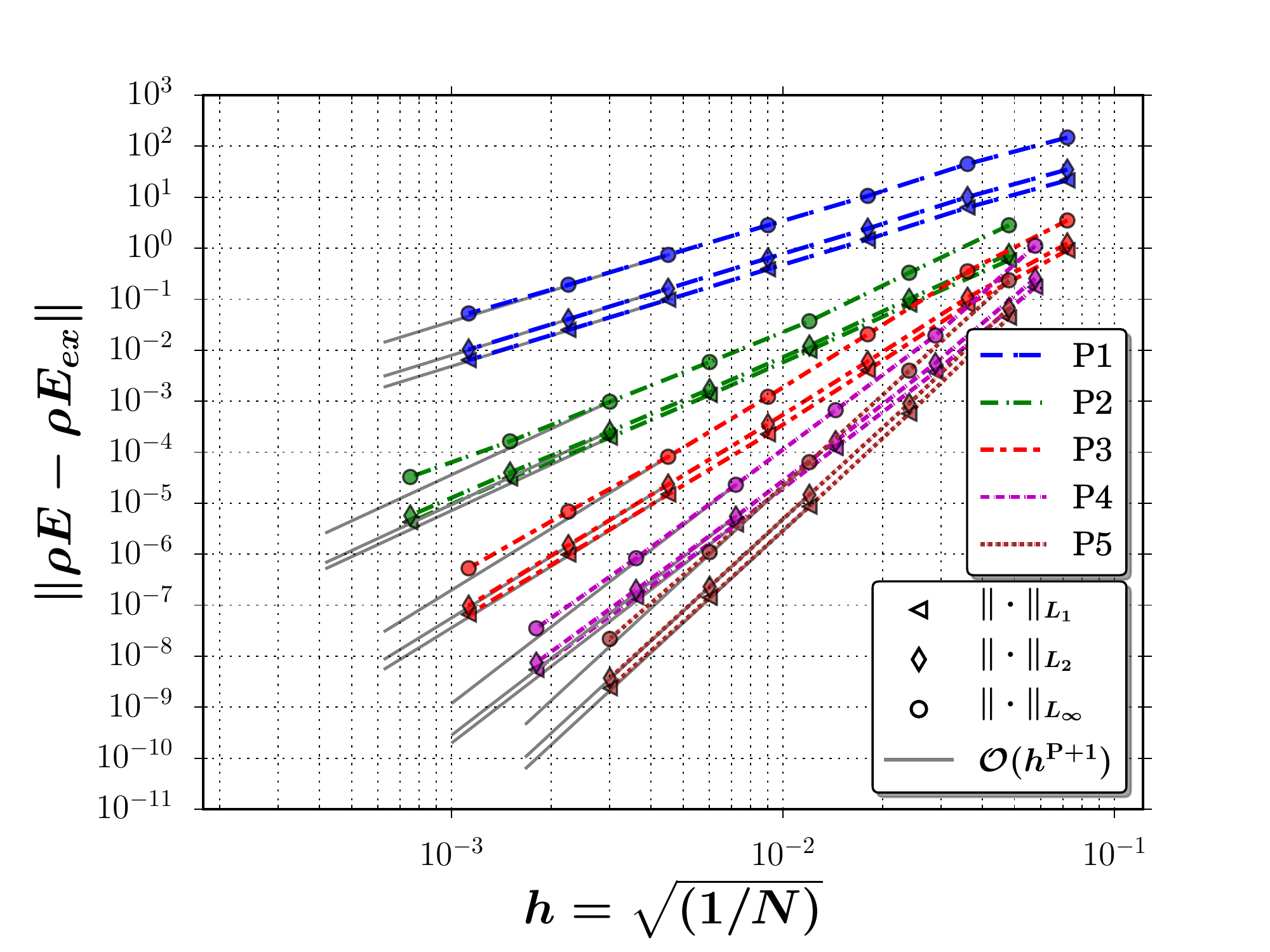}}
\vfill
\subfloat[$\rho \tilde{\nu}$]{
\includegraphics[trim = 5mm 2mm 18mm 13mm, clip,width=0.32\linewidth]
{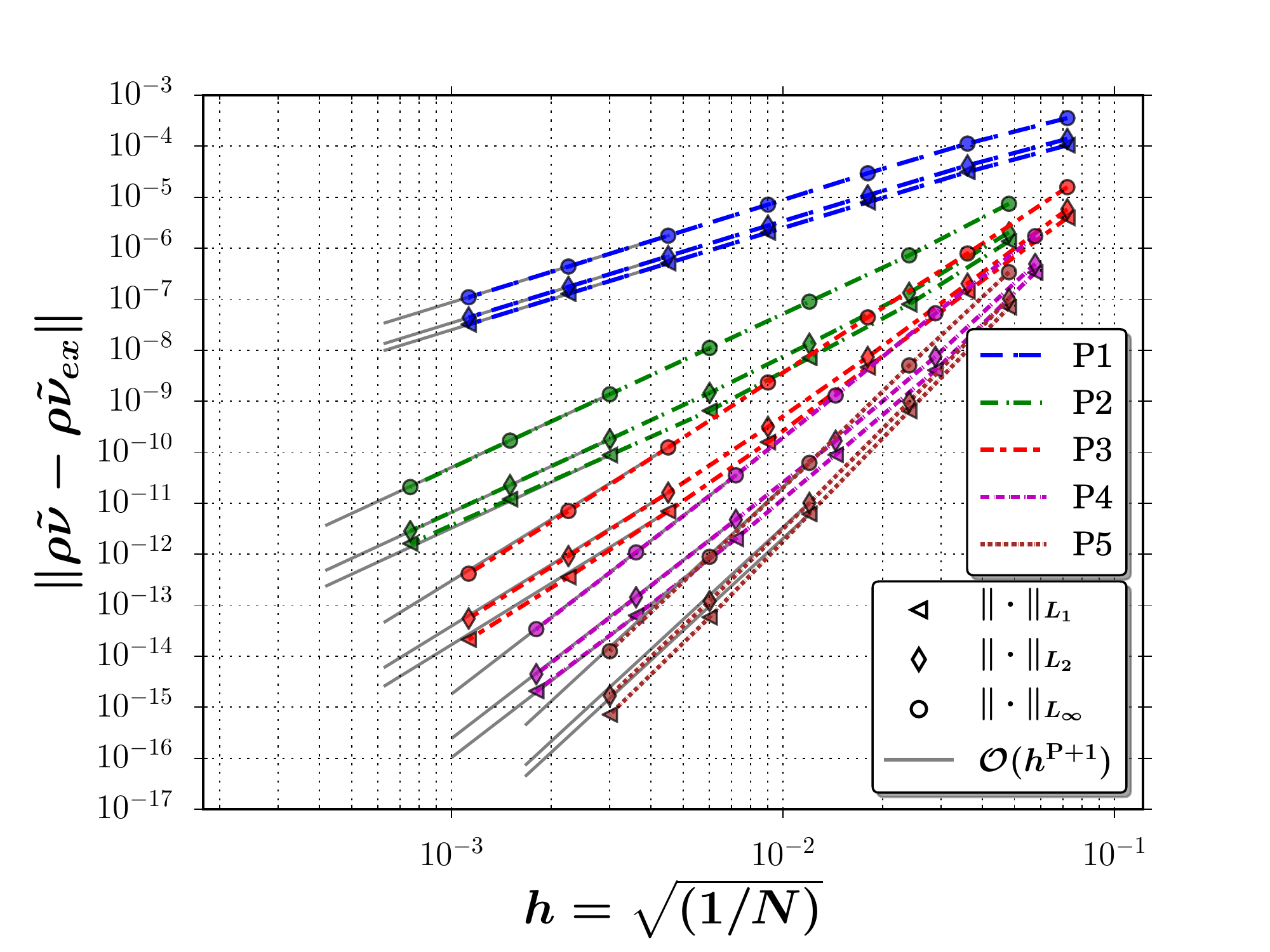}}
\caption{Evolution of the discretization error in $L_1$, $L_2$ and $L_\infty$ norms versus mesh refinement for dimensional MS-4 and  $\mathrm{P}1$--$\mathrm{P}5$}
\label{fig:Err_allE_allP_MS-4}
\end{figure}

\begin{figure}[!hbt]
\centering
\subfloat[$\rho$]{ 
\includegraphics[trim = 16mm 3mm 18mm 13mm, clip,width=0.3\linewidth]
{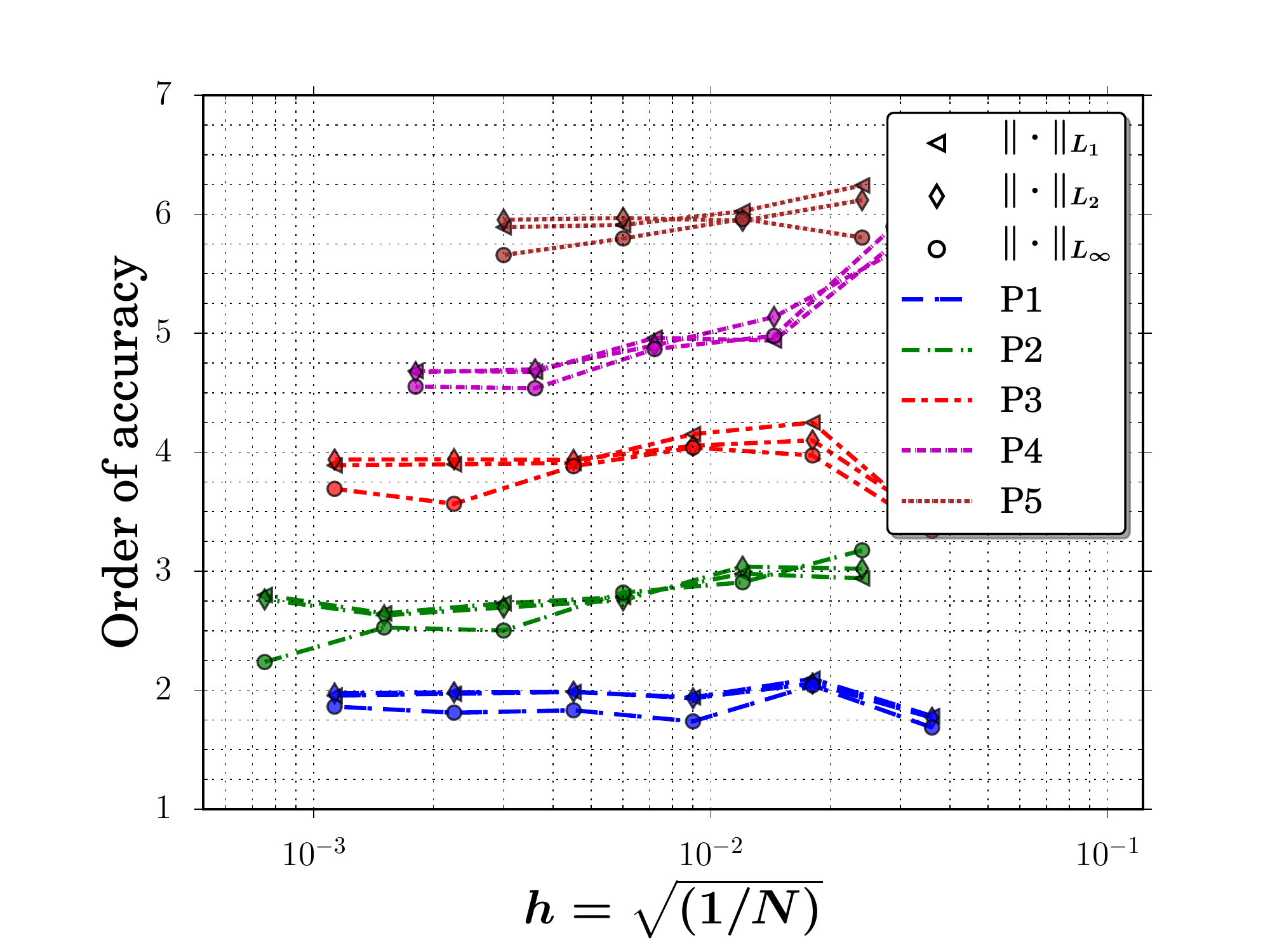}}
~~~
\subfloat[$\rho u$]{
\includegraphics[trim = 16mm 3mm 18mm 13mm, clip,width=0.3\linewidth]{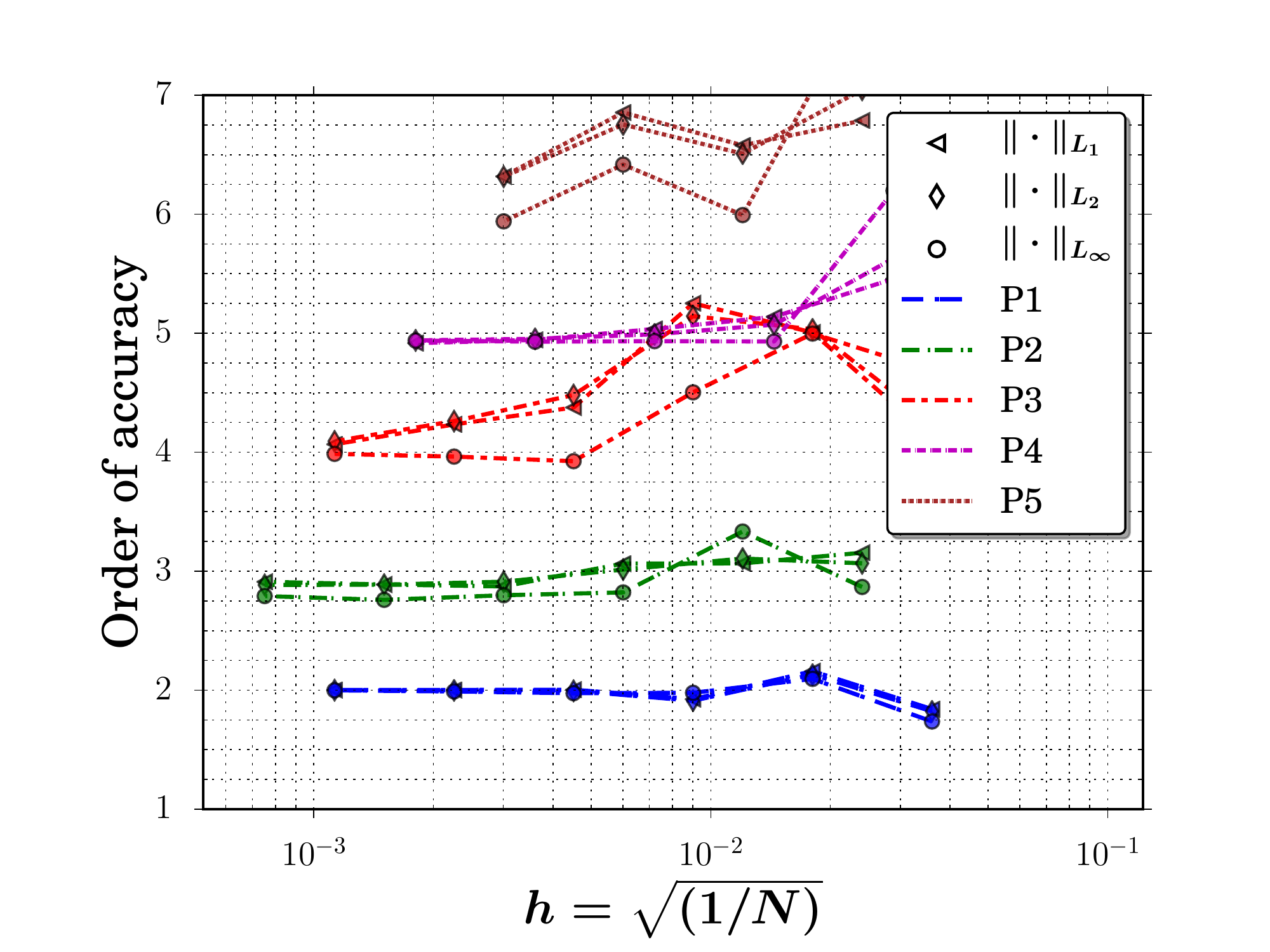}}
\vfill
\subfloat[$\rho v$]{
\includegraphics[trim = 16mm 3mm 18mm 13mm, clip,width=0.3\linewidth]
{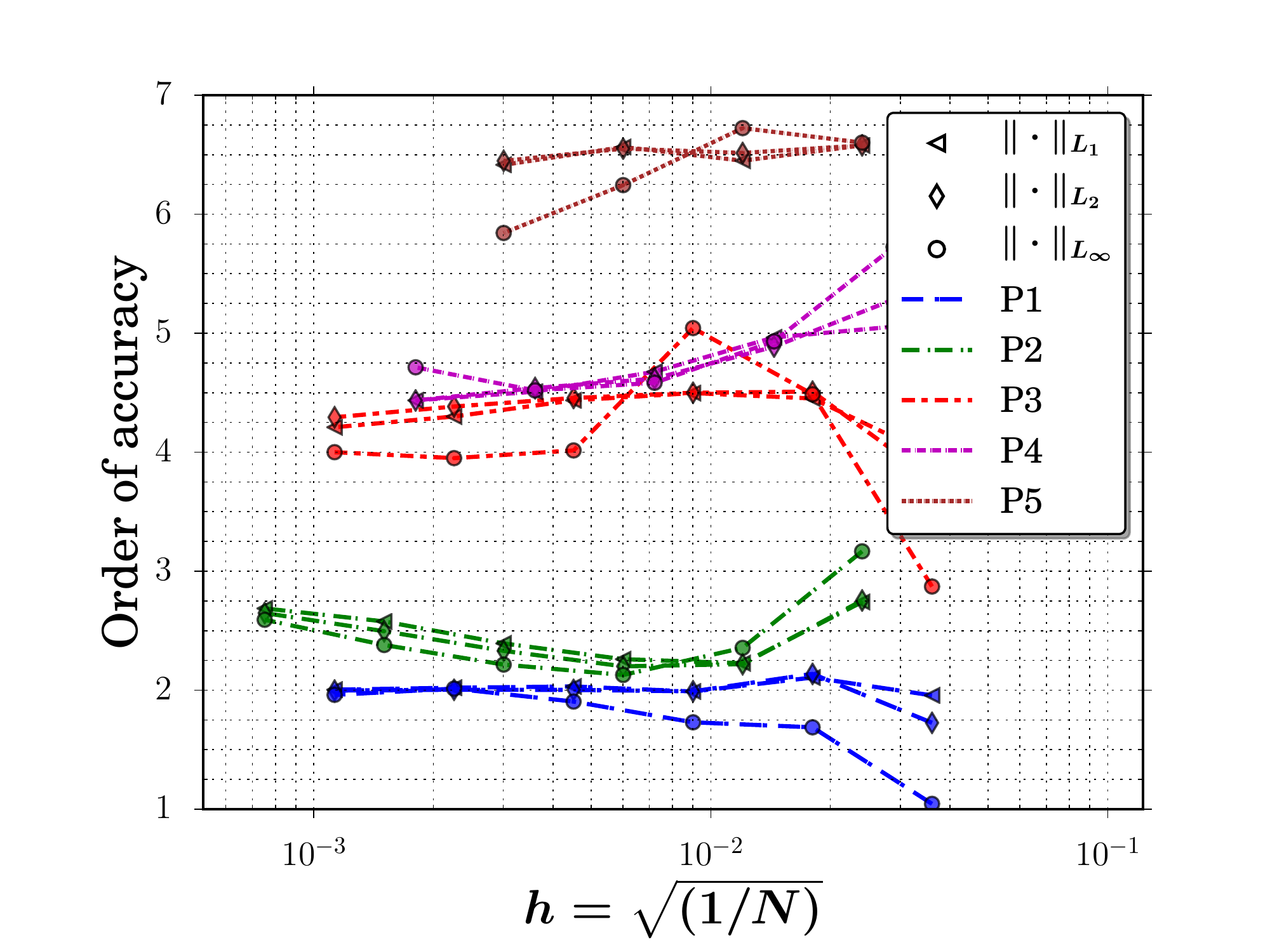}}
~~~
\subfloat[$\rho E$]{
\includegraphics[trim = 16mm 3mm 18mm 13mm, clip,width=0.3\linewidth]{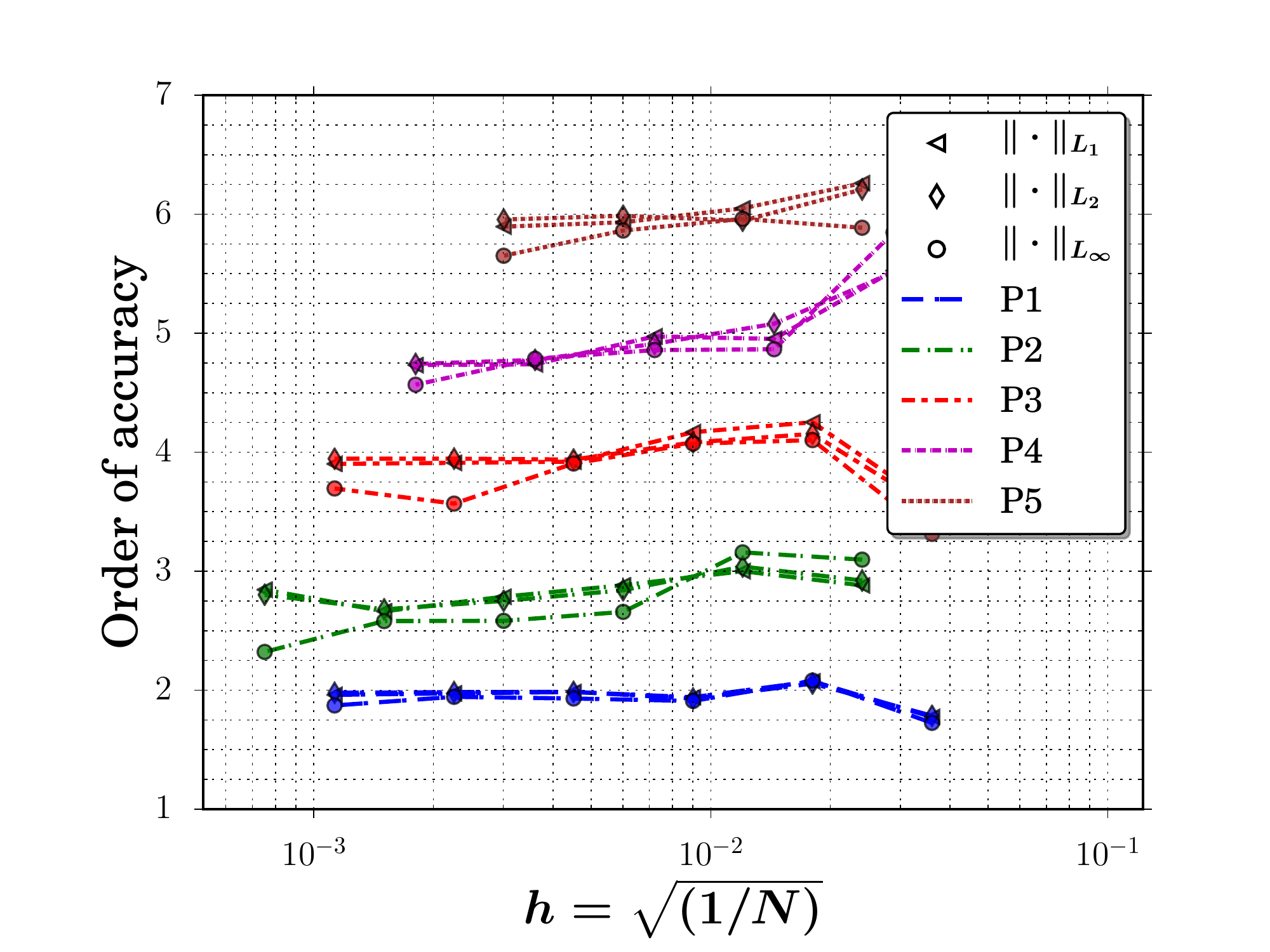}}
\vfill
\subfloat[$\rho \tilde{\nu}$]{
\includegraphics[trim = 16mm 3mm 18mm 13mm, clip,width=0.3\linewidth]{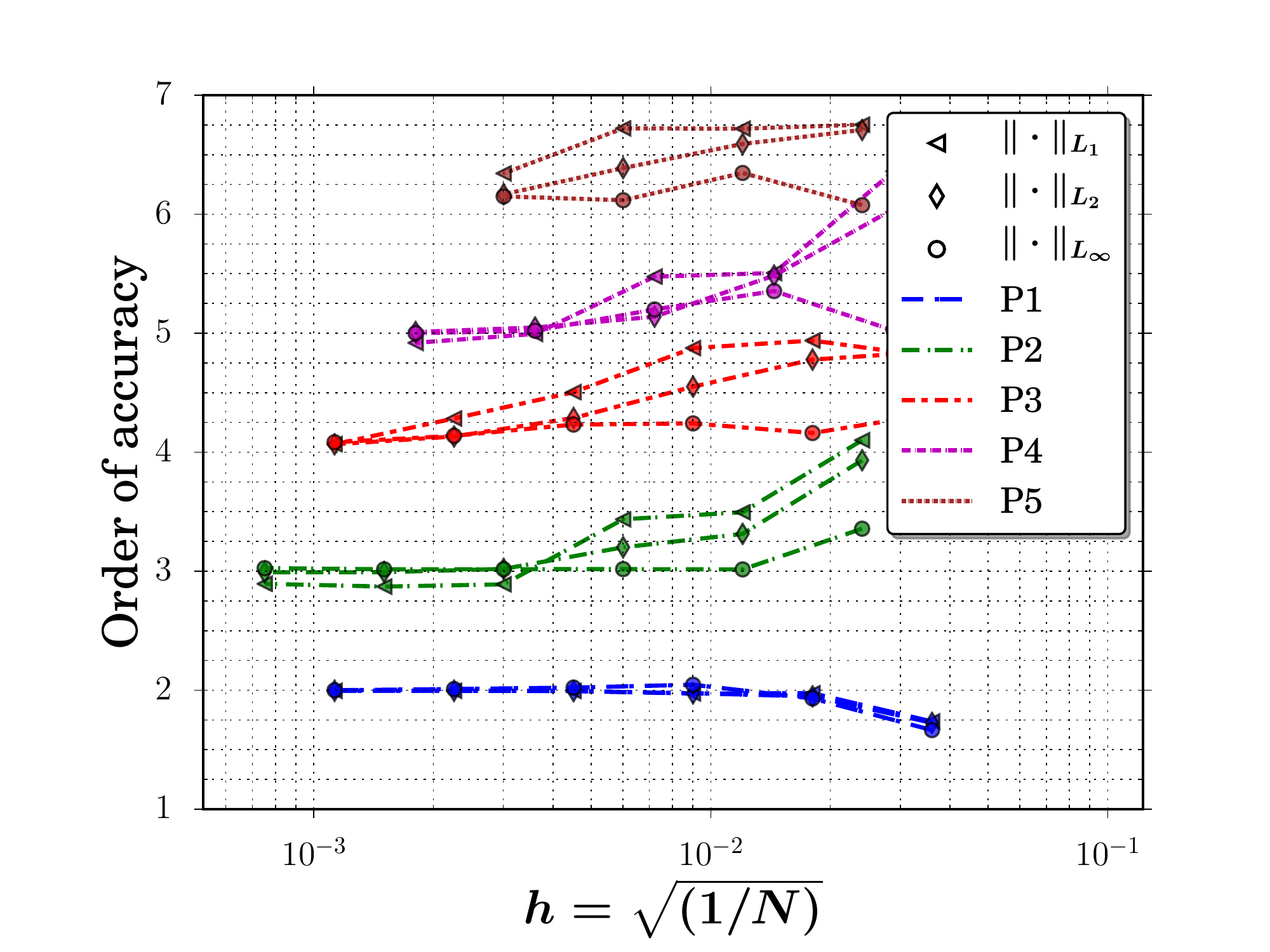}}
\caption{Evolution of the OOAs in $L_1$, $L_2$ and $L_\infty$ norms versus mesh refinement for dimensional MS-4 and  $\mathrm{P}1$--$\mathrm{P}5$}
\label{fig:Orders_MS-4}
\end{figure}

\begin{figure}[!hbt]
\centering
\subfloat[$\rho$]{
\includegraphics[trim = 5mm 2mm 18mm 13mm, clip,width=0.32\linewidth]
{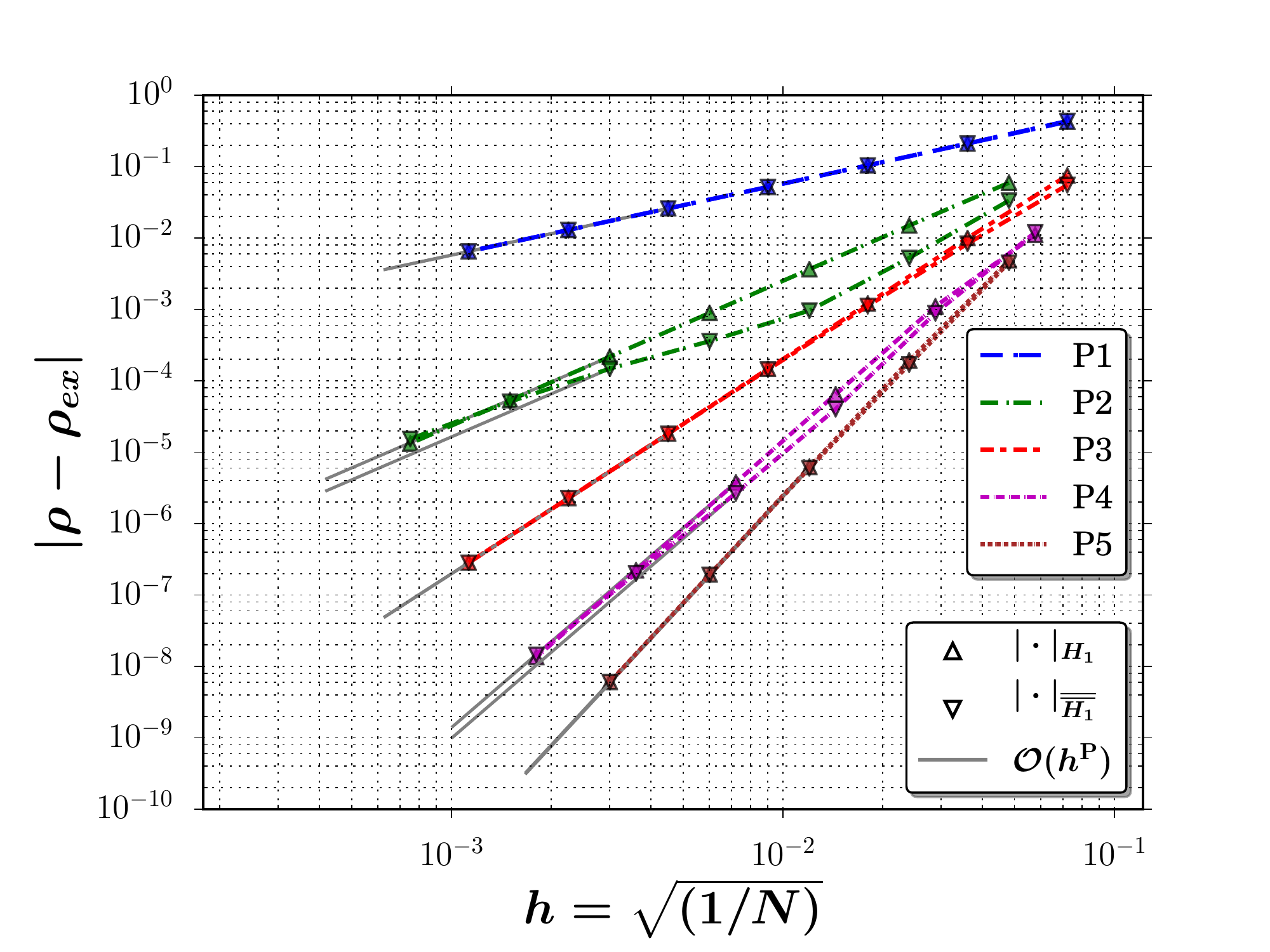}}
~~~
\subfloat[$\rho u$]{
\includegraphics[trim = 5mm 2mm 18mm 13mm, clip,width=0.32\linewidth]{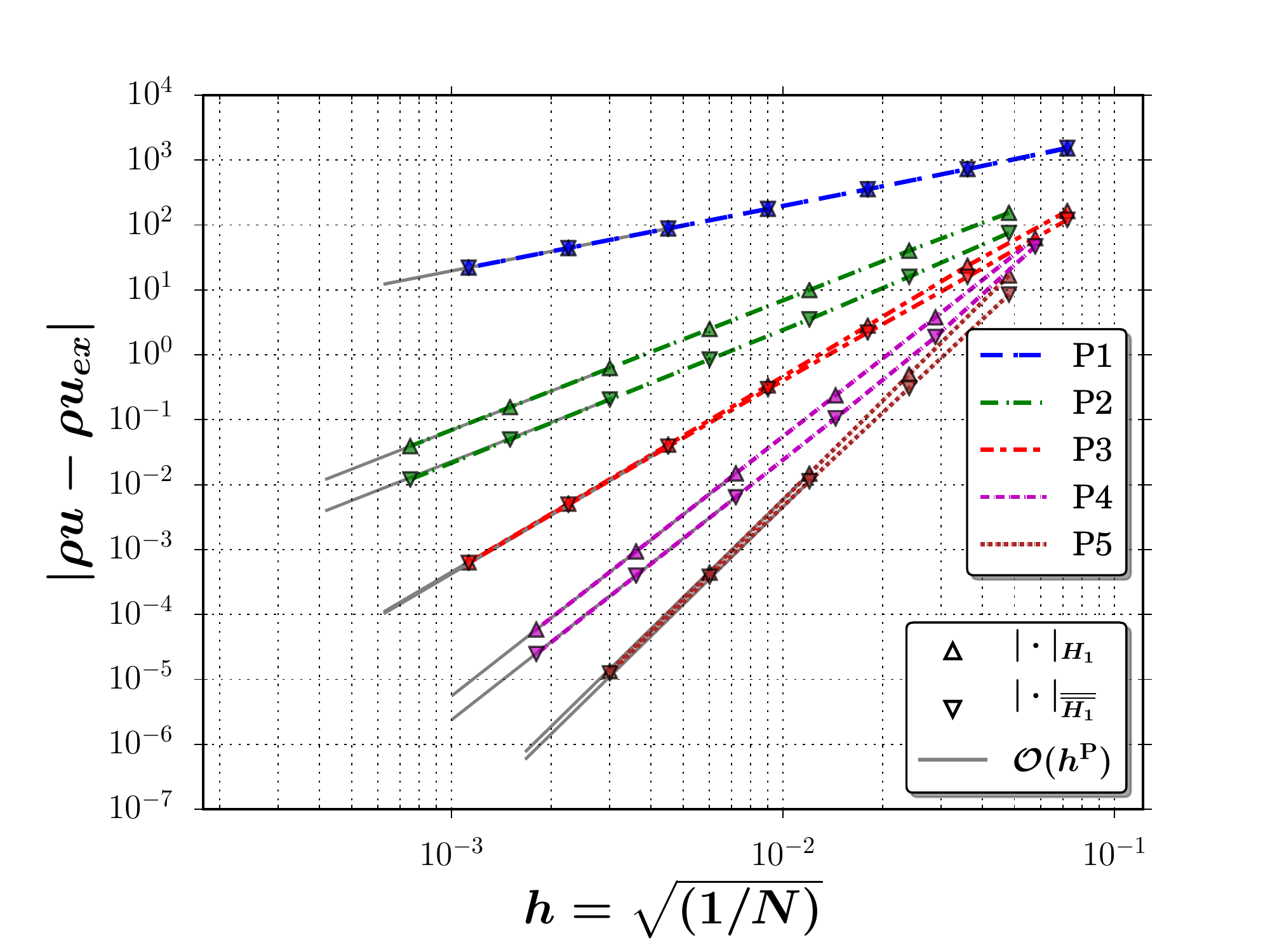}}
\vfill
\subfloat[$\rho v$]{
\includegraphics[trim = 5mm 2mm 18mm 13mm, clip,width=0.32\linewidth]
{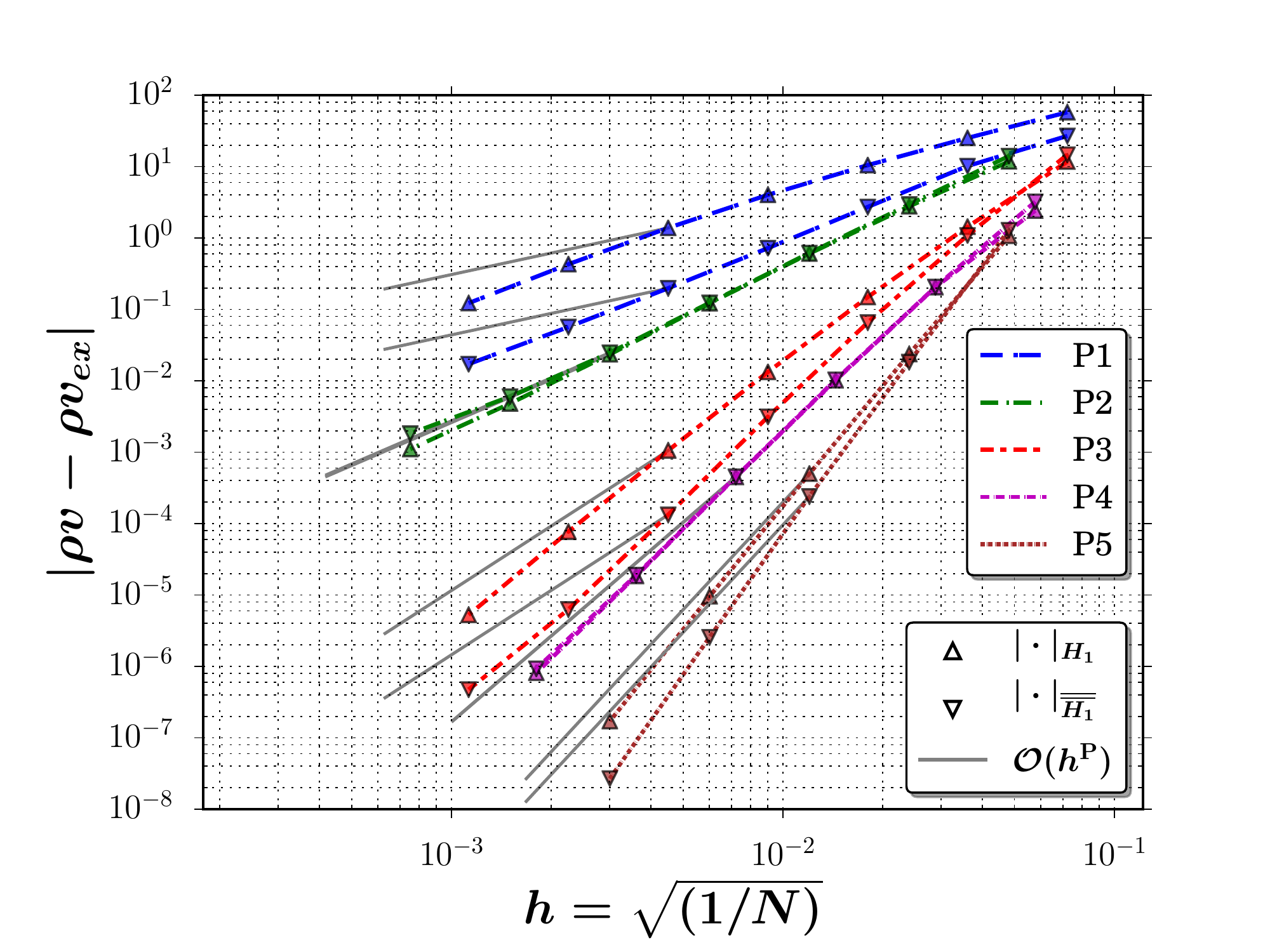}}
~~~
\subfloat[$\rho E$]{
\includegraphics[trim = 5mm 2mm 18mm 13mm, clip,width=0.32\linewidth]{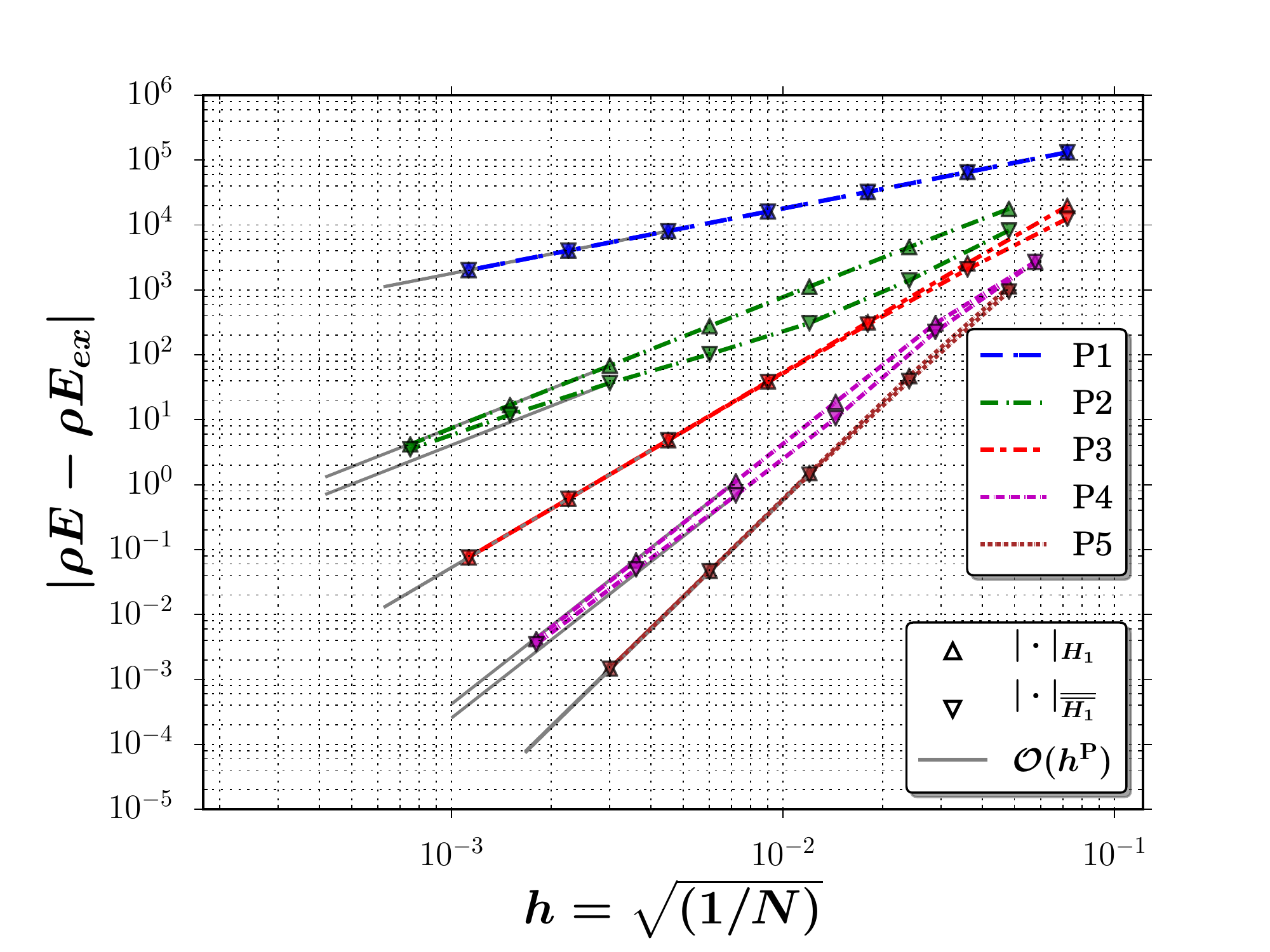}}
\vfill
\subfloat[$\rho \tilde{\nu}$]{
\includegraphics[trim = 5mm 2mm 18mm 13mm, clip,width=0.32\linewidth]{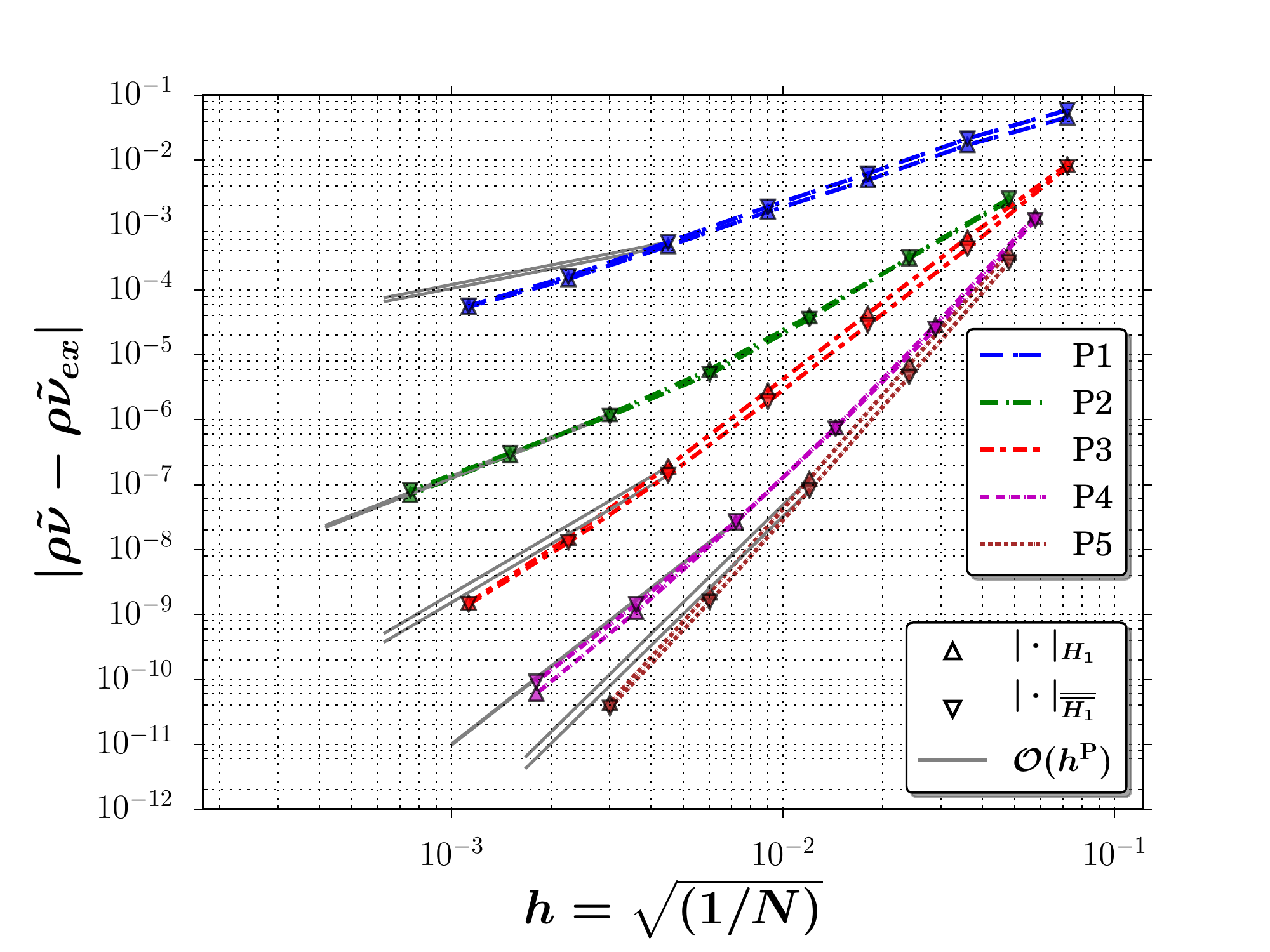}}
\caption{Evolution of the discretization error in $H_1$ semi-norm (for uncorrected and  fully corrected derivatives) versus mesh refinement for dimensional MS-4 and  $\mathrm{P}1$--$\mathrm{P}5$}
\label{fig:Err_allE_allP_H_MS-4}
\end{figure}

\begin{figure}[!hbt]
\centering
\subfloat[$\rho$]{ 
\includegraphics[trim = 16mm 3mm 18mm 13mm, clip,width=0.3\linewidth]
{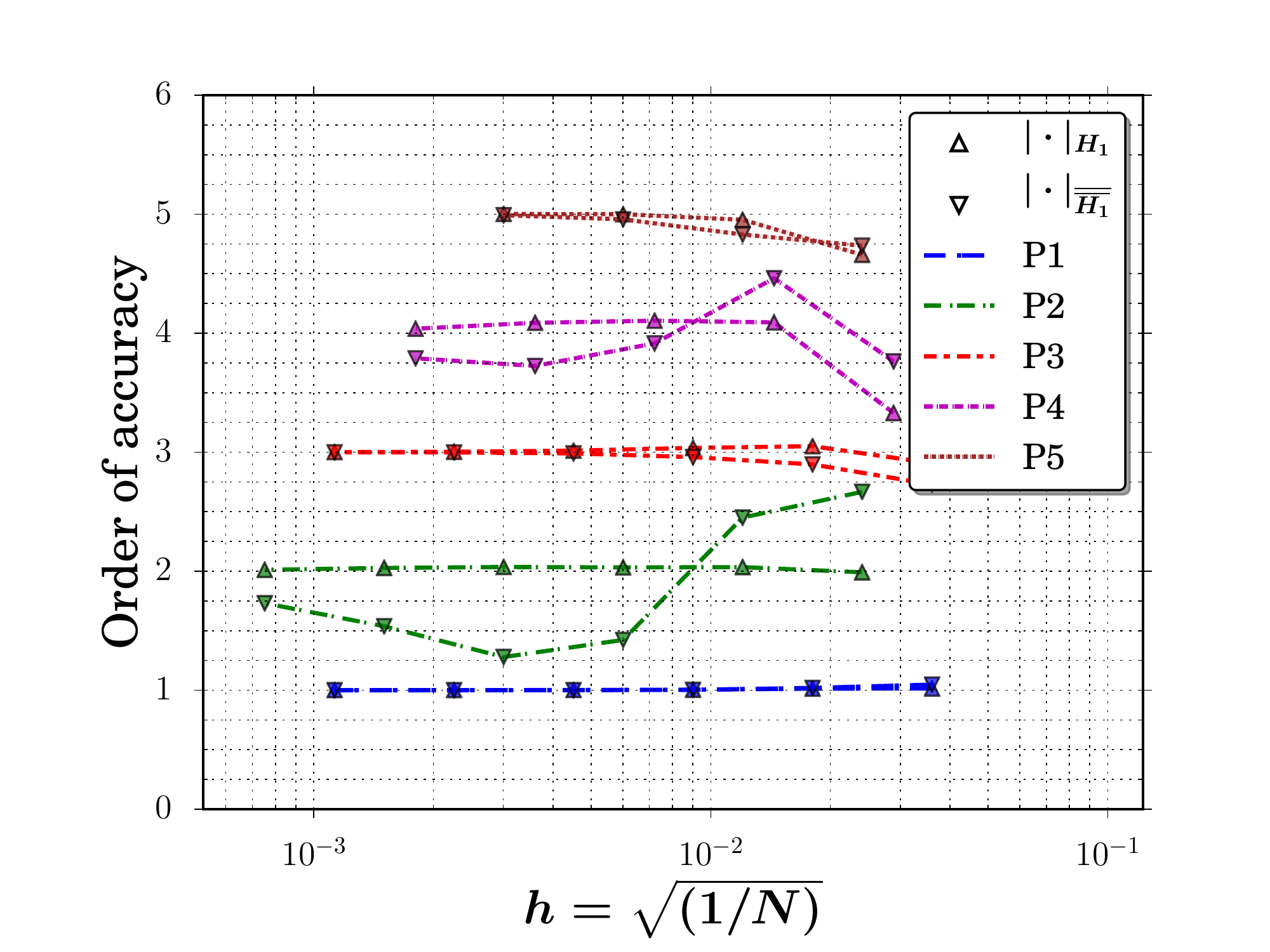}}
~~~
\subfloat[$\rho u$]{
\includegraphics[trim = 16mm 3mm 18mm 13mm, clip,width=0.3\linewidth]{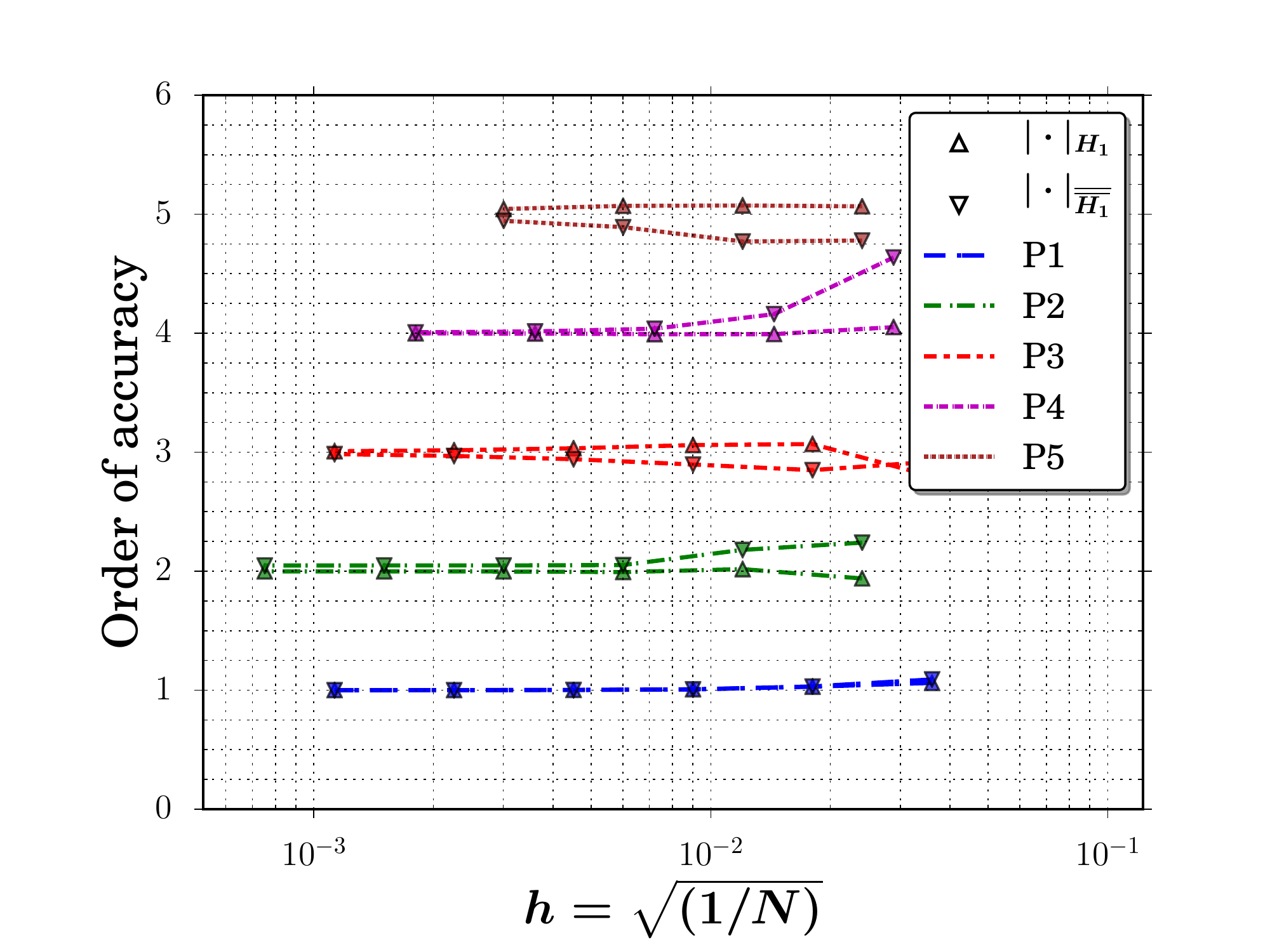}}
\vfill
\subfloat[$\rho v$]{
\includegraphics[trim = 16mm 3mm 18mm 13mm, clip,width=0.3\linewidth]
{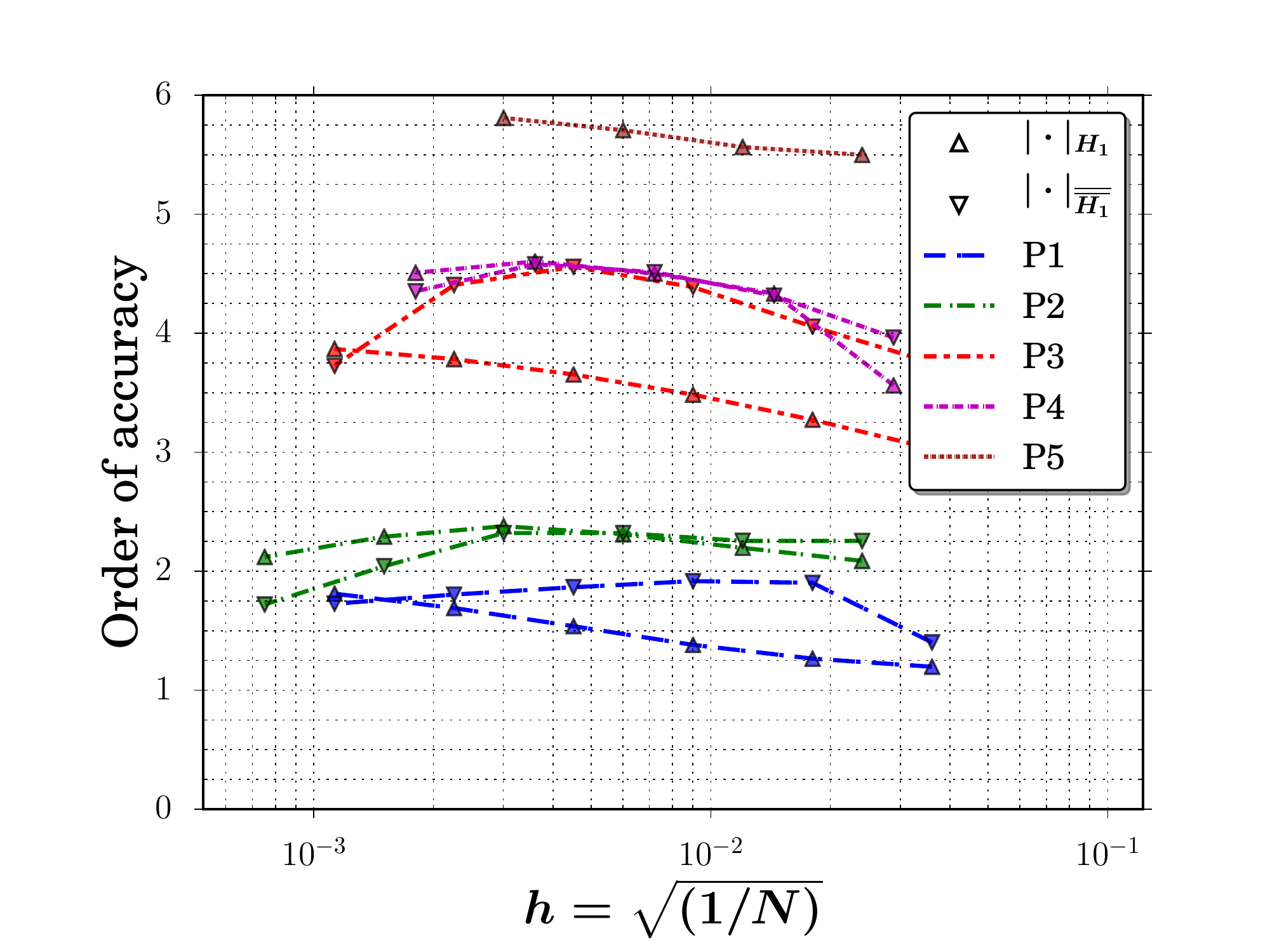}}
~~~
\subfloat[$\rho E$]{
\includegraphics[trim = 16mm 3mm 18mm 13mm, clip,width=0.3\linewidth]{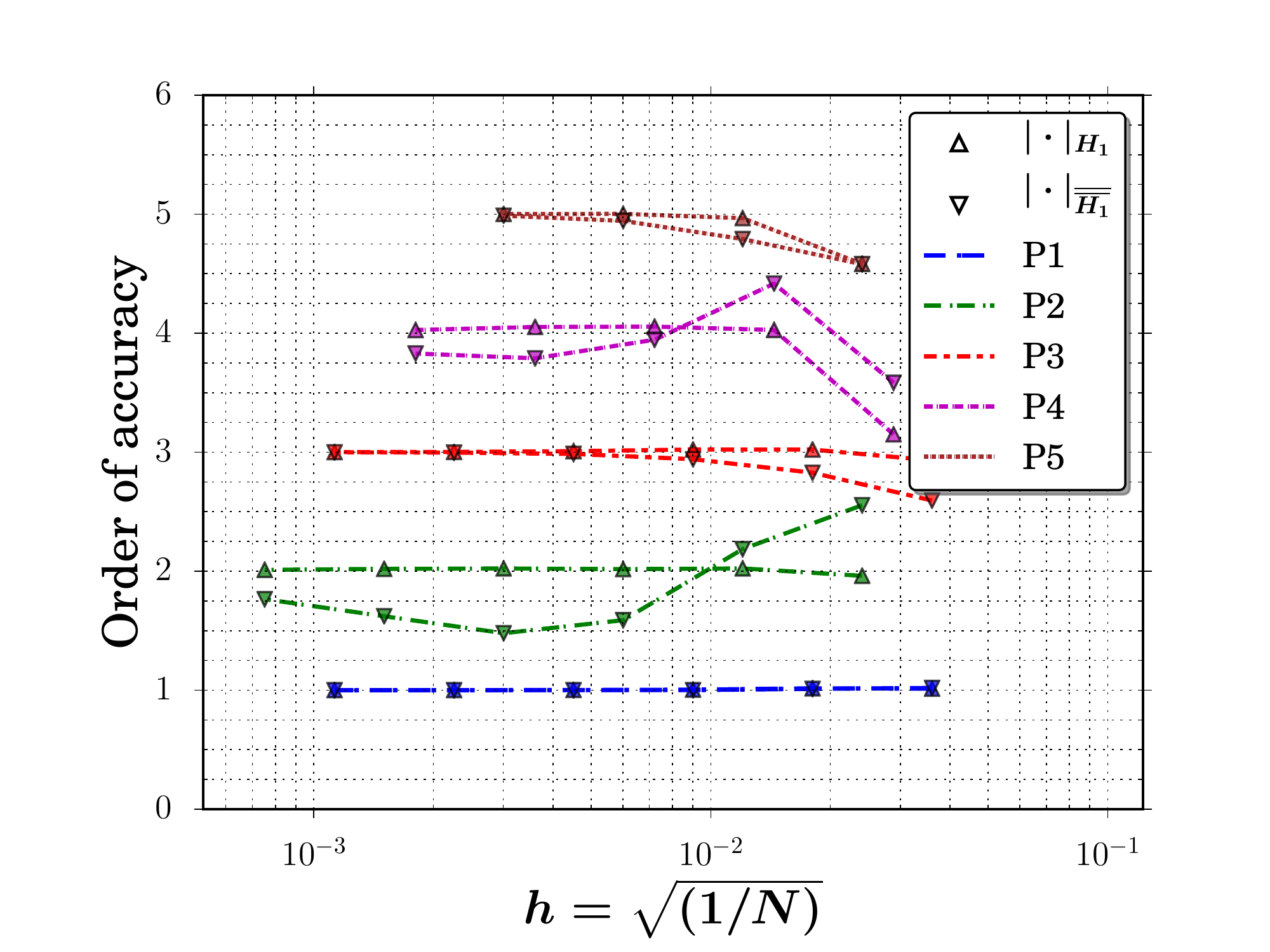}}
\vfill
\subfloat[$\rho \tilde{\nu}$]{
\includegraphics[trim = 16mm 3mm 18mm 13mm, clip,width=0.3\linewidth]{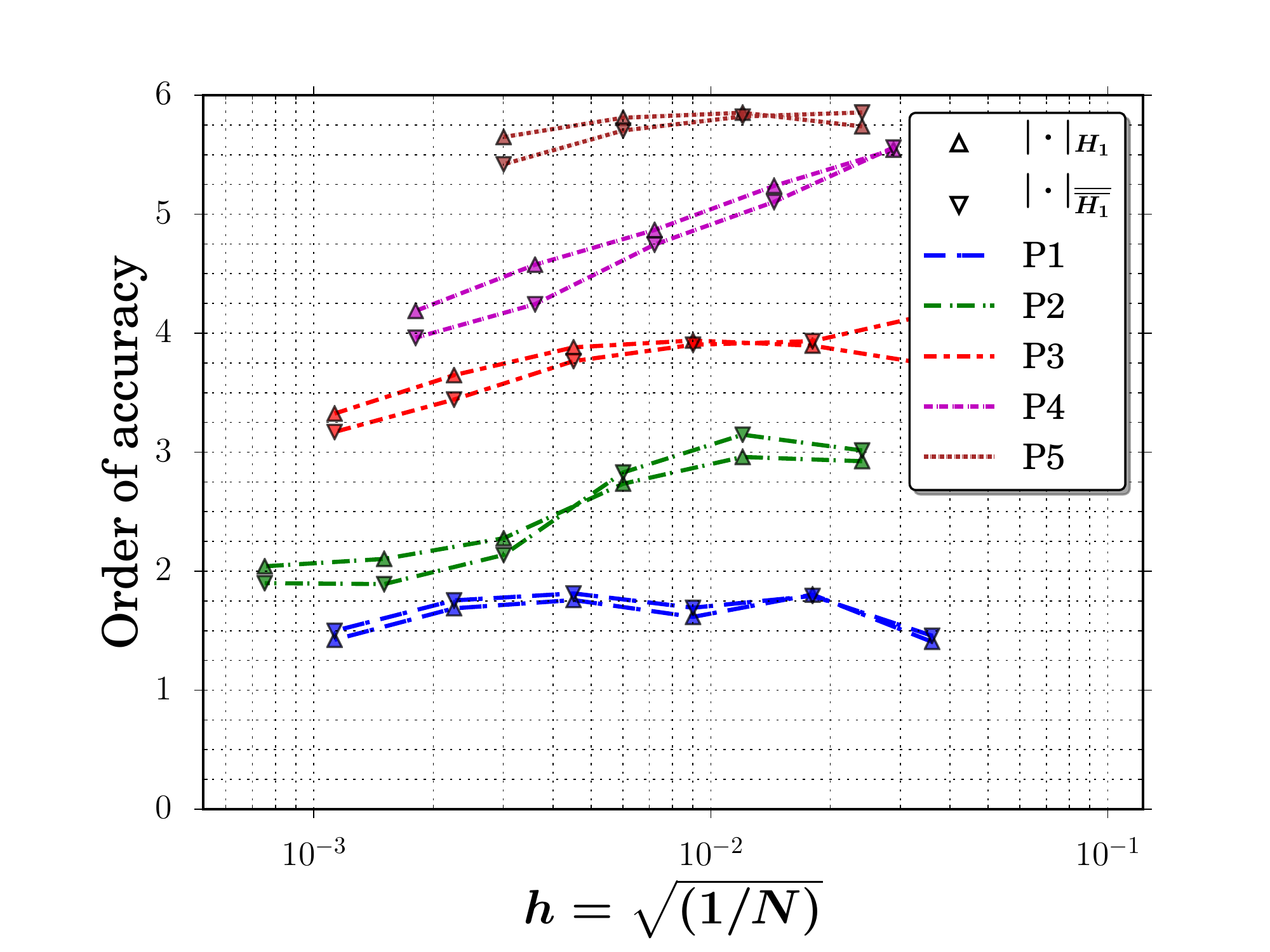}}
\caption{Evolution of the OOAs in $H_1$ semi-norm (for uncorrected and  fully corrected derivatives) versus mesh refinement for dimensional MS-4 and  $\mathrm{P}1$--$\mathrm{P}5$}
\label{fig:Orders_H_MS-4}
\end{figure}

\clearpage
\bibliography{library}


%

\end{document}